\theoremstyle{thmstyleone}%
\newtheorem{theorem}{Theorem}
\theoremstyle{thmstyletwo}%
\newtheorem{remark}{Remark}%
\theoremstyle{thmstylethree}%
\newtheorem{definition}{Definition}%
\newtheorem{corollary}{Corollary}[theorem]
\newtheorem{lemma}[theorem]{Lemma}
\newcommand*\colvec[3][]{
    \begin{pmatrix}\ifx\relax#1\relax\else#1\\\fi#2\\#3\end{pmatrix}
}
\algrenewcommand\algorithmicrequire{\textbf{Input:}}
\algrenewcommand\algorithmicensure{\textbf{Output:}}
\newenvironment{breakablealgorithm}{
    \begin{center}
        \refstepcounter{algorithm}
        \renewcommand{\caption}[1]{
            \addcontentsline{loa}{algorithm}{\protect\numberline{\thealgorithm}##1}
            \parbox{0.5\textwidth}{
                \hrule height.8pt depth0pt \kern2pt       {\raggedright\textbf{\fname@algorithm~\thealgorithm} ##1\par}
                \kern2pt\hrule\kern2pt
            }
        }
}
{
    \kern2pt\hrule\relax
    \end{center}
}
\newcommand\wwide[1]{\ThisStyle{%
  \setbox0=\hbox{$\SavedStyle#1$}%
  \stackengine{-.1\LMpt}{$\SavedStyle#1$}{%
    \stretchto{\scaleto{\SavedStyle\mkern.2mu\AC}{.5150\wd0}}{.6\ht0}%
  }{O}{c}{F}{T}{S}%
}}
\newcolumntype{L}[1]{>{\RaggedRight\arraybackslash}p{#1}}
\newcommand{\ZZ}{{\mathbb Z}}
\newcommand{\supp}{\text{supp}}
\newcommand{\eps}{\epsilon}
\newcommand{\lr}[1]{ \langle {#1} \rangle}
\newcommand{\bx}{{\overline{x}}}
\newcommand{\by}{{\overline{y}}}
\newcommand{\mX}{{\mathcal{X}}}
\newcommand{\mZ}{{\mathcal{Z}}}
\newcommand{\mS}{{\mathcal{S}}}
\newcommand{\mP}{{\mathcal{P}}}
\newcommand{\change}{\color{black}}
\begin{document}


\title{Operator algebra and algorithmic construction of boundaries and defects in (2+1)D topological Pauli stabilizer codes}


\author[1]{Zijian Liang}
\equalcont{These authors contributed equally to this work.}

\author[2]{Bowen Yang}
\equalcont{These authors contributed equally to this work.}

\author[3]{Joseph T.~Iosue}

\author*[1]{Yu-An Chen}\email{yuanchen@pku.edu.cn}

\affil[1]{International Center for Quantum Materials, School of Physics, Peking University, Beijing 100871, China}

\affil[2]{Center of Mathematical Sciences and Applications, Harvard University, Cambridge, Massachusetts 02138, USA}

\affil[3]{Joint Center for Quantum Information and Computer Science, University of Maryland, College Park, Maryland 20742, USA}
\date{\today}

\abstract{
Quantum low-density parity-check codes, such as the Kitaev toric code and bivariate bicycle codes, are often defined with periodic boundary conditions, which are difficult to realize in physical systems.
In this paper, we present an algorithm for constructing all gapped boundaries and defects of two-dimensional Pauli stabilizer codes.
Using the operator algebra formalism, we establish a one-to-one correspondence between the topological data, such as anyon fusion rules and topological spins, of two-dimensional bulk stabilizer codes and one-dimensional boundary anomalous subsystem codes.
To make the operator algebra computationally accessible, we adapt Laurent polynomials and convert the tasks into matrix operations, e.g., the Hermite normal form for obtaining boundary anyons and the Smith normal form for determining fusion rules.
This approach enables computers to automatically generate all possible gapped boundaries and defects for topological Pauli stabilizer codes through boundary anyon condensation and topological order completion. This streamlines the analysis of surface codes and associated logical operations for fault-tolerant quantum computation.
Our algorithm applies to $\mathbb{Z}_d$ qudits for both prime and nonprime $d$, enabling exploration of topological phases beyond the Kitaev toric code.
We have applied the algorithm and explicitly demonstrated the lattice constructions of 2 boundaries and 6 defects in the $\mathbb{Z}_2$ toric code, 3 boundaries and 22 defects in the $\mathbb{Z}_4$ toric code, 1 boundary and 2 defects in the double semion code, 1 boundary and 22 defects in the six-semion code, 6 boundaries and 270 defects in the color code, and 6 defects in the anomalous three-fermion code.
Finally, we study the boundaries of bivariate bicycle codes, showing that they exhibit large logical dimensions and anyons with long translation periods.
}





\maketitle

\tableofcontents

\section{Introduction}

Quantum error correction is a fundamental requirement for achieving reliable and scalable quantum computation \cite{Shor1995Scheme, Steane1996Error, Knill1997Theory, gottesman1997stabilizer, kitaev2003fault}. Among the various quantum error-correcting codes developed, surface codes are one of the most promising candidates for practical implementation in real-world experiments \cite{bravyi1998quantum, Freedman2001Projective, dennis2002topological, Verresen2021PredictionTC, bluvstein2022quantum, google2023suppressing, Google2023NonAbelian, iqbal2023topological, iqbal2024NonAbelian, Cong2024EnhancingTO}.
The ability to lay qubits on a plane simplifies engineering challenges and enhances the feasibility of large-scale quantum processors. By leveraging topological properties \cite{ Freedman2002AModular, Levin2005Stringnet, Nayak2008NonAbelian, Wang2022in}, surface codes encode quantum information in a way that is inherently protected from local errors, ensuring the fidelity of quantum computations over extended periods. This reliability makes surface codes effective for future fault-tolerant quantum computation.

To date, most surface codes, including the Kitaev surface code \cite{kitaev2003fault}, the color code \cite{Steane1996Error, bombin2006topological, kubica2015unfolding, Yoshida2015Topological, Kesselring2018boundariestwist}, the double semion code \cite{Levin2005Stringnet, levin2012braiding, ellison2022pauli}, and the bivariate bicycle codes \cite{Bravyi2024HighThreshold, wang2024coprime} have been meticulously designed and studied individually. The underlying mathematical framework of these error-correcting codes are Dijkgraaf-Witten topological quantum field theories (TQFTs), which describe topological orders and have lattice constructions of fixed-point wave functions \cite{dijkgraaf1990topological, kitaev2003fault, Chen2012cohomology, chen2012symmetry, Hu2013Twisted, Chen2019Bosonization, Chen2020Exactbosonization}. Their gapped boundaries are realized by anyon condensation of the Lagrangian subgroup \cite{etingof2010fusion, Kapustin2011Topological, Beigi2011QuantumDouble, kitaev2012models, Kong2014Anyon, Lan2020Gappeddomainwalls, Hu2022Anyon, kaidi2022higher, Wang2022Extend, kong2024higher}.
The original construction of topological quantum field theories (TQFTs) is intricate and often challenging to implement in practice. Ref.~\cite{ellison2022pauli} adapts the generalized Pauli (qudit) stabilizer formalism to simplify the construction of Abelian twisted quantum doubles, including all (2+1)D Abelian topological orders that admit gapped boundaries. Qudits with nonprime dimensions extend topological Pauli stabilizer codes beyond Kitaev’s toric code, enabling a broader class of quantum codes.
Ref.~\cite{schuster2023holographic} highlights the bulk-boundary correspondence in topological Pauli stabilizer codes, where the boundary Hilbert space is anomalous and restricted by the anyon data of the bulk topological order. The boundaries and defects of the standard toric code and the color code have been constructed explicitly through condensation \cite{kitaev2012models, Kong2017Boundarybulk, Kesselring2018boundariestwist, Ji2020Categorical, maissam2023codimension, Shao2023HigherGauging, Kesselring2024Anyon}. 
Despite this progress, no general constructive algorithm exists for determining boundaries and defects in arbitrary topological Pauli stabilizer codes. While the theoretical framework of anyon condensation remains valid, the microscopic details necessary for practical lattice constructions remain elusive.\footnote{More precisely, the anyon condensation procedure generally does not preserve the topological order (TO) condition in the microscopic lattice. While specific cases, such as the standard toric code, the $\mathbb{Z}_4$ double semion code, and the color code, can maintain the TO condition, general models require an additional step to carefully design the Hamiltonian.}

Recently, an algorithm in Ref.~\cite{liang2023extracting} was developed to extract the topological orders of generalized Pauli stabilizer codes on a two-dimensional infinite plane. Extending this method to situations with boundaries and defects is essential, as it would enrich the topological information of given stabilizer codes and enable additional logical operations. For instance, introducing the $e-m$ exchange defect in the $\mathbb{Z}_2$ toric code creates non-Abelian endpoints as the Ising anyons $\sigma$ \cite{Barkeshli2013Theory, maissam2023codimension, Google2023NonAbelian}.
Distinct boundary conditions, such as $e$-condensed or $m$-condensed, enhance the versatility of surface code design.
The interplay between boundaries, defects, and bulk anyons can expand the logical space and introduce new logical operators \cite{bravyi1998quantum, zhu2022topological, maissam2023codimension, Ryohei2023Fermionic, Maissam2024Higher, Maissam2024Highergroup, Kobayashi2024CrossCap}, which can be harnessed for universal quantum computation \cite{Cong2017Universal}.
Therefore, developing a systematic approach for constructing boundaries and defects in general topological Pauli stabilizer codes is crucial for advancing the construction of novel surface codes. Such a framework would be helpful for quantum error correction and facilitate the implementation of two-dimensional quantum codes with open boundaries in experiments, supporting the development of scalable and fault-tolerant quantum computation~\cite{liang2025planar, yoder2025tour}.
This paper presents an algorithm that efficiently constructs boundaries and defects for any two-dimensional topological Pauli stabilizer code using an operator algebra approach.\footnote{\change Throughout this paper, the anyon theory associated with a topological Pauli stabilizer model is Abelian: all intrinsic bulk anyons are created by Pauli string operators and have quantum dimension one. Non-Abelian objects are beyond the scope of Pauli stabilizer codes.}
The algorithm aims to streamline the development of surface codes with the aid of classical computers.

In summary, we present an algorithm for constructing gapped boundaries in topological generalized Pauli stabilizer codes with $\mathbb{Z}_d$ qudits in two spatial dimensions, applicable to both prime and nonprime qudits. Our method begins by solving for all boundary gauge operators that commute with bulk stabilizers. These boundary gauge operators are then used to form boundary string operators, which create boundary anyons at their endpoints. There is a one-to-one correspondence between bulk and boundary anyons, and the topological data, including fusion rules, topological spins, and braiding statistics, can be derived from these boundary string operators. To finish the construction of gapped boundaries, we perform boundary anyon condensation and topological order completion to obtain the boundary Hamiltonian, ensuring the topological order condition is satisfied. Defects can be constructed similarly to boundaries through the folding argument. Our algorithm is demonstrated in Fig.~\ref{fig: concept_map}. 
As an application, we have implemented the algorithm to construct boundaries and defects for a variety of quantum codes, including the \(\mathbb{Z}_2\) toric code, the \(\mathbb{Z}_4\) toric code, the color code, the double semion code, the six-semion code, and the three-fermion code. 
The algorithm can also be applied to the bivariate bicycle (BB) codes from Ref.~\cite{Bravyi2024HighThreshold}, offering a topological perspective on this family of low-density parity-check (LDPC) codes. Our results indicate that the periodicity\footnote{Periodicity refers to the shortest distance an anyon can move in a given direction.} of anyons in the BB code family is notably long. For instance, in two cases analyzed in subsequent sections, the periodicities are 12 and 1023, respectively.

The paper is organized as follows: Sec.~\ref{sec: Physical intuition} begins by reviewing the stabilizer code formalism for topological quantum codes in bulk and extends this framework to open boundaries using the subsystem code formalism. This section explicitly defines boundary anyons and string operators and outlines the procedure for constructing gapped boundaries. Sec.~\ref{sec: operator algebra} introduces the operator algebra formalism, which provides the mathematical foundation to rigorously prove the lemmas and theorems from the previous section, with deep connections to ring theory. In Sec.~\ref{sec: Computational algorithms}, we adopt the Laurent polynomial formalism to implement our algorithm practically, enabling computers to perform matrix operations, such as computing the Hermite and Smith normal forms, to derive the topological data of boundary anyons and construct gapped boundaries in Pauli stabilizer codes. Finally, Sec.~\ref{sec: Applications} demonstrates the algorithm's application to various quantum codes.

\begin{figure*}[t]
    \centering
    \includegraphics[width=1\textwidth]{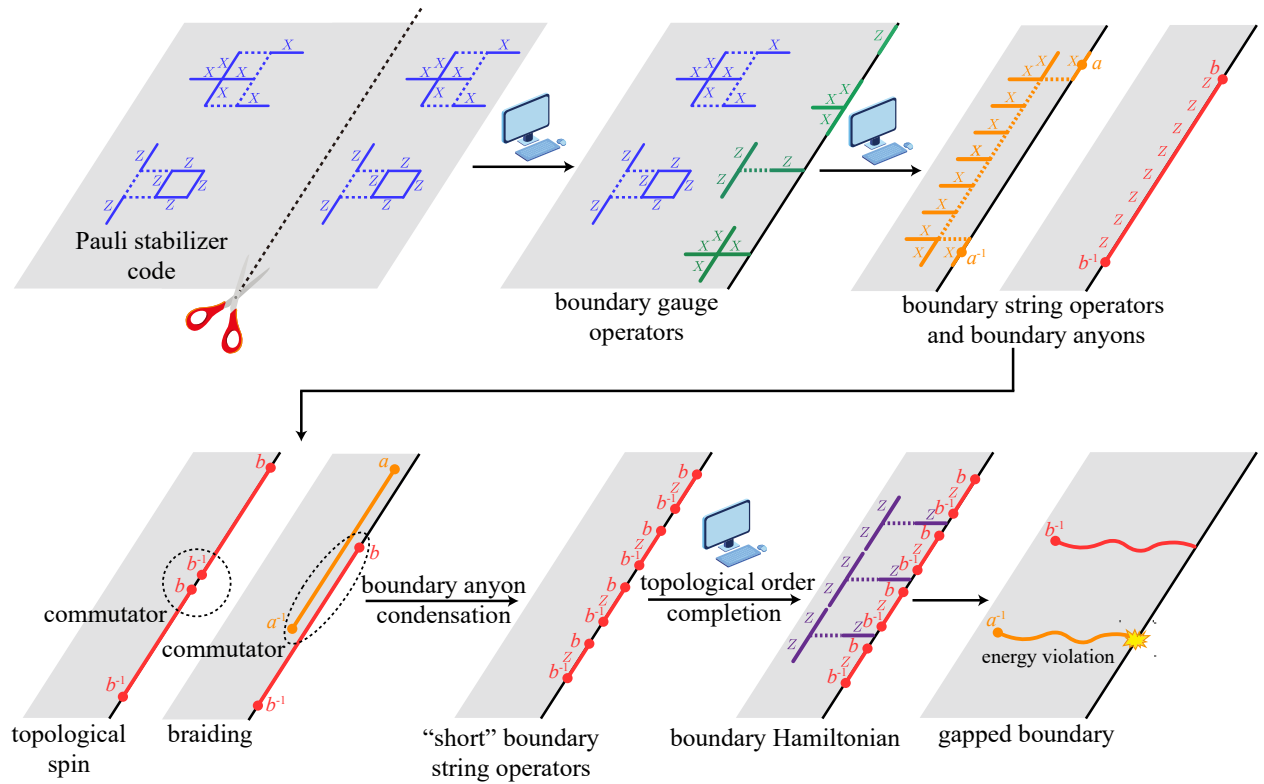}
    \caption{Algorithm for constructing gapped boundaries of a topological generalized Pauli stabilizer code. We begin by truncating an infinite system and solving for boundary gauge operators that commute with all bulk stabilizers. These boundary gauge operators are then used to form boundary string operators, which create boundary anyons in the (1+1)D subsystem code at their endpoints. Each boundary anyon has a one-to-one correspondence with a bulk anyon in the (2+1)D stabilizer code. The topological spin and braiding statistics are derived from the commutator of the boundary string operators. Finally, we perform boundary anyon condensation and topological order completion to construct the gapped boundary Hamiltonian. This Hamiltonian ensures that bosons in the Lagrangian subgroup can terminate at the boundary without incurring an energy penalty, whereas other anyons must violate the boundary Hamiltonian.}
    \label{fig: concept_map}
\end{figure*}

\section{Physical intuition and description}\label{sec: Physical intuition}

This section offers a pedagogical overview of generalized Pauli operators and Abelian anyon theories within the stabilizer code formalism, focusing on the microscopic perspective. Additionally, we extend this framework to systems with open boundary conditions. We will introduce boundary gauge operators, which contrast with the bulk stabilizer operators, and subsequently define boundary anyons and their corresponding string operators. These concepts will be crucial for constructing gapped boundaries and defects.

\subsection{Review of bulk anyons in stabilizer formalism}

We begin by reviewing the bulk stabilizer code formalism and the microscopic definitions of anyons and topological data as discussed in Ref.~\cite{liang2023extracting}. We first consider the stabilizer code on an infinite plane with translational symmetry, which will later be truncated to a finite region with an open boundary.

Let us recall the standard definitions of $d \times d$ \textbf{generalized Pauli matrices} for a $\ZZ_d$ qudit:
\begin{eqs}
X = \sum_{j \in \mathbb{Z}_d} |j+1\rangle\langle j|, \quad
Z = \sum_{j \in \mathbb{Z}_d} \omega^j |j\rangle\langle j|,
\label{eq: generalized X and Z}
\end{eqs}
where $\omega$ is defined as $\omega := \exp \left(\frac{2 \pi i}{d}\right)$. The matrices $X$ and $Z$ satisfy the commutation relation:
\begin{eqs}
Z X = \omega X Z.
\end{eqs}
For simplicity, we will refer to these as “Pauli” matrices, using the term as shorthand for “generalized Pauli.”

We begin by considering a local Pauli stabilizer Hamiltonian on a two-dimensional lattice that satisfies the \textbf{topological order (TO) condition} \cite{bravyi2010topological, bravyi2011short, haah_module_13, haah2016algebraic, haah_classification_21}. This condition requires that any local operator $\mathcal{O}$ commuting with all stabilizers can be written as a product of stabilizers, $\mathcal{O} = \prod_{i \in J} S_i$ for some set $J$.
The TO condition implies the local indistinguishability of the ground state(s) in a stabilizer code, meaning no local operator can distinguish between them. A stabilizer code satisfying the TO condition is referred to as a \textbf{topological Pauli stabilizer code}, indicating the presence of \textbf{topological order}. This topological order is described by unitary modular tensor categories (UMTCs) \cite{rowell2006quantum, rowell2009classification, wang2010topological, Barkeshli2019Symmetry, Wang2022in, Barkeshli2022Classification, Plavnik2023Modular}, which characterize the low-energy excitations. Stabilizer models give rise to Abelian anyon theories, a subset of UMTCs.

\begin{figure}[thb]
    \centering
    \includegraphics[width=0.5\textwidth]{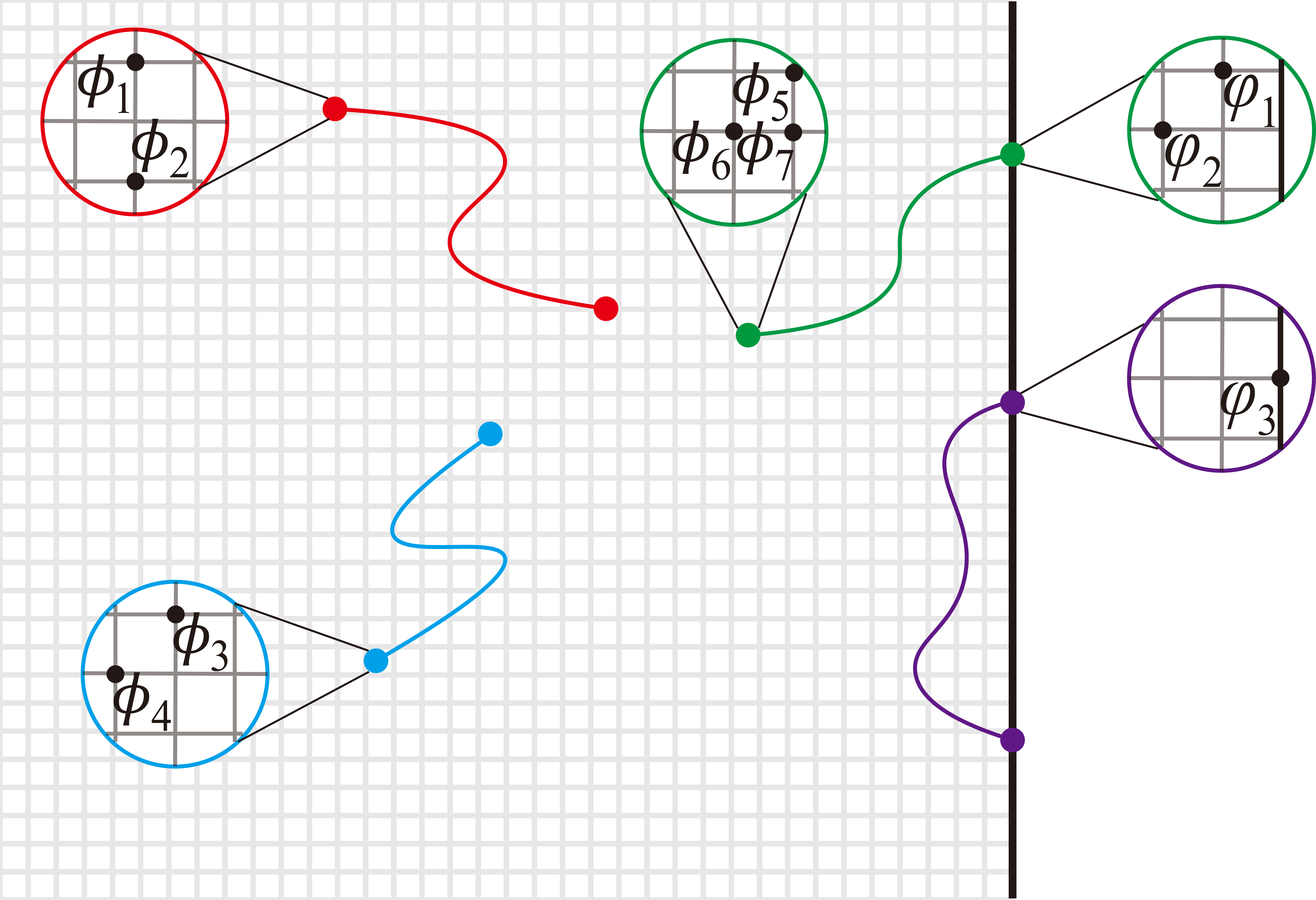}
    \caption{
    The black dots indicate the locations where stabilizers act as $\phi$ (with $\phi \neq 1$) on the state. These syndromes are labeled as $\phi_1$, $\phi_2$, and so forth. When these violated stabilizers are spatially distant from one another, we group the ${ \phi_i }$ into separate patches, treating each patch as a local anyon. In (2+1)D topological Pauli stabilizer codes, any anyon is generated by a string operator, which is a product of Pauli matrices along a string that fails to commute with only a finite number of stabilizers near its endpoints. The concept of bulk stabilizer violations can be extended to boundary gauge operators, as discussed in Sec.~\ref{sec: define boundary anyon}.
    }
\label{fig: local anyons}
\end{figure}

An \textbf{anyon} is defined as a violation of stabilizers on the lattice \cite{Doplicher1971Localobservables, Doplicher1974Localobservables, Cha2020OntheStability, liang2023extracting}. Given a ground state $| \Psi_{\text{gs}} \rangle$, which is in the $+1$ eigenstate of all the stabilizers: $S_i | \Psi_{\text{gs}} \rangle = | \Psi_{\text{gs}} \rangle$.
However, when a Pauli operator \( M \) is applied to the ground state, the resulting perturbed state may no longer reside in the \( +1 \) eigenspace of the stabilizers. Specifically, we have \( S_i (M | \Psi_{\text{gs}} \rangle) = \phi_i (M | \Psi_{\text{gs}} \rangle) \), where \( \phi_i \in U(1) \) depends on the commutation relation between \( S_i \) and \( M \). This perturbed state is characterized by a set of \( U(1) \) angles \( \{ \phi_1, \dots, \phi_N \} \), which defines a homomorphism from the stabilizer group \( \mathcal{S} \) to \( U(1) \):
\begin{equation}
    \phi: \mathcal{S} \rightarrow U(1),
\label{eq: bulk homomorphism}
\end{equation}
where \( \phi(S_i) = \phi_i \).

So far, violations ($\phi_i \neq 1$) have been described globally across the entire system. However, locality now plays a crucial role. If the violated stabilizers are spatially distant, for example, due to a long string operator $M$ that only affects stabilizers near its endpoints, we can group the stabilizers with $\phi_i \neq 1$ into local patches, as illustrated in Fig.~\ref{fig: local anyons}. Each patch represents a local anyon, with ${ \phi_i }$ forming the \textbf{syndrome pattern}. Mathematically, a \textbf{local anyon} is defined by a homomorphism $\phi$ in Eq.~\eqref{eq: bulk homomorphism}, with $\phi(S_i) = 1$ for all but a finite number of $S_i \in \mathcal{S}$, meaning the stabilizers are violated locally \cite{ruba2024homological}. In two-dimensional topological Pauli stabilizer codes, all anyons can be generated by (infinite) string operators \cite{haah_module_13, haah_classification_21, ruba2024homological}. This property does not hold in higher-dimensional models, primarily due to the fracton phases of matter \cite{chamon2005quantum, haah2011local, vijay2016fracton, shirley2018fracton, pretko2020fracton, Tantivasadakarn2021Hybrid1, Tantivasadakarn2021Hybrid2, Song2022Optimal}.

Anyon types, also known as superselection sectors, can be characterized as equivalence classes based on the \textbf{equivalence relation} defined between anyons \( v \) and \( v' \):
\begin{equation}
    v := \{ \phi_{i}\} \sim v' := \{ \phi'_{i}\},
\label{eq: definition of anyon equivalence}
\end{equation}
if and only if the sets \(\{ \phi_{i}\}\) and \(\{ \phi'_{i}\}\) differ only by local Pauli operators. In other words, two syndrome patterns \( v \) and \( v' \) are considered equivalent if the pattern \( v' \) can be obtained from \( v \) by applying local Pauli operators. Utilizing the concept of anyon types, we can explore the fusion rules. The \textbf{fusion rules} of (Abelian) anyons describe the process of bringing two anyons, \( a \) and \( b \), into close proximity (through their string operators) and identifying their composite as a third anyon, \( c \), under the equivalence relation given by Eq.~\eqref{eq: definition of anyon equivalence}. This fusion rule is denoted by
\begin{equation}
    a \times b = c.
\end{equation}
Additionally, the \textbf{topological spin} \(\theta(a)\) can be calculated for each anyon \( a \), which determines its exchange statistics according to the spin-statistics theorem, distinguishing types such as bosons, fermions, and semions. The T-junction process, which involves exchanging the positions of two particles, can be used to determine the topological spin \cite{Levin2006Quantum, alicea2011non, KL20, Chen2021Disentangling, fidkowski2022gravitational, haah_QCA_23}. Consider paths \(\bar{\gamma}_1\), \(\bar{\gamma}_2\), and \(\bar{\gamma}_3\) that converge at a common endpoint \( p \) and are arranged counter-clockwise around \( p \), as depicted in Fig.~\ref{fig: T junction 3 paths}. The topological spin \(\theta(a)\) for anyon \( a \) is calculated using the formula:
\begin{equation} 
    W^a_{3} (W^a_{2})^\dagger W^a_{1} = \theta(a) W^a_{1}(W^a_{2})^\dagger W^a_{3},
\label{eq: statistics formula}
\end{equation}
where \( W^a_{i} \) represents the string operator moving anyon \( a \) along the path \(\bar{\gamma}_i\). The \textbf{braiding statistics} between two anyons, \( a \) and \( b \), can be derived from their topological spins as follows:
\begin{equation}
    B(a, b) = \frac{\theta( a \times b)}{\theta(a) \theta(b)}.
\end{equation}
See Appendix~A of Ref.~\cite{liang2023extracting} for a detailed derivation.

\begin{figure}
\centering
    \includegraphics[width=.4\textwidth]{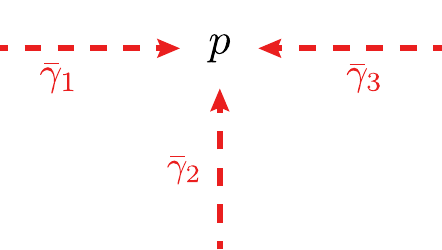}
     \caption{The exchange statistics, or topological spin, of an anyon \( a \) can be determined using the formula presented in Eq.~\eqref{eq: statistics formula}. In this context, \(\bar{\gamma}_1\), \(\bar{\gamma}_2\), and \(\bar{\gamma}_3\) are oriented paths on the lattice that intersect at a common point \( p \). The string operators \( W^a_{1} \), \( (W^a_{2})^\dagger \), and \( W^a_{3} \), corresponding to anyon \( a \) and defined along the paths \(\bar{\gamma}_1\), \(-\bar{\gamma}_2\), and \(\bar{\gamma}_3\), respectively, may not commute. This non-commutativity results in the exchange statistics \(\theta(a)\).}
     \label{fig: T junction 3 paths}
\end{figure}

\subsection{Defining boundary gauge operators, boundary anyons, and boundary strings}
\label{sec: define boundary anyon}

We have thus far reviewed the microscopic definition of bulk anyons and string operators within the stabilizer code formalism. We now aim to extend this framework to systems with open boundaries. Specifically, we will define boundary anyons and their associated string operators using the subsystem code formalism.

To proceed, we truncate the infinite plane to a finite subspace. The detailed geometry and shape of the boundary do not matter, but for simplicity, we use a square lattice truncated to the left semi-infinite plane \(x \leq 0\) for demonstration, as shown in Fig~\ref{fig: local anyons}.
After truncation, we retain only the local stabilizers that are fully supported within the truncated region, referring to them as \textbf{bulk stabilizers}. If the original stabilizer (before truncation) contains a Pauli operator on any qudit that is truncated out, this stabilizer is no longer included in the truncated system.

Near the boundary, the TO condition is no longer satisfied. There exists a local operator that commutes with all bulk stabilizers but is not necessarily a product of bulk stabilizers. These operators are referred to as \textbf{boundary gauge operators}.
More precisely, these local operators that commute with bulk stabilizers, quotiented by bulk stabilizers, form the group $\mathcal{G}$ of boundary gauge operators. Additionally, these operators might not commute with themselves. The terminology is borrowed from the definition of subsystem codes, where the gauge operators do not commute, and their commutants form the stabilizer group. 
As a result, the Hilbert space of the boundary theory becomes \textbf{anomalous}, meaning that only boundary gauge operators can act on it, and the space lacks a tensor product structure \cite{Fidkowski2013NonAbelian, Else2014Classifying, Chen2014SymmetryEnforced, Wang2014Interacting, Chen2015Anomalous, Wang2016BulkBoundary, Wang2019Anomalous, wang2020construction}. Since the bulk stabilizers locally satisfy the TO condition, for any boundary gauge operator, there is an equivalent operator (by multiplying with bulk stabilizers) supported near the boundary and does not extend into the bulk. This follows from the cleaning lemma \cite{haah2016algebraic, Ellison2023paulitopological}. Therefore, we can assume that all boundary gauge operators are supported within a finite distance from the boundary.

Anyons in the subsystem code can be defined \cite{Ellison2023paulitopological}. We use the intuition similar to Eq.~\eqref{eq: bulk homomorphism} to define our boundary anyons and boundary string operators as follows.
\begin{definition}
A \textbf{boundary anyon} is defined as the local ``syndrome pattern'' of boundary gauge operators that indicates how boundary gauge operators are violated:
\begin{eqs}
    \varphi: \mathcal{G} \rightarrow U(1),
\end{eqs}
such that $\varphi(G_i) = 1$ for all but a finite number of $G_i \in \mathcal{G}$.
\end{definition}
We will refer to the ``syndrome pattern'' of the boundary gauge operators as the \textbf{gauge violation}.
In Sec.~\ref{sec: operator algebra}, we will prove the following theorem:
\begin{theorem} \label{thm: boundary anyon definition}
For any gauge violation represented by a homomorphism $\varphi: \mathcal{G} \rightarrow U(1)$, there exists an (infinite) boundary gauge operator $O_\varphi$ such that $\varphi$ can be expressed as:
\begin{eqs}
    \varphi(G) = [G, O_\varphi] := G O_\varphi G^{-1} O_\varphi^{-1},
    \quad \forall~G \in \mathcal{G}.
\end{eqs}
\end{theorem}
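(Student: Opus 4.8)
The plan is to reduce the statement to a standard duality for Pauli operators---made precise in the operator-algebra language of Sec.~\ref{sec: operator algebra}---using only the defining property of a boundary gauge operator, namely that it commutes with all bulk stabilizers. Work modulo overall phases, so that the group $\mathcal{P}$ of finitely supported Pauli operators on the truncated lattice is the free $\mathbb{Z}_d$-module generated by the single-qudit operators, the bulk stabilizer group $\mathcal{S}$ and the group $\mathcal{N}$ of finite operators commuting with all bulk stabilizers are submodules $\mathcal{S}\subseteq\mathcal{N}\subseteq\mathcal{P}$, and $\mathcal{G}=\mathcal{N}/\mathcal{S}$. The commutator $[G,O]$ is the $\mu_d$-valued symplectic pairing induced by $ZX=\omega XZ$; in particular any homomorphism of the form $\varphi(G)=[G,O]$ is automatically $\mu_d$-valued, so I regard the given $\varphi$ as taking values in $\mu_d\subseteq U(1)$.

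The construction proceeds in three steps. First, pull $\varphi$ back along the quotient $\pi:\mathcal{N}\twoheadrightarrow\mathcal{G}$ to obtain $\hat\varphi:=\varphi\circ\pi:\mathcal{N}\to\mu_d$, which by construction kills $\mathcal{S}=\ker\pi$. Second, since $U(1)$ is divisible, hence injective as an abelian group, the inclusion $\mathcal{N}\hookrightarrow\mathcal{P}$ allows $\hat\varphi$ to be extended to a homomorphism $\tilde\varphi:\mathcal{P}\to U(1)$, still killing $\mathcal{S}$. Third---the operator-algebra input---the symplectic pairing identifies $\mathrm{Hom}(\mathcal{P},U(1))$ with the group $\widehat{\mathcal{P}}$ of (possibly infinitely supported, but per-qudit finite) Pauli operators: one reconstructs a homomorphism by reading off its value on each single-qudit $X$ and $Z$ and recording those as the $Z$- and $X$-content of an operator at that qudit. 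Applying this to $\tilde\varphi$ yields $O_\varphi\in\widehat{\mathcal{P}}$ with $\tilde\varphi(P)=[P,O_\varphi]$ for all $P\in\mathcal{P}$. Since $\tilde\varphi(s)=1$ for every bulk stabilizer $s\in\mathcal{S}$, we get $[s,O_\varphi]=1$ for all such $s$, so $O_\varphi$ commutes with all bulk stabilizers and is a (generally infinite) boundary gauge operator; if one wants it confined to a bounded neighborhood of the boundary, the cleaning lemma lets one multiply by bulk stabilizers to achieve this without altering any $[G,O_\varphi]$ with $G\in\mathcal{G}$ (since such $G$ commute with $\mathcal{S}$). Finally, for $G\in\mathcal{G}$ with any lift $n\in\mathcal{N}$, $[G,O_\varphi]:=[n,O_\varphi]=\tilde\varphi(n)=\hat\varphi(n)=\varphi(\pi(n))=\varphi(G)$, as desired.

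The first place care is needed is the third step: one must pin down the precise meaning of an ``infinite'' operator, equip the resulting space with its pairing against the finitely supported operators, and verify the duality $\mathrm{Hom}(\mathcal{P},U(1))\cong\widehat{\mathcal{P}}$ together with its compatibility with locality---this is exactly the operator-algebra/Laurent-polynomial machinery of Sec.~\ref{sec: operator algebra}. The more substantive obstacle, which I expect to be the main one, is that the existence argument above produces an $O_\varphi$ that may be a two-sided-infinite operator, whereas for the intended physical and algorithmic use---when $\varphi$ has finite support, $O_\varphi$ should be a boundary \emph{string} operator violating boundary gauge operators only near a single endpoint---one must choose the extension $\tilde\varphi$ so that its support runs off to infinity in only one direction along the boundary. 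Arranging this is not automatic: it requires presenting $\mathcal{N}$ and $\mathcal{S}$ by Laurent-polynomial matrices over $R=\mathbb{Z}_d[y,y^{-1}]$, confining everything to a strip via the cleaning lemma, and solving the resulting linear system over a suitable completion of $R$; the divisibility obstructions encountered there---handled by Hermite/Smith-type normal forms, with composite $d$ reduced to prime powers by the CRT---are precisely what force $O_\varphi$ to be infinite exactly when $\varphi$ is a nontrivial boundary anyon.
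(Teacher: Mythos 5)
Your argument is correct as a proof of the literal statement, but it takes a genuinely different and softer route than the paper. The paper's proof is constructive and module-theoretic: it invokes Theorem~\ref{thm: Ruba Yang theorem} (Proposition 10 of Ref.~\cite{ruba2024homological}) to conclude that $\mathcal{G}^*/\mathcal{G}$ is torsion over $R=\mathbb{Z}_d[y,y^{-1}]$, so that some $R$-multiple $r\varphi$ of the syndrome (several translated copies) is created by a \emph{local} gauge operator $O_\partial$, and then isolates a single copy by inverting the non-zero-divisor $r$ in the formal Laurent series $\mathbb{Z}_d((y))$ (Appendix~\ref{appnedix: Units in the formal Laurent series}). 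You instead extend $\hat\varphi$ from $\mathcal{N}$ to all of $P$ using self-injectivity of $\mathbb{Z}_d$ (equivalently, divisibility of $U(1)$ restricted to $d$-torsion) and read off $O_\varphi\in\hat P$ from the sitewise duality $\mathrm{Hom}(P,\mu_d)\cong\hat P$ furnished by the unimodular single-site symplectic form. This is shorter, requires no translation invariance along the boundary, and does not even use non-degeneracy of the form on $\mathcal{G}$. Two remarks. First, to call your $O_\varphi$ an ``(infinite) boundary gauge operator'' in the sense of an infinite product of local gauge operators, rather than merely an infinite operator commuting with $S_B$, one needs $S_B^\perp=G^{\perp\perp}=\hat G$, which uses the topological order condition ($G^\perp=S_B$) together with Lemma~\ref{lemma: double perp of infinite operator}; this is worth stating since it is the only place the TO condition enters your argument. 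Second, what the paper's longer route buys---and what your existence proof deliberately does not deliver---is exactly what you flag at the end: the realization of $O_\varphi$ as a semi-infinite product of translates of a single local operator, which is the structural input behind Lemma~\ref{lemma: Mobility of boundary anyons along the boundary}, Corollary~\ref{cor: weak translational symmetry}, and the algorithm; recovering that from your construction would require redoing essentially the Laurent-series analysis you defer to.
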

\begin{proof}
    See the discussion following Theorem~\ref{thm: Ruba Yang theorem}.
\end{proof}
This theorem implies that any boundary anyon (gauge violation of boundary gauge operators) can be created by an (infinite) boundary gauge operator, as shown in Fig.~\ref{fig: local anyons}. This theorem is analogous to the property that anyons in two dimensions can be created by bulk string operators.

Similar to bulk anyons, we can categorize boundary anyons into different equivalence classes (anyon types):
\begin{definition}
    Two boundary anyons are \textbf{equivalent} if their syndrome patterns differ by a local boundary gauge operator.
\end{definition}

Note that previously, bulk anyons were considered equivalent if they differed by the syndrome of a local Pauli operator. Here, we are only allowed to apply a boundary gauge operator since we need to ensure that the bulk stabilizers are never violated (only boundary gauge operators can be violated).
Next, we define boundary string operators:
\begin{definition}
    A \textbf{boundary string operator} along a boundary segment is a product of boundary gauge operators in the neighborhood of this segment that creates boundary anyons at its endpoints.
    \label{def: boundary string operator}
\end{definition}
A boundary string operator does not violate bulk stabilizers and only fails to commute with boundary gauge operators at its endpoints.
From the definitions above, a nontrivial boundary anyon is created by an infinite operator. Without loss of generality, we can assume a boundary anyon is created by a semi-infinite boundary string operator, containing boundary gauge operators only on one side of the anyon location.\footnote{This is feasible because we can always divide an infinite string operator into two semi-infinite strings and discuss the two boundary anyons separately.} First, we can show that boundary anyons are mobile along the boundary:
\begin{lemma}
    {\bf(Mobility of boundary anyons along the boundary)}
    Given a boundary anyon at a vertex \(v_\partial\), which is the endpoint of a semi-infinite boundary string operator, this string operator can be adjusted locally to end at another vertex \(v'_\partial\) far from \(v_\partial\), such that it only violates boundary gauge operators around \(v'_\partial\). In other words, we can apply a finite string operator to move the anyon from \(v_\partial\) to another vertex \(v'_\partial\).
\label{lemma: Mobility of boundary anyons along the boundary}
\end{lemma}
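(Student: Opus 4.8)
The plan is to reduce the claim to the corresponding statement for bulk anyons, which holds in two-dimensional topological Pauli stabilizer codes and is cited earlier in the excerpt, and then transport it to the boundary using the structure of boundary gauge operators. First I would fix a semi-infinite boundary string operator $W$ creating the boundary anyon $\varphi$ at a vertex $v_\partial$; by Definition~\ref{def: boundary string operator}, $W$ is a product of boundary gauge operators supported in a neighborhood of a semi-infinite segment of the boundary, and it fails to commute only with boundary gauge operators localized near $v_\partial$. The goal is to produce a \emph{finite} boundary gauge operator $F$ supported in a neighborhood of a path along the boundary from $v_\partial$ to a distant vertex $v'_\partial$, such that $FW$ creates the (equivalent) boundary anyon only near $v'_\partial$.

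The key steps, in order: (1) Observe that because all boundary gauge operators are (after multiplying by bulk stabilizers) supported within a finite distance $\ell$ of the boundary, both $W$ and the sought $F$ live in a quasi-1D strip $\{ -\ell \le x \le 0 \}$ along the boundary; everything is effectively a one-dimensional problem. (2) Show \emph{local} mobility by one step: given the anyon at $v_\partial$, I want a boundary gauge operator supported in a bounded neighborhood of $v_\partial$ and an adjacent vertex $v_\partial'$ whose gauge violation equals that of $\varphi$ at $v_\partial$ together with its "inverse" at $v_\partial'$. Here I invoke Theorem~\ref{thm: boundary anyon definition}: the gauge violation that is trivial everywhere except it agrees with $\varphi$ near $v_\partial$ and with $\bar\varphi$ near $v_\partial'$ is itself a legitimate homomorphism $\mathcal{G}\to U(1)$ that is trivial outside a finite region, hence is realized by some boundary gauge operator, which I can take to be finite by the cleaning/locality argument already used for $\mathcal{G}$. (3) Multiply $W$ by this finite operator to move the anyon one step; iterate the one-step move a finite number of times to reach $v'_\partial$, taking the (finite) product of all the one-step operators as $F$. (4) Conclude that $FW$ is a boundary string operator violating gauge operators only near $v'_\partial$, and that the anyon it creates there is equivalent to the original by the definition of equivalence (they differ by the finite boundary gauge operator $F$).

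The main obstacle I anticipate is step (2): justifying that the "move-one-step" gauge violation is actually realizable by a \emph{finite} (not merely semi-infinite) boundary gauge operator, and that this is legitimate. Theorem~\ref{thm: boundary anyon definition} guarantees realizability by some, possibly infinite, $O_\varphi$; to get finiteness I must argue that a gauge violation supported in a bounded region is created by a bounded operator. This should follow from the same homological / cleaning-lemma reasoning (\cite{haah2016algebraic, Ellison2023paulitopological}) that was used earlier to assume all boundary gauge operators are finitely supported near the boundary: a locally-trivial syndrome can be cleaned into a bounded neighborhood of its support. A secondary subtlety is checking that, after the move, $W$ itself is not disturbed elsewhere — i.e. that $F$ commutes with the bulk stabilizers everywhere (true, since $F\in\mathcal{G}$) and that the tail of $W$ is unaffected because $F$ has bounded support disjoint from it. I would make the "quasi-1D strip" observation precise enough that the iteration in step (3) terminates after $O(\mathrm{dist}(v_\partial,v'_\partial))$ steps with a uniformly local operator at each step.
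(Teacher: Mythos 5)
There is a genuine gap at your step (2), and it sits exactly where you flagged the ``main obstacle.'' You need a \emph{finite} boundary gauge operator whose gauge violation is the dipole ``$\varphi$ near $v_\partial$ together with $\bar\varphi$ near $v'_\partial$,'' and you propose to obtain it from Theorem~\ref{thm: boundary anyon definition} plus a cleaning argument, on the grounds that a gauge violation supported in a bounded region is created by a bounded operator. That principle is false: a single nontrivial boundary anyon also has a gauge violation supported in a bounded region, yet it cannot be created by any finite boundary gauge operator (otherwise it would be equivalent to the trivial anyon). The cleaning argument invoked for $\mathcal{G}$ only compresses support in the direction \emph{perpendicular} to the boundary (into a strip of bounded width); it gives no control \emph{along} the boundary, so it cannot turn the possibly infinite $O_\varphi$ supplied by Theorem~\ref{thm: boundary anyon definition} into an operator localized near the pair $\{v_\partial, v'_\partial\}$. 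The true statement---that a particle--antiparticle pair is created by a finite operator supported near the pair---is precisely the content of the mobility lemma, so assuming it at step (2) is circular.

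The paper's proof never tries to realize the dipole abstractly; it works directly with the given semi-infinite string $W$. Moving the anyon \emph{in the direction of} $W$ is immediate: delete a finite initial segment of $W$, and that deleted segment is itself the finite mover. For the opposite direction one must \emph{extend} $W$, and there the paper uses a pigeonhole argument: the syndrome at any truncation point of $W$ is confined to a bounded window, so only finitely many patterns can occur; hence two truncation points carry identical patterns up to translation, and the finite segment of $W$ between them can be translated and multiplied onto $W$ repeatedly (``copy and paste'') to extend the string arbitrarily far. If you keep your step-by-step structure, replace step (2) with this truncation-plus-pigeonhole construction of the finite mover; your steps (1), (3), and (4) are then unobjectionable.
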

This lemma is a direct consequence of the definitions.
Given a semi-infinite string, it is evident that the boundary anyon can move in the direction of the string by truncating the string operator. To demonstrate that the anyon can also move in the opposite direction, i.e., that the semi-infinite boundary string is extendable, we first note that the syndrome pattern of the truncated string is bounded within a finite range. Therefore, as we truncate the semi-infinite string progressively, by the pigeonhole principle, there must be two vertices such that the boundary anyons at these vertices are identical (up to a translation). Consequently, we can use the finite string operators between these two vertices to ``copy and paste'' to form a longer semi-infinite string. Thus, the boundary anyon can move in both directions along the boundary.

A corollary follows directly from the argument above:
\begin{corollary}
    {\bf (Weak translational symmetry of boundary anyons)}
    For any boundary anyon \(\varphi\), there exists a local operator (finite string operator) that transforms \(\varphi\) into another anyon \(\varphi'\), where \(\varphi'\) is \(\varphi\) translated by \(n\) lattice sites in the \(y\)-direction for some integer \(n\).
\label{cor: weak translational symmetry}
\end{corollary}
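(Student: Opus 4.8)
The statement to prove is Corollary~\ref{cor: weak translational symmetry}, which claims weak translational symmetry of boundary anyons: any boundary anyon $\varphi$ can be transformed by a finite string operator into a translate $\varphi'$ of $\varphi$ by $n$ lattice sites in the $y$-direction for some integer $n$.

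The plan is to extract this directly from the proof of Lemma~\ref{lemma: Mobility of boundary anyons along the boundary}, which has already done the essential work. First I would invoke Theorem~\ref{thm: boundary anyon definition} to realize $\varphi$ as the gauge violation at the endpoint of a semi-infinite boundary string operator $O_\varphi$, which we may take to extend in, say, the positive $y$-direction away from the anyon's location $v_\partial$. The key finiteness input is that the syndrome pattern of any truncation of this semi-infinite string is supported in a bounded neighborhood of the truncation point — this is because each truncation produces a (finite) boundary string operator whose only gauge violations are near its two endpoints, and near the far endpoint the violation pattern has diameter bounded by the locality range of the gauge operators. Because there are only finitely many distinct gauge-violation patterns within a fixed bounded region (the qudit dimension $d$ is finite and the region has finitely many gauge operators of bounded support), the pigeonhole principle guarantees two truncation points $v_1$ and $v_2$, separated by $n$ lattice sites in the $y$-direction, at which the local violation patterns are identical up to the $n$-site $y$-translation.

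Next I would make precise the ``copy and paste'' step: let $F$ denote the finite boundary string operator obtained as the portion of $O_\varphi$ between $v_1$ and $v_2$. Applying $F$ to the state with the anyon at $v_\partial$ moves the anyon — but by the translational structure identified via pigeonhole, the resulting configuration is exactly $\varphi$ translated by $n$ sites in the $y$-direction, i.e.\ $\varphi'$. More carefully, one checks that $F$ is itself a legitimate finite string operator (a product of boundary gauge operators near the segment $[v_1,v_2]$), that it does not violate bulk stabilizers, and that the only gauge violations it creates, when composed with the original semi-infinite string, are the original $\varphi$ at $v_\partial$ and a translated copy. Equivalently, in the language of Theorem~\ref{thm: boundary anyon definition}, $\varphi'(G) = [G, FO_\varphi F']$ for the appropriate operators, and one verifies $\varphi' = T_y^n \varphi$ where $T_y$ denotes unit $y$-translation.

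The main obstacle — really the only nontrivial point — is justifying the pigeonhole argument rigorously: one must show that the gauge-violation pattern of a truncated semi-infinite string near its far endpoint genuinely lives in a bounded window and takes values in a finite set, so that infinitely many truncations force a repeat. This rests on (i) the cleaning-lemma consequence already stated in the excerpt, that every boundary gauge operator can be taken supported within a finite distance of the boundary, hence the ``width'' of the boundary strip is bounded; and (ii) the fact that a finite string operator violates only gauge operators near its endpoints, with the violation near each endpoint confined to a neighborhood whose size depends only on the fixed locality parameters of the code, not on the length of the string. Granting these, the set of possible local patterns is finite, the $y$-coordinates of truncation points are unbounded, and pigeonhole applies. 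Everything else is bookkeeping: verifying that the interpolating operator $F$ is finite, bulk-stabilizer-commuting, and realizes precisely the $n$-site translation of $\varphi$, which follows from the homomorphism property of gauge violations and translational covariance of the truncated construction.
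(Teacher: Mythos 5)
Your proposal is correct and follows essentially the same route as the paper: the paper derives this corollary directly from its argument for Lemma~\ref{lemma: Mobility of boundary anyons along the boundary}, namely realizing the anyon as the endpoint of a semi-infinite boundary string, observing that the syndrome patterns of successive truncations live in a bounded window with finitely many possible values, applying the pigeonhole principle to find two truncation points with identical patterns up to a $y$-translation, and using the finite segment between them as the translating operator. Your additional care in justifying the finiteness hypotheses (bounded boundary-strip width via the cleaning lemma, locality of endpoint violations) merely makes explicit what the paper leaves implicit.
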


In other words, a boundary anyon can be shifted by a distance \(n\) and still represent the same anyon type. Note that in many examples, such as the color code, \(n\) cannot be 1, which means that shifting an anyon by a distance of 1 results in a different anyon. However, by the pigeonhole principle, when the shift is increased, the boundary anyon must eventually map back to itself. Therefore, it exhibits weak translational symmetry relative to the lattice.

Moreover, we will prove the following lemmas and theorems using the operator algebra formalism introduced in Sec.~\ref{sec: operator algebra}. A boundary anyon can be moved into the bulk and become a bulk anyon:
\begin{lemma}\label{lemma: boundary anyons into the bulk}
{\bf(Mobility of boundary anyons into the bulk)}
Given a boundary anyon \(\varphi\) at a vertex \(v_\partial\) on the boundary, there exists a bulk anyon \(\phi\) at a vertex \(v\) in the bulk, with a string operator along the path from \(v_\partial\) to \(v\) that transforms the boundary anyon \(\varphi\) to the bulk anyon \(\phi\).
\end{lemma}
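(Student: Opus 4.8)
The plan is to show that a boundary anyon $\varphi$, created by a semi-infinite boundary string operator $O_\varphi$ running along the boundary, can be ``bent'' into the bulk so that its far endpoint creates a genuine bulk anyon. The starting point is Theorem~\ref{thm: boundary anyon definition}: there exists a (semi-infinite) boundary gauge operator $O_\varphi$ supported within a finite distance of the boundary, such that $\varphi(G) = [G, O_\varphi]$ for all $G \in \mathcal{G}$. By the cleaning lemma and the discussion preceding Definition~\ref{def: boundary string operator}, I may assume $O_\varphi$ lies in a strip of bounded width adjacent to the boundary and fails to commute with boundary gauge operators only near $v_\partial$.

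First I would pick a vertex $v$ deep in the bulk, far from the boundary along some path $\bar\gamma$ from $v_\partial$ to $v$, and truncate $O_\varphi$ to a finite boundary string segment $O_\varphi^{\text{fin}}$ that still ends at $v_\partial$. This finite operator now generically violates bulk stabilizers and/or boundary gauge operators at \emph{both} ends: near $v_\partial$ it carries the gauge violation $\varphi$, and near its other (truncated) end $w$ it carries some local syndrome $\psi$ — a set of violations of a finite number of bulk stabilizers (and possibly boundary gauge operators if $w$ is still near the boundary; but by choosing the truncation point well inside the strip's ``bulk side'' we can arrange $\psi$ to be a pure bulk syndrome, i.e., a bulk anyon at $w$). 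The key point to establish is that this $\psi$ is a \emph{nontrivial mobile bulk anyon in the sense of Eq.~\eqref{eq: bulk homomorphism}}: it is a homomorphism $\mathcal{S}\to U(1)$ violating only finitely many bulk stabilizers, hence by the cited results (\cite{haah_module_13, haah_classification_21, ruba2024homological}) it is created by a bulk string operator. Then I would move this bulk anyon from $w$ along a path to the designated vertex $v$ using a finite bulk string operator, obtaining the desired $\phi$ at $v$; the composite operator (the finite boundary-strip string from $v_\partial$ to $w$, followed by the finite bulk string from $w$ to $v$) is a string operator along a path from $v_\partial$ to $v$ that creates $\varphi$ at one end and $\phi$ at the other.

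To make the ``truncation produces a pure bulk syndrome'' step rigorous, I would use the operator-algebra setup of Sec.~\ref{sec: operator algebra}: $O_\varphi$ is an infinite product of local generators of $\mathcal{G}$; partition the boundary strip into a near-$v_\partial$ region $A$, a middle region $B$, and the rest $C$; then $O_\varphi = O_A O_B O_C$ (up to bulk stabilizers), where $O_C$ is supported far from $v_\partial$. Because $O_\varphi$ commutes with all bulk stabilizers and with all boundary gauge operators away from $v_\partial$, the ``tail'' $O_C$ commutes with everything except possibly the boundary gauge operators and bulk stabilizers straddling the $B$–$C$ interface; its syndrome there is localized, and I need to argue it can be taken to be a bulk syndrome — this is where one invokes that near the interface (away from the boundary) the TO condition holds, so any localized syndrome of a string tail is exactly a bulk anyon homomorphism. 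The finite piece $O_A O_B$ is then our finite string from $v_\partial$ into the bulk.

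\emph{Main obstacle.} The delicate part is controlling what happens at the interface where the boundary string is cut and redirected into the bulk: a priori the truncated tail's syndrome could involve both bulk stabilizers \emph{and} the boundary gauge operators lying near the cut, and one must show this combined syndrome is equivalent (by a finite boundary gauge operator) to a pure bulk anyon that detaches cleanly from the boundary. Concretely, I expect to need a careful application of the cleaning lemma / locality of $\mathcal{G}$ together with Theorem~\ref{thm: boundary anyon definition}'s proof machinery to ``push'' any residual gauge violation near the cut back toward $v_\partial$, leaving the far endpoint carrying only bulk-stabilizer violations. Handling that bookkeeping — and checking that the resulting bulk homomorphism $\phi$ is well-defined independent of the truncation point, up to the bulk anyon equivalence of Eq.~\eqref{eq: definition of anyon equivalence} — is the technical heart of the argument; everything else is routine once the operator-algebra dictionary of Sec.~\ref{sec: operator algebra} is in place.
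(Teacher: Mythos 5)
Your proposal has a genuine gap at exactly the point you flag as the ``main obstacle,'' and that obstacle is not bookkeeping --- it is the entire content of the lemma, and the tools you propose cannot close it. A boundary string operator is, by construction, supported in a strip of bounded width hugging the boundary and runs \emph{along} the boundary direction. Truncating the semi-infinite string $O_\varphi$ at a point $w$ therefore necessarily places the cut \emph{on} the boundary: the syndrome $\psi$ at $w$ is a violation of boundary gauge operators, i.e.\ another boundary anyon (the antiparticle of $\varphi$, which is just Lemma~\ref{lemma: Mobility of boundary anyons along the boundary} again). Your suggestion to ``choose the truncation point well inside the strip's bulk side'' is geometrically vacuous --- the free parameter in the truncation is the position along the boundary, not the distance from it --- so $\psi$ is never a pure bulk-stabilizer syndrome. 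Likewise, ``pushing the residual gauge violation back toward $v_\partial$'' with the cleaning lemma only slides a boundary anyon along the boundary; it cannot convert a violation of $\mathcal{G}$ into a violation of bulk stabilizers at a site deep in the bulk. At this point your argument is circular: you assume the far end of a truncated \emph{boundary} string is a bulk anyon, which is what the lemma asks you to produce.

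The paper's proof uses a different and essential mechanism that your outline is missing. One first extends the string to a \emph{doubly}-infinite boundary string $s$ that commutes with all bulk stabilizers \emph{and} all boundary gauge operators, so $s \in G^{\perp}$. Since $G = \tilde{S}_B^{\perp}$, Lemma~\ref{lemma: double perp of infinite operator} gives $G^{\perp} = \tilde{S}_B^{\perp\perp} = \hat{S}_B$, i.e.\ $s$ is an (infinite) product of bulk stabilizers $\prod_{i\in J} S_i$. The freedom to \emph{choose which stabilizers to multiply} is what lets the string leave the boundary: restricting the product to the stabilizers supported in a large region $R$ yields $s|_R$, a closed loop that agrees with $s$ near the system boundary but closes up through the bulk along $\partial R$; cutting that loop produces an open string with one endpoint on the boundary (the boundary anyon) and one endpoint genuinely in the bulk (a bulk anyon). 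Without the identification $s \in \hat{S}_B$ --- or some equivalent way of rewriting the boundary string as a product of operators that extend into the bulk --- there is no way to ``detach'' the far endpoint from the boundary, and your proposed proof does not go through.
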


Later, we will prove the following important theorem:
\begin{theorem} \label{thm: Bulk-boundary correspondence}
{\bf(Bulk-boundary correspondence)}
There is a one-to-one correspondence between bulk anyon types and boundary anyon types, i.e., there is a bijective mapping between a boundary anyon \(\varphi\) at a vertex \(v_\partial\) on the boundary and a bulk anyon \(\phi\) at a vertex \(v\) in the bulk.
Moreover, there exist string operators along the path from \(v\) to \(v_\partial \) that transform one anyon to the other.
\end{theorem}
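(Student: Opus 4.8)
The plan is to establish the bulk-to-boundary correspondence in two directions and then verify bijectivity. Lemma~\ref{lemma: boundary anyons into the bulk} already provides the map $\varphi \mapsto \phi$: take a semi-infinite boundary string operator creating $\varphi$ at $v_\partial$, and use the cleaning-lemma/operator-algebra machinery of Sec.~\ref{sec: operator algebra} to deform its endpoint inward along a path from $v_\partial$ to a bulk vertex $v$, producing a genuine bulk string operator whose endpoint syndrome is a bulk anyon $\phi$. For the reverse direction, given a bulk anyon $\phi$ created by a bulk string operator, I would run the deformation in the opposite sense: push the string's endpoint toward the boundary. When the endpoint reaches the truncation region, the operator no longer fails to commute with bulk stabilizers (those have been cleaned away near the boundary), but it generically fails to commute with boundary gauge operators, so its endpoint defines a boundary anyon $\varphi$. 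The key technical input here is Theorem~\ref{thm: boundary anyon definition} (every gauge violation is realized by an infinite boundary gauge operator) together with the cleaning lemma, which let me always represent the near-boundary part of the string as a product of boundary gauge operators.

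Next I would show the two maps are mutually inverse, which reduces to two claims. First, moving a boundary anyon into the bulk and then back to the boundary along the reverse path returns the same boundary anyon type: the composite string operator is supported near the boundary and, by the equivalence relation for boundary anyons, composing the "out" and "back" strings gives a finite boundary gauge operator, hence a trivial net change in syndrome class. Second, the analogous statement in the bulk: moving a bulk anyon to the boundary and back produces a string operator that is a genuine finite bulk operator commuting with all bulk stabilizers, hence — by the TO condition in the bulk — a product of bulk stabilizers, so the bulk anyon type is unchanged. Both claims hinge on the fact that the "round trip" string is a \emph{finite} operator of the appropriate type (bulk vs. boundary), which follows from the mobility lemmas (Lemma~\ref{lemma: Mobility of boundary anyons along the boundary} and Lemma~\ref{lemma: boundary anyons into the bulk}) and the fact that the two path segments can be taken to coincide.

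To get \emph{well-definedness} of the maps on equivalence classes — not just on individual syndrome patterns — I would check path-independence: if I deform the endpoint of a boundary string into the bulk along two different paths from $v_\partial$ to $v$, the two resulting bulk string operators differ by a closed loop of Pauli operators, which (by the bulk TO condition, or by contractibility of loops in the truncated plane) is a product of bulk stabilizers and therefore does not change the bulk anyon type. Likewise, equivalent boundary anyons map to equivalent bulk anyons because a finite boundary gauge operator relating them, after the inward deformation, becomes a finite bulk operator commuting with all bulk stabilizers. Compatibility with fusion (so that the correspondence is an isomorphism of fusion groups, which the paper uses later) follows by stacking string operators; I would at least remark on this.

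The main obstacle I anticipate is the reverse direction: showing that \emph{every} bulk anyon arises from some boundary anyon, i.e. surjectivity onto bulk types, and dually that distinct boundary types stay distinct in the bulk (injectivity). Injectivity is the delicate one: I must rule out the possibility that two inequivalent boundary anyons become equivalent once pushed into the bulk. The argument is that if the images are bulk-equivalent, then gluing the two boundary strings to the bulk string realizing the equivalence yields an operator supported near the boundary that relates the two boundary syndromes — but one must argue this operator can be taken to be a \emph{finite boundary gauge operator}, which again leans on the cleaning lemma and on the structural results about $\mathcal{G}$ proved via the operator algebra (Theorem~\ref{thm: Ruba Yang theorem} and its consequences). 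Making that last step rigorous — controlling the support and the type of the "difference" operator throughout the deformation — is where the real work lies, and it is presumably why the full proof is deferred to Sec.~\ref{sec: operator algebra}.
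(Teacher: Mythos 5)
Your proposal is correct and follows essentially the same route as the paper: the forward map is exactly Lemma~\ref{lemma: boundary anyons into the bulk}, and your reverse map (pushing a bulk string's endpoint to the truncation region) is the paper's construction of taking a bulk string whose two endpoints lie on opposite sides of the cut and truncating it, which leaves a bulk anyon at one end and a boundary anyon at the other. The paper's proof stops there and leaves the round-trip/well-definedness checks implicit; your additional arguments for mutual inverseness (closed loops are products of bulk stabilizers by Lemma~\ref{lemma: closed anyon string as stabilizers}, and near-boundary difference operators are finite boundary gauge operators via Corollary~\ref{cor: anyon primary syndrome}) correctly fill in those omitted details rather than diverging from the paper's approach.
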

This theorem implies that the (2+1)D bulk stabilizer code and the (1+1)D boundary anomalous\footnote{``Anomalous'' refers to a situation where the Hilbert space lacks a tensor product structure. Operators must be ``gauge-invariant,'' meaning they commute with specific gauge constraints, such as the bulk stabilizers in our context.} subsystem code are dual to each other. The topological spins and mutual braiding of the boundary anyons can be directly determined from the boundary anyon string operators (a detailed proof is provided in Appendix~\ref{appendix: Topological spins and braiding statistics}):
\begin{theorem}\label{thm: topological spins of boundary anyons}
    {\bf(Topological spins of boundary anyons)}
    \begin{equation*}
        \vcenter{\hbox{\includegraphics[scale=0.08]{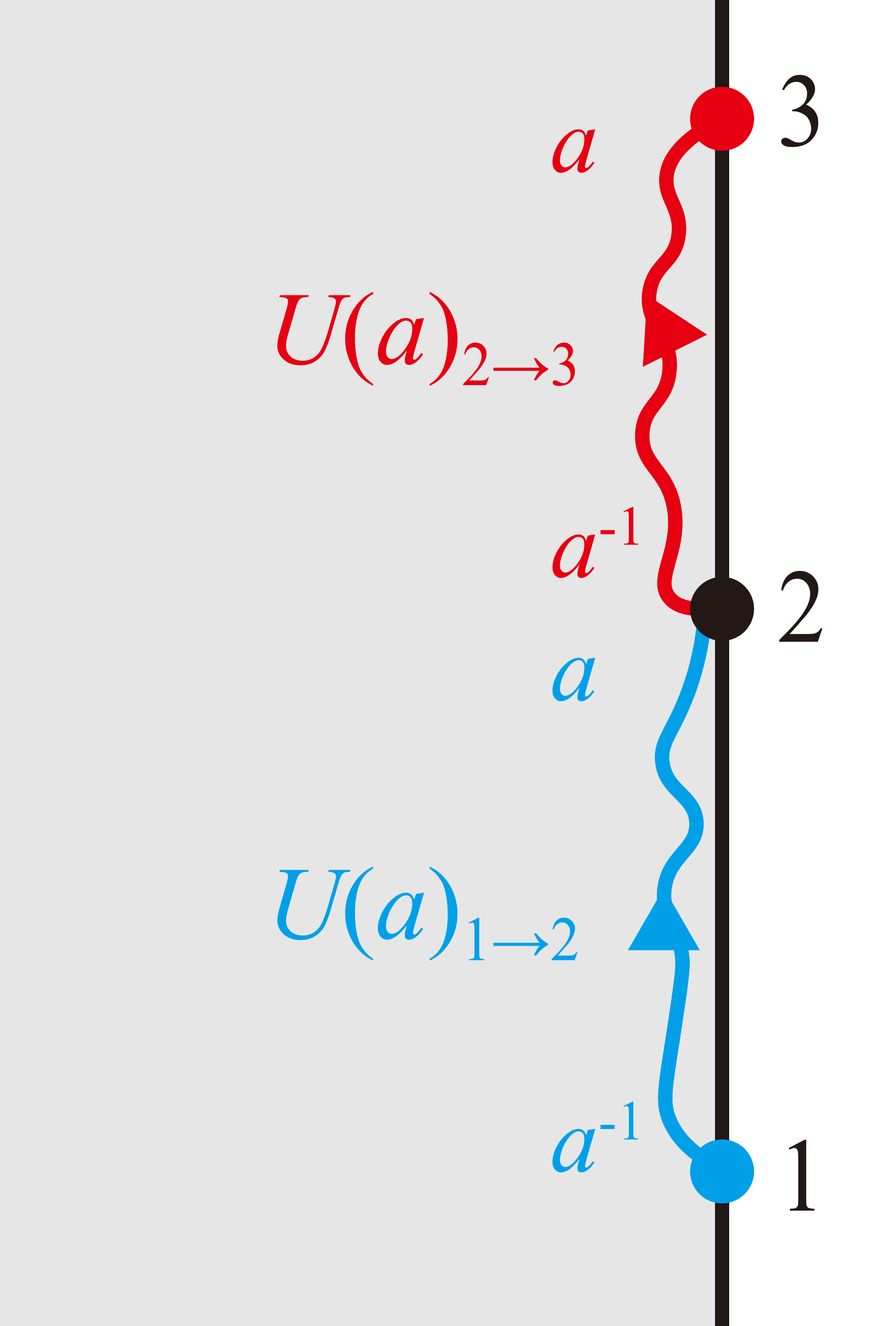}}}
    \end{equation*}
    Given boundary string operators \(U(a)_{1\rightarrow2}\) and \(U(a)_{2\rightarrow3}\) that move the boundary anyon \(a\) from vertex $1$ to $2$, and from vertex $2$ to $3$, respectively (as shown in the figure above), the topological spin of \(a\) is given by:
    \begin{eqs}
        \theta(a) = [U(a)_{1\rightarrow2}, U(a)_{2\rightarrow3}] 
        := U(a)_{1\rightarrow2} U(a)_{2\rightarrow3} U(a)^\dagger_{1\rightarrow2} U(a)_{2\rightarrow3}^\dagger.
    \label{eq: boundary topological spin computation}
    \end{eqs}
\end{theorem}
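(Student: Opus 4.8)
\textbf{Proof proposal for Theorem~\ref{thm: topological spins of boundary anyons}.}

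The plan is to reduce the boundary statement to the bulk T-junction formula of Eq.~\eqref{eq: statistics formula}, which is already established in Ref.~\cite{liang2023extracting}, by using the bulk-boundary correspondence of Theorem~\ref{thm: Bulk-boundary correspondence}. First I would fix the boundary anyon $a$ at a boundary vertex and, using Lemma~\ref{lemma: boundary anyons into the bulk} and Theorem~\ref{thm: Bulk-boundary correspondence}, identify it with the corresponding bulk anyon and pick a third path $\bar{\gamma}_3$ that carries the anyon from the boundary into the bulk; call the associated string operator $W$. The three operators $U(a)_{1\to 2}$, $U(a)_{2\to 3}$, and $W$ then form a genuine T-junction in the sense of Fig.~\ref{fig: T junction 3 paths}, with all three legs meeting at the vertex $3$ (after relabeling), two legs running along the boundary and one running into the bulk. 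Since the bulk TO condition holds away from the boundary and the boundary string operators are supported within a finite neighborhood of the boundary, the configuration is locally indistinguishable from a bulk T-junction, so Eq.~\eqref{eq: statistics formula} applies verbatim and yields a phase $\theta(a)$ characterizing the exchange statistics.

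The second step is to show that this phase is captured by the commutator $[U(a)_{1\to 2}, U(a)_{2\to 3}]$ alone, i.e. that the bulk-leg operator $W$ drops out. The key observation is that $W$ commutes with $U(a)_{1\to2}$ and $U(a)_{2\to3}$ up to the anyon data: all three operators create the same anyon type, and the only source of non-commutativity in Eq.~\eqref{eq: statistics formula} is the overlap near the common endpoint. By the locality and the one-dimensional geometry along the boundary, the pair $\bigl(U(a)_{1\to 2}, U(a)_{2\to 3}\bigr)$ already realizes an exchange of two $a$-excitations entirely within the boundary: $U(a)_{2\to 3}$ moves $a$ past where $U(a)_{1\to2}$ deposited it, and the reordering $U(a)_{1\to2}U(a)_{2\to3}$ versus $U(a)_{2\to3}U(a)_{1\to2}$ is precisely the physical braid of the spin-statistics theorem restricted to the line. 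I would make this rigorous by invoking Theorem~\ref{thm: boundary anyon definition}, which lets me represent the relevant gauge violations as commutators with (semi-infinite) boundary gauge operators, so that the phase $[U(a)_{1\to2}, U(a)_{2\to3}]$ is well-defined (a $U(1)$ number independent of the local deformations and of the choice of representatives), and then matching this number against the $\theta(a)$ extracted from the bulk T-junction.

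Concretely, the computation I would carry out is: write $U(a)_{1\to3} = U(a)_{1\to2}\,U(a)_{2\to3}$ up to boundary gauge operators (by Definition~\ref{def: boundary string operator} and Lemma~\ref{lemma: Mobility of boundary anyons along the boundary}), substitute the pair into the role of $W^a_1$ and $(W^a_2)^\dagger$ in Eq.~\eqref{eq: statistics formula} with $W^a_3 = W$ the bulk leg, and use that $W$ commutes with the boundary operators except near the junction to cancel the $W$-dependence on both sides. What remains is exactly Eq.~\eqref{eq: boundary topological spin computation}. The detailed bookkeeping of signs, the placement of the three legs so that they are counter-clockwise ordered around $p$, and the verification that deformations of the strings do not change the phase are routine given the operator-algebra machinery of Sec.~\ref{sec: operator algebra}, so I would relegate them to Appendix~\ref{appendix: Topological spins and braiding statistics} as the excerpt already promises.

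\textbf{Main obstacle.} The hard part will be step two: showing cleanly that the bulk leg $W$ can be eliminated and that the two-operator commutator $[U(a)_{1\to2}, U(a)_{2\to3}]$ is a well-defined $U(1)$ phase (invariant under local deformations and under multiplication of the strings by boundary gauge operators). In the bulk, the three-path T-junction is what makes the topological spin manifestly well-defined; on the boundary one must argue that the one-dimensional geometry together with weak translational symmetry (Corollary~\ref{cor: weak translational symmetry}) already rigidifies the phase. This requires careful use of the anomalous/subsystem structure — in particular that $U(a)_{1\to2}$ and $U(a)_{2\to3}$ are products of boundary gauge operators and that their commutator, being a $c$-number, is insensitive to the ambiguities in each factor — and it is precisely here that the operator algebra formalism of Sec.~\ref{sec: operator algebra} does the real work.
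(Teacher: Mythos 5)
Your overall strategy matches the paper's: establish that $[U(a)_{1\to2},U(a)_{2\to3}]$ is a well-defined invariant (insensitive to dressing the endpoints by boundary gauge operators), then embed the truncated system back into the infinite plane and compare against the bulk T-junction formula of Eq.~\eqref{eq: statistics formula} using a third, bulk leg $W=U(a)_{4\to2}$ terminating at the junction. Your first step (invariance of the commutator) is essentially the paper's argument and is fine.

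The genuine gap is in your second step. You claim the bulk leg $W$ ``drops out'' because it ``commutes with the boundary operators except near the junction,'' so that its dependence ``cancels on both sides.'' This is false, and it is precisely the nontrivial content of the proof. Writing $U_1=U(a)_{1\to2}$, $U_2=U(a)_{2\to3}$, $U_3=W$, the T-junction gives $\theta(a)=[U_3,U_2]\,[U_3,U_1]\,[U_2,U_1]$. The factor $[U_3,U_2][U_3,U_1]=[U_3,U_2U_1]$ does not vanish: $U_3$ ends at the very junction where $U_1$ and $U_2$ meet, and all three operators fail to commute there. The paper evaluates this factor by extending $U_3$ slightly into the bulk and completing $U_2U_1$ into a closed loop with a bulk semi-circle (Fig.~\ref{fig: appendix topological spin}), showing that these extensions do not change the commutator and that the resulting configuration is a full counter-clockwise braid of $a$ with itself, so $[U_3,U_2U_1]=\theta(a)^{2}$. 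Only then does one solve $\theta(a)=\theta(a)^{2}\,[U_2,U_1]$ to get $[U_2,U_1]=\theta(a)^{-1}$, i.e.\ $\theta(a)=[U_1,U_2]$. If the $W$-dependence really cancelled as you propose, you would instead conclude $\theta(a)=[U_2,U_1]=\theta(a)^{-1}$, which is wrong whenever $\theta(a)^2\neq1$ (e.g.\ for semions). So the missing idea is the loop-completion argument that identifies the bulk-leg contribution with the self-braiding $B(a,a)=\theta(a)^2$; without it the two-leg commutator cannot be matched to the T-junction phase, and the sign/inverse of the answer comes out wrong.
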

\begin{theorem}\label{thm: braiding between boundary anyons}
    {\bf(Braiding between boundary anyons)}
    \begin{equation*}
        \vcenter{\hbox{\includegraphics[scale=0.08]{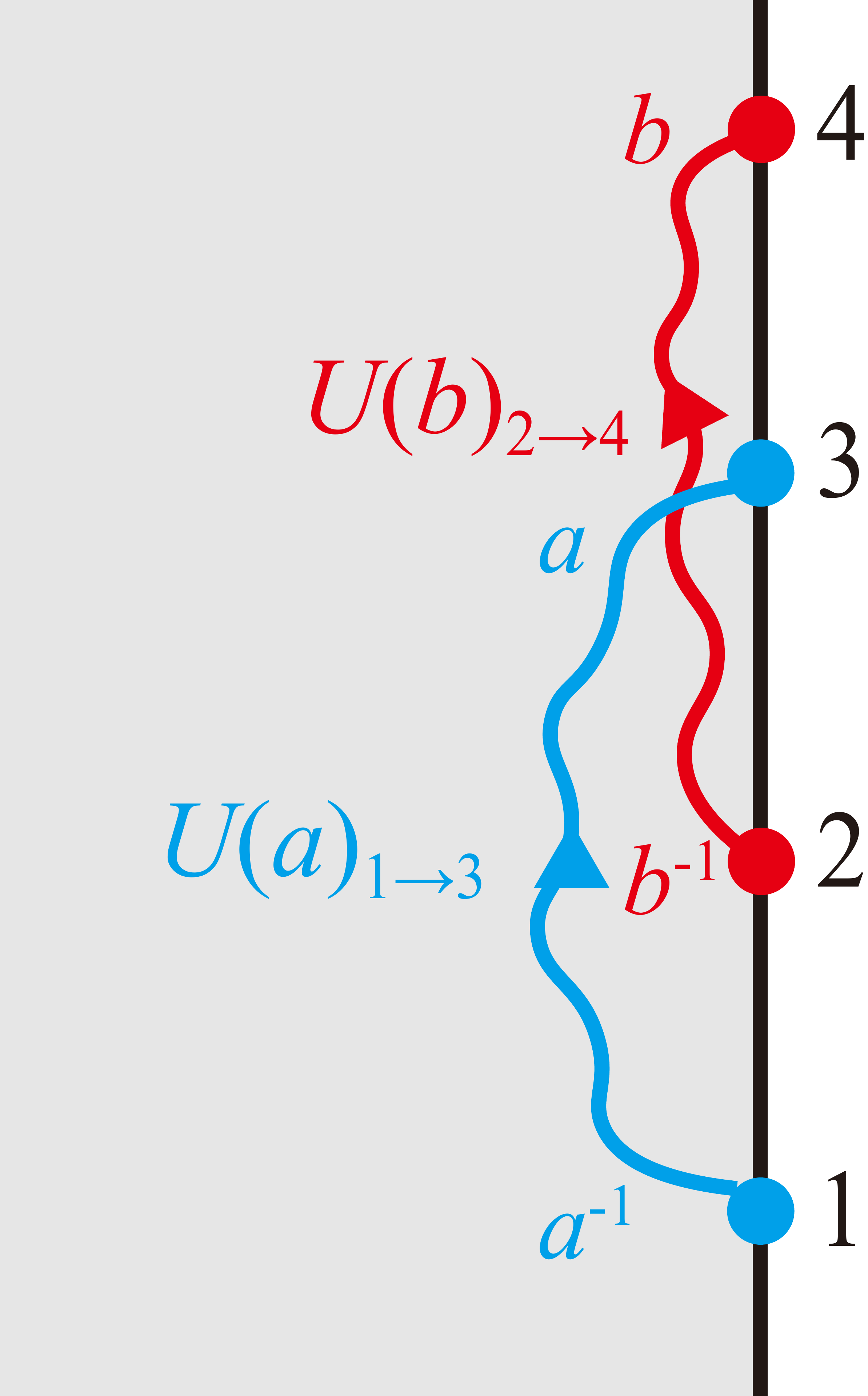}}}
    \end{equation*}
    Given boundary string operators \(U(a)_{1\rightarrow3}\), which moves the boundary anyon \(a\) from vertex 1 to 3, and \(U(b)_{2\rightarrow4}\), which moves the boundary anyon \(b\) from vertex 2 to 4 (as shown in the figure above), the braiding statistics of these two anyons is given by:
    \begin{eqs}
        B(a, b)= [U(a)_{1\rightarrow3}, U(b)_{2\rightarrow4}]
        := U(a)_{1\rightarrow3} U(b)_{2\rightarrow4} U(a)_{1\rightarrow3}^\dagger U(b)_{2\rightarrow4}^\dagger.
        \label{eq: boundary braiding statistics computation}
    \end{eqs}
\end{theorem}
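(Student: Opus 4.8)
The plan is to show that the group commutator $[U(a)_{1\to3},U(b)_{2\to4}]$, which is automatically a central $U(1)$ phase (indeed a $d$-th root of unity, since these are products of Pauli operators), equals the monodromy $B(a,b)=\theta(a\times b)/(\theta(a)\theta(b))$ by combining two ingredients: (i) the \emph{bilinearity} of the Pauli commutator, namely $[AB,C]=[A,C][B,C]$ and $[A,BC]=[A,B][A,C]$ whenever the individual commutators are scalars, and (ii) the topological-spin formula of Theorem~\ref{thm: topological spins of boundary anyons}. The skeleton is: first establish that the commutator of two boundary string operators is a deformation invariant depending only on the anyon types and on the cyclic order of the four endpoints along the boundary; then compute it in a standard ``one crossing'' configuration by building a string operator for the composite anyon $a\times b$ and invoking the spin formula.

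For the deformation-invariance step I would argue as in the proof of Lemma~\ref{lemma: Mobility of boundary anyons along the boundary}: any two boundary string operators moving $a$ between the same pair of vertices differ by (a) closed boundary loops, which by the cleaning lemma are generated by bulk stabilizers and boundary gauge operators and hence have trivial commutator with $U(b)_{2\to4}$ unless the loop encloses one of $2,4$ (using that $U(b)_{2\to4}$ commutes with all bulk stabilizers and with all boundary gauge operators away from $2$ and $4$), and (b) local boundary gauge operators near the endpoints $1,3$, whose commutator with $U(b)_{2\to4}$ is again trivial because the gauge violation of $U(b)_{2\to4}$ is supported only near $2$ and $4$. Here Corollary~\ref{cor: weak translational symmetry} is used to align the $a$- and $b$-strings on a common translation period when necessary. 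This shows the commutator is a bicharacter on anyon types for each fixed interleaving pattern of $\{1,2,3,4\}$; for the nested or disjoint patterns it is $1$ because the inner string can be pushed into the bulk away from the outer one, while the theorem concerns the interleaved pattern drawn in the figure.

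For the computation I would pick a string operator for $a\times b$ of the form $U(a\times b)_{i\to j}:=U(a)_{i\to j}\,U(b)_{i\to j}$, two closely spaced parallel boundary strings (legitimate up to a bounded translation by Corollary~\ref{cor: weak translational symmetry}, after choosing representatives satisfying the composition law $U(a)_{i\to j}U(a)_{j\to k}=U(a)_{i\to k}$ exactly), insert it into Theorem~\ref{thm: topological spins of boundary anyons}, and expand by bilinearity to get
\begin{eqs}
\theta(a\times b) = \theta(a)\,\theta(b)\;[U(a)_{1\to2},U(b)_{2\to3}]\;[U(b)_{1\to2},U(a)_{2\to3}],
\end{eqs}
the diagonal terms contributing $\theta(a)$ and $\theta(b)$. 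Using deformation invariance from the previous step (and discarding disjoint-support commutators, which vanish), the two mixed factors recombine into the single crossing commutator $[U(a)_{1\to3},U(b)_{2\to4}]$ of the interleaved configuration, yielding $\theta(a\times b)=\theta(a)\theta(b)\,[U(a)_{1\to3},U(b)_{2\to4}]$, which is the claim. As an independent cross-check I would also run the bulk route: by Lemma~\ref{lemma: boundary anyons into the bulk} and Theorem~\ref{thm: Bulk-boundary correspondence} deform $U(a)_{1\to3}$ and $U(b)_{2\to4}$ into bulk string operators for the corresponding bulk anyons along paths that dip into the bulk and cross exactly once (possible precisely because the boundary endpoints interleave), apply the bulk braiding formula of Ref.~\cite{liang2023extracting}, and invoke deformation invariance once more to equate the two commutators.

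The main obstacle is the deformation-invariance step together with the geometric bookkeeping of the mixed commutators. On an anomalous boundary one may deform string operators only by \emph{boundary gauge operators}, never by arbitrary local Pauli operators, so one must verify carefully that closed boundary loops of a fixed anyon are generated by bulk stabilizers and boundary gauge operators and that all the resulting commutators vanish except when an endpoint is enclosed; and one must check that the recombination of the two mixed commutators into $[U(a)_{1\to3},U(b)_{2\to4}]$ matches the orientation and crossing conventions of the figure — equivalently, that the commutator of two boundary string operators equals $B(a,b)$ to the signed number of crossings of their upper-half-plane representatives, which is where the consistent choice of path representatives near shared vertices becomes essential. Handling the weak rather than strict translational symmetry, so that the $a$- and $b$-strings can be laid side by side to form $U(a\times b)$, is a further technical point but does not affect the logic.
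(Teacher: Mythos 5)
Your argument is correct in outline but takes a genuinely different route from the paper's. The paper (Appendix~\ref{appendix: Topological spins and braiding statistics}) proves the formula by restoring the truncated half of the system, appending to \(U(b)_{2\rightarrow4}\) a bulk semicircular string \(U(b)_{4\rightarrow2}\) so that the \(b\)-string becomes a closed loop encircling exactly one endpoint of the \(a\)-string; the conjugation of \(U(a)_{1\rightarrow3}\) by this loop is the full monodromy \(B(a,b)\) by definition, and since the semicircle is spatially separated from \(U(a)_{1\rightarrow3}\) it drops out of the commutator, giving the result in three lines (reusing the ``extension does not affect commutation'' argument already set up for the spin proof). You instead stay entirely on the boundary: you take \(B(a,b)=\theta(a\times b)/(\theta(a)\theta(b))\) as the defining relation, build a composite string \(U(a\times b)=U(a)U(b)\), expand Theorem~\ref{thm: topological spins of boundary anyons} by bilinearity of the Pauli commutator, and identify the two mixed cross-terms with the single interleaved commutator via a deformation-invariance lemma. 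What the paper's route buys is brevity and a direct identification of the commutator with the physical monodromy, at the cost of invoking the bulk; what your route buys is a boundary-intrinsic derivation plus an explicit deformation-invariance statement (which is independently useful, and which the paper only proves for the spin configuration). The one step you should not wave at is the recombination of \([U(a)_{1\rightarrow2},U(b)_{2\rightarrow3}]\,[U(b)_{1\rightarrow2},U(a)_{2\rightarrow3}]\) into \([U(a)_{1\rightarrow3},U(b)_{2\rightarrow4}]\): the interleaved commutator genuinely receives two localized contributions (the left endpoint of the \(b\)-string sitting inside the \(a\)-string, and the right endpoint of the \(a\)-string sitting inside the \(b\)-string), and these must be matched one-to-one, with consistent orientations, to the two cross-terms of the parallel-track composite configuration; you correctly flag this as the delicate point, and a toric-code sanity check (where one cross-term is \(-1\) and the other \(+1\) or vice versa depending on which tracks actually overlap) is worth including to pin down the conventions.
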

We emphasize that Eqs.~\eqref{eq: boundary topological spin computation} and \eqref{eq: boundary braiding statistics computation} are consistent with the proposal in Ref.~\cite{schuster2023holographic}, which is derived from the holographic perspective of topological stabilizer codes.
Therefore, without knowledge of the bulk stabilizers, we can retrieve all topological data of Abelian anyon theories—such as anyon types, fusion rules, and topological spins—using only boundary gauge operators: \textit{The boundary carries the same topological information as the bulk.}

\begin{figure*}[htb]
    \centering
    \subfigure[
    ]{\includegraphics[width=0.28\textwidth]{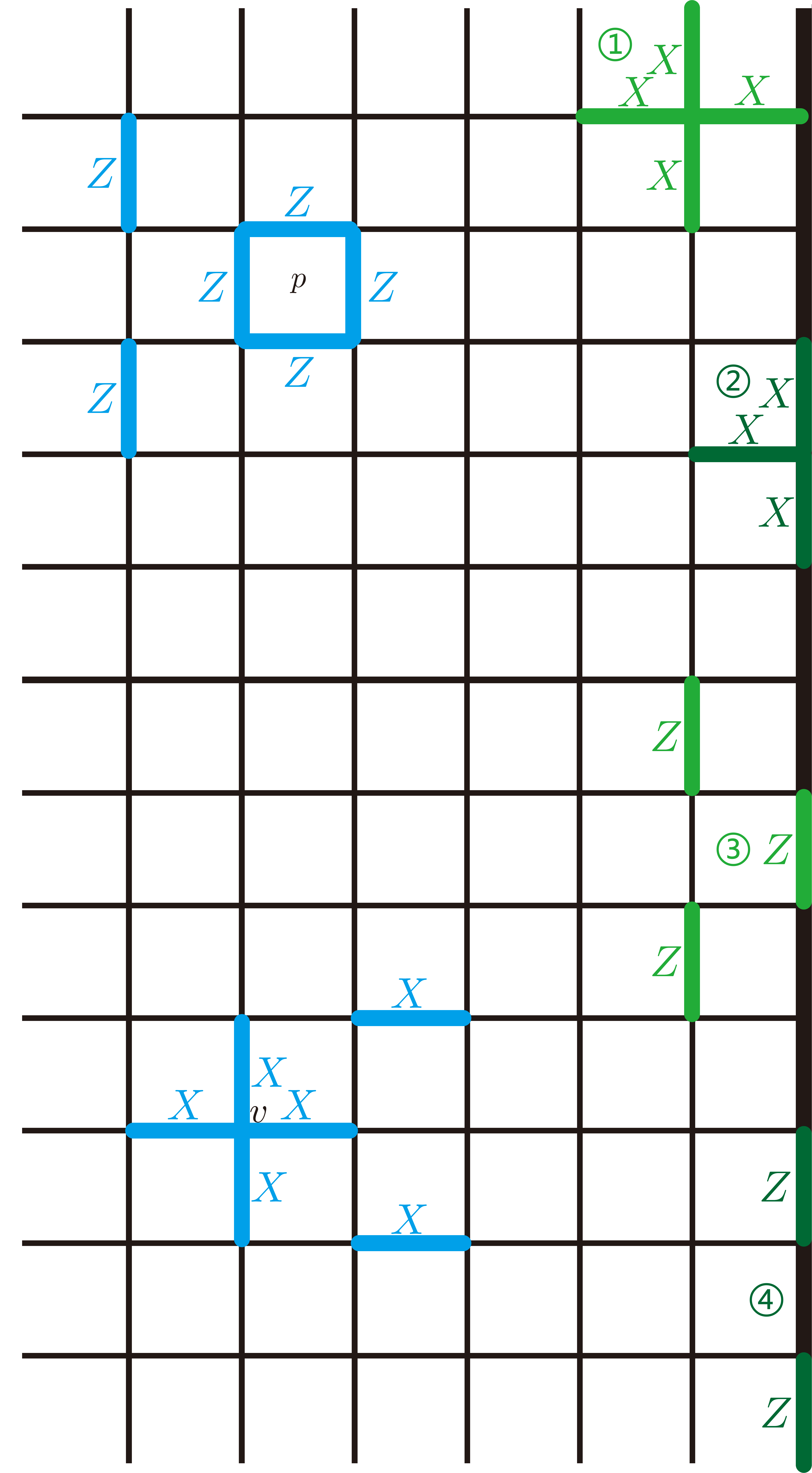}\label{fig: Modified_toric_code_primary_boundaries}}
    \hspace{0.4cm}
    \subfigure[
    ]{\includegraphics[width=0.28\textwidth]{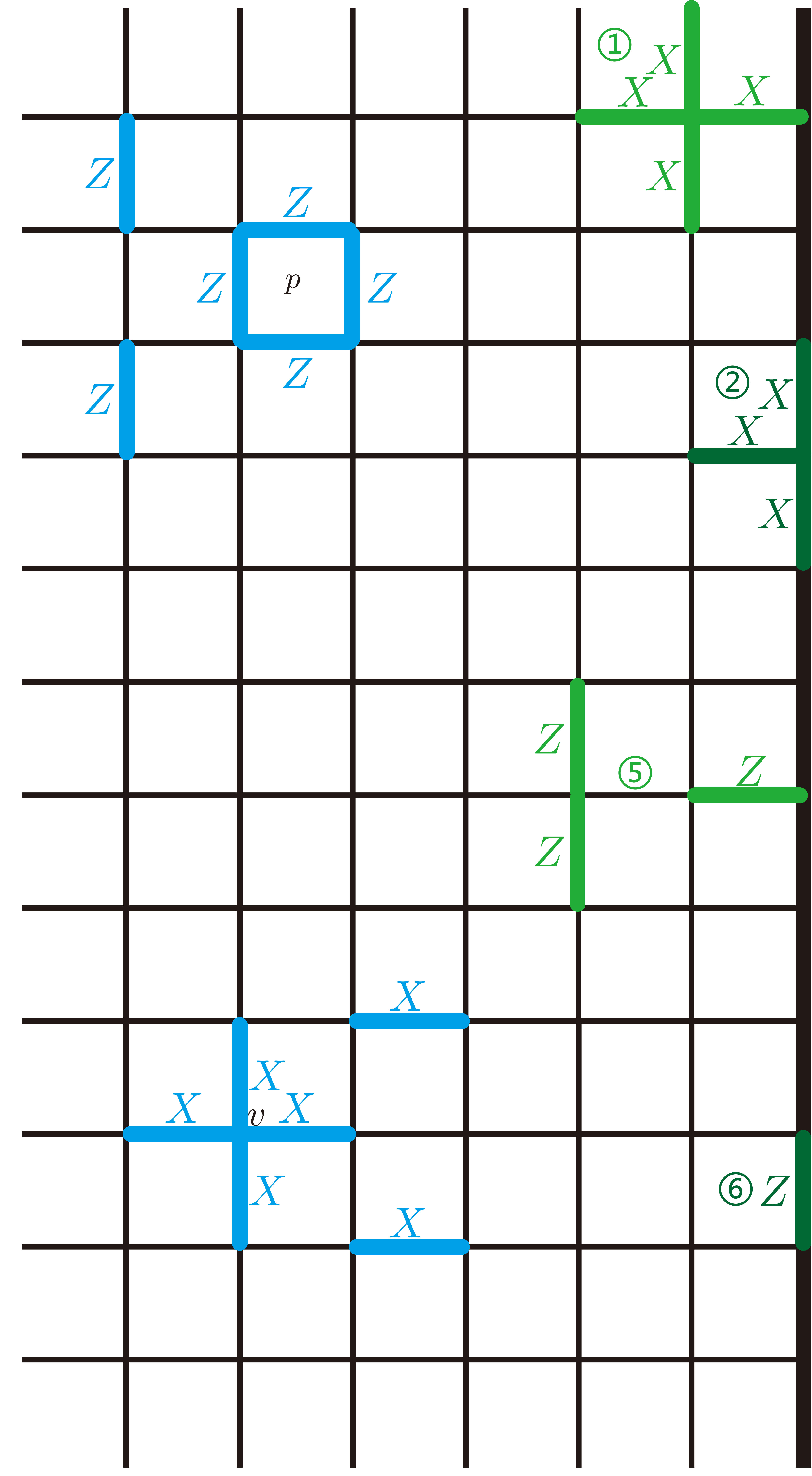}\label{fig: Modified_toric_code_secondary_boundaries}}
    \hspace{0.4cm}
    \subfigure[
    ]{\includegraphics[width=0.123\textwidth]{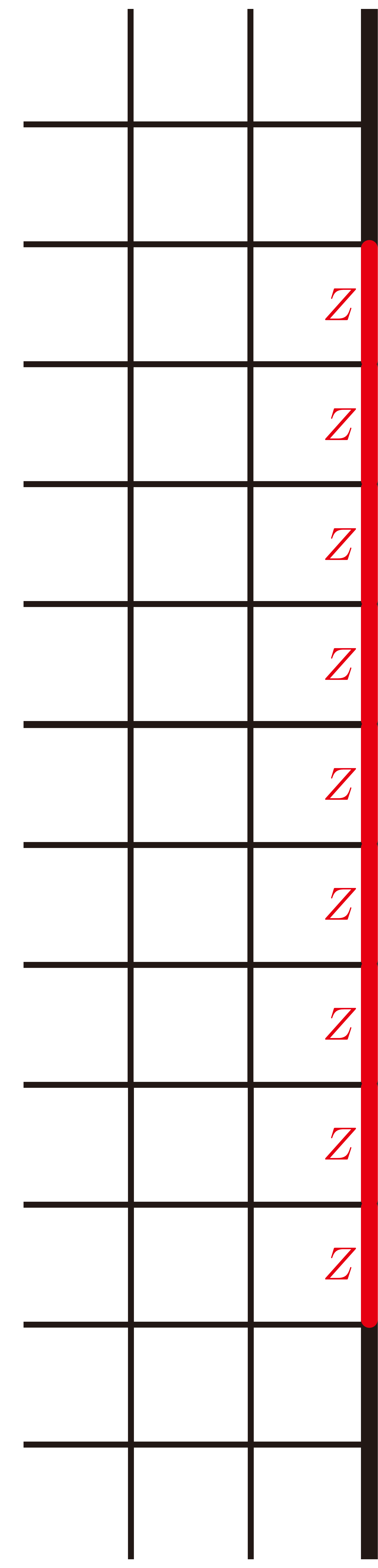}\label{fig: Modified_toric_code_smooth_boundary_string1}}
    \hspace{0.4cm}
    \subfigure[
    ]{\includegraphics[width=0.123\textwidth]{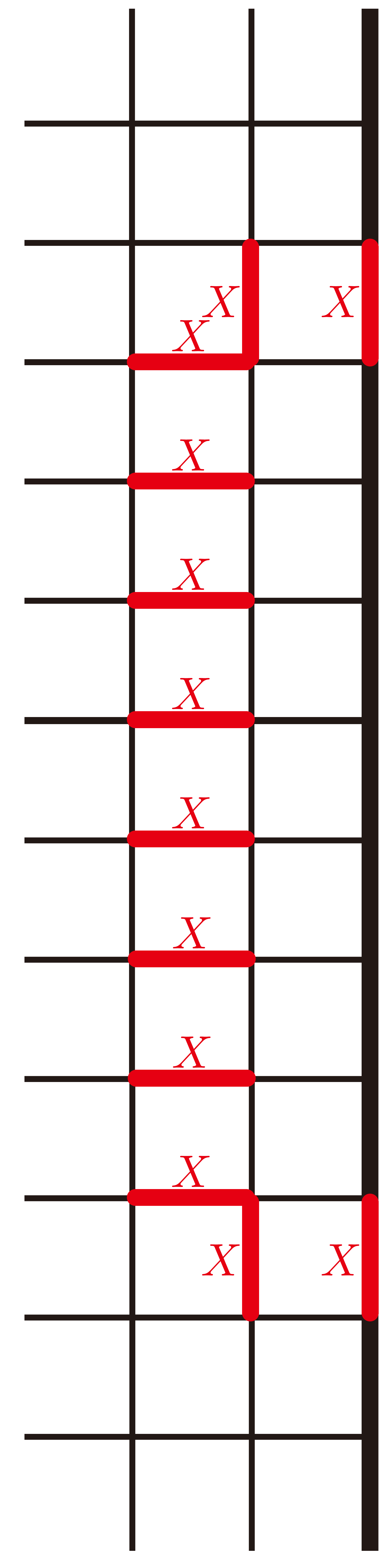}\label{fig: Modified_toric_code_smooth_boundary_string2}}
    \caption{Bulk stabilizers and boundary gauge operators of the {\change``}fish toric code'' in Eq.~\eqref{eq: fish_toric_code}. (a) The blue components represent the bulk stabilizers \(A_v^{\text{fish}}\) and \(B_p^{\text{fish}}\). The nontrivial \textit{primary boundary gauge operators} are generated by the green terms labeled \(\textcircled{\raisebox{-0.9pt}{1}}\), \(\textcircled{\raisebox{-0.9pt}{2}}\), \(\textcircled{\raisebox{-0.9pt}{3}}\), and \(\textcircled{\raisebox{-0.9pt}{4}}\), as well as their translational counterparts in the vertical directions. Bulk stabilizers are considered trivial primary boundary gauge operators. Note that the primary boundary operators are truncated bulk stabilizers \(A_v^{\text{fish}}\) and \(B_p^{\text{fish}}\) and therefore commute with all bulk stabilizers.
    (b) All boundary gauge operators are generated by the terms labeled \(\textcircled{\raisebox{-0.9pt}{1}}\), \(\textcircled{\raisebox{-0.9pt}{2}}\), \(\textcircled{\raisebox{-0.9pt}{5}}\), and \(\textcircled{\raisebox{-0.9pt}{6}}\), as well as their translational counterparts. The terms \(\textcircled{\raisebox{-0.9pt}{5}}\) and \(\textcircled{\raisebox{-0.9pt}{6}}\) are nontrivial \textit{secondary boundary gauge operators}, i.e., they commute with all bulk stabilizers but are not truncated \(A_v^{\text{fish}}\) and \(B_p^{\text{fish}}\).
    }
    \label{fig: Modified_toric_code_boundaries}
\end{figure*}

On the other hand, for practical purposes, such as constructing different boundaries or finding all boundary anyons using computers, information about the bulk stabilizers can save computational effort. To facilitate this, we first divide the boundary gauge operators into two types:
\begin{enumerate}
    \item \textbf{Primary boundary gauge operators} are defined as the operators that are truncated stabilizers, i.e., original stabilizers with Pauli operators directly removed at the qudits disappearing due to truncation.\footnote{More precisely, we consider these operators quotiented by bulk stabilizers, so primary boundary gauge operators form a group.}
    \item \textbf{Secondary boundary gauge operators} are defined as the group of boundary gauge operators modulo the primary boundary gauge operators, meaning that the nontrivial elements in this group correspond to boundary gauge operators that cannot be expressed as truncated stabilizers.\footnote{In the standard toric code, only primary boundary gauge operators exist, rendering the secondary boundary gauge operators trivial. Ref.~\cite{schuster2023holographic} did not address the possibility of secondary boundary gauge operators.}
\end{enumerate}
To demonstrate the primary and secondary boundary gauge operators, we consider the following example \textbf{fish toric code}, which is the $\mathbb{Z}_2$ standard toric code conjugated by a finite-depth Clifford circuit:
\begin{eqs}
    H^{\text{fish}} = -\sum_v A_v^{\text{fish}} - \sum B_p^{\text{fish}},
\label{eq: fish_toric_code_Hamiltonian}
\end{eqs}
where the $A_v^{\text{fish}}$ and $B_p^{\text{fish}}$ terms are
\begin{eqs}
    A_v^{\text{fish}} = \vcenter{\hbox{\includegraphics[scale=.25]{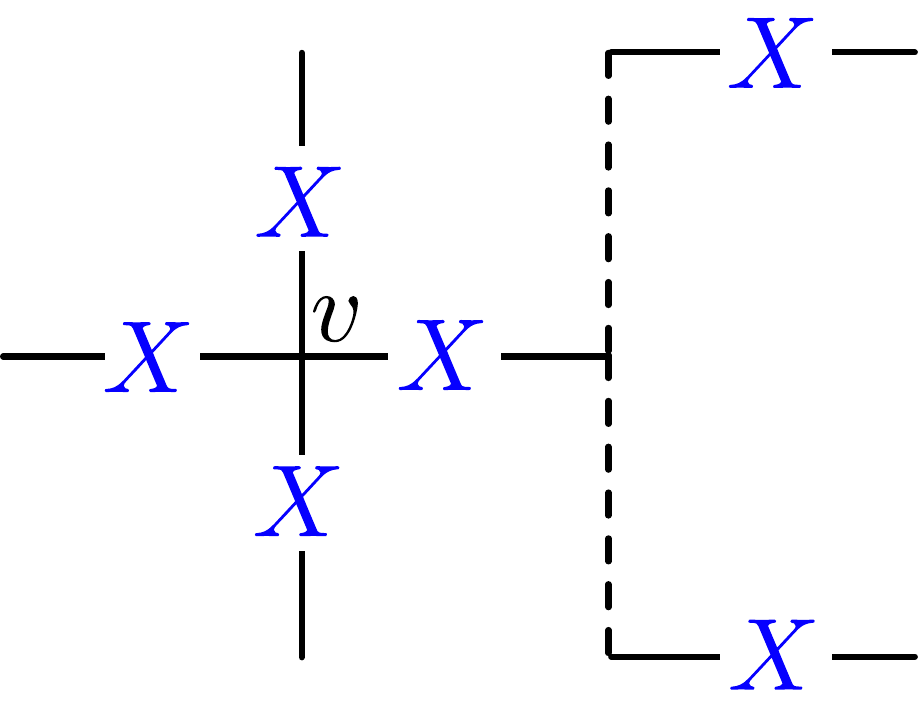}}}, ~~~~
    B_p^{\text{fish}} = \vcenter{\hbox{\includegraphics[scale=.25]{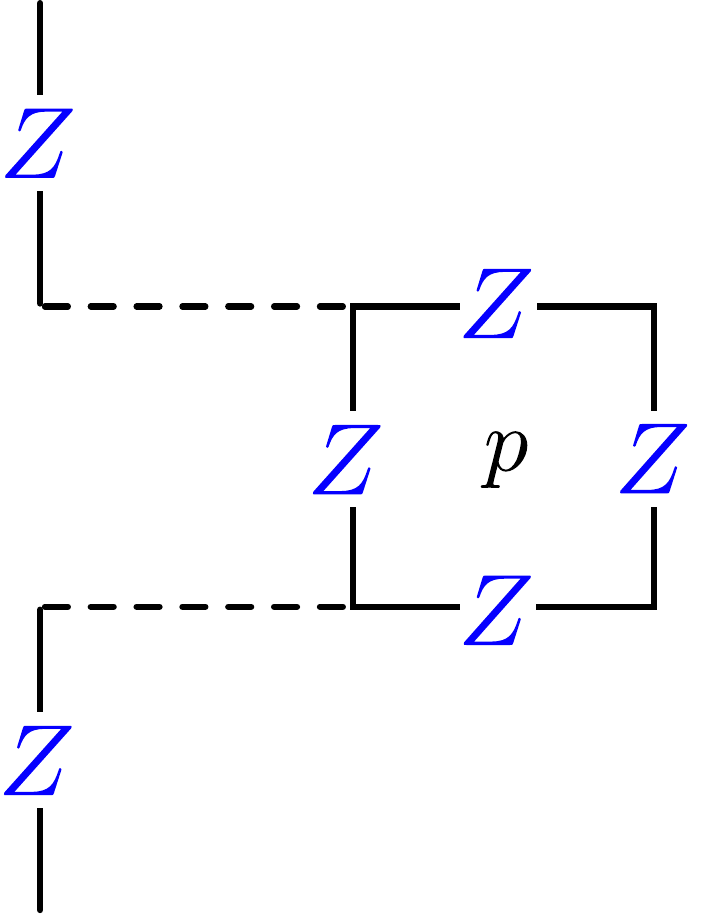}}}.
\label{eq: fish_toric_code}
\end{eqs} 
Its primary and secondary boundary gauge operators are shown in Fig.~\ref{fig: Modified_toric_code_primary_boundaries} and \ref{fig: Modified_toric_code_secondary_boundaries}.

In Sec.~\ref{sec: operator algebra}, we will prove the following theorems to simplify the process of finding boundary string operators and boundary anyons.
\begin{theorem}\label{thm: anyon primary syndrome}
    A boundary anyon is uniquely determined by the syndrome pattern of primary boundary gauge operators.
\end{theorem}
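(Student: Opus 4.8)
The plan is to show that the syndrome pattern of a boundary anyon on the \emph{secondary} boundary gauge operators is completely fixed once its syndrome pattern on the \emph{primary} boundary gauge operators is specified; since a boundary anyon is by definition the homomorphism $\varphi:\mathcal{G}\to U(1)$, this will prove that $\varphi$ is determined by its restriction to the primary subgroup $\mathcal{G}_{\mathrm{prim}}\subset\mathcal{G}$. Concretely, I would argue: given two boundary anyons $\varphi,\varphi'$ that agree on all primary boundary gauge operators, the quotient $\psi := \varphi/\varphi'$ is a homomorphism $\mathcal{G}\to U(1)$ that is trivial on $\mathcal{G}_{\mathrm{prim}}$ and is a local gauge violation (nontrivial on only finitely many generators); it suffices to show $\psi$ is the trivial homomorphism.

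The key step is to realize $\psi$ via an operator and use the bulk-boundary correspondence together with the cleaning lemma. By Theorem~\ref{thm: boundary anyon definition}, there is a (possibly infinite) boundary gauge operator $O_\psi$ with $\psi(G)=[G,O_\psi]$ for all $G\in\mathcal{G}$. The hypothesis $\psi|_{\mathcal{G}_{\mathrm{prim}}}=1$ says $O_\psi$ commutes with every primary boundary gauge operator, i.e.\ with every truncated bulk stabilizer. The primary boundary gauge operators, together with the bulk stabilizers, generate (modulo bulk stabilizers) exactly the algebra of truncated stabilizers near the boundary; so $O_\psi$ commutes with all of these. I would then invoke the cleaning lemma / TO-condition argument (as used in the paragraph establishing that boundary gauge operators can be pushed to within finite distance of the boundary, and in the forthcoming proof of Theorem~\ref{thm: Bulk-boundary correspondence}): an operator commuting with all bulk stabilizers \emph{and} all truncated stabilizers must itself be, up to bulk stabilizers, a product of truncated stabilizers — equivalently, it lies in $\mathcal{G}_{\mathrm{prim}}$ as an element of $\mathcal{G}$. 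But an element of $\mathcal{G}_{\mathrm{prim}}$ acting by conjugation on all of $\mathcal{G}$ produces a gauge violation that is, by definition of how we built the primary/secondary splitting, in the image already accounted for by primary data; pairing $O_\psi\in\mathcal{G}_{\mathrm{prim}}$ against $\varphi$ and $\varphi'$ which agree on primary operators forces $\psi(G)=1$ for all $G$. Hence $\varphi=\varphi'$.

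A cleaner way to organize the same idea, which I would actually write up, is via exact sequences: the inclusion $\mathcal{G}_{\mathrm{prim}}\hookrightarrow\mathcal{G}$ induces a restriction map $\mathrm{res}:\mathrm{Hom}_{\mathrm{loc}}(\mathcal{G},U(1))\to\mathrm{Hom}_{\mathrm{loc}}(\mathcal{G}_{\mathrm{prim}},U(1))$ on \emph{local} homomorphisms, and the claim is precisely that $\mathrm{res}$ is injective. Its kernel consists of local homomorphisms trivial on $\mathcal{G}_{\mathrm{prim}}$, i.e.\ local characters of the secondary quotient $\mathcal{G}/\mathcal{G}_{\mathrm{prim}}$ that extend to $\mathcal{G}$; I would show this kernel is trivial by exhibiting, for each would-be nontrivial secondary character, a secondary boundary gauge operator that detects it via commutator — but any such detecting operator, by the cleaning lemma, reduces modulo bulk stabilizers to a truncated stabilizer, contradicting its being a nontrivial secondary element. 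The main obstacle is this cleaning-lemma step: making precise that "commutes with all bulk stabilizers and all primary boundary gauge operators" $\Rightarrow$ "is, modulo bulk stabilizers, a truncated stabilizer," uniformly and locally near the boundary. This requires the operator-algebra machinery of Sec.~\ref{sec: operator algebra} (the Laurent-polynomial/module picture and the local form of the cleaning lemma), so I expect the bulk of the work to be in setting up that correspondence rather than in the homological bookkeeping above; once the cleaning lemma is available in the right form, injectivity of $\mathrm{res}$ is immediate.
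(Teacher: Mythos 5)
Your overall strategy---reduce to showing that the ``difference'' $\psi=\varphi/\varphi'$, which is trivial on all primary boundary gauge operators, is trivial on all of $\mathcal{G}$---matches the paper's, and you correctly locate the crux in a commutant computation. But the commutant statement you formulate is the wrong one, and the deduction you draw from it does not follow. You conclude that an operator commuting with all bulk stabilizers and all truncated stabilizers ``lies in $\mathcal{G}_{\mathrm{prim}}$,'' i.e.\ is a product of truncated stabilizers modulo bulk stabilizers. That is too weak: primary boundary gauge operators are generically \emph{nontrivial} elements of $\mathcal{G}$ with nontrivial syndromes (they do not commute with all of $\mathcal{G}$---that is the whole point of boundary anyons), so $O_\psi\in\mathcal{G}_{\mathrm{prim}}$ does not give $[G,O_\psi]=1$ for all $G\in\mathcal{G}$, and your closing step ``pairing $O_\psi$ against $\varphi$ and $\varphi'$ \dots forces $\psi(G)=1$'' is a non sequitur. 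What the paper proves, and what you actually need, is the stronger statement $S_B=S_T^\Omega$: inside the truncated Pauli group, the commutant of the \emph{truncated} stabilizers is exactly the \emph{bulk} stabilizers. Its proof uses coisotropy of $S$ in the full, untruncated plane: if $b\in\pi P$ commutes with every $\pi l$, then $\Omega(l,b)=\Omega(\pi l,b)=0$ for all $l\in S$, so $b\in S^\Omega\subset S$ and hence $b\in S_B$. Once the difference operator is a bulk stabilizer, it commutes with all of $G=S_B^\Omega\cap\pi P$ by definition, so the secondary syndrome vanishes (Corollary~\ref{cor: anyon primary syndrome}).

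Two further problems. First, the paper applies this to a \emph{local} operator: it takes two finite boundary strings with the same primary endpoint syndrome, uses the weak translational symmetry (Corollary~\ref{cor: weak translational symmetry}) to put them at the same length, and applies the finite-operator theorem to their difference. Your $O_\psi$ from Theorem~\ref{thm: boundary anyon definition} is infinite, so you would additionally need the closure machinery of Lemma~\ref{lemma: double perp of infinite operator} to run the argument; this is fixable but not free. Second, your ``cleaner'' exact-sequence version is worse than the first: as written, it claims that any secondary boundary gauge operator detecting $\psi$ ``reduces modulo bulk stabilizers to a truncated stabilizer, contradicting its being a nontrivial secondary element.'' If that reasoning were valid it would show that nontrivial secondary boundary gauge operators never exist, which is false---the fish toric code and the $(3,3)$-BB code both have them.
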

\begin{proof}
    See the discussion following Corollary~\ref{cor: anyon primary syndrome}.
\end{proof}
\begin{theorem}\label{thm: primary boundary string}
    Given a boundary string operator, we can multiply local boundary gauge operators around its endpoints so that this modified boundary string operator is a product of primary boundary gauge operators.
\end{theorem}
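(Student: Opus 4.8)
The plan is to work modulo overall $U(1)$ phases, so that the boundary gauge operators form an abelian group $\mathcal{G}$ containing the subgroup $\mathcal{P}\subseteq\mathcal{G}$ of primary boundary gauge operators, and then to show that all of the ``secondary content'' of the given string operator can be absorbed into $\mathcal{P}$ except within bounded neighborhoods of its two endpoints. Let $W\in\mathcal{G}$ be a boundary string operator along a segment of the boundary, creating boundary anyons $\varphi$ at a vertex $v_\partial$ and $\bar\varphi$ at a vertex $v'_\partial$ and no gauge violation elsewhere, and write $\pi:\mathcal{G}\to\mathcal{G}/\mathcal{P}$ for the quotient map onto the group of secondary boundary gauge operators. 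It suffices to prove that $\pi(W)=\pi(c_1)\pi(c_2)$ for some $c_1,c_2\in\mathcal{G}$ supported in bounded neighborhoods of $v_\partial$ and $v'_\partial$ respectively: then $W c_1^{-1} c_2^{-1}\in\mathcal{P}$ is a product of primary boundary gauge operators, and $c_1^{-1},c_2^{-1}$ are the local boundary gauge operators to be multiplied around the endpoints.

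The core claim is that $\pi(W)$ is ``trivial in the interior of the string.'' By Definition~\ref{def: boundary string operator}, in any window $R$ of the boundary that is far from both $v_\partial$ and $v'_\partial$, the operator $W$ commutes with every boundary gauge operator supported in $R$, i.e.\ it produces the trivial gauge violation there. The content of Theorem~\ref{thm: anyon primary syndrome} (equivalently of Corollary~\ref{cor: anyon primary syndrome}) is precisely that secondary boundary gauge operators carry no anyon information beyond the primary syndrome; run locally inside $R$ — using the cleaning lemma together with the ring/module description of $\mathcal{G}$ and $\mathcal{P}$ from Sec.~\ref{sec: operator algebra} — this says that an operator producing no gauge violation inside $R$ coincides, up to boundary gauge operators localized near the two edges of $R$, with a product of primary boundary gauge operators. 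Hence, after multiplying $W$ by such a product of primaries, the image $\pi(W)$ becomes supported only near $\partial R$.

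I would then cover the portion of the boundary lying between $v_\partial$ and $v'_\partial$ by a chain of overlapping interior windows and apply the previous step to each; the corrections introduced at successive window edges telescope, and by the weak translational symmetry of boundary anyons (Corollary~\ref{cor: weak translational symmetry}) their sizes stay uniformly bounded no matter how long $W$ is. All the secondary content of $W$ is thereby pushed into primary gauge operators except inside fixed neighborhoods of $v_\partial$ and $v'_\partial$, which is exactly a decomposition $\pi(W)=\pi(c_1)\pi(c_2)$ of the required form. Lifting $\pi(c_1),\pi(c_2)$ back to $c_1,c_2\in\mathcal{G}$ and multiplying finishes the argument.

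The step I expect to be the main obstacle is the local version of Theorem~\ref{thm: anyon primary syndrome} invoked in the second paragraph: that a finite operator creating no gauge violation inside a window is, up to corrections localized at the window's edges, a product of primary boundary gauge operators. This is a ``cleaning lemma for secondary boundary gauge operators,'' and establishing it cleanly requires the description of $\mathcal{G}$ and $\mathcal{P}$ as finitely generated modules over $\mathbb{Z}_d[y^{\pm1}]$ together with the link between the bulk topological-order condition and the truncated stabilizers that generate $\mathcal{P}$; the delicate point is controlling the support of the edge corrections uniformly as one sweeps along an arbitrarily long string.
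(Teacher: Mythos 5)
Your plan has a genuine gap at exactly the step you flag as the main obstacle, and that step is the entire content of the theorem. You invoke a ``local version of Theorem~\ref{thm: anyon primary syndrome}'': that an operator producing no gauge violation inside a window $R$ is, up to corrections localized at the edges of $R$, a product of primary boundary gauge operators. This does \emph{not} follow from Corollary~\ref{cor: anyon primary syndrome}, which is a global statement: it says that an operator with trivial \emph{total} primary syndrome lies in $S_B$ (via $S_T^\Omega = S_B$, which uses coisotropy of the full stabilizer group). Localizing that statement to a window --- with uniform control on the support of the edge corrections --- is a cleaning-lemma--type result that is nowhere established for the quotient $\mathcal{G}/\mathcal{P}$, and your telescoping over overlapping windows inherits this gap: each cleaning step must also absorb the edge corrections left by the previous one, so without the uniform local lemma the argument does not close. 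There is also a smaller imprecision: $W$ is not supported inside $R$, so ``the gauge violation of $W$ inside $R$'' needs to be formulated via a restriction of $W$, which itself creates violations at $\partial R$; this is fixable but compounds the difficulty of stating the local lemma correctly.

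The paper avoids all of this by going through the bulk rather than staying on the boundary. It embeds the boundary string into the untruncated plane, where (by the lemma $\Omega(a,b)=\Omega(\pi a, b)$) it violates full stabilizers only at its two endpoints, i.e.\ it is a bulk string creating two mobile bulk anyons. These are moved out of the truncated region and annihilated, producing a closed string, which by Lemma~\ref{lemma: closed anyon string as stabilizers} (i.e.\ $S^\Omega = S$) is a finite product of \emph{full} bulk stabilizers. Truncating that product yields, by definition, a product of primary boundary gauge operators, and the truncation differs from the original boundary string only by operators localized near the two endpoints (the pieces used to move the anyons out). If you want to complete your boundary-intrinsic route, you would either need to develop the module-theoretic cleaning lemma over $\mathbb{Z}_d[y^{\pm 1}]$ for $\mathcal{G}/\mathcal{P}$ from scratch, or apply the paper's bulk argument window by window --- at which point the global bulk argument is strictly simpler.
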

\begin{proof}
    See the discussion following Lemma~\ref{lemma: closed anyon string as stabilizers}. 
\end{proof}
Caveat: We still require that the boundary string operators commute with both primary and secondary boundary gauge operators except around their endpoints.

According to Theorem~\ref{thm: anyon primary syndrome}, when comparing two boundary anyons, it is sufficient to know the syndrome pattern of the primary boundary gauge operators. Similarly, Theorem~\ref{thm: primary boundary string} states that when solving for boundary string operators to identify boundary anyons, we only need to consider string operators formed by primary boundary gauge operators. These two theorems reduce the computational complexity.

For the fish toric code defined in Eqs.~\ref{eq: fish_toric_code_Hamiltonian} and \ref{eq: fish_toric_code}, the boundary string operators are shown in Fig.~\ref{fig: Modified_toric_code_smooth_boundary_string1} and \ref{fig: Modified_toric_code_smooth_boundary_string2}. These boundary strings are products of primary boundary gauge operators.

\subsection{Lattice construction of boundaries and defects}
\label{sec: Lattice construction of boundaries and defects}

\begin{figure}[htb]
    \centering
    \includegraphics[width=0.5\textwidth]{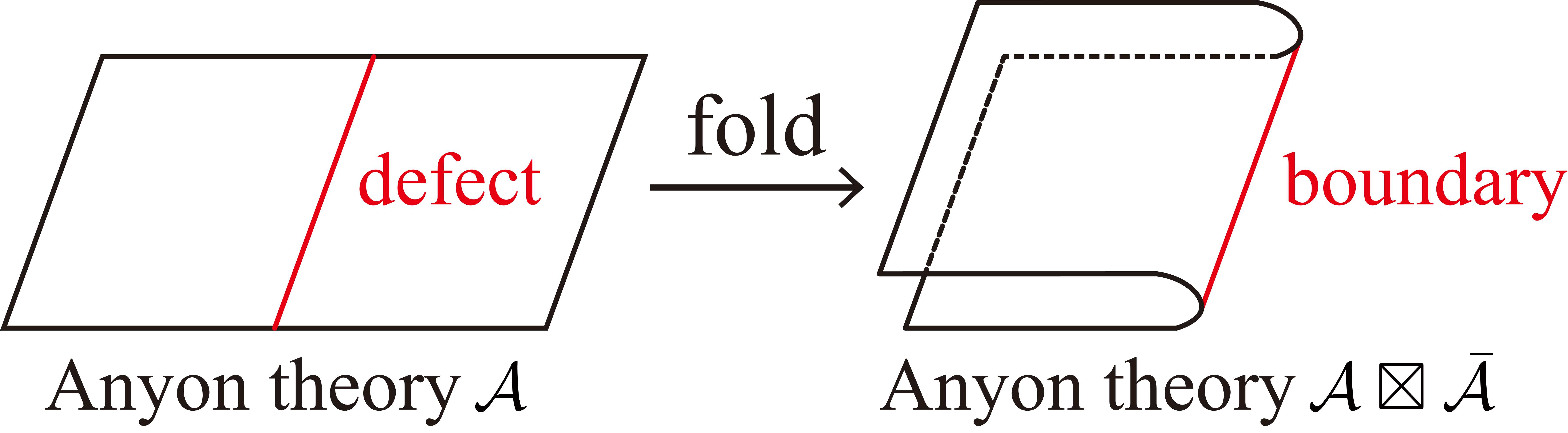}
    \caption{The defect in an anyon theory $\mathcal{A}$ can be interpreted as the boundary of the folded anyon theory $\mathcal{A} \boxtimes \bar{\mathcal{A}}$.}
    \label{fig: folding argument}
\end{figure}

As illustrated by the folding argument in Fig.~\ref{fig: folding argument}, the defect (a local modification of the Hamiltonian) in the two-dimensional code \(\mathcal{A}\) can be interpreted as equivalent to the boundary of the ``doubled'' code $\mathcal{A} \boxtimes \bar{\mathcal{A}}$, where $\bar{\mathcal{A}}$ represents the anyon theory with the same anyons as $\mathcal{A}$, but with all topological spins inverted, i.e., $\theta(\bar{a}) = \theta(a)^{-1}$.
For example, defects in the $\mathbb{Z}_2$ toric code are equivalent to the boundaries of $\mathbb{Z}_2 \times \mathbb{Z}_2$ toric codes. This equivalence suggests that studying boundary constructions of topological Pauli stabilizer codes is sufficient for understanding the properties of defects.

It is well-known that each gapped boundary of an Abelian topological order (Abelian anyon theory $\mathcal{A}$) corresponds to a {\bf Lagrangian subgroup $\mathcal{L} \subset \mathcal{A}$}, which is a maximal set of bosons with trivial mutual braiding with each other \cite{etingof2010fusion, Kapustin2011Topological, Kong2014Anyon, kaidi2022higher}. For any anyon $a \in \mathcal{A}$ not contained in $\mathcal{L}$, there exists at least one boson $b \in \mathcal{L}$ such that it braids nontrivially with $a$, i.e., $B(a, b) \neq 1$. The physical intuition is that these bosons in $\mathcal{L}$ can condense on the boundary, allowing their string operators to terminate on the boundary without incurring any energy cost.

More precisely, every Abelian anyon theory can be described by a \(\mathrm{U}(1)^N\) Chern-Simons theory \cite{Wen1992Classification, Lu2012Theory, Lu2016Classification}, where the boundary hosts a conformal field theory with a certain number of left-moving and right-moving modes. These boundary modes are said to be ``gapped'' if it is possible to add perturbations to the edge theory that make these modes massive. 
The key idea is that we can gap the theory by introducing additional terms to the edge. These terms correspond to the string operators of bosons in the Lagrangian subgroup \(\mathcal{L}\), truncated at the boundary. Adding these terms to the boundary condenses the bosons (as described in Sec.~\ref{sec: Condense anyon}), and consequently, anyons that braid nontrivially with the bosons in \(\mathcal{L}\) are confined---i.e., they become immobile. 
Thus, the ground state on the boundary contains superpositions of bosons from \(\mathcal{L}\), but anyons that braid nontrivially with \(\mathcal{L}\) (which, by definition of a Lagrangian subgroup, includes all remaining anyons) are confined and cannot move freely. This confinement effectively gaps the chiral boundary modes. For further details, we refer to Ref.~\cite{levin2013protected-edge-}. 

In the stabilizer formalism (see Theorem~\ref{thm: add boundary string to condense anyon}), the gapless boundary modes arise because the short truncated string operators, which create anyons on the boundary, commute with all bulk stabilizers. This allows anyons to move freely on the boundary. However, by condensing a Lagrangian subgroup and confining the rest of the anyons, there are no longer any mobile anyons, and the boundary modes become gapped.
However, the explicit lattice procedure for achieving this condensation is not straightforward.
It requires two steps:
\begin{enumerate}
    \item {\bf Boundary anyon condensation}: Identify mutually commuting short boundary string operators for bosons in the Lagrangian subgroup and include them in the Hamiltonian.
    \item {\bf Topological order (TO) completion}: Add local boundary gauge operators into the Hamiltonian to ensure that the topological order condition near the boundary is satisfied.
\end{enumerate}
Both steps are nontrivial and require careful computation. In this work, we will establish a theorem demonstrating how boson condensation near the boundary is achieved on the lattice, and we will provide a computational algorithm to construct the boundary for any given topological Pauli stabilizer code.

\subsubsection{Boundary anyon condensation}
\label{sec: Condense anyon}

We first describe the procedure for performing boson condensation on the lattice. The physical intuition behind condensing a boson \( b \) on the boundary is that the boson proliferates on the boundary, and the ground state becomes a superposition of all possible configurations of the boson.
In other words, when the boson string operator is applied to create or move the boson \( b \), the ground state remains unchanged. Consequently, as discussed in Refs.~\cite{ellison2022pauli, Shirley2022QCA, Tantivasadakarn2021Hybrid1, Tantivasadakarn2021Hybrid2}, condensing the boson \( b \) is equivalent to including the mutual-commuting short string operators of the boson into the Hamiltonian:\footnote{Originally, an additional step is needed to eliminate stabilizer terms that do not commute with the short string operators. However, in our case, the boundary string operators are defined to commute with all bulk stabilizers, which allows us to bypass this step.}
\begin{theorem}
    {\bf (Boson condensation on the boundary)}  
    To condense the boundary bosons \( \{ b_i \} \in \mathcal{L} \), we introduce mutually commuting short boundary string operators corresponding to each \( b_i \) into the Hamiltonian, which initially consists of bulk stabilizers. As a result, the bulk string operator of any anyon \( b \in \mathcal{L} \) can terminate on the boundary without causing any energy excitations. In contrast, if the bulk string operator of an anyon \( a \notin \mathcal{L} \) terminates on the boundary, it will induce energy excitations, as illustrated in the figure below:
    \begin{equation*}
        \includegraphics[width=0.2\textwidth]{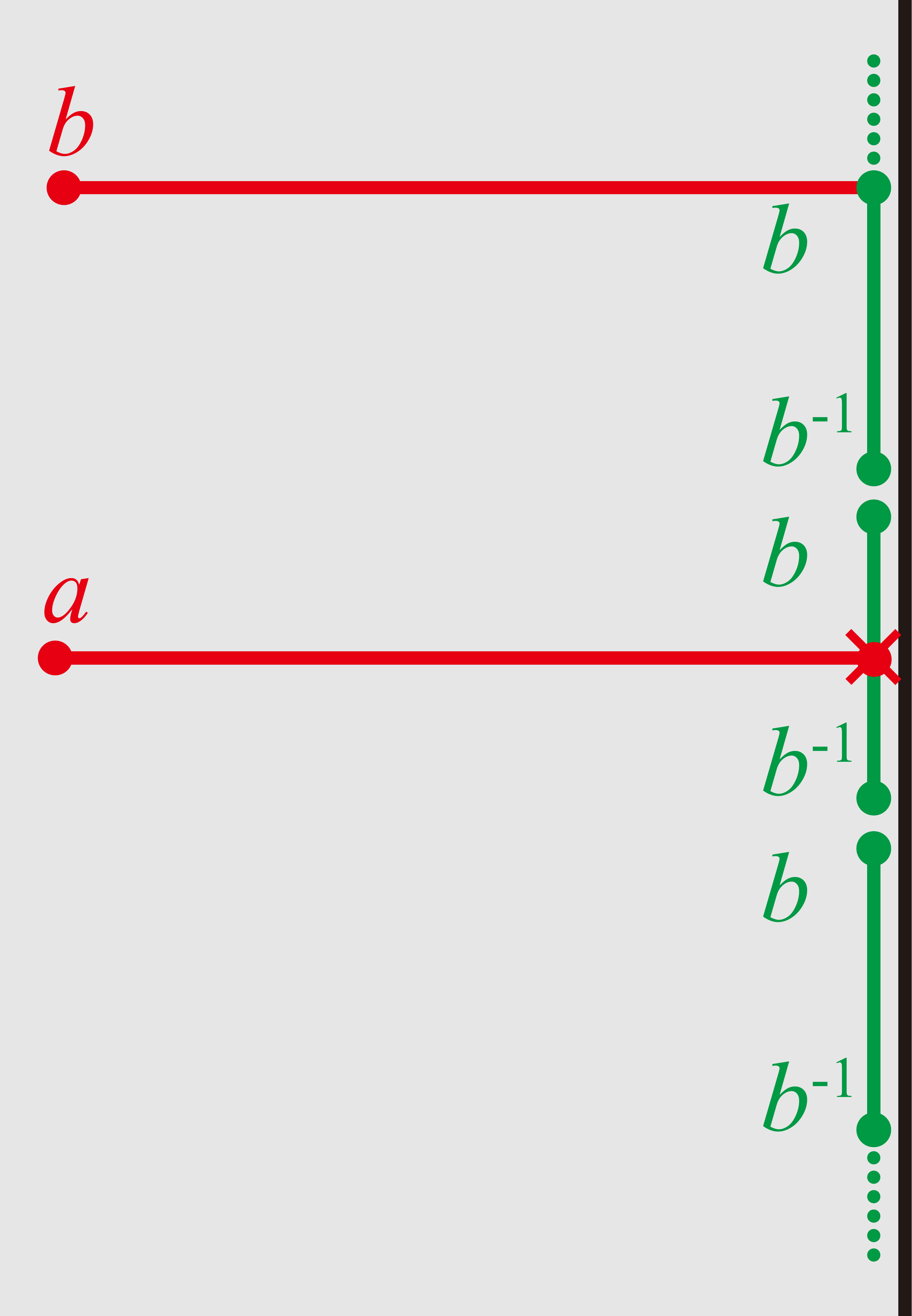}
    \end{equation*}
    \label{thm: add boundary string to condense anyon}
\end{theorem}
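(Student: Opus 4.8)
The plan is to verify two things: (i) after adding the mutually commuting short boundary string operators $\{T_{b_i}\}$ to the Hamiltonian of bulk stabilizers, the resulting operator set is still a valid commuting stabilizer Hamiltonian, and the bulk string operator of any $b\in\mathcal{L}$ can terminate on the boundary without violating any term; (ii) the bulk string operator of any $a\notin\mathcal{L}$ necessarily violates one of the new terms when it terminates on the boundary. For (i), I would first note that by Lemma~\ref{lemma: boundary anyons into the bulk} and Theorem~\ref{thm: Bulk-boundary correspondence}, each boson $b_i\in\mathcal{L}$ has a boundary string operator obtained from the bulk string operator by bending it onto the boundary; since $b_i$ is a boson with trivial mutual braiding with every other element of $\mathcal{L}$, Theorems~\ref{thm: topological spins of boundary anyons} and~\ref{thm: braiding between boundary anyons} give $\theta(b_i)=1$ and $B(b_i,b_j)=1$, which is exactly the condition that short segments of these boundary strings (suitably chosen and translated) commute among themselves. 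By Definition~\ref{def: boundary string operator} they also commute with all bulk stabilizers. Hence $\{S_k\}\cup\{T_{b_i}\}$ is a legitimate commuting set, and a bulk string for $b\in\mathcal{L}$ — being a product of the $b_i$-strings up to local gauge operators (since $\mathcal{L}$ is generated by the $b_i$) — ends in a syndrome that, on the boundary, is absorbed: the truncated bulk string differs from a product of the $T_{b_i}$ by operators supported in the bulk that commute with everything, so it commutes with the new Hamiltonian.

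For (ii), the key input is the defining property of a Lagrangian subgroup: for $a\notin\mathcal{L}$ there exists $b\in\mathcal{L}$ with $B(a,b)\neq1$. I would translate this braiding nontriviality, via Theorem~\ref{thm: braiding between boundary anyons}, into a nontrivial commutator between the truncated bulk string of $a$ ending at the boundary and the short boundary string operator $T_b$ that sits in the condensation Hamiltonian: the two strings can be arranged in the "linked" T-junction configuration of that theorem, so their commutator equals $B(a,b)\neq 1$. Therefore the endpoint of the $a$-string anticommutes (or fails to commute by that phase) with at least one term $T_b$ in the Hamiltonian, i.e., it costs energy. A small amount of care is needed to argue that one cannot repair this by multiplying the $a$-string by local gauge operators near its endpoint — but this is precisely the statement that $a$ is a genuinely distinct boundary anyon type (Theorem~\ref{thm: anyon primary syndrome} / the equivalence relation), so no local dressing can remove the gauge violation detected by $T_b$.

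I would then assemble the argument: (a) the new Hamiltonian is commuting (braiding/spin triviality within $\mathcal{L}$ plus Definition~\ref{def: boundary string operator}); (b) $b\in\mathcal{L}$ strings terminate freely (generation of $\mathcal{L}$ by $\{b_i\}$ plus the cleaning lemma to push leftover operators into the bulk); (c) $a\notin\mathcal{L}$ strings are obstructed (Lagrangian property plus Theorem~\ref{thm: braiding between boundary anyons}). The main obstacle I anticipate is step (a): showing that one can actually choose the short boundary string segments for the generators $b_i$ so that they are simultaneously mutually commuting on the lattice, not merely that their braiding phases vanish abstractly. Vanishing of $\theta(b_i)$ and $B(b_i,b_j)$ guarantees commutation of the full (semi-infinite or long) strings, but truncating to short, overlapping segments near the boundary can introduce local non-commuting tails; handling this requires a lattice argument — likely invoking the operator-algebra results of Sec.~\ref{sec: operator algebra} and possibly a finite-depth local modification of the chosen generators — to upgrade "commuting up to local terms" into "exactly commuting." This is exactly the content that the authors flag as nontrivial in the paragraph preceding the theorem, so I expect the proof to route through the machinery developed there rather than through an elementary computation.
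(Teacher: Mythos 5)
Your second half (the obstruction for $a\notin\mathcal{L}$) is essentially the paper's argument: bend the bulk $a$-string onto the boundary via Theorem~\ref{thm: Bulk-boundary correspondence}, pick $b\in\mathcal{L}$ with $B(a,b)\neq 1$, and use Theorem~\ref{thm: braiding between boundary anyons} to conclude that the $a$-string fails to commute with the long boundary $b$-string, which is a product of Hamiltonian terms, so at least one term is violated. Also, your worry about whether the short segments can be chosen mutually commuting is legitimate but is not part of this theorem's proof: the theorem assumes the short strings are already mutually commuting, and the paper constructs them separately (by truncating the infinite boundary strings into sufficiently long finite segments, Fig.~\ref{fig: infinite_string_to_long_string}), not inside the proof.

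The gap is in your part (i). You assert that "the truncated bulk string differs from a product of the $T_{b_i}$ by operators supported in the bulk that commute with everything," but this is precisely the nontrivial claim, and trivial spin and braiding within $\mathcal{L}$ do not deliver it: Theorems~\ref{thm: topological spins of boundary anyons} and~\ref{thm: braiding between boundary anyons} only control commutators of \emph{long} strings, whereas you need the terminating bulk string to commute with each \emph{individual} short term $T_{b_i}$ sitting at its endpoint. The paper closes this via Lemma~\ref{lemma: semi_bulk_and_boundary}: the combined operator $O$ (semi-infinite bulk string glued to the semi-infinite boundary string built from the $T_{b_i}$) commutes with all finite products of bulk stabilizers and boundary gauge operators, hence $O\in\hat{S}^{\perp\perp}=\hat{S}$ by Lemma~\ref{lemma: double perp of infinite operator}, i.e.\ $O$ is an infinite product of bulk stabilizers and therefore violates nothing — in particular none of the condensation or completion terms. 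The terminating bulk string is then $O$ times the inverse of a product of Hamiltonian terms, so it commutes with the new Hamiltonian. Without this double-commutant step (or an equivalent argument), your part (i) is an assertion of the conclusion rather than a proof of it; the "cleaning lemma" you invoke in your summary does not by itself show the leftover operator is a product of bulk stabilizers.
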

\begin{proof}
    See the proof following Lemma~\ref{lemma: semi_bulk_and_boundary}.
\end{proof}

\begin{figure}[htb]
    \centering
    \includegraphics[width=0.45\textwidth]{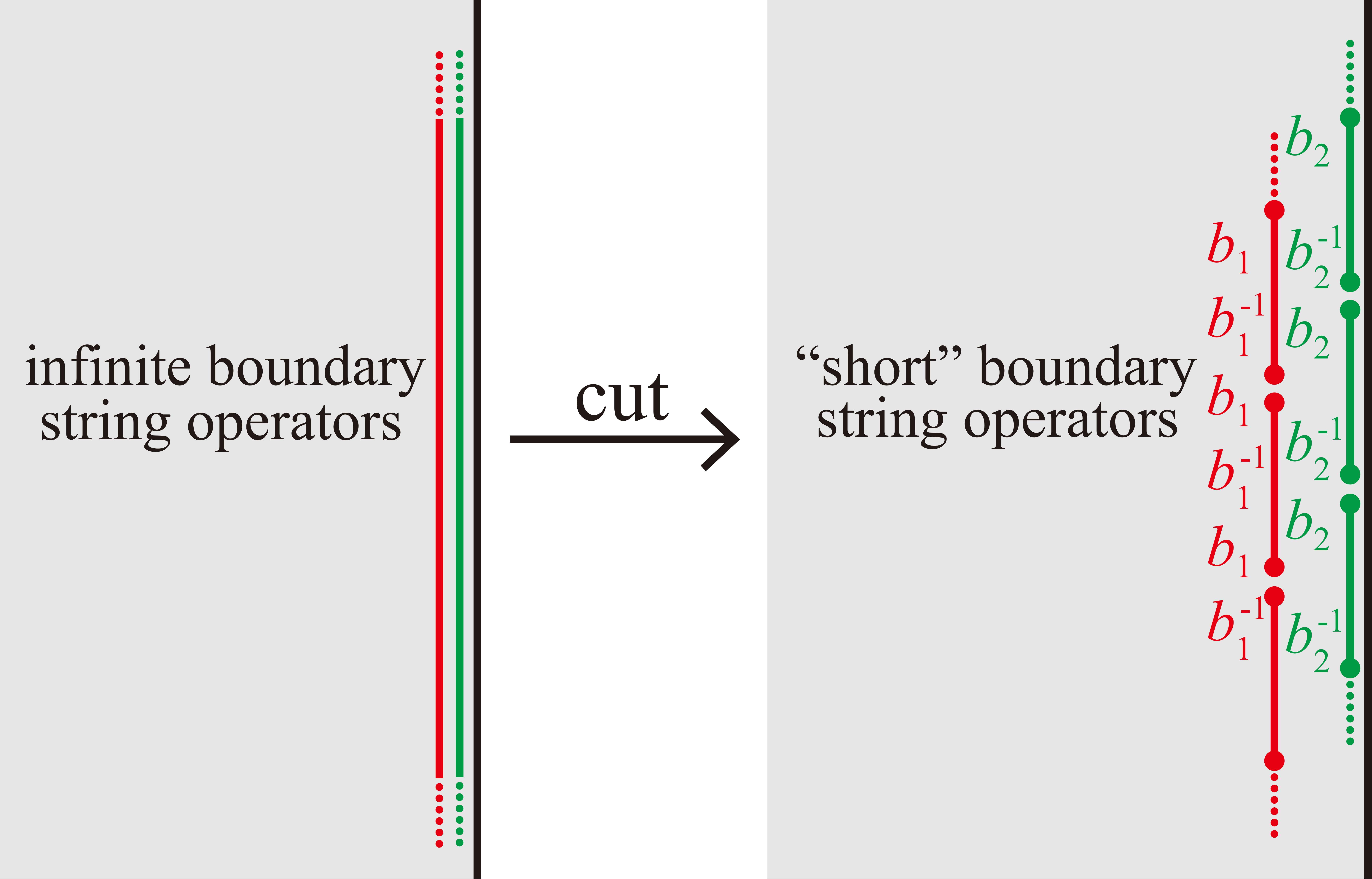}
    \caption{To find the mutually commuting short boundary strings of bosons in the Lagrangian subgroup \(\mathcal{L}\), we begin by constructing their infinite strings, as described in Corollary~\ref{cor: weak translational symmetry}. We then cut these infinite strings into finite segments, ensuring that these segments are long enough so that the commutation relations of the string operators on them match the topological spins given in Theorem~\ref{thm: topological spins of boundary anyons}. Additionally, it is crucial that the cutting point of each string \(b_i\) is fully contained within the interval of the finite segments of another string \(b_j\), ensuring that the commutation relations between string operators of different bosons satisfy the braiding statistics described in Theorem~\ref{thm: braiding between boundary anyons}. These finite string operators form the mutually commuting ``short'' boundary strings of bosons in \(\mathcal{L}\).
    }
    \label{fig: infinite_string_to_long_string}
\end{figure}

By Theorem~\ref{thm: add boundary string to condense anyon}, the remaining task for boson condensation is to identify short string operators for the bosons \( \{ b_i \} \) that commute with each other. However, the condition that the topological spin \( \theta(b) = 1 \), as given in Eq.~\eqref{eq: boundary topological spin computation}, only ensures that two strings of \( b \) with sufficient length commute. This does not necessarily imply that the shortest string operators will also commute.
To address this, we propose a systematic method for constructing mutually commuting ``short'' string operators. These operators are of finite length, though they may not be the shortest possible string operators. Given a Lagrangian subgroup $\mathcal{L}$, we first identify the infinite boundary string operators corresponding to each boson $b \in \mathcal{L}$ (see the discussion following Theorem~\ref{thm: boundary anyon definition} and Lemma~\ref{lemma: Mobility of boundary anyons along the boundary}).
We then truncate these infinite boundary strings into finite boundary string operators, as illustrated in Fig.~\ref{fig: infinite_string_to_long_string}. The resulting finite boundary string operators are sufficiently long to guarantee mutual commutation. This construction is always feasible because the commutation relations depend only on the finite region around where the operators intersect.
This approach ensures that the finite boundary string operators for the bosons \( \{ b_i \} \) commute, thereby achieving the boson condensation on the boundary.

\subsubsection{Topological order completion}

This section outlines the final step in constructing a gapped boundary for a topological Pauli stabilizer code.
\begin{theorem}
    {\bf (Topological order completion)} 
    To ensure that the system satisfies the topological order condition, additional boundary gauge operators that commute with themselves and the existing short boundary string operators of bosons $b \in \mathcal{L}$ must be added into the Hamiltonian. While several valid choices exist for these additional boundary gauge operators, the specific selection does not affect the condensation properties described in Theorem~\ref{thm: add boundary string to condense anyon}.
    \label{thm: add other boundary terms to satisfy the TO condition}
\end{theorem}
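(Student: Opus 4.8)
The plan is to reduce the assertion to a one-dimensional statement about the \emph{residual gauge algebra} that survives the condensation step, and then to use the fact that condensing a Lagrangian subgroup removes the anomaly that would otherwise obstruct writing down a complete commuting Hamiltonian.

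First I would fix notation. Let $\mathcal{H}_0$ be the group generated by the bulk stabilizers together with the mutually commuting short boundary string operators $\{T_{b_i}\}$ of the bosons $b_i\in\mathcal{L}$ produced in Sec.~\ref{sec: Condense anyon}; by construction $\mathcal{H}_0$ is abelian. Let $\mathcal{N}$ be the group of local Pauli operators commuting with every element of $\mathcal{H}_0$, so $\mathcal{H}_0\subseteq\mathcal{N}$. The TO condition for a completed Hamiltonian group $\mathcal{H}$ is exactly that the commutant of $\mathcal{H}$ equals $\mathcal{H}$, so the obstruction is the quotient $\mathcal{R}:=\mathcal{N}/\mathcal{H}_0$. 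Far from the boundary $\mathcal{H}_0$ coincides with the original bulk stabilizer code, which satisfies the TO condition, so by the cleaning lemma \cite{haah2016algebraic, Ellison2023paulitopological} every class in $\mathcal{R}$ has a representative supported within a bounded distance of the boundary; hence $\mathcal{R}$ is generated by boundary gauge operators and forms a translation-invariant (in $y$), finitely generated Pauli algebra equipped with the commutator pairing inherited from $\mathcal{G}$.

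The heart of the proof is to show that $\mathcal{R}$ is \emph{non-anomalous}, i.e. admits a translation-invariant subgroup $\overline{\mathcal{M}}\subseteq\mathcal{R}$ that equals its own centralizer under the commutator pairing and is generated by finitely supported operators. Granting this, lift a generating set of $\overline{\mathcal{M}}$ to boundary gauge operators and adjoin them to $\mathcal{H}_0$ to form $\mathcal{H}$; these lifts commute with themselves and with the $T_{b_i}$ by construction, and a local operator commutes with all of $\mathcal{H}$ iff it lies in $\mathcal{N}$ and its class in $\mathcal{R}$ centralizes $\overline{\mathcal{M}}$, i.e.\ iff it lies in $\mathcal{H}$ --- which is precisely the TO condition. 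To establish non-anomalousness I would pass to the Laurent-polynomial module picture of Sec.~\ref{sec: Computational algorithms}: the boundary gauge operators modulo bulk stabilizers form a symplectic module over $\mathbb{Z}_d[y,y^{-1}]$ whose obstruction data is the discriminant (braiding) form of the bulk anyon theory, and by Theorems~\ref{thm: Bulk-boundary correspondence}, \ref{thm: topological spins of boundary anyons} and \ref{thm: braiding between boundary anyons} condensing $\mathcal{L}$ amounts to restricting to the operators braiding trivially with a maximal self-orthogonal set of anyons. Because $\mathcal{L}$ is Lagrangian, $\mathcal{L}=\mathcal{L}^{\perp}$, so this restriction trivializes the obstruction and $\mathcal{R}$ becomes the symplectic module of a genuine (1+1)D translation-invariant Pauli gauge algebra, which always admits a complete gauge-fixing $\overline{\mathcal{M}}$; explicitly $\overline{\mathcal{M}}$ can be produced from a Hermite/Smith-normal-form computation on the module presentation.

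For the final clause --- that the particular completion is irrelevant to condensation --- I would show that the conclusions of Theorem~\ref{thm: add boundary string to condense anyon} depend only on $\mathcal{H}_0$ and $\mathcal{L}$. The truncated bulk string of a boson $b\in\mathcal{L}$ commutes with every $T_{b_i}$ (since $\mathcal{L}$ has trivial internal braiding) and with the bulk stabilizers, hence represents a class in $\mathcal{R}$; since $\mathcal{L}$ is Lagrangian this class lies in $\overline{\mathcal{M}}$, so it commutes with every chosen completing term and the $b$-string still terminates on the boundary with no energy cost, independently of the completion (cf.\ Lemma~\ref{lemma: semi_bulk_and_boundary}). Conversely, the truncated bulk string of an anyon $a\notin\mathcal{L}$ already fails to commute with some $T_{b_i}\in\mathcal{H}_0$ because $B(a,b_i)\neq 1$ for at least one $i$, and enlarging the Hamiltonian cannot cancel this violation. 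Hence the anyons absorbed at the boundary are fixed by $\mathcal{L}$ alone.

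I expect the main obstacle to be the non-anomalousness of $\mathcal{R}$ \emph{with the required translation invariance and locality}: one must exhibit a completing subgroup that can be chosen translation-invariant in $y$ and generated by finitely supported boundary gauge operators, not merely one that exists abstractly. This is exactly where the hypothesis that $\mathcal{L}$ is Lagrangian rather than merely isotropic is indispensable --- a naive dimension count does not suffice, since general translation-invariant symplectic Pauli lattices need not admit translation-invariant Lagrangian sublattices --- and where the cleaning lemma together with the module-theoretic reformulation of Sec.~\ref{sec: Computational algorithms} carry the argument.
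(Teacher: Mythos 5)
Your treatment of the clause the paper actually proves --- that the condensation properties of Theorem~\ref{thm: add boundary string to condense anyon} are insensitive to the choice of completing terms --- is essentially the paper's argument. For $b\in\mathcal{L}$ you invoke the same mechanism as Lemma~\ref{lemma: semi_bulk_and_boundary}: the semi-infinite bulk string glued to a semi-infinite product of the condensed short strings $T_{b_i}$ is an infinite product of bulk stabilizers, hence commutes with \emph{every} boundary gauge operator, and since the glued-on boundary piece commutes with any admissible completion (completion terms are required to commute with the $T_{b_i}$), the bulk string alone does too. For $a\notin\mathcal{L}$ you correctly observe that the violation is already produced by some fixed $T_{b_i}$ via $B(a,b_i)\neq 1$ and cannot be undone by adjoining further commuting terms. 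One small imprecision: saying the class of the truncated $b$-string ``lies in $\overline{\mathcal{M}}$'' is not quite the right statement (it would have to lie in $\overline{\mathcal{M}}$ for \emph{every} admissible choice); what the gluing argument actually shows is that this class is central in the residual algebra, which is the stronger and choice-independent fact you need.

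Where you genuinely depart from the paper is in trying to \emph{prove existence} of a valid completion by showing that the residual algebra $\mathcal{R}=\mathcal{N}/\mathcal{H}_0$ is non-anomalous, i.e.\ admits a translation-invariant, locally generated self-centralizing subgroup. The paper does not attempt this in the proof of Theorem~\ref{thm: add other boundary terms to satisfy the TO condition}; it treats completion as an iterative saturation procedure (Sec.~\ref{sec: boundary anyon condensation and Topological order completion}) and only proves the independence clause. Your route is more ambitious and, if completed, would be a stronger result, but as written it has the gap you yourself flag: identifying the obstruction on $\mathcal{R}$ with the discriminant form of the bulk anyon theory and concluding from $\mathcal{L}=\mathcal{L}^\perp$ that a translation-invariant Lagrangian sublattice of the $\mathbb{Z}_d[y,y^{-1}]$-module exists is asserted, not established. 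The passage from ``the braiding form on $\mathcal{L}^\perp/\mathcal{L}$ is trivial'' to ``$\mathcal{R}$ admits a finitely generated, $y$-translation-invariant maximal isotropic subgroup'' is a nontrivial module-theoretic statement (a Witt-type splitting over a non-principal ring when $d$ is composite), and the cleaning lemma alone does not supply it. Since the theorem's substantive mathematical content --- the independence clause --- is correctly handled, this is a gap in an added claim rather than in the required one, but you should either close it (e.g.\ via the Hermite/Smith normal form machinery of Sec.~\ref{sec: Computational algorithms}, showing the saturation procedure terminates in a self-centralizing set) or restrict the proof to the independence clause as the paper does.
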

\begin{proof}
    See the proof following Lemma~\ref{lemma: semi_bulk_and_boundary}.
\end{proof}

\begin{figure*}[htb]
    \centering
    \subfigure[Condense $\{e\}$]{\includegraphics[width=0.3\linewidth]{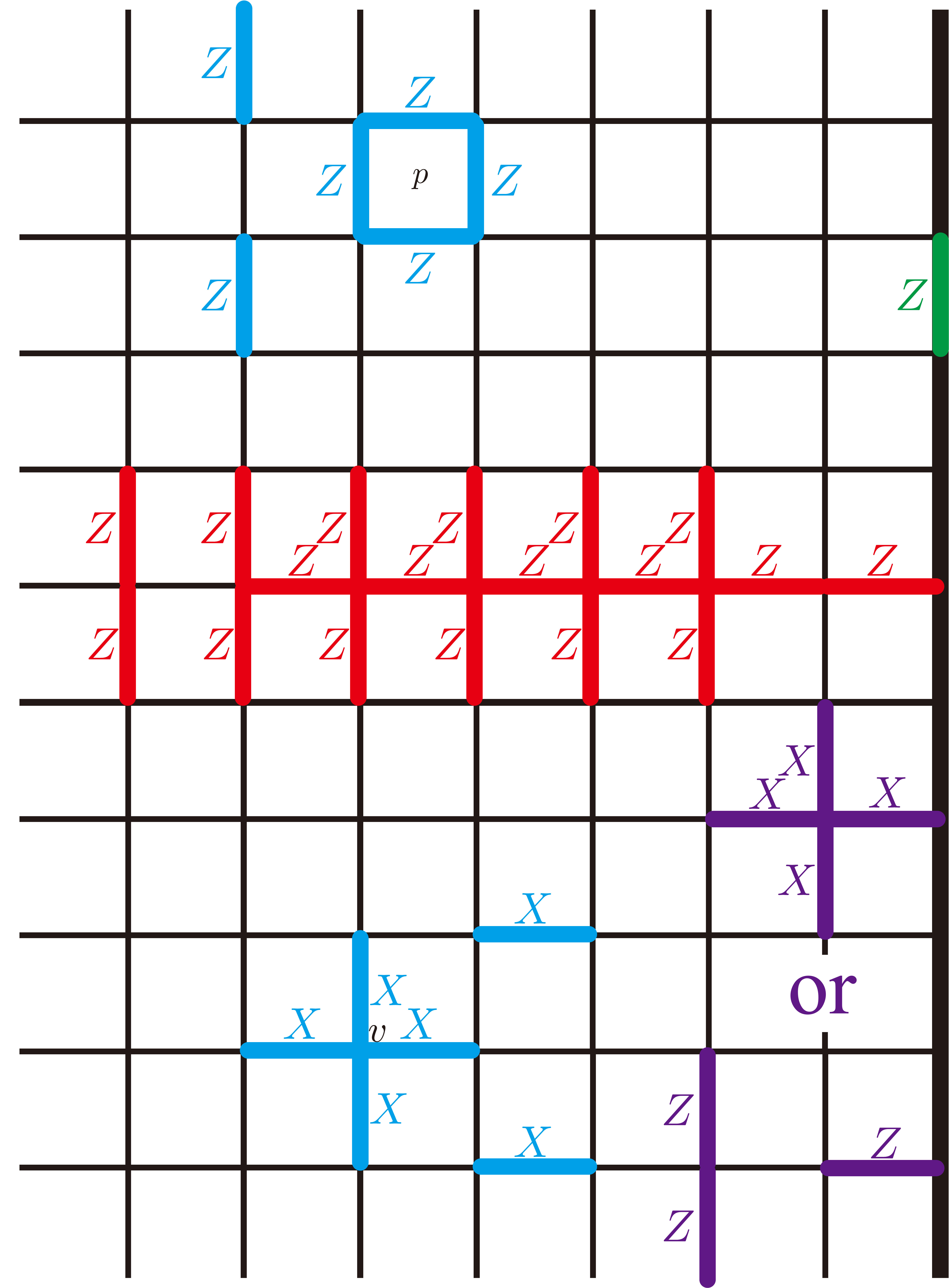}}
    \hspace{0.05\linewidth}
    \subfigure[Condense $\{m\}$]{\includegraphics[width=0.3\linewidth]{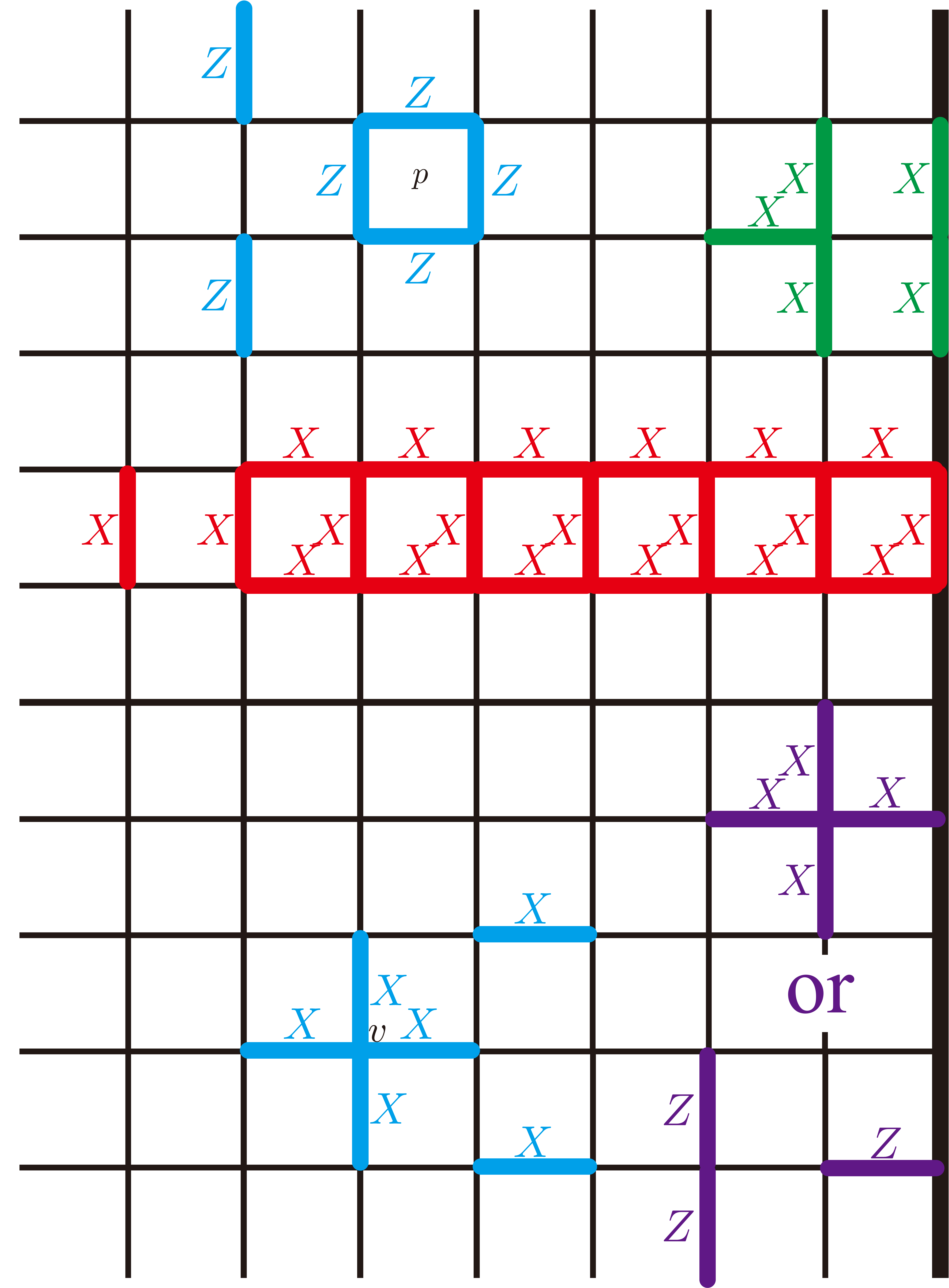}}
    \caption{Boundary constructions of the $\mathbb{Z}_2$ fish toric code. The Hamiltonian consists of blue terms representing the bulk stabilizers, green terms corresponding to the short string boundary operators of the $e$ anyon (in (a)) and $m$ anyon (in (b)), and purple terms indicating the operators added for topological order completion. There are two choices for these purple terms, and either one can be selected. All terms have their translational counterparts: bulk stabilizer terms can move in both the $x$ and $y$ directions, while the boundary terms can only move in the $y$ direction. The red operators represent bulk strings that can terminate on the boundary without causing any energy excitation.}
    \label{fig: smooth_fish_TC_boundary}
\end{figure*}

In other words, we can add more boundary gauge operators to the Hamiltonian until it ``saturates''—meaning no additional boundary gauge operators can be incorporated while commuting with all existing terms.
The detailed computational algorithm for incorporating these additional boundary gauge operators into the Hamiltonian will be discussed in Sec.~\ref{sec: Computational algorithms}.
This process, named as topological order completion, is crucial due to the presence of secondary boundary gauge operators, which have been often overlooked in previous literature.
For example, the $e$-condensed and $m$-condensed boundaries of the $\mathbb{Z}_2$ fish toric code are illustrated in Fig.~\ref{fig: smooth_fish_TC_boundary}. It is important to note that the shape of the boundary does not determine its condensation properties. In the $\mathbb{Z}_2$ standard toric code, there is a common assumption that a smooth boundary is $m$-condensed and a rough boundary is $e$-condensed. However, whether a boundary is $m$-condensed or $e$-condensed depends on which short boundary string operators are included in the Hamiltonian. Thus, a smooth boundary can represent either type.

\section{Operator algebra formalism}\label{sec: operator algebra}

In this section, we introduce the operator algebra formalism for generalized Pauli matrices on a (truncated) lattice. This approach is essential for providing a rigorous mathematical foundation for the concepts discussed earlier, allowing us to prove the lemmas and theorems formally.
However, for readers more interested in the practical aspects of applying the computational algorithm to construct boundaries and defects in generalized Pauli stabilizer codes, it is possible to proceed directly to Sec.~\ref{sec: Computational algorithms} without losing continuity.

The approach in this section is inspired by the symplectic and quasi-symplectic formalisms developed in Refs.~\cite{haah_module_13, ruba2024homological}. However, we do not presuppose translational symmetry imposed by the Laurent polynomial ring. This formalism provides more flexibility for handling the boundary of a system.

\subsection{Symplectic Abelian groups}

Given any (finite, infinite, or truncated) lattice, label its sites with an index set $I$. Each $i\in I$ refers to a site where a qudit is situated. The clock and shift operators ($X$ and $Z$ defined in Eq.~\eqref{eq: generalized X and Z}) on each qudit are represented respectively by $\colvec{1}{0}$ and $\colvec{0}{1}$. Products of clocks and shifts operators form the group of generalized Pauli operators. Each generalized Pauli operator has a corresponding sum of $\colvec{1}{0}$ and $\colvec{0}{1}$, which is only unique up to a phase. For example, both $X^2Z$ and $XZX$ are represented by $\colvec{2}{1}$. As $X^d= Z^d= I$, we have $\colvec{d}{0}=\colvec{0}{d}=\colvec{0}{0}$. All calculations shall therefore be done modulo $d$, where $d$ is the dimension of the qudit. We denote the set of all length-$2$ column vectors modulo $d$ by $\ZZ_d^2$. 

The commutation relation between two operators is recovered by the matrix $\begin{pmatrix}
    0 & 1\\
    -1 & 0
\end{pmatrix}$: two Pauli operators $\colvec{a}{b}$ and $\colvec{c}{d}$ commute up to a phase $\exp(2\pi i \frac{m}{d})$ with
\begin{eqs}
m =
\begin{pmatrix}
    a & b
\end{pmatrix} 
\begin{pmatrix}
    0 & 1\\
    -1 & 0
\end{pmatrix}  \colvec{c}{d}.
\end{eqs}
This is also called the standard symplectic form 
\begin{equation}
\omega: \ZZ_d^2\times \ZZ_d^2 \rightarrow \ZZ_d.
\end{equation}

On a lattice, generalized Pauli operators come in two flavors: ones with finite support and ones with (possibly) infinite support. They are respectively represented by Abelian groups \( P = \bigoplus_I \mathbb{Z}_d^2 \) and \( \hat P = \prod_I \mathbb{Z}_d^2 \).\footnote{$P=\hat P$ when $I$ is finite.} An element $a$ in either consist of vectors $a_i\in \ZZ_d^2$ indexed by $I$, but only the latter allows infinitely many nonzero vectors. For example, the elements $\mathcal X$ and $\mathcal Z$ with  $\mathcal X_i=\colvec{1}{0}$ and $\mathcal Z_i=\colvec{0}{1}$ for all $i\in I$ only exist in $ \hat P$ when $I$ is infinite. They represent the tensor product of $X$ and $Z$ operators on all qudits, respectively. 
Commutation relation \(\omega\) naturally extends to
\begin{equation}
    \Omega : P \times P \rightarrow \ZZ_d,
    \label{eq: definition of Omega}
\end{equation}
and
\begin{equation}
    \hat\Omega:\hat{P} \times P \rightarrow \ZZ_d,
\end{equation}
by $\sum_{i\in I}\omega_i$ modulo $d$. However, the commutation relation between two infinite operators is not well-defined, and there is therefore no extension of $\omega$ to $\hat P \times \hat P \to \ZZ_d$. For example, the commutation relation between $\mathcal X$ and $\mathcal Z$ as defined above is ill-defined in general.

For any subgroup \(A \subset P\), define 
\begin{equation}
A^\Omega = \{c \in P : \Omega(c, a) = 0 \text{ for all } a \in A\}.
\end{equation}
A Pauli stabilizer code is determined by an isotropic subgroup \(S \subset P\), meaning \(\Omega(s_1, s_2) = 0\) for any \( s_1, s_2 \in S \). Equivalently, this can be expressed as \( S \subset S^\Omega \).
A code satisfies the topological order condition if it has no local logical operator. This condition is met if \( S \) is also coisotropic, meaning \( S^\Omega \subset S \). Therefore, a topological Pauli stabilizer code satisfies \( S^\Omega = S \). We call a subgroup $A$ of $P$ \textbf{closed} if $A^{\Omega\Omega}=A$.
There are two important examples of closed subgroups. Firstly, for $I$ finite, any subgroup $A$ is closed.  Secondly, a stabilizer code satisfying the topological order condition is closed as implied by the following lemma, bearing in mind that $S^\Omega=S$. 
\begin{lemma}
    Given a subgroup $A$ of $P$ , $A^{\Omega\Omega\Omega}=A^\Omega$. In other words, $A^\Omega$ is closed. 
\end{lemma}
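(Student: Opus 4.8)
The plan is to prove the two inclusions $A^{\Omega\Omega\Omega}\subset A^\Omega$ and $A^\Omega\subset A^{\Omega\Omega\Omega}$ separately, relying only on the defining property of $(-)^\Omega$ and elementary monotonicity. First I would record the two basic formal facts about the operation $B\mapsto B^\Omega$ on subgroups of $P$: (i) it is \emph{order-reversing}, i.e.\ if $B_1\subset B_2$ then $B_2^\Omega\subset B_1^\Omega$ (immediate from the definition, since a vector orthogonal to everything in $B_2$ is in particular orthogonal to everything in $B_1$); and (ii) $B\subset B^{\Omega\Omega}$ for every subgroup $B$ (every $b\in B$ satisfies $\Omega(c,b)=0$ for all $c\in B^\Omega$, by the symmetry/antisymmetry of $\Omega$, hence $b\in B^{\Omega\Omega}$). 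These are the only ingredients.

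For the inclusion $A^\Omega\subset A^{\Omega\Omega\Omega}$: apply fact (ii) to the subgroup $B=A^\Omega$, which gives directly $A^\Omega\subset (A^\Omega)^{\Omega\Omega}=A^{\Omega\Omega\Omega}$. For the reverse inclusion $A^{\Omega\Omega\Omega}\subset A^\Omega$: apply fact (ii) to $B=A$ to get $A\subset A^{\Omega\Omega}$, then apply the order-reversing fact (i) to this inclusion, which yields $(A^{\Omega\Omega})^\Omega\subset A^\Omega$, i.e.\ $A^{\Omega\Omega\Omega}\subset A^\Omega$. Combining the two inclusions gives $A^{\Omega\Omega\Omega}=A^\Omega$, and by definition this says precisely that $A^\Omega$ is closed.

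There is essentially no obstacle here — this is the standard ``triple perp equals perp'' argument valid for any Galois connection induced by a pairing, and it works verbatim for the $\ZZ_d$-bilinear form $\Omega$ on $P=\bigoplus_I\ZZ_d^2$ without any finiteness or nondegeneracy assumption. The only point that warrants a line of care is fact (ii): one must note that although $\Omega$ is defined on $P\times P$ and $\hat\Omega$ on $\hat P\times P$, we are throughout working with subgroups of $P$ (finitely supported operators), so all pairings in sight are well-defined, and the antisymmetry $\Omega(a,b)=-\Omega(b,a)$ inherited from $\omega$ guarantees that $\Omega(c,b)=0$ for all $c\in B^\Omega$ is the same condition as $b\in B^{\Omega\Omega}$. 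I would state fact (ii) as a one-line sub-observation and then assemble the four-line argument above.
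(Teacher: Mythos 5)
Your proof is correct and takes essentially the same route as the paper: the paper also gets $A^\Omega\subset A^{\Omega\Omega\Omega}$ "from the definition" (your fact (ii) applied to $A^\Omega$) and obtains the reverse inclusion by noting $A\subset A^{\Omega\Omega}$ and unwinding, which is exactly your order-reversal step. Your extra remark about antisymmetry making $\Omega(a,c)=0$ and $\Omega(c,a)=0$ equivalent is a point the paper silently elides, so no gap either way.
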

\begin{proof}
    From the definition, $A^\Omega \subset A^{\Omega\Omega\Omega}$. Conversely, let $c\in A^{\Omega\Omega\Omega}$. Any $a\in A$ is also in $A^{\Omega\Omega}$. Thus $\Omega(a,c)=0$. This implies that $c\in A^\Omega.$
\end{proof}

We make another definition whose use will be clear later. Given $M\subset P$ and $ N\subset \hat P$, define
$$M^\perp = \{\hat{c} \in \hat P : \hat\Omega(\hat{c}, m) = 0 \text{ for all } m \in M\}$$ and 
$${N}^\perp = \{c \in P : \hat\Omega(n, c) = 0 \text{ for all }  n \in  N\}.$$ Notice that $(\cdot)^\perp$ alternates between $P$ and $\hat P$.

 Sometimes, we focus our attention on a subsystem within a system. For example, when studying boundaries, we focus on one half of the system. Given a group of Pauli operators $P=\bigoplus_I\ZZ_d^2$, a truncation is a group homomorphism $\pi: P\rightarrow P$ satisfying:
\begin{enumerate}
    \item (Projection) $\pi\circ \pi=\pi$ and
    \item (Orthogonality) $\Omega(\ker \pi, \pi P)=0.$
\end{enumerate}
For example, given a subset $J\subset I$ of qudits, projection $\pi_J $ onto $\pi P:= \bigoplus_J\ZZ_d^2$ is a truncation. Orthogonality simply states that Pauli operators on qudits in $J$ and qudits in $I\setminus J$ commute with each other.  

The following lemma is important for showing bulk-boundary correspondence.

\begin{lemma} \label{lemma: double perp of infinite operator}
    Given a subgroup $A\subset \hat{P}$, $A^{\perp \perp}=\hat{A},$ where $\hat{A}\subset \hat{P}$ contains both finite and infinite products generated by elements in $A$.\footnote{For an infinite product, the Pauli matrix at each site must appear finitely many times for the product to be well-defined.}
\end{lemma}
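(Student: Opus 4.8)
The plan is to prove $A^{\perp\perp} = \hat A$ by two inclusions. The inclusion $\hat A \subset A^{\perp\perp}$ is the routine direction: if $\hat c \in A^\perp$, then $\hat\Omega(\hat c, a) = 0$ for all $a \in A$; since each $a$ has finite support, this pairing is a finite sum, and one checks that for any (finite or well-defined infinite) product $b$ generated by elements of $A$ one still has $\hat\Omega(b, \hat c) = 0$ by linearity and the site-wise finiteness assumption, so $b \in A^{\perp\perp}$. Hence $\hat A \subset A^{\perp\perp}$, and since $A^{\perp\perp} \subset \hat P$ is closed under the same finite/infinite products (being a perp), we even get $\widehat{A^{\perp\perp}} = A^{\perp\perp}$, so it suffices to show $A^{\perp\perp} \subset \hat A$.

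For the hard inclusion I would argue site by site, using a greedy/exhaustion argument over the index set $I$. Fix $\hat c \in A^{\perp\perp}$; I want to express $\hat c$ as a (possibly infinite, but site-wise finite) product of elements of $A$. Enumerate or well-order $I$, and build up an element $b \in \hat A$ one coordinate at a time: at each stage, having matched $\hat c$ on an initial segment of sites, use the condition $\hat c \in A^{\perp\perp}$ — equivalently, $\hat\Omega(\hat c, \cdot)$ kills $A^\perp \subset P$, which is a subgroup of the \emph{finitely supported} operators — to show that the partial obstruction at the next site can be cancelled by multiplying in a suitable element of $A$ supported appropriately. The key linear-algebra fact needed is: for any single site $i$ (or finite set of sites) and any value $v \in \ZZ_d^2$ not achievable as the $i$-th component of elements of $\hat A$ supported away from the already-fixed region, there is a finitely supported operator $t \in A^\perp$ detecting it, i.e. $\hat\Omega(t, \cdot)$ separates $v$ from the achievable values; this is where finiteness of the support of elements of $A^\perp$ and the double-perp structure over $\ZZ_d$ (a finite ring, so all the relevant quotients are finite) are used. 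Running this construction over all of $I$ produces $b \in \hat A$ with $b = \hat c$, because at every site the construction forces agreement and the products remain site-wise finite by design (each site is touched only finitely often if the ``support-away'' bookkeeping is done carefully, e.g. by always choosing correction operators from a fixed local generating set of $A$).

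Concretely, the cleanest route is probably: (i) reduce to showing that the natural map $A^\perp \to (\hat P / \hat A)^{\vee}$ (Pontryagin-type dual pairing into $\ZZ_d$-valued characters via $\hat\Omega$) has the property that $\hat c \notin \hat A$ implies some $t \in A^\perp$ with $\hat\Omega(t, \hat c) \neq 0$; (ii) prove this by taking a minimal finite ``witness'' region where $\hat c$ fails to agree with $\hat A$, truncating via a projection $\pi_J$ onto a large enough finite $J \supset$ that region (truncations were defined precisely for this), and invoking the finite case — on a finite index set, every subgroup is closed, as noted in the excerpt right before Lemma~\ref{lemma: double perp of infinite operator} — to produce a finitely supported separating operator, which one then checks lies in $A^\perp$ (not just in the perp within the truncated system) by using the orthogonality axiom of the truncation, i.e.\ that operators inside and outside $J$ commute.

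I expect the main obstacle to be step (ii): making the truncation-to-finite argument actually produce an element of $A^\perp$ in the \emph{full} system rather than merely a separating operator in the truncated finite subsystem, and simultaneously controlling the support so that the reconstructed $b$ is site-wise finite. The subtlety is that $\hat A$ may fail to be ``locally generated'' in an obvious way — elements of $A$ could have unbounded support — so the witness region and the choice of $J$ must be taken large enough that $\hat\Omega(t, a) = 0$ for all $a \in A$ whose support meets $J$, while the contributions from $a$ supported outside $J$ vanish automatically by the truncation's orthogonality. Handling this uniformly, and checking that the finite-case closedness genuinely transfers back, is the technical heart of the proof.
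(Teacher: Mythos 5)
Your easy inclusion $\hat A \subset A^{\perp\perp}$ is the same as the paper's (split a product into the finitely many factors meeting $\supp(c)$ and the rest). For the hard inclusion, your ``cleanest route'' is the contrapositive of what the paper actually does: the paper shows directly that any $a\in A^{\perp\perp}$ satisfies $\pi_\Gamma a\in \pi_\Gamma A$ for \emph{every} finite region $\Gamma$ (because $A_\Gamma^\perp = A^\perp\cap P_\Gamma$ by the orthogonality axiom of the truncation, and $A_\Gamma^{\perp\perp}=A_\Gamma$ in the finite system), and then assembles $a$ as a site-wise finite telescoping product $a_{\Gamma_1}\cdot(a_{\Gamma_1}^{-1}a_{\Gamma_2})\cdot(a_{\Gamma_2}^{-1}a_{\Gamma_3})\cdots$ of elements of $A$. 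So the ingredients you invoke — finite truncation, finite-case closedness, orthogonality — are the right ones.

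The genuine gap is at the very first step of your separation argument: you assume that $\hat c\notin\hat A$ admits a \emph{finite} witness region, i.e.\ some finite $J$ with $\pi_J\hat c\notin\pi_J\hat A$. That is not automatic. A priori $\hat c$ could agree with some element of $\hat A$ (equivalently of $A$, since only finitely many factors of a well-defined infinite product meet a finite region) on every finite region and still fail to lie in $\hat A$; ruling this out is precisely the statement that local approximability by $A$ implies membership in $\hat A$, which is the nontrivial limiting/telescoping step of the paper's proof. Your proposal uses this closure property as an unstated hypothesis, so the separation argument never gets off the ground without it. By contrast, the step you single out as the ``technical heart'' — promoting the separating operator $t$ from the truncated subsystem to an honest element of $A^\perp$ — is actually immediate: if $t\in P_J$ commutes with $\pi_J A$, then by the orthogonality axiom it commutes with all of $A$ (the part of each $a$ outside $J$ commutes with $t$ for free), and $\hat\Omega(\hat c,t)$ depends only on $\pi_J\hat c$. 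So the work is misplaced: supply the finite-witness (equivalently, the telescoping-limit) argument and the rest of your outline goes through.
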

\begin{proof}
We prove this lemma in two steps: $\hat{A} \subset A^{\perp \perp}$ and $\hat{A} \supset A^{\perp \perp}$.

We first show that $\hat{A} \subset A^{\perp \perp}$.
Let $\hat{a} \in \hat{A}$. We want to show that $\hat{\Omega}(\hat{a}, c) = 0$ for any $c \in A^\perp \subset P$. Consider the decomposition:
\begin{eqs}
    \hat{a} = \prod_{\supp(a_\alpha) \cap \supp(c) \ne \emptyset} a_\alpha \cdot \prod_{\supp(a_\beta) \cap \supp(c) = \emptyset} a_\beta,
\end{eqs}
where $a_\alpha, a_\beta \in A$, and the support ($\supp$) refers to the sites where a Pauli operator is non-identity.
Note that the product $b:= \prod_{\supp(a_\alpha) \cap \supp(c) \ne \emptyset} a_\alpha$ is in $A$ since $c \in P$ has finite support. The other term $\prod_{\supp(a_\beta) \cap \supp(c) = \emptyset} a_\beta$ may have infinite support; however, it does not overlap with $c$ and commutes with $c$. Thus, we find that $\hat{\Omega}(\hat{a}, c) = \Omega(b, c) = 0$, where the final equality follows from the fact that $b \in A$ and $c \in A^\perp$. Therefore, we conclude that $\hat{A} \subset A^{\perp \perp}$.

Next, we prove that $A^{\perp \perp} \subset \hat{A}$. For any $a \in A^{\perp \perp}$, we need to show $a \in \hat{A}$. Specifically, we want to demonstrate that any finite part of $a$ comes from an element of $A$, implying that $a$ can be expressed as, at most, an infinite product of elements from $A$. To do so, it suffices to show that for any finite region $\Gamma \subset I$, there exists $a_\Gamma \in A$ such that $a - a_\Gamma$ is supported outside $\Gamma$.
Note that \( a_\Gamma \) is not necessarily supported entirely within \( \Gamma \); typically, \( a_\Gamma \) is chosen to extend slightly beyond \( \Gamma \).
\begin{equation*}
    \quad\quad
    \vcenter{\hbox{\includegraphics[width=0.2\textwidth]{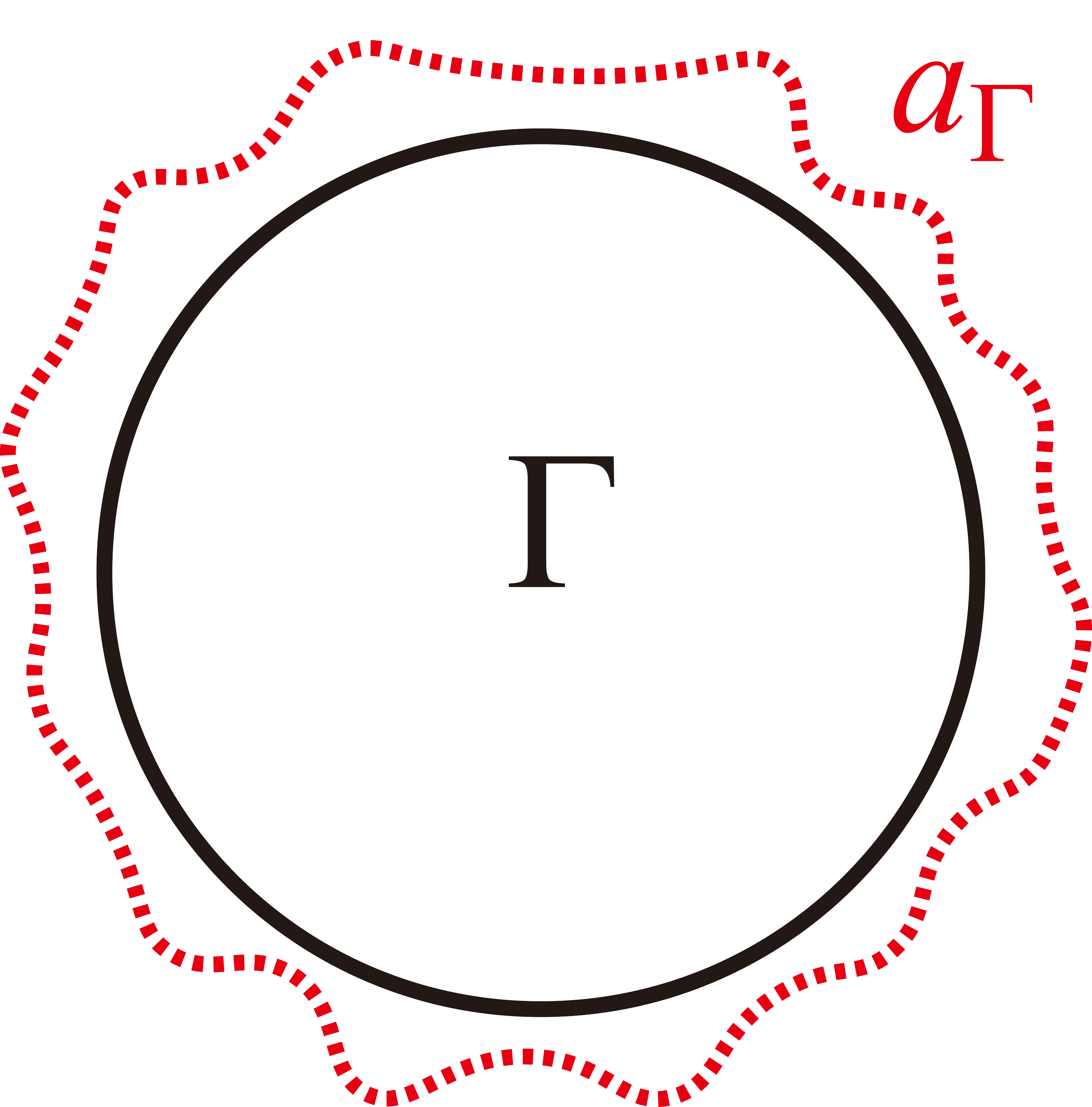}}}
\end{equation*}
Indeed, consider a sequence of nested finite regions \(\{\Gamma_i\}\) with \(\Gamma_i \subset \Gamma_{i+1}\) and \(\bigcup_i \Gamma_i = I\). In each region \(\Gamma_i\), the element \(a\) is approximated by some \(a_{\Gamma_i} \in A\), with any error confined to the outside of \(\Gamma_i\). By the definition of the limit, the sequence \(\{a_{\Gamma_i}\}\) converges to \(a\).
At each step, \(a_{\Gamma_i}\) is a product of stabilizers in \(A\); therefore, the limit, \(a\), is an (infinite) product of stabilizers in \(A\), which implies that \(a\) is an element of \(\hat{A}\).

Thus, we now show that such a $a_\Gamma$ always exists for any $\Gamma$.
For a finite region $\Gamma$, denote Pauli operators supported in $\Gamma$ by $P_\Gamma$. Let $A_\Gamma=\pi_\Gamma A$ be the image of $A$ under truncation to $\Gamma$. Clearly, $A_\Gamma$ is a closed subgroup of $P_\Gamma$.
Since \( a \in A^{\perp \perp} \), by definition, \( a \) commutes with every element in \( A^\perp \). When we project \( a \) onto the finite region \( \Gamma \), denoted by \( \pi_\Gamma a \), this projection will still commute with the elements of \( A^\perp \) that are fully supported in \( \Gamma \). These elements form the subspace \( A_\Gamma^\perp \), where $(\cdot)^\perp$ is taken inside the finite group $P_\Gamma$.
By the commutation property, we have \( \pi_\Gamma a \in A_\Gamma^{\perp \perp} \), where \( A_\Gamma^{\perp \perp} \) is the double orthogonal of \( A_\Gamma \) within \( P_\Gamma \). In the context of the finite-dimensional space \( P_\Gamma \), we know that \( A_\Gamma^{\perp \perp} \) coincides with \( A_\Gamma \) itself.
Therefore, we have shown that \( \pi_\Gamma a \in A_\Gamma \).
By definition, there exists $a_\Gamma \in A$ such that $\pi_\Gamma a_\Gamma = \pi_\Gamma a$ and $a-a_\Gamma$ is supported outside $\Gamma$.
\end{proof}

\begin{remark}
    The approximation process outlined in the proof gives rise to a useful natural topology on $\hat{P}$. See Ref.~\cite{Ruba_Yang_topology} for a detailed account and other applications.
\end{remark}

\subsection{Stabilizer codes and boundary gauge operators}

For a stabilizer code $S \subset P$, truncation $\pi$ gives rise to several new objects. $S_{B} := S\cap \pi P$ denotes the bulk stabilizers. $S_T:= \pi S$ 
denotes the truncated stabilizers (including the bulk stabilizers). $G:= S_B^\Omega \cap \pi P$ denotes all operators on the truncated space that commute with the bulk stabilizers.
Clearly, $S_B \subset S_T \subset G$. We observe the following:
\begin{enumerate}
    \item $S_T / S_B$ are referred to as the \textbf{primary boundary gauge operators}.
    \item $G / S_T$ are referred to as the \textbf{secondary boundary gauge operators}.
    \item $\mathcal{G} = G / S_B$ represents the \textbf{boundary gauge operators}, determined up to bulk stabilizers.
\end{enumerate}
The group $\mathcal{G}$ is often called the \textbf{gauge group}.

\begin{remark}
    Note $G = S_B^\Omega \cap \pi P$ is more compactly known as $S_B^\Omega$ treating $S_B$ as a subgroup of $\pi P$ instead of $P$. We also use this notation when there is no ambiguity.
\label{remark: notation of ^Omega}
\end{remark}
\begin{lemma}
    Given $a\in P, b \in \pi P$, $\Omega(a, b)= \Omega(\pi a,  b).$
\end{lemma}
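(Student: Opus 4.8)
The claim is that for $a\in P$ and $b\in\pi P$, the symplectic pairing satisfies $\Omega(a,b)=\Omega(\pi a, b)$. The natural strategy is to decompose $a$ using the truncation $\pi$ and exploit the orthogonality axiom of a truncation. Concretely, since $\pi$ is a projection ($\pi\circ\pi=\pi$), every $a\in P$ splits as $a = \pi a + (a - \pi a)$, where the second summand $a-\pi a$ lies in $\ker\pi$ (because $\pi(a-\pi a)=\pi a-\pi\pi a=\pi a-\pi a=0$). Using bilinearity of $\Omega$, I would write
\begin{eqs}
    \Omega(a,b) = \Omega(\pi a, b) + \Omega(a-\pi a, b).
\end{eqs}
It then remains to show $\Omega(a-\pi a, b)=0$. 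Since $a-\pi a\in\ker\pi$ and $b\in\pi P$, this is exactly the content of the orthogonality axiom $\Omega(\ker\pi,\pi P)=0$ in the definition of truncation. Hence $\Omega(a,b)=\Omega(\pi a,b)$, as desired.

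The only point that requires a moment of care is the bilinearity and well-definedness: $\Omega:P\times P\to\ZZ_d$ is $\ZZ_d$-bilinear by construction (it extends $\omega$ site-by-site), so splitting the first argument into a sum is legitimate, and all three terms above are genuine elements of $P$ (finitely supported), so no convergence subtlety arises. I would also note $b\in\pi P$ guarantees $b=\pi b'$ for some $b'$, so $b$ is indeed a valid right-hand argument for the orthogonality condition as stated.

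I do not anticipate a genuine obstacle here; the lemma is essentially an unwinding of the two defining axioms of a truncation together with bilinearity of the symplectic form. If anything, the subtle bookkeeping is making sure one applies orthogonality in the correct order — the axiom is stated as $\Omega(\ker\pi,\pi P)=0$, i.e.\ the kernel element sits in the first slot and the image element in the second — which matches the decomposition above since $a-\pi a\in\ker\pi$ and $b\in\pi P$. (Were the slots reversed one would use antisymmetry of $\Omega$, but that is not even needed here.) This lemma will presumably be used to freely replace $a$ by its truncation $\pi a$ when computing commutators against operators supported in the truncated region, e.g.\ when relating bulk stabilizer commutation conditions to conditions on $\pi P$ as in Remark~\ref{remark: notation of ^Omega}.
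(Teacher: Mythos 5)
Your proof is correct and follows exactly the paper's argument: decompose $a=\pi a+(a-\pi a)$, use bilinearity, observe that $a-\pi a\in\ker\pi$ by the projection axiom, and kill the second term via orthogonality $\Omega(\ker\pi,\pi P)=0$. The extra remarks on well-definedness and slot order are fine but not needed.
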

\begin{proof}
    Using bilinearity, $\Omega(a, b)= \Omega(\pi a, b)+ \Omega(a - \pi a,  b).$ The latter term vanishes because of orthogonality and projection. 
\end{proof}

The next theorem says that, for a topologically ordered stabilizer code, the set of all operators that commute with all the truncated stabilizers is the same as the set of all bulk stabilizers.

\begin{theorem}
    If $S\subset P$ is isotropic and coisotropic (i.e., satisfies the topological order condition), then $S_B= S_T^\Omega \cap \pi P$, or more compactly (see Remark~\ref{remark: notation of ^Omega}), $S_B = S_T^\Omega.$
\end{theorem}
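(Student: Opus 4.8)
The plan is to prove the two inclusions $S_B \subset S_T^\Omega \cap \pi P$ and $S_T^\Omega \cap \pi P \subset S_B$ separately, exploiting the topological order condition $S^\Omega = S$ together with the orthogonality property of the truncation $\pi$.

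For the easy inclusion, suppose $s \in S_B = S \cap \pi P$. Then $s$ is in particular a stabilizer, so $\Omega(s, s') = 0$ for every $s' \in S$ by isotropy; since every element of $S_T = \pi S$ is the $\pi$-image of some element of $S$, the preceding lemma ($\Omega(a,b) = \Omega(\pi a, b)$ for $a \in P$, $b \in \pi P$) gives $\Omega(\pi s', s) = \Omega(s', s) = 0$. Hence $s$ commutes with all of $S_T$, and since $s \in \pi P$ by hypothesis, $s \in S_T^\Omega \cap \pi P$.

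For the hard inclusion, take $a \in S_T^\Omega \cap \pi P$, so $a \in \pi P$ and $\Omega(a, t) = 0$ for all $t \in S_T = \pi S$. The goal is to show $a \in S$, because then $a \in S \cap \pi P = S_B$. The key step is to upgrade ``commutes with all truncated stabilizers $\pi s$'' to ``commutes with all stabilizers $s \in S$'': for any $s \in S$, write $\Omega(a, s) = \Omega(a, \pi s) + \Omega(a, s - \pi s)$; the first term vanishes by assumption, and the second vanishes by the orthogonality axiom of truncation, since $a \in \pi P$ and $s - \pi s \in \ker \pi$, so $\Omega(\pi P, \ker \pi) = 0$ applies. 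Therefore $a \in S^\Omega$, and by the topological order condition $S^\Omega = S$ we conclude $a \in S$, hence $a \in S_B$.

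The main obstacle — really the only subtle point — is the direction of the bilinearity argument in the hard inclusion: one must be careful that $a \in \pi P$ (not merely $a \in P$) is exactly what is needed to invoke orthogonality against $s - \pi s \in \ker\pi$, and that the coisotropy half of the topological order condition ($S^\Omega \subset S$) is the ingredient actually doing the work at the end. Isotropy alone would not suffice; it is the full condition $S^\Omega = S$ that collapses ``commutes with every stabilizer'' back into ``is a stabilizer.'' Everything else is a routine decomposition using the truncation axioms and the earlier lemma.
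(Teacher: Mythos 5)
Your proof is correct and follows essentially the same route as the paper: isotropy plus the lemma $\Omega(a,b)=\Omega(\pi a,b)$ for the easy inclusion, and the lemma plus coisotropy ($S^\Omega\subset S$) for the hard one. The only cosmetic difference is that you re-derive the lemma inline via the decomposition $\Omega(a,s)=\Omega(a,\pi s)+\Omega(a,s-\pi s)$ (with the arguments in the opposite slots, which is harmless by antisymmetry of $\Omega$) rather than citing it directly.
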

\begin{proof}
The proof contains two directions:
\begin{enumerate}
    \item 
    ($\subset$): Given $b\in S_B= S\cap \pi P$, $\Omega(l, b)=0$ for all $l\in S$ (using isotropy). By the lemma above, $\Omega (\pi l, b)= \Omega(l, b) = 0$ for all $l \in S$, which is equivalent to $\Omega(l, b) = 0$ for all $l\in S_T = \pi S$. Therefore, $b\in S_T^\Omega \cap \pi P$
    \item
    ($\supset$): Given $b \in S_T^\Omega \cap \pi P$, we again have $\Omega (l, b)= \Omega(\pi l, b)$ for all $l \in S$ from the lemma above. As $b \in S_T^\Omega$, $\Omega(\pi l, b)=0$ for all $l \in S$. In other words, $b\in S^\Omega \subset S$ (using coisotropy). Since $b$ is supported in $\pi P$, we have $b\in S_B$.
    \qedhere
\end{enumerate}
\end{proof}

\begin{corollary} \label{cor: anyon primary syndrome}
    A local Pauli operator on the truncated system that does not violate bulk stabilizers or primary boundary terms must not violate secondary boundary terms either. 
\end{corollary}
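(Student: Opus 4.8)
The plan is to translate the statement into the language of the operator algebra formalism and deduce it from the preceding theorem $S_B = S_T^\Omega \cap \pi P$. Let $M \in \pi P$ be a local Pauli operator on the truncated system. Saying that $M$ "does not violate bulk stabilizers" means $\Omega(s, M) = 0$ for all $s \in S_B$, i.e.\ $M \in S_B^\Omega \cap \pi P = G$. Saying that $M$ "does not violate primary boundary terms" means $M$ commutes with every element of $S_T$ (since the primary boundary gauge operators are $S_T / S_B$, and $M$ already commutes with $S_B$), i.e.\ $M \in S_T^\Omega \cap \pi P$. We want to conclude that $M$ "does not violate secondary boundary terms," i.e.\ $M$ commutes with every element of $G$ (the secondary boundary gauge operators being $G / S_T$, and $M$ commuting with $S_T$ already handles the primary part).

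So the chain I would spell out is: $M \in S_T^\Omega \cap \pi P$ equals $S_B$ by the theorem just proved (using that $S$ satisfies the topological order condition). Hence $M \in S_B$, meaning $M$ is itself a bulk stabilizer. But bulk stabilizers commute with everything in $G = S_B^\Omega \cap \pi P$ by the very definition of $G$. Therefore $\Omega(M, g) = 0$ for all $g \in G$, which is exactly the statement that $M$ violates no secondary (nor primary) boundary gauge operator. This also shows something slightly stronger that is worth remarking: such an $M$ is trivial as a boundary anyon, since it lies in $S_B$.

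The main subtlety — really the only place care is needed — is the bookkeeping of what "violates a primary boundary term" precisely means at the level of the quotient groups $S_T/S_B$ and $G/S_T$. One has to note that the gauge violation $\varphi$ is a homomorphism on $\mathcal{G} = G/S_B$, so "not violating primary terms" is the condition that $\varphi$ restricted to the subgroup $S_T/S_B$ is trivial, which unwinds to $\Omega(M, s) = 0$ for all $s \in S_T$; combined with $M$ being a legitimate element of $G$ (no bulk violation), this is exactly $M \in S_T^\Omega \cap \pi P$. Once that identification is made, the corollary is an immediate consequence of the theorem, with no further computation. I would keep the proof to two or three sentences invoking the theorem and the definition of $G$.
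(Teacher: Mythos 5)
Your argument is correct and is exactly the paper's proof: an operator commuting with all primary boundary terms lies in $S_T^\Omega \cap \pi P$, which equals $S_B$ by the preceding theorem, and elements of $S_B$ commute with all of $G = S_B^\Omega \cap \pi P$ by definition. The extra bookkeeping you supply about the quotients $S_T/S_B$ and $G/S_T$ is a faithful unwinding of the definitions rather than a different route.
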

\begin{proof}
    An operator that does not violate primary boundary terms is by definition in $S_T^\Omega \cap \pi P$. By the theorem above, it is in $S_B$, which by definition commutes with everything in $G= S_B^\Omega \cap \pi P$.
\end{proof}

Together with the weak translational symmetry of boundary anyons demonstrated in Corollary~\ref{cor: weak translational symmetry}, the above corollary leads to the proof of Theorem~\ref{thm: anyon primary syndrome}, which asserts that the syndrome pattern of primary boundary gauge operators is sufficient to uniquely identify an anyon.
Consider two finite boundary strings that share the same primary boundary syndrome at their endpoints. Without loss of generality, we can assume they have the same length by finding the least common multiple of their weak translational symmetry. Each string exhibits a syndrome pattern at one end and the opposite pattern at the other.
The difference between these two strings does not violate any primary boundary gauge operators, and, as a result, does not violate any secondary boundary gauge operators according to Corollary~\ref{cor: anyon primary syndrome}. Therefore, only the syndrome pattern of primary boundary gauge operators is enough to describe a boundary anyon.

With the theorems, corollaries, and mathematical framework established above, we can rigorously derive the statements presented in Sec.~\ref{sec: Physical intuition}.

\begin{proof}[Proof of Theorem~\ref{thm: boundary anyon definition}]

The proof proceeds in two parts:
\begin{enumerate}
    \item \textbf{Creating multiple boundary anyons via a local boundary gauge operator}:  
    We begin by demonstrating that a superposition of boundary anyons $\varphi$ at distinct locations, where the greatest common divisor (gcd) of the anyon multiplicities at each location and $d$ is 1, can be generated by a local boundary gauge operator $O_\partial$.

    \item \textbf{Constructing an infinite boundary string operator that creates a single boundary anyon}:  
    Given the boundary gauge operator \(O_\partial\), we show that it can be applied repeatedly at different locations to construct an infinite operator
    \[
    O_\varphi = \prod_i O_\partial(l_i),
    \]
    where \(O_\partial(l_i)\) represents the operator \(O_\partial\) translated by \(l_i\) in the \(y\)-direction, and this operator \(O_\varphi\) generates a single boundary anyon \(\varphi\). 
\end{enumerate}
We begin by proving the first part. We claim that for any homomorphism $\varphi: \mathcal{G}\rightarrow U(1)$, there exists a local boundary gauge operator whose syndrome pattern equals a superposition of $\varphi$ and its translated copies along the boundary. In other words, though $\varphi$ may not be creatable by a local boundary gauge operator, several copies of it located at different positions can be.
This step follows directly from the following theorem:
\begin{theorem}\label{thm: Ruba Yang theorem}
    {\bf(Proposition 10 of Ref.~\cite{ruba2024homological})}
    Let $M$ be a quasi-symplectic $R$-module equipped with a $\mathbb{Z}_d$-bilinear pairing $\Omega: M \times M \rightarrow R$. Then $M^*/M$ is a torsion module.
\end{theorem}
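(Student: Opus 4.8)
The plan is to identify $M^*/M$ as the cokernel of the canonical map $\iota_\Omega\colon M\to M^*$, $m\mapsto\Omega(\,\cdot\,,m)$, and to show this cokernel vanishes after inverting every non-zerodivisor of $R$. In our application $R=\mathbb{Z}_d[y,y^{-1}]$ is the Laurent ring recording translations along the boundary, with the involution $y\mapsto y^{-1}$; $M$ is a finitely generated $R$-module (the boundary gauge operators, or the relevant quasi-symplectic submodule of a standard symplectic $R^{2n}$); $\Omega$ is the $\bar{\cdot}$-sesquilinear antisymmetric form coming from the commutator; and $M^*$ is the module of ``local'' syndrome functionals, which I would first identify with $\mathrm{Hom}_R(M,R)$ (viewed as semilinear functionals) using that $R$ is $\mathbb{Z}_d$-free on $\{y^n\}$ with the self-dual residue pairing ``coefficient of $y^0$''. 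The quasi-symplectic hypothesis forces $\ker\iota_\Omega$ to be a torsion submodule of $M$ — indeed the literal notation ``$M^*/M$'' presupposes $\iota_\Omega$ is injective — so it suffices to prove that $\iota_\Omega$ becomes an isomorphism over the total quotient ring $Q(R)$.

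Next I would reduce to prime powers. By the Chinese remainder theorem $R\cong\prod_{p\mid d}R_p$ with $R_p=\mathbb{Z}_{p^{v_p(d)}}[y,y^{-1}]$, and $M$, $M^*$, $\iota_\Omega$ all split accordingly; since $M$ is finitely generated over the Noetherian ring $R$, formation of $M^*=\mathrm{Hom}_R(M,R)$ commutes with the flat exact localizations $R_p\to Q(R_p)$. Set $\mathcal{O}:=Q(R_p)$. A Laurent polynomial over $\mathbb{Z}_{p^k}$ is a non-zerodivisor precisely when its reduction mod $p$ is nonzero, so $\mathcal{O}$ is the localization of $R_p$ at its unique minimal prime $(p)$; hence $\mathcal{O}$ is a local Artinian ring with maximal ideal $(p)\mathcal{O}$ and residue field $\mathbb{F}_p(y)$. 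Moreover $\mathcal{O}$ is Gorenstein: $\mathbb{Z}[y]/(p^k)$ is a hypersurface, hence Gorenstein, and Gorenstein-ness passes to localizations; being $0$-dimensional Gorenstein, $\mathcal{O}$ is self-injective (quasi-Frobenius).

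The endgame is a length count over $\mathcal{O}$. Because $\mathcal{O}$ is self-injective Artinian, the functor $V\mapsto V^\vee:=\mathrm{Hom}_{\mathcal{O}}(V,\mathcal{O})$ is exact, satisfies $V^{\vee\vee}\cong V$, and preserves length on finitely generated modules — it sends the simple module to $\mathrm{soc}(\mathcal{O})\cong\mathbb{F}_p(y)$, and one induces the general case by additivity and exactness on a composition series. Writing $V:=M\otimes_R\mathcal{O}$, the form induces $\iota_\Omega\otimes\mathcal{O}\colon V\to V^\vee$, which is injective since $\ker\iota_\Omega$ was torsion and localization is exact; an injection between finitely generated $\mathcal{O}$-modules of equal finite length is an isomorphism, so its cokernel vanishes. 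Assembling over all $p$ yields $(M^*/M)\otimes_R Q(R)=0$, i.e. $M^*/M$ is torsion. Concretely this says every syndrome pattern becomes realizable by an honest gauge operator after multiplication by a Laurent polynomial whose coefficients have gcd coprime to $d$, matching the ``gcd of the anyon multiplicities and $d$ is $1$'' phrasing used earlier in the draft.

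I expect the main obstacle to be bookkeeping rather than a deep idea. First, one must pin down the precise meaning of $M^*$ and prove the identification with the semilinear $R$-dual: that ``local'' syndrome functionals are exactly the continuous ones and these reconstruct $\mathrm{Hom}_R(M,R)$ via the residue pairing on $R$. Second, the involution $\bar{\cdot}$ must be carried consistently through all the duality statements (semilinear versus linear duals, and the fact that $\bar{\cdot}$ fixes $(p)$ so descends to $y\mapsto y^{-1}$ on $\mathbb{F}_p(y)$), so that ``non-degenerate antisymmetric sesquilinear form'' genuinely produces the injection $V\hookrightarrow V^\vee$ feeding the length count. Finally, if ``quasi-symplectic'' is packaged as $M$ being $L^\perp$ (or $L^\perp/L$) inside a standard unimodular symplectic $R^{2n}$ rather than as an abstract non-degeneracy condition, then the needed injectivity of $\iota_\Omega$ should be extracted from unimodularity of the ambient form; this is the one point where the specific content of ``quasi-symplectic'' genuinely enters.
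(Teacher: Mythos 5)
The paper does not actually prove this theorem: it is imported wholesale as Proposition~10 of Ref.~\cite{ruba2024homological}, and the surrounding discussion only unpacks the statement and applies it. So there is no in-paper argument to compare yours against; where the paper has a citation, you have supplied a proof.

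Your argument is correct and self-contained. The chain is: reduce by the Chinese remainder theorem to $R_p=\mathbb{Z}_{p^k}[y,y^{-1}]$; observe (e.g.\ by McCoy's lemma) that the zero divisors of $R_p$ are exactly the elements of $(p)$, so the total quotient ring $\mathcal{O}=Q(R_p)=(R_p)_{(p)}$ is Artinian local with residue field $\mathbb{F}_p(y)$, and is Gorenstein as a localization of the hypersurface $\mathbb{Z}[y,y^{-1}]/(p^k)$; self-injectivity makes $(-)^\vee=\mathrm{Hom}_{\mathcal{O}}(-,\mathcal{O})$ exact and length-preserving on finitely generated modules (the dual of the simple module is the one-dimensional socle); and the localized map $V\to V^\vee$ is then an injection between finite-length modules of equal length, hence an isomorphism, so the cokernel of $\iota_\Omega$ dies after inverting all non-zerodivisors, which is exactly the torsion statement. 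Each step is standard and checks out. Two hypotheses are doing real work and should be stated explicitly rather than left implicit: (i) $M$ must be finitely generated over $R$, or the length count collapses; this is part of the definition of a quasi-symplectic module in the cited reference and holds for the gauge group $\mathcal{G}$ over $\mathbb{Z}_d[y,y^{-1}]$ in the application. (ii) The ``$\mathbb{Z}_d$-bilinear'' pairing must in fact be $R$-sesquilinear with respect to the antipode $y\mapsto y^{-1}$, since otherwise $m\mapsto\Omega(-,m)$ does not land in $\mathrm{Hom}_R(M,R)$ and the quotient $M^*/M$ does not typecheck as an $R$-module; because the antipode is a ring automorphism preserving $(p)$, the semilinear twist is harmless for the localization and length arguments. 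You flag both points, so I regard the proof as complete modulo writing out the standard isomorphism $\mathrm{Hom}_R(M,R)\otimes_R\mathcal{O}\cong\mathrm{Hom}_{\mathcal{O}}(M\otimes_R\mathcal{O},\mathcal{O})$ for finitely presented $M$ and flat $R\to\mathcal{O}$.
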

Let us now explain this theorem in detail. We begin by setting $R = \mathbb{Z}_d[y, y^{-1}]$, the Laurent polynomial ring reviewed in Sec.~\ref{sec: review_polynomial}, where $y$ represents the translation in the $y$-direction. The key condition for a quasi-symplectic form is that $\Omega(m', m) = 0$ for all $m' \in M$ implies $m = 0$. This ensures that $\Omega$ is a non-degenerate bilinear form. In other words, the map $M \rightarrow M^*$ given by $m \mapsto \Omega(-, m)$ is injective.\footnote{The module $M^*$ is defined as the set of all $R$-linear homomorphisms from $M$ to $R$, i.e., $M^* = \text{Hom}(M, R)$.}
Theorem~\ref{thm: Ruba Yang theorem} says that the quotient $M^*/M$ is a torsion module, meaning that for any $m^* \in M^*$, there exists an element $r \in R$ (where \( r \) is not a zero divisor\footnote{This requires that the coefficients of the polynomial \( r \) have a greatest common divisor of \( 1 \in \mathbb{Z}_d \).}) such that $r m^* = \Omega(-, m)$ for some $m \in M$.
We now apply this theorem to our specific scenario. By Lemma~\ref{lemma: double perp of infinite operator}, the module $\mathcal{G}$ fits the definition of a quasi-symplectic module. Specifically, if a local operator commutes with all boundary gauge operators, then it must be a bulk stabilizer, indicating that $\mathcal{G}$ has a non-degenerate bilinear form $\Omega$.
Therefore, by the theorem, $\mathcal{G}^*/\mathcal{G}$ is a torsion module. This implies that multiple copies of a boundary syndrome pattern at different locations, indicated by \( r \in R \), sum to a trivial syndrome pattern generated by a local boundary operator in \( \mathcal{G} \).

Next, we prove the second part. Since there is a local boundary gauge operator $O_\partial$ creating a superposition of multiple \(\varphi\) syndromes along a one-dimensional boundary, we can isolate a single \(\varphi\) syndrome by repeatedly applying translations of the local operator. For example, consider a special case where the initial boundary anyons are located at positions \(y_1, y_2, y_3, \ldots\) such that \(y_1 < y_2 \leq y_3 \leq y_4 \leq \cdots\).
By applying a translated operator, we create new boundary anyons at positions $y_2, y_2 + (y_2 - y_1), y_3 + (y_2 - y_1), \cdots$. When subtracting the boundary anyons, the anyon at $y_1$ remains, but the anyon at $y_2$ is canceled. The next remaining anyon, \( y_2' \), defined as the first anyon to the right of \( y_1 \), is now either at $y_3$ or at $y_2 + (y_2 - y_1)$. This location is either farther from \( y_1 \), or the multiplicity of boundary anyons at \( y_2 \) decreases.
By repeating this process, the second anyon adjacent to $y_1$ can be pushed to infinity, effectively trivializing the contribution of the remaining boundary anyons.

For the general case where \( y_1 \) might be equal to \( y_2 \), we need to prove that if \( r \) is not a zero divisor, then there exists an inverse \( I \) such that \( rI = 1 \in R \). In Appendix~\ref{appnedix: Units in the formal Laurent series}, we have shown that an element in the formal Laurent series has an inverse if and only if the greatest common divisor of its coefficients is \( 1 \). Therefore, since \( r \) is not a zero divisor, it satisfies this condition, ensuring the existence of an inverse. This completes the proof of Theorem~\ref{thm: boundary anyon definition}.
\end{proof}

\begin{proof}[Proof of Lemma \ref{lemma: boundary anyons into the bulk}]
    For a boundary anyon, there exists an infinite boundary string operator \(s\) that commutes with both bulk stabilizers and boundary gauge operators. This is because Theorem~\ref{thm: boundary anyon definition} shows that a boundary anyon can be created by a semi-infinite boundary gauge operator (see Definition~\ref{def: boundary string operator}), and this string can be extended to infinity without violating any boundary gauge operator by Corollary~\ref{cor: weak translational symmetry}.
    
    Since \(s\) commutes with both bulk stabilizers and boundary gauge operators, it commutes with all elements of \(G\), meaning \(s \in G^\perp\). Given that \(G = S_B^\Omega \cap \pi P\) or equivalently \(G = \tilde{S}_B^\perp\), where \(\tilde{S}_B\) is \(S_B\) viewed as a subset of \(\pi \hat{P}\), we have \(G^\perp = \tilde{S}_B^{\perp\perp}\). By Lemma~\ref{lemma: double perp of infinite operator}, \(\tilde{S}_B^{\perp\perp} = \hat{S}_B\), so \(s \in \hat{S}_B\).
    In other words, $s$ is an (infinite) product of bulk stabilizers:
    \begin{equation}
        s = \prod_{i \in J} S_i,
    \end{equation}
    where $J$ is an infinite set of bulk stabilizers.

    \begin{figure}[htb]
        \centering
        \includegraphics[width=0.4\textwidth]{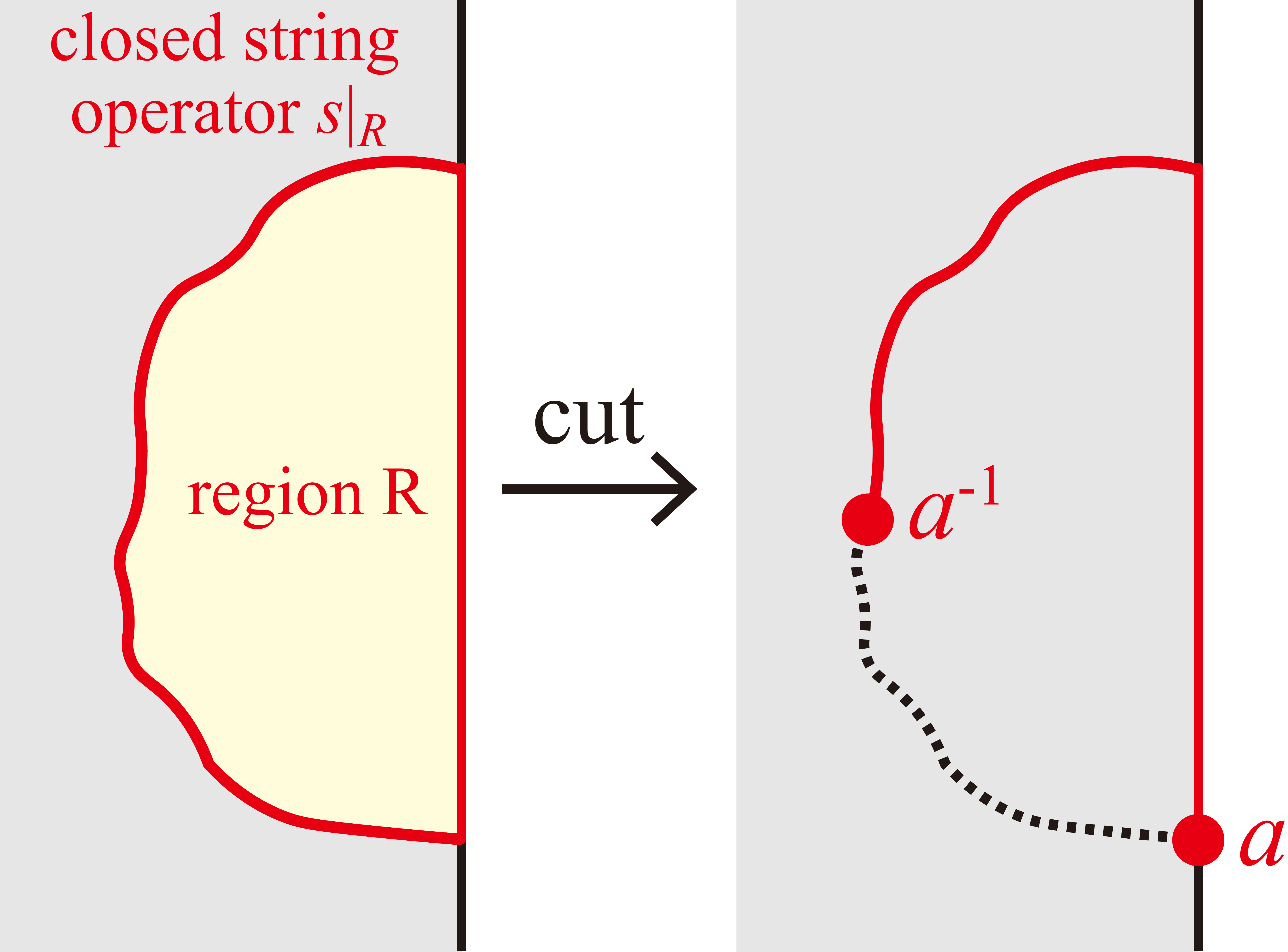}
        \caption{We first modify the infinite string operator $s$ into a closed string operator $s|_R$ by selecting a subset of bulk stabilizers within the region $R$, as defined in Eq.~\eqref{eq: s_R definition}. Then, we cut this closed string operator to form an open string operator, resulting in a bulk anyon and a boundary anyon at its endpoints.}
        \label{fig: infinite s becomes a loop}
    \end{figure}
    
    Next, we choose a finite subset $J_R \subset J$, containing all stabilizers from $J$ that are fully supported within a large local region $R$, as shown in Fig.~\ref{fig: infinite s becomes a loop}. We define a new operator
    \begin{equation}
        s|_R := \prod_{i \in J_R} S_i.
        \label{eq: s_R definition}
    \end{equation}
    Deep inside $R$ (far from $\partial R$, the boundary of $R$, with a distance much larger than the range of each bulk stabilizer), the operator behaves like $s$, which vanishes. Far outside $R$, no operator is applied, so it also vanishes there. Thus, $s|_R$ is supported only near $\partial R$. Near the system's boundary, $s$ and $s|_R$ are identical. This $s|_R$ represents the closed string operator version of the infinite string operator $s$. Finally, we cut the closed string into an open string operator, as shown in Fig.~\ref{fig: infinite s becomes a loop}. One endpoint of this open string corresponds to a bulk anyon, while the other endpoint corresponds to a boundary anyon. This demonstrates that any boundary anyon can be moved into the bulk.

\end{proof}

\begin{proof}[Proof of Theorem~\ref{thm: Bulk-boundary correspondence}]
    By Lemma \ref{lemma: boundary anyons into the bulk}, we can move a boundary anyon into the bulk; therefore, it is sufficient to show that the inverse process exists. 

    For a bulk anyon, create a string with one anyon at each endpoint. We assume the two endpoints fall onto two sides of the boundary. Truncate this string to create a string lying fully on one side of the boundary. The truncation creates a boundary anyon while the bulk anyon at the other end remains intact. This operator moves this bulk anyon into the boundary. 
\end{proof}

To further establish Theorem~\ref{thm: primary boundary string}, we need a lemma concerning bulk Pauli stabilizer codes.

\begin{lemma}
Given a bulk Pauli stabilizer code $S\subset P$ satisfying the topological order condition $S^\Omega=S$, a closed anyon string operator is equal to a finite product of stabilizers.
\label{lemma: closed anyon string as stabilizers}
\end{lemma}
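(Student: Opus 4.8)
The plan is to deduce the statement directly from the algebraic topological order condition $S^\Omega = S$. First I would pin down what a \emph{closed anyon string operator} is: it is the Pauli operator $c$ obtained by composing the elementary hopping operators for some anyon type $a$ along the edges of a closed loop $\gamma$ drawn on the lattice. The open version of such a string fails to commute with stabilizers only in a bounded neighbourhood of its two endpoints; closing $\gamma$ identifies those endpoints, and there the excitation created is $a \times \bar{a} = 1$ (using that the theory is Abelian), so $c$ creates no syndrome anywhere. Consequently $c$ commutes with every stabilizer, i.e.\ $\Omega(c, s) = 0$ for all $s \in S$, so $c \in S^\Omega$.

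The next point, and the one requiring the most care, is that $c$ has \emph{finite} support: although $\gamma$ may enclose an arbitrarily large region, it is a finite cycle, so $c \in P = \bigoplus_I \mathbb{Z}_d^2$ rather than merely in $\hat P$. This matters because the topological order condition is often phrased only for operators supported inside a ball of fixed radius, whereas a closed string is supported on an annular region. The reformulation $S^\Omega = S$, however, is a statement about \emph{all} finite-support operators and is insensitive to the shape of the support, so it applies to $c$ verbatim. (On a closed manifold with nontrivial cycles this would fail — a noncontractible loop operator is a logical operator, not a product of stabilizers — which is exactly why $S^\Omega = S$, valid on the plane, is the needed hypothesis.)

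Putting the two steps together, $c \in S^\Omega = S$. Since a Pauli stabilizer code is by definition given by the subgroup $S \subset P$ generated by the local stabilizer terms and their lattice translates, and every element of a subgroup is a finite word in its generators, membership $c \in S$ is exactly the assertion that $c$ is a finite product of stabilizers, which proves the lemma.

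I expect the main obstacle to lie in the first paragraph rather than in the algebra: one must check that the hopping operators composed around a contractible loop close up consistently (up to an overall phase that can be absorbed into $c$) and that the resulting $c$ commutes with the stabilizers both inside and outside $\gamma$, so that it genuinely lands in $S^\Omega$. If one wishes to sidestep this, an acceptable alternative is to simply \emph{define} a closed anyon string operator to be any finite-support Pauli operator supported near a closed loop that commutes with all stabilizers, in which case only the second and third paragraphs remain.
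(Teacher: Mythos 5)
Your argument is correct and is essentially the paper's own proof: the closed string commutes with all stabilizers, hence lies in $S^\Omega = S$, and membership in $S\subset P$ means it is a finite product of stabilizer generators. The paper states this in two sentences; your additional care about why the closed string produces no syndrome and why its support is finite is consistent with (and implicit in) that argument.
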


\begin{proof}
Such a closed string operator commutes with all bulk stabilizers. Therefore, it is contained in $S^\Omega = S$. 
\end{proof}

\begin{proof}[Proof of Theorem~\ref{thm: primary boundary string}]
    Embed the boundary string operator into the complete bulk system. It commutes with all bulk stabilizers except at the two endpoints. The boundary string becomes a bulk string, creating two bulk anyons at its endpoints. These anyons are mobile \cite{haah_module_13, ruba2024homological} and can be moved out of the truncated system and annihilate each other. The resulting closed string operator can be expressed as a product of bulk stabilizers by Lemma \ref{lemma: closed anyon string as stabilizers}. The truncation of this bulk closed string operator gives a boundary string operator equivalent to the boundary string we start with. Moreover, this boundary string consists solely of primary boundary gauge operators. 
    \begin{equation*}
\centering        \vcenter{\hbox{\includegraphics[width=0.5\textwidth]{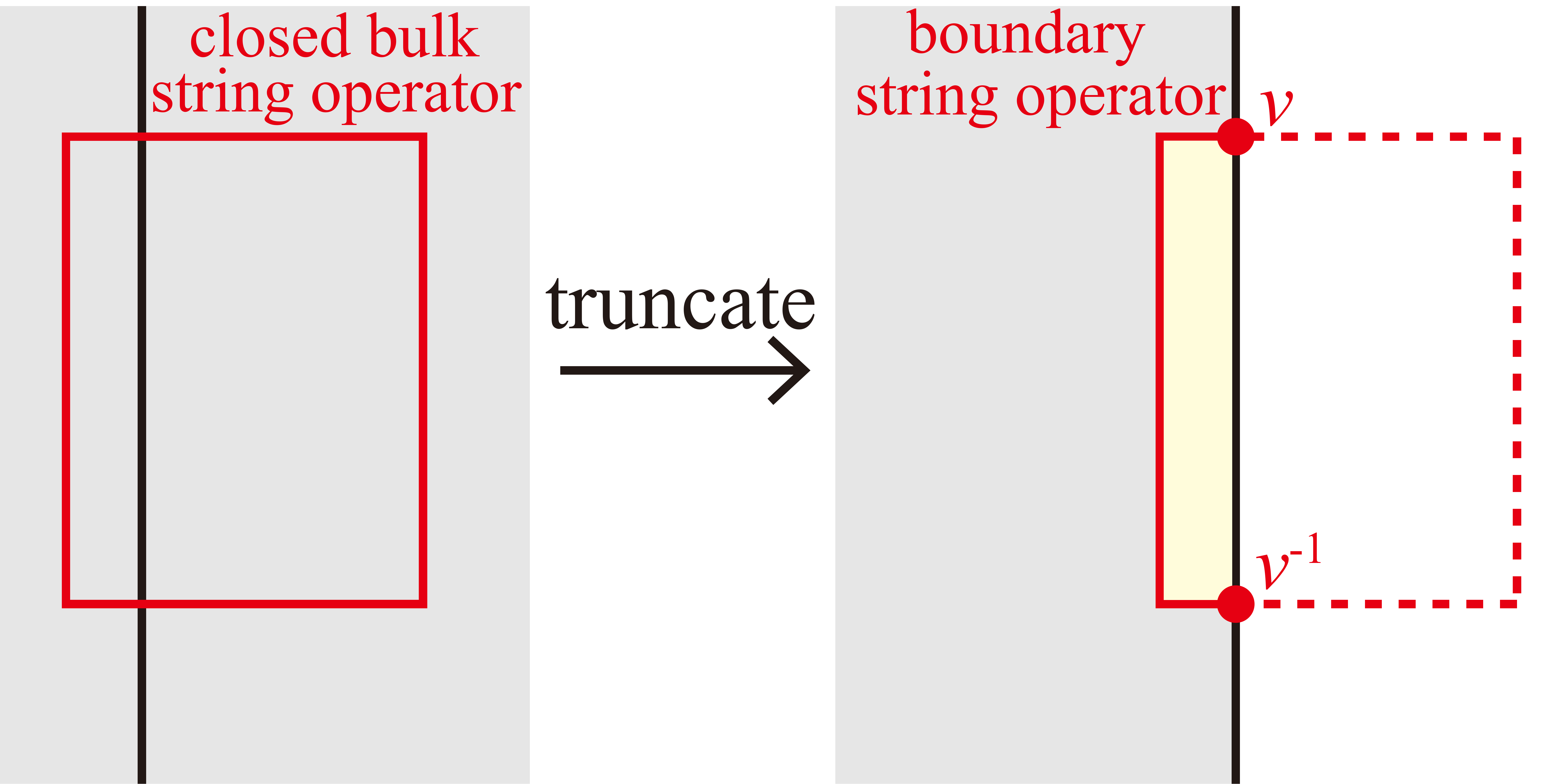}}}
    \end{equation*}
\end{proof}

Now, we will prove Theorem~\ref{thm: add boundary string to condense anyon} and Theorem~\ref{thm: add other boundary terms to satisfy the TO condition}. Before proceeding with these proofs, we note the following property, which generalizes the result of Lemma~\ref{lemma: closed anyon string as stabilizers}:
\begin{lemma}
    Consider the semi-infinite bulk and boundary string operators, as shown in Fig.~\ref{fig: semi_bulk_and_boundary}. This combined operator $O$ can be represented as an infinite product of bulk stabilizers.
    \begin{figure}[htb]
        \centering
        \hspace{2cm}
        \includegraphics[width=0.45\linewidth]{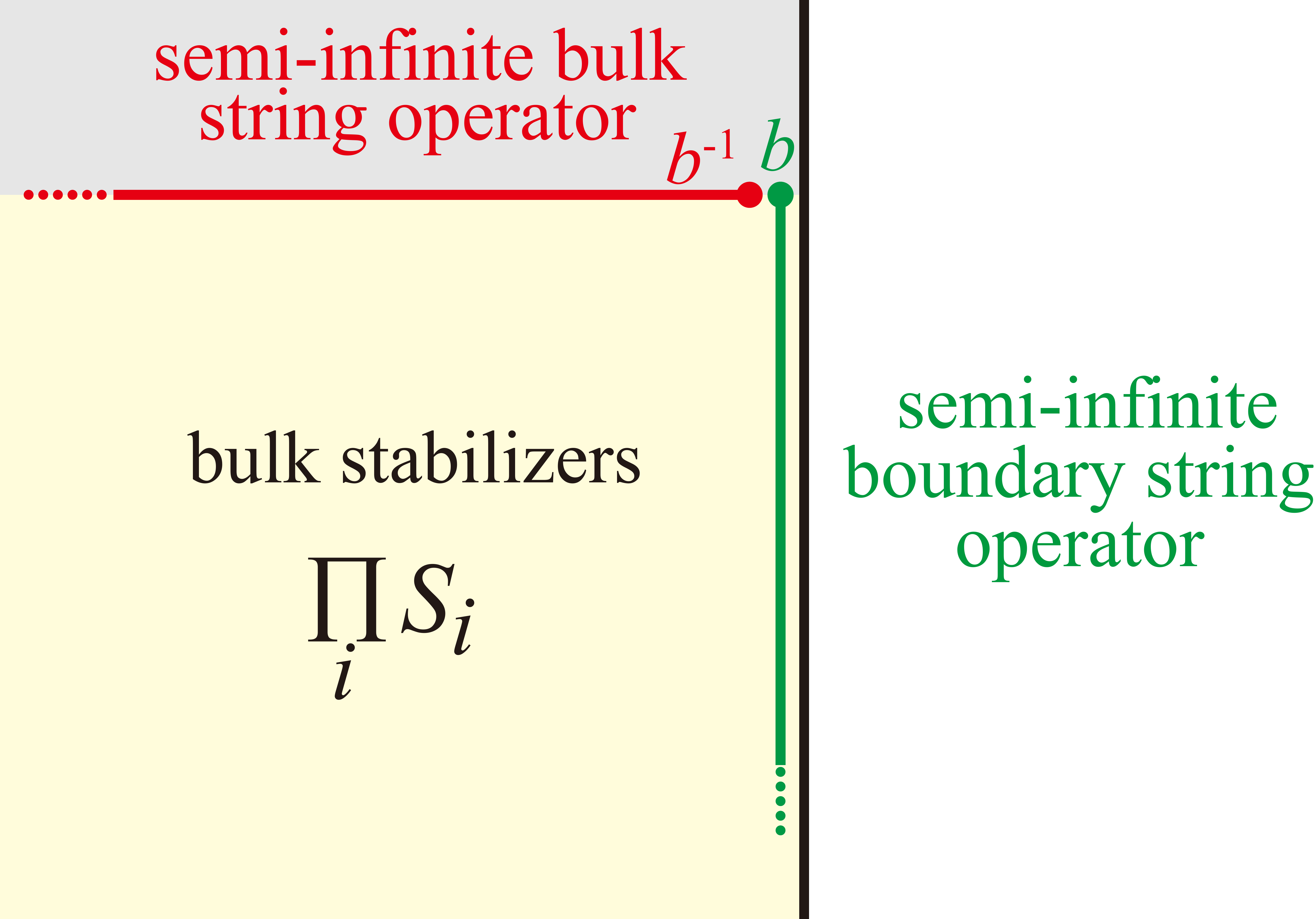}
        \caption{According to Lemma~\ref{lemma: boundary anyons into the bulk}, a boundary anyon can move into the bulk, allowing the bulk and boundary string operators to merge without violating any bulk stabilizers or boundary gauge operators.}
        \label{fig: semi_bulk_and_boundary}
    \end{figure}
    \label{lemma: semi_bulk_and_boundary}
\end{lemma}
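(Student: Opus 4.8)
The plan is to reduce the statement to Lemma~\ref{lemma: double perp of infinite operator} exactly as in the proof of Lemma~\ref{lemma: boundary anyons into the bulk}. First I would observe that the combined operator $O$ depicted in Fig.~\ref{fig: semi_bulk_and_boundary} — the concatenation of a semi-infinite bulk string and a semi-infinite boundary string, joined through the bulk by the move of Lemma~\ref{lemma: boundary anyons into the bulk} — has been arranged so that it violates no bulk stabilizer and no boundary gauge operator anywhere; this is precisely the content of the caption and of Lemma~\ref{lemma: boundary anyons into the bulk}, which says the boundary endpoint can be pushed into the bulk and the two strings merged without leaving any syndrome. Hence $O$ commutes with every element of $G = S_B^\Omega \cap \pi P$, i.e.\ $O \in G^\perp$.

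Next I would rewrite $G^\perp$ in the form handled by Lemma~\ref{lemma: double perp of infinite operator}. Writing $\tilde S_B$ for $S_B$ regarded as a subgroup of $\pi\hat P$, we have $G = \tilde S_B^{\perp}$ (Remark~\ref{remark: notation of ^Omega}), so $G^\perp = \tilde S_B^{\perp\perp}$. By Lemma~\ref{lemma: double perp of infinite operator}, $\tilde S_B^{\perp\perp} = \hat S_B$, the group of all finite \emph{and infinite} products of bulk stabilizers. Therefore $O \in \hat S_B$, which is exactly the assertion: $O = \prod_{i \in J} S_i$ for some (infinite) index set $J$ of bulk stabilizers. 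This mirrors the displayed argument in the proof of Lemma~\ref{lemma: boundary anyons into the bulk}, the only difference being that the geometry of the string (semi-infinite bulk leg plus semi-infinite boundary leg, rather than a single bi-infinite boundary string) is different; the algebraic input is identical.

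The only real point that needs care — and the step I expect to be the main obstacle — is justifying that the merged operator $O$ genuinely commutes with \emph{all} of $G$, including the secondary boundary gauge operators, and that the merging in Fig.~\ref{fig: semi_bulk_and_boundary} can be performed so that no residual syndrome is left at the junction where the bulk leg meets the boundary leg. Here I would invoke Lemma~\ref{lemma: boundary anyons into the bulk} together with the bulk-boundary correspondence (Theorem~\ref{thm: Bulk-boundary correspondence}): the boundary anyon at the end of the boundary leg corresponds to a unique bulk anyon, and there is a finite string operator realizing the passage from one to the other; gluing this connecting string between the two semi-infinite legs produces an operator whose only possible violations are at the two far endpoints, which have themselves been sent to infinity (the bulk leg's endpoint annihilated against the other end, the boundary leg extended by Corollary~\ref{cor: weak translational symmetry}). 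Once this geometric bookkeeping is in place, the $(\cdot)^{\perp\perp}$ computation closes the proof immediately.
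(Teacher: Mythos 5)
Your proposal is correct and follows essentially the same route as the paper: both arguments note that $O$ commutes with every finite operator commuting with the bulk stabilizers (the group $G$, equivalently $\hat S_B^{\perp}$), and then apply Lemma~\ref{lemma: double perp of infinite operator} to conclude $O \in \tilde S_B^{\perp\perp} = \hat S_B$. Your extra care about the junction and the secondary boundary gauge operators is a point the paper handles only by asserting that $O$ is constructed to commute with all boundary gauge operators, so your invocation of Lemma~\ref{lemma: boundary anyons into the bulk} there is consistent with, if slightly more explicit than, the paper's treatment.
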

\begin{proof}
Let \(\hat{S}\) be the group consisting of infinite products of bulk stabilizers. The orthogonal complement of this group, \(\hat{S}^\perp\), consists of all finite products of bulk stabilizers and boundary gauge operators. 

Now, consider the infinite string operator \(O\), which is composed of a bulk string operator and a boundary string operator, as illustrated in Fig.~\ref{fig: semi_bulk_and_boundary}. The construction of this operator \(O\) is designed to commute with all bulk stabilizers and boundary gauge operators. 

Since \(O\) commutes with all elements in \(\hat{S}^\perp\) (finite products of bulk stabilizers and boundary gauge operators), it follows by definition that \(O \in \hat{S}^{\perp \perp}\). By Lemma~\ref{lemma: double perp of infinite operator}, we know that \(\hat{S}^{\perp \perp} = \hat{S}\). Therefore, \(O \in \hat{S}\), which means that \(O\) can be expressed as an infinite product of bulk stabilizers. This completes the proof of the lemma.
\end{proof}

Using Lemma~\ref{lemma: semi_bulk_and_boundary}, we can establish Theorems~\ref{thm: add boundary string to condense anyon} and~\ref{thm: add other boundary terms to satisfy the TO condition}.
\begin{proof}[Proof of Theorem~\ref{thm: add boundary string to condense anyon} and \ref{thm: add other boundary terms to satisfy the TO condition}]
    $\quad$
    
    First, we demonstrate that the bulk string of anyon \(b\) can terminate on the boundary without causing any energy excitations. As shown in Fig.~\ref{fig: semi_bulk_and_boundary}, we can multiply short boundary string operators corresponding to the boundary anyon \(b\) to form a semi-infinite boundary string operator, which can then be attached to a semi-infinite bulk string operator. Since these short boundary string operators are part of the Hamiltonian, they do not contribute to any energy violations associated with the original bulk string operator. By Lemma~\ref{lemma: semi_bulk_and_boundary}, the combined operator, consisting of the semi-infinite bulk and boundary string operators, can be expressed as an infinite product of bulk stabilizers. Consequently, this combined operator does not violate any bulk stabilizers or boundary gauge operators, including those associated with boson condensation and topological order completion on the boundary. Therefore, the bulk string of anyon \(b\) can terminate on the boundary without incurring any energy cost.

    Next, we show that if \(a \notin \mathcal{L}\), its string ending on the boundary will violate certain boundary terms in the Hamiltonian. According to the bulk-boundary correspondence (Theorem~\ref{thm: Bulk-boundary correspondence}), the bulk string of anyon \(a\) can be bent into a boundary string, as depicted in Fig.~\ref{fig: bulk_bent_into_boundary}. We then consider a boundary string of anyon \(b\) that has a long enough overlap with the boundary string of \(a\). Since \(a \notin \mathcal{L}\), we can choose \(b \in \mathcal{L}\) such that the braiding between \(a\) and \(b\) is nontrivial, meaning \(B(a, b) \neq 1\). By Theorem~\ref{thm: braiding between boundary anyons}, this nontrivial braiding implies that the commutator of the two boundary strings of \(a\) and \(b\) is nontrivial, indicating that the string operators do not commute. Given that the boundary string of \(b\) can be constructed as a product of short string operators in the Hamiltonian, the presence of the non-commuting string operators means that the bulk string of \(a\) must violate at least one term in the Hamiltonian. Thus, we have shown that if \(a \notin \mathcal{L}\), its string ending on the boundary results in a violation of certain boundary terms, completing the proof.
    \begin{figure}[htb]
        \hspace{2cm}
        \centering
        \includegraphics[width=0.4\textwidth]{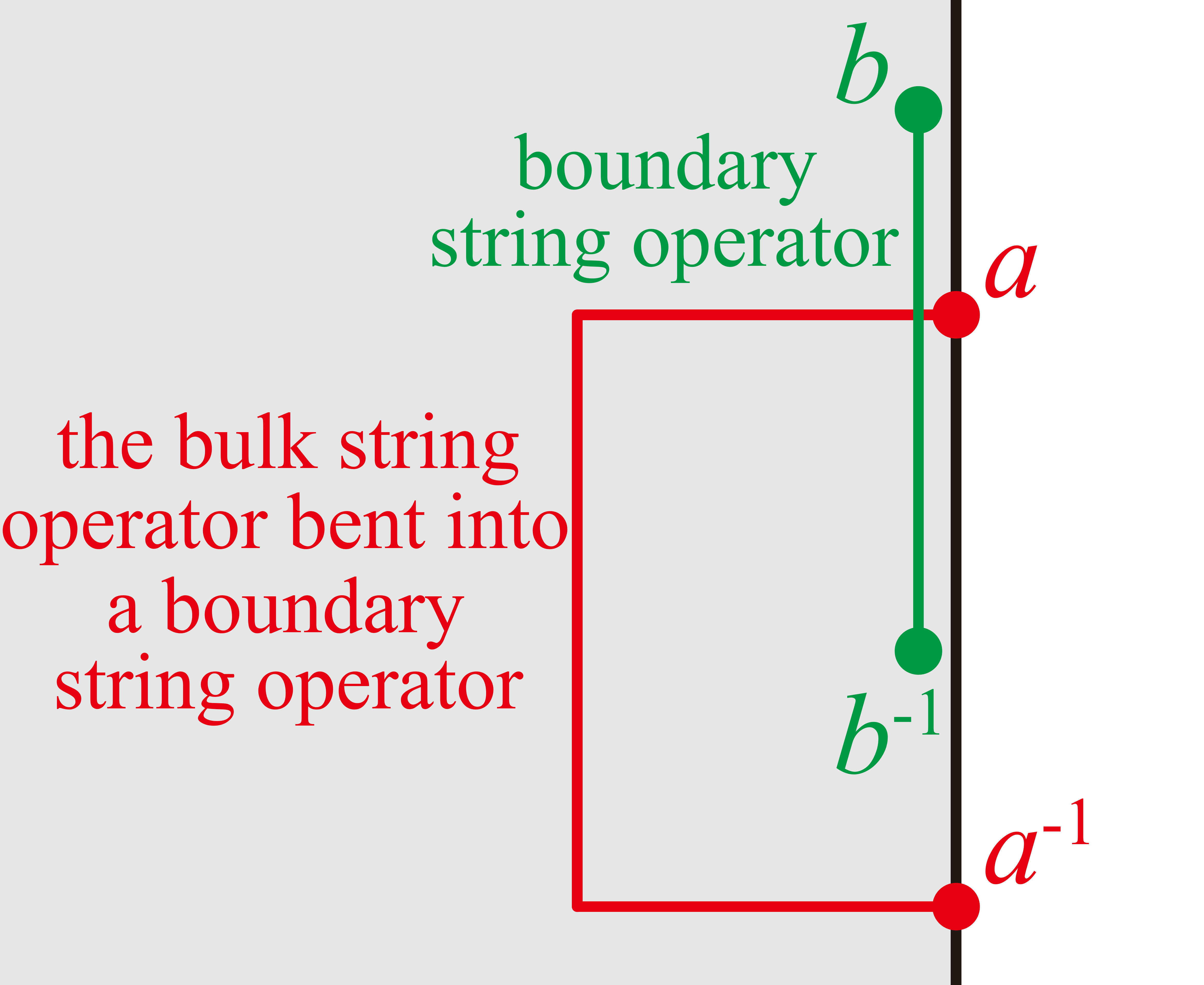}
        \caption{A bulk string operator for anyon $a \notin \mathcal{L}$ can terminate on the boundary, creating two boundary anyons, as illustrated by the red string. The green string represents a boundary string operator for anyon $b \in \mathcal{L}$, formed by a product of many short boundary string operators in the boundary Hamiltonian. Both strings must be long relative to the size of the boundary anyons to ensure that their commutation corresponds to the braiding between $a$ and $b$, as described in Theorem~\ref{thm: braiding between boundary anyons}. By the definition of the Lagrangian subgroup, the braiding $B(a, b) \neq 1$, so the bulk string operator for anyon $a$ terminating on the boundary must violate the boundary Hamiltonian.
        }
        \label{fig: bulk_bent_into_boundary}
    \end{figure}
\end{proof}

\section{Computational algorithms}\label{sec: Computational algorithms}

This section presents a computer-based algorithm for analyzing the boundary theory of a given topological Pauli stabilizer code. The algorithm addresses three aspects:
\begin{enumerate}
    \item Determine boundary gauge operators, which generate the anomalous Hilbert space of the boundary theory.
    \item Obtain boundary string operators and boundary anyons, classify these anyons using equivalence relations, and compute their fusion rules.
    \item Construct gapped boundaries and defects via boundary anyon condensation and topological order completion.
\end{enumerate}
Notably, the algorithm applies to qudit systems with nonprime dimensions, such as $\mathbb{Z}_4$ qudits.
This section provides the detailed procedures of the algorithm; readers interested in the applications can proceed directly to Sec.~\ref{sec: Applications} without delving into the technical details presented here.

Sec.~\ref{sec: review_polynomial} begins with reviewing the Laurent polynomial formalism along with the necessary notations and conventions used throughout this section.
In Sec.~\ref{sec: Getting boundary gauge operators}, we present an algorithm that systematically derives the boundary gauge operators for a truncated Pauli stabilizer code. Sec.~\ref{sec: Getting boundary string} focuses on obtaining boundary string operators as products of these boundary gauge operators. These boundary string operators generate boundary anyons at their endpoints, and we use equivalence relations to classify these anyons. The fusion rules between these anyons can be computed using the Smith normal form.
In Sec.~\ref{sec: boundary anyon condensation and Topological order completion}, we delve into the procedure of boundary anyon condensation and the topological order completion. The computation is analogous to the approach for obtaining boundary gauge operators described earlier in Sec.~\ref{sec: Getting boundary gauge operators}.

Additional details are provided in Appendix~\ref{app: Additional use}.
Appendix~\ref{app: Get string on the boundary or defect} explains how to derive the condensed strings terminating on the boundary and the bulk strings crossing a defect line.
Appendix~\ref{sec: Get the endpoint of the defect line} describes the process for identifying the endpoint of a defect line when the defect line has a finite length.

\subsection{Review of the Laurent polynomial formalism}
\label{sec: review_polynomial}
\begin{figure}[htb]
\centering
\resizebox{5cm}{!}{%
\begin{tikzpicture}
\draw[thick] (-3,0) -- (3,0);\draw[thick] (-3,-2) -- (3,-2);\draw[thick] (-3,2) -- (3,2);
\draw[thick] (0,-3) -- (0,3);\draw[thick] (-2,-3) -- (-2,3);\draw[thick] (2,-3) -- (2,3);
\draw[->] [thick](0,0) -- (1,0);\draw[->][thick] (0,2) -- (1,2);\draw[->][thick] (0,-2) -- (1,-2);
\draw[->][thick] (0,0) -- (0,1);\draw[->][thick] (2,0) -- (2,1);\draw[->][thick](-2,0) -- (-2,1);
\draw[->][thick] (-2,0) -- (-1,0);\draw[->][thick] (-2,2) -- (-1,2);\draw[->][thick](-2,-2) -- (-1,-2);
\draw[->][thick] (-2,-2) -- (-2,-1);\draw[->][thick] (0,-2) -- (0,-1);\draw[->] [thick](2,-2) -- (2,-1);
\filldraw [black] (-2,-2) circle (1.5pt) node[anchor=north east] {\large 1};
\filldraw [black] (0,-2) circle (1.5pt) node[anchor=north east] {\large 2};
\filldraw [black] (2,-2) circle (1.5pt) node[anchor=north east] {\large 3};
\filldraw [black] (-2,-0) circle (1.5pt) node[anchor=north east] {\large 4};
\filldraw [black] (0,0) circle (1.5pt) node[anchor=north east] {\large 5};
\filldraw [black] (2,0) circle (1.5pt) node[anchor=north east] {\large 6};
\filldraw [black] (-2,2) circle (1.5pt) node[anchor=north east] {\large 7};
\filldraw [black] (0,2) circle (1.5pt) node[anchor=north east] {\large 8};
\filldraw [black] (2,2) circle (1.5pt) node[anchor=north east] {\large 9};
\end{tikzpicture}
}
\caption{We put a qudit on each edge, with generalized Pauli operator $X_e$ and $Z_e$ acting on it.}
\label{fig:square}
\end{figure}
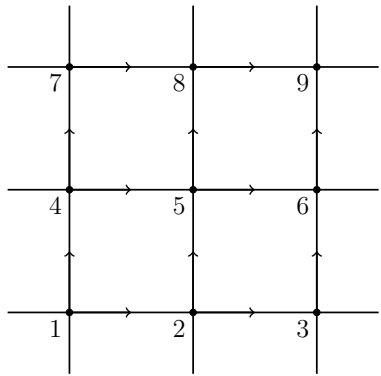

To efficiently compute translation-invariant Pauli stabilizer models, we use the Laurent polynomial formalism, a well-established method in the study of fracton models, topological orders, bosonization, quantum cellular automata, and error-correcting codes \cite{haah2011local, haah_module_13, dua2019sorting, Tantivasadakarn2020Jordan, haah_classification_21, chen2022error, chien2022optimizing, haah_QCA_23, liang2023extracting}. This section briefly reviews the Laurent polynomial formalism and its application to translation-invariant stabilizer codes, following the conventions in Ref.~\cite{liang2023extracting}. Readers are encouraged to consult the original references for a more comprehensive explanation.

In this section, we demonstrate the case involving two $\mathbb{Z}_d$ qudits per unit cell, such as the case where a qudit is located at each edge of a square lattice. This framework can be generalized to scenarios with $w$ qudits per unit cell. We begin by showing that any Pauli operator, defined as a finite tensor product of Pauli matrices across different lattice sites, can be expressed (up to an overall constant) as a column vector over the polynomial ring $R = \mathbb{Z}_d [x, y, x^{-1}, y^{-1}]$\footnote{This ring contains all Laurent polynomials in $x$, $x^{-1}$, $y$, and $y^{-1}$, with coefficients in $\mathbb{Z}_d$.}, as described in Ref.~\cite{haah_module_13}.
We assign column vectors over $\ZZ_d$ to the (generalized) Pauli matrices $X_{12}$, $Z_{12}$, $X_{14}$, and $Z_{14}$, depicted in Fig.~\ref{fig:square}:
\begin{equation*}
    \mX_{12}=
    \left[\begin{array}{c}
        1 \\
        0 \\
        \hline
        0 \\
        0
    \end{array}\right],~
    \mZ_{12}=
    \left[\begin{array}{c}
        0 \\
        0 \\
        \hline
        1 \\
        0
    \end{array}\right],~
    \mX_{14}=
    \left[\begin{array}{c}
        0 \\
        1 \\
        \hline
        0 \\
        0
    \end{array}\right],~
    \mZ_{14}=
    \left[\begin{array}{c}
        0 \\
        0 \\
        \hline
        0 \\
        1
    \end{array}\right].
    \label{eq: Pauli operator}
\end{equation*}

From now on, Pauli operators represented as column vectors are denoted by curly letters, where the coefficients in these vectors indicate the corresponding powers. For instance:
\begin{equation}
\mathcal{P} =
\begin{bmatrix}
    i \\
    j \\
    \hline
    k \\
    l
\end{bmatrix}
~ \Rightarrow ~
\mathcal{P}^m =
\begin{bmatrix}
    m i \\
    m j \\
    \hline
    m k \\
    m l
\end{bmatrix},
\quad \forall m \in \mathbb{Z}_d.
\end{equation}

Translation of operators is represented by polynomials in $x$ and $y$, which denote shifts in the $x$- and $y$-directions, respectively. For example, translating the operator on edge $e_{12}$ to edge $e_{78}$ (with a vector $(0, 2)$) or to edge $e_{58}$ (with a vector $(1, 1)$) involves multiplying the column vector of the operator by $y^2$ or $xy$, respectively:
\begin{equation*}
\mathcal{Z}_{78} = y^2 \mathcal{Z}_{12} =
\begin{bmatrix}
    0 \\
    0 \\
    \hline
    y^2 \\
    0
\end{bmatrix}, \quad
\mathcal{X}_{58} = xy \mathcal{X}_{14} =
\begin{bmatrix}
    0 \\
    xy \\
    \hline
    0 \\
    0
\end{bmatrix}.
\end{equation*}
In general, any Pauli operator can be written as:
\begin{equation}
    P = \eta X^{a_1}_{e_1} X^{a_2}_{e_2} \cdots X^{a_n}_{e_n} Z^{b_1}_{e'_1} Z^{b_2}_{e'_2} \cdots Z^{b_m}_{e'_m},
\end{equation}
where $\eta$ is a root of unity of order $2d$. After disregarding the global phase $\eta$, the corresponding column vector for this operator is a linear combination of the individual Pauli matrices, written as:
\begin{eqs}
    \mathcal{P} = a_1 \mathcal{X}_{e_1} + a_2 \mathcal{X}_{e_2} + \cdots + a_n \mathcal{X}_{e_n} + b_1 \mathcal{Z}_{e'_1} + b_2 \mathcal{Z}_{e'_2} + \cdots + b_m \mathcal{Z}_{e'_m}.
\end{eqs}
Additional examples are provided in Fig.~\ref{fig:example_poly}.
\begin{figure}[htb]
    \centering
    \includegraphics[width=0.5\textwidth]{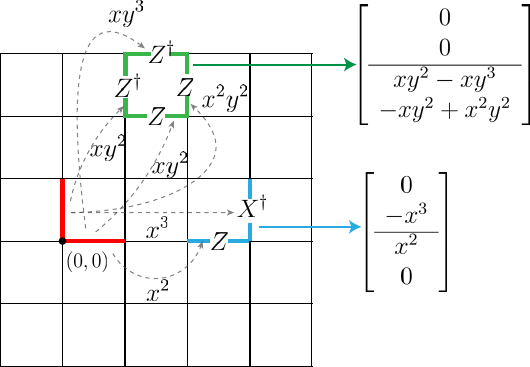}
    \caption{Examples of polynomial expressions for Pauli strings. The flux term on a plaquette and the $XZ$ term on edges are shown. The factors such as $x^2 y^2$ and $x^2$ represent the locations of the operators relative to the origin.}
\label{fig:example_poly}
\end{figure}
The \textbf{antipode map} is a $\ZZ_d$-linear map from $R$ to $R$ defined by
\begin{eqs}
    x^a y^b \rightarrow \overline{x^a y^b}:=x^{-a} y^{-b}.
\end{eqs}
To determine whether two Pauli operators represented by vectors $v_1$ and $v_2$ commute or not, we define the dot product as
\begin{eqs}
    v_1 \cdot v_2 = \overline{v}_1^{T} \Lambda v_2,
    \label{eq: dot product}
\end{eqs}
where $T$ is the transpose operation on a matrix and
\begin{eqs}
    \Lambda=
    \left[\begin{array}{cc | cc}
        0 & 0 & 1 & 0 \\
        0 & 0 & 0 & 1 \\
        \hline
        -1 & 0 & 0 & 0 \\
        0 & -1 & 0 & 0 \\
    \end{array}\right]
\label{eq: definition of Lambda}
\end{eqs}
is the matrix representation of the standard \textbf{symplectic bilinear form}. For simplicity, we denote $\overline{(\cdots)}^T$ as $(\cdots)^\dagger$. The constant term of a polynomial $p(x,y)$ is denoted as $\lr{p(x,y)}_0$.  The two Pauli operators $v_1$ and $v_2$ commute if and only if $\lr{v_1 \cdot v_2}_0 = 0$.

A translation-invariant stabilizer code is $R$-submodule $\sigma$ such that
\begin{eqs}
    v_1 \cdot v_2 = v_1^\dagger \Lambda v_2 = 0, \quad \forall v_1, v_2 \in \sigma,
\label{eq: stabilizer condition}
\end{eqs}
i.e., a module of commuting Pauli operators. This $\sigma$ is named the \textbf{stabilizer module}.
The Hamiltonian could have $t$ terms per square to have a unique ground state on a simply connected manifold, denoted as
\begin{eqs}
    H = -\sum_{\text{cells}} (S_1 + S_2 + \cdots + S_t),
\end{eqs}
where $S_1, S_2, \cdots, S_t$ constitute the \textbf{generators} of the stabilizer module $\sigma$, and will henceforth be referred to as \textbf{stabilizer generators}.\footnote{The stabilizer generators are not required to be independent from each other.}
For example, the trivial phase $H_0 = - \sum_e X_e$ is
\begin{eqs}
    \mS_1 = \left[\begin{array}{c}
        1 \\
        0 \\
        \hline
        0 \\
        0
    \end{array}\right], \quad
    \mS_2 = \left[\begin{array}{c}
        0 \\
        1 \\
        \hline
        0 \\
        0
    \end{array}\right],
\label{eq: trivial H0 SA SB}
\end{eqs}
and the standard $\ZZ_d$ toric code Hamiltonian 
\begin{eqs}
    H_{\text{TC}} = - \sum_v \vcenter{\hbox{\includegraphics[scale=.25]{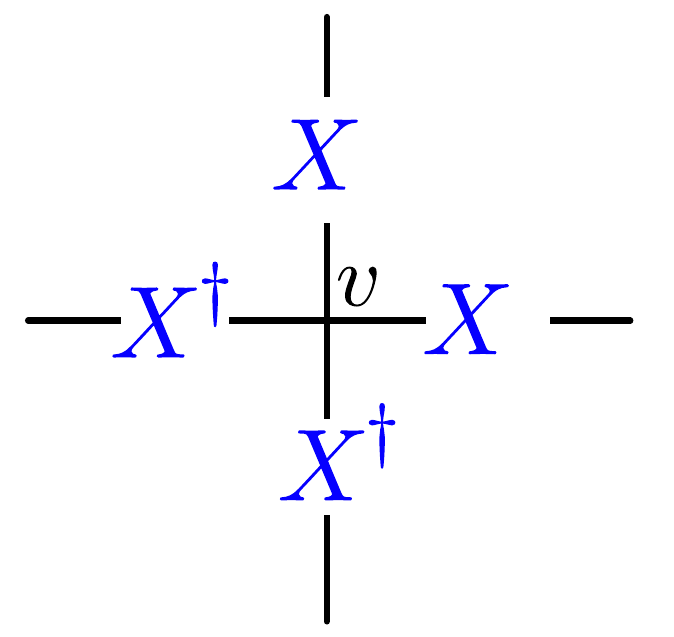}}} - \sum_p \vcenter{\hbox{\includegraphics[scale=.25]{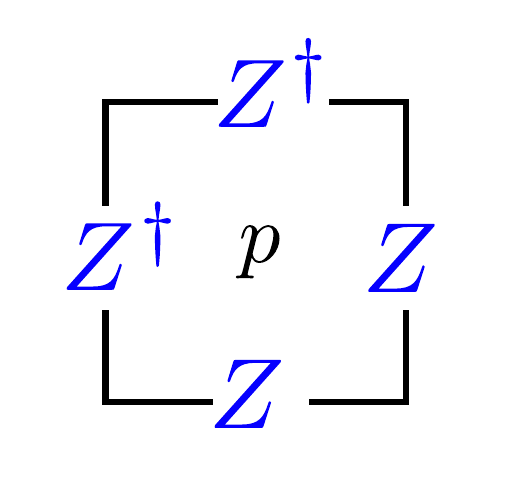}}},
\label{eq: toric code Hamiltonian}
\end{eqs}
corresponds to
\begin{eqs}
    \mS_1 = \left[\begin{array}{c}
        1-\bx \\
        1- \by \\
        \hline
        0 \\
        0
    \end{array}\right], \quad
    \mS_2 = \left[\begin{array}{c}
        0 \\
        0 \\
        \hline
        1-y \\
        -1+x
    \end{array}\right].
\label{eq: standard toric code SA SB}
\end{eqs}

Next, we define the {\bf excitation map} for any Pauli operator $\mP$
on a general Pauli stabilizer code with stabilizer generators $ \sigma = \langle \mS_1, \mS_2, \cdots, \mS_t \rangle$ as
\begin{eqs}
    \eps (\mP):= (\sigma^\dagger \Lambda \mP)^T =[  \mS_1 \cdot \mP ,   \mS_2 \cdot \mP, \cdots, \mS_t \cdot \mP],
    \label{eq: excitation map}
\end{eqs}
which indicates how the Pauli operator violates stabilizers $\mS_1, \mS_2, \cdots, \mS_t$.

To obtain the possible anyons in this theory, we solve the {\bf bulk anyon equation}
\begin{eqs}
    \eps \big(
    \alpha(x,y)  \mX_1 + \beta(x,y) \mX_2
    + \gamma(x,y) \mZ_1 + \delta(x,y) \mZ_2
    \big) = (1-x^n) v,
\label{eq: bulk anyon equation}
\end{eqs}
where $n$ is an integer and $v$ is a length-$t$ row vector, referred to as an anyon. The physical interpretation of the bulk anyon equation is that when we apply Pauli matrices $X_1$, $X_2$, $Z_1$, and $Z_2$ at locations $\alpha(x,y)$, $\beta(x,y)$, $\gamma(x,y)$, and $\delta(x,y)$, it violates the stabilizers around the origin $(0,0)$ and the point $(n, 0)$ with patterns $v$ and $-v$, respectively. This operator creates an anyon $v$ at $(0,0)$ and its antiparticle at $(n, 0)$.
Note that if $v$ is an anyon, $x^a y^b v$ is also an anyon for all $a,b \in \ZZ$.


To make this polynomial formalism practical for computer manipulation, we store the coefficients in the polynomial as a vector over $\ZZ_d$. For instance, a polynomial such as $f(x,y)=1 + 3y -2 xy^{-1} \in R$ can be expressed as a \textbf{coefficient vector} over $\ZZ_d$
\begin{eqs}
    \begin{array}{ccccccccccccc}
                             &1 & x & y & \bx & \by & x^2 & xy & y^2 & x \by &  \bx y&\cdots &\\
        \wwide{f}=\big[  &1 & 0 & 3 & 0   & 0   & 0   & 0  & 0   & -2     &   0   &\cdots & \big],
    \end{array}
\end{eqs}
where each entry represents the coefficient of the corresponding monomial $x^a y^b$ in the polynomial $f(x, y)$. In the rest of this paper, we denote $\wwide{f}$ as the coefficient vector of the polynomial $f(x,y) \in R$. In practice, we choose the polynomial within $x^{\pm k}$ and $y^{\pm k}$.
Given a fixed $k$, we define the \textbf{truncation map} as
\begin{eqs}
    f \in \ZZ_d [x, y, x^{-1}, y^{-1}] \to \wwide{f} \in \ZZ_d^{\otimes (2k+1)^2}.
\label{eq:polynomial_truncation}
\end{eqs}
The truncation size is governed by the parameter $k$, a large positive integer relative to the size of the stabilizers.
We allow this truncation map to act on any $a \times b$ matrix $M$ over $\ZZ_d [x, y, x^{-1}, y^{-1}]$ by acting on each entry to expand to a row vector with length $(2k+1)^2$, and by joining these row vectors to form an $a \times b(2k+1)^2$ matrix $\wwide{M}$ over $\ZZ_d$. 

Also, we define the \textbf{translational duplicate map $\mathrm{TD}_{m_x,m_y}$} that takes the input as a length-$l$ row vector $F=[f_1, f_2, \cdots, f_l]$ and returns a $(2m_x+1)(2m_y+1) \times l$ matrix formed by its translations within $x^{\pm m_x} y^{\pm m_y}$:
\begin{eqs}
    F\to
    \mathrm{TD}_{m_x,m_y}(F):=
    \begin{bmatrix}
        x^{-m_x} y^{-m_y} F \\
        x^{-m_x+1} y^{-m_y} F \\
        \vdots \\
        x^{m_x-1} y^{-m_y} F \\
        x^{m_x} y^{-m_y} F \\
        \hline
        x^{-m_x} y^{-m_y+1} F \\
        x^{-m_x+1} y^{-m_y+1} F \\
        \vdots \\
        x^{m_x-1} y^{-m_y+1} F \\
        x^{m_x} y^{-m_y+1} F \\
        \hline
        \vdots \\
        \vdots \\
        \hline
        x^{-m_x} y^{m_y} F \\
        x^{-m_x+1} y^{m_y} F \\
        \vdots \\
        x^{m_x-1} y^{m_y} F \\
        x^{m_x} y^{m_y} F.
    \end{bmatrix}~,
\label{eq: TD definition}
\end{eqs}
where $x^a y^b F$ is the row vector multiplying $x^a y^b $ to each entry of $F$, i.e., $x^a y^b F = [x^a y^b f_1, x^a y^b f_2, \cdots, x^a y^b f_l]$.

\subsection{Boundary gauge operators and gauge violation map}
\label{sec: Getting boundary gauge operators}

We have introduced computational tools such as vectors of truncated polynomials and the translational duplication map. This section will present an algorithm that determines all possible local boundary gauge operators given the bulk stabilizers. The steps of the algorithm are outlined below, with detailed pseudocode provided in Appendix~\ref{appendix: pseudocode}.

The algorithm constructs boundary gauge operators, formed as products of Pauli $X$ and $Z$ operators, that commute with the bulk stabilizers. We first analyze how individual Pauli $X$ and $Z$ operators violate the bulk stabilizers and then combine these operators in such a way that commutes with all bulk stabilizers. For simplicity, we work within a large but finite truncated system, considering only the Pauli operators and bulk stabilizers fully supported within this region.
To visualize this analysis, we construct the matrix \( M_1 \), which captures the commutation relations between single Pauli operators and bulk stabilizers within the truncated system:\footnote{In practical applications, incorporating every possible Pauli operator is not required. Instead, retaining only a sufficient subset of Pauli operators located near the boundary is efficient. This subset ensures it can construct all possible boundary gauge operators, quotiented by bulk stabilizers. The criterion for determining the sufficiency of the selected Pauli operators involves a dynamical process: we continue to add Pauli operators to our computation until no new boundary gauge operators (up to translation) are generated. Details are discussed in Appendix~\ref{app: Additional use}.}
\begin{equation}
    M_1 = \begin{pmatrix}
    \langle \mathcal{P}_1 \cdot \mathcal{BS}_1 \rangle_0 & \langle \mathcal{P}_1 \cdot \mathcal{BS}_2 \rangle_0 & \langle \mathcal{P}_1 \cdot \mathcal{BS}_3 \rangle_0 & \cdots \\
    \langle \mathcal{P}_2 \cdot \mathcal{BS}_1 \rangle_0 & \langle \mathcal{P}_2 \cdot \mathcal{BS}_2 \rangle_0 & \langle \mathcal{P}_2 \cdot \mathcal{BS}_3 \rangle_0 & \cdots \\
    \langle \mathcal{P}_3 \cdot \mathcal{BS}_1 \rangle_0 & \langle \mathcal{P}_3 \cdot \mathcal{BS}_2 \rangle_0 & \langle \mathcal{P}_3 \cdot \mathcal{BS}_3 \rangle_0 & \cdots \\
    \vdots & \vdots & \vdots & \ddots
    \end{pmatrix}.
\label{eq: matrix M1 definition}
\end{equation}
The rows \(\mathcal{P}\) of matrix \( M_1 \) are labeled by the Pauli operators, which include:
\begin{eqs}
    \mathcal{P} = \mathcal{X}_1, x \mathcal{X}_1, y \mathcal{X}_1, \cdots, \mathcal{X}_2, x \mathcal{X}_2, y \mathcal{X}_2, \cdots, 
    \mathcal{Z}_1, x \mathcal{Z}_1, y \mathcal{Z}_1, \cdots, \mathcal{Z}_2, x \mathcal{Z}_2, y \mathcal{Z}_2, \cdots,
\end{eqs}
and the columns \(\mathcal{BS}\) of matrix \( M_1 \) are labeled by the bulk stabilizers, which include:
\begin{equation}
    \mathcal{BS} = \mathcal{S}_1, x \mathcal{S}_1, y \mathcal{S}_1, \cdots, \mathcal{S}_2, x \mathcal{S}_2, y \mathcal{S}_2, \cdots.
\end{equation}
The terms \( x^{a_i} y^{b_i} \mathcal{X}_i, x^{a_i} y^{b_i} \mathcal{Z}_i \), and \( x^{c_j} y^{d_j} \mathcal{S}_j \) denote the translated versions of the single-Pauli operators \( \mathcal{X}_i, \mathcal{Z}_i \) and the bulk stabilizer generator \( \mathcal{S}_j \), respectively. The indices \( a_i, b_i, c_j, \) and \( d_j \) are restricted by the size of the truncated system to ensure that the Pauli operators and bulk stabilizers are fully supported. Each entry in the matrix \( M_1 \), denoted \( \langle \mathcal{P} \cdot \mathcal{BS} \rangle_0 \), represents the constant term of the polynomial resulting from the dot product as described in Eq.~\eqref{eq: dot product}. This term characterizes the commutation between a given Pauli operator and a bulk stabilizer.

By applying the \textbf{Modified Gaussian Elimination (MGE)} algorithm \cite{liang2023extracting}, reviewed in Appendix~\ref{appendix: Modified Gaussian elimination (MGE)}, we can identify specific combinations of row vectors (Pauli operators) that commute with the bulk stabilizers. The procedure for constructing the boundary gauge operator \( \mathcal{G} \) is outlined as follows:
\begin{itemize}
    \item \textbf{Step 1:} Construct matrix \(M_1\) to demonstrate how Pauli \(X\) and \(Z\) operators interact with the bulk stabilizers, as defined in Eq.~\eqref{eq: matrix M1 definition}.
    
    \item \textbf{Step 2:} Apply the Modified Gaussian Elimination (MGE) algorithm to \(M_1\) to derive relation matrix \(R_1\). Extract the local operator set \(\mathcal{O}\) from the rows of \(R_1\) that correspond to zero rows in the elimination process.
    
    \item \textbf{Step 3:} Identify non-trivial boundary gauge operators from the set \(\mathcal{O}\), which may also include bulk stabilizers:
    \begin{itemize}
        \item \textbf{Step 3-1:} Construct matrix \(\wwide{M}_2\) containing the bulk stabilizer generators and their translations:
        \begin{equation}
            \wwide{M}_2 :=
            \begin{bmatrix}
                \wwide{\mathrm{TD}_{c_1,d_1}(\mathcal{S}_1)} \\
                \wwide{\mathrm{TD}_{c_2,d_2}(\mathcal{S}_2)} \\
                \vdots
            \end{bmatrix},
            \label{eq: M_2 matrix}
        \end{equation}
        with translations constrained by the truncated system size.
        
        \item \textbf{Step 3-2:} Apply the Modified Gaussian Elimination (MGE) on \(\wwide{M}_2\) to derive \(\mathrm{MGE}(\wwide{M}_2)\).
        
        \item \textbf{Step 3-3:} Evaluate whether the first row of \(\wwide{\mathcal{O}}\) is spanned by the rows of \(\mathrm{MGE}(\wwide{M}_2)\). If it is, this row represents a trivial boundary gauge operator, and the process moves to the next row. If it is not, it is identified as a non-trivial boundary gauge operator \(\mathcal{G}\). Then, update \(\mathrm{MGE}(\wwide{M}_2)\) by appending:
        \begin{equation}
            \wwide{M}_3 := \wwide{\mathrm{TD}_{m_x=0, m_y}(\mathcal{G})},
            \label{eq: M_3 matrix}
        \end{equation}
        and reapply the MGE to integrate this newly identified boundary gauge operator and its translations into the generating matrix. Repeat this evaluation for each subsequent row to identify all boundary gauge operators.
    \end{itemize}

\end{itemize}

With the boundary gauge operators now determined, our next objective is to obtain the boundary anyon and the corresponding boundary string operator. The excitation map~\eqref{eq: excitation map} demonstrates how Pauli operators violate the bulk stabilizer, leading to the formulation of the bulk anyon equation~\eqref{eq: bulk anyon equation}. However, creating boundary anyons requires the boundary string operator to commute with the bulk stabilizers. This constraint necessitates using only boundary gauge operators to construct the boundary string operator. Therefore, our task is to arrange these boundary gauge operators to form a boundary string operator that commutes with all bulk stabilizers while only failing to commute with boundary gauge operators at its endpoints.

To address this, we introduce the \textbf{gauge violation map}, which records violations of boundary gauge operators by a specific operator. For the generators\footnote{The generators and their translations generate the entire gauge group.} of boundary gauge operators \( \mathcal{G}_{1}, \mathcal{G}_{2}, \ldots, \mathcal{G}_{r} \), the gauge violation map for a Pauli operator \( \mathcal{P} \) is defined as:
\begin{equation}
    \zeta(\mathcal{P}) := \left[ \langle \mathcal{G}_{1} \cdot \mathcal{P} \rangle_{x^0}, \langle \mathcal{G}_{2} \cdot \mathcal{P} \rangle_{x^0}, \ldots, \langle \mathcal{G}_{r} \cdot \mathcal{P} \rangle_{x^0} \right]
    \label{eq: Gauge violation map}
\end{equation}
where \( \langle \mathcal{G}_{i} \cdot \mathcal{P} \rangle_{x^0} \) denotes the polynomial component of \( \mathcal{G}_{i} \cdot \mathcal{P} \) where the exponent of \( x \) is zero (the exponent of \( y \) can be any integer), reflecting the fact that these operators preserve translational symmetry only in the \( y \)-direction. Each entry of \(\zeta(\mathcal{P})\) is a component in \( \mathbb{Z}_d[y, y^{-1}] \), forming a row vector with \( r \) entries. Utilizing the gauge violation map to document these violations, we will introduce the boundary anyon equation in the subsequent section.

\subsection{Computing boundary anyons and boundary string operators}
\label{sec: Getting boundary string}

In Sec.~\ref{sec: Getting boundary gauge operators}, we introduced boundary gauge operators and defined the gauge violation map \eqref{eq: Gauge violation map}. This section presents the boundary anyon equation and the equivalence relations between anyons. We then show how to solve the boundary anyon equation using boundary gauge operators to determine the possible boundary anyons and their corresponding string operators. Finally, we classify the boundary anyons by computing the Smith normal form, which identifies the {\bf basis anyons}\footnote{Basis anyons are a minimal set of anyons that generate all anyons.} of the boundary theory and their fusion rules. The algorithm pseudocode is provided in Appendix~\ref{appendix: pseudocode}.

In comparison to the bulk anyon equation \eqref{eq: bulk anyon equation}, the boundary string operators must commute with all stabilizers and only violate the boundary gauge operators at their endpoints, without affecting the boundary gauge operators along the middle of the string. Given this property, any boundary string operator must be constructed from the boundary gauge operators. 
To achieve this, we use the gauge violation map, which records how an operator violates a boundary gauge operator. Since boundary string operators are constructed from boundary gauge operators, we begin by analyzing the gauge violation map of the generators of boundary gauge operators $\mathcal{G}_i \in \mathcal{G}$, and then combine them to form boundary string operators that create boundary anyons at their endpoints.
Specifically, we define the \textbf{boundary anyon equation} to determine the boundary anyons:
\begin{eqs}
    \zeta \big(
    \alpha_1(y)  \mathcal{G}_{1} + \alpha_2(y) \mathcal{G}_{2}
    + \alpha_3(y) \mathcal{G}_{3} + ... + \alpha_r(y) \mathcal{G}_{r}
    \big) 
    = (1-y^n) [q_1(y) , q_2(y) , \cdots, q_{r}(y) ] := (1-y^n) v,
\label{eq: boundary anyon equation}
\end{eqs}
where $v$ is a length-$r$ row vector, referred to as a boundary anyon. This equation indicates that when the boundary gauge operators $\mathcal{G}_i$ are applied at locations $\alpha_i(y)$, they violate the boundary gauge operator near the origin $(0,0)$ and the point $(0,n)$ with patterns $v$ and $-v$, respectively.

To determine whether two bulk anyons are of the same type, we check if they differ by applying local operators, as described in Eq.~\eqref{eq: definition of anyon equivalence}. However, for boundary anyons, the local operators must not violate the bulk stabilizer, restricting them to the boundary gauge operators. We define the equivalence relation between anyons \( v \) and \( v' \) as follows:
\begin{eqs}
    v' \sim v \quad \text{(i.e., $v'$ is equivalent to $v$)},
\end{eqs}
if and only if there exist finite-degree polynomials \( p_1(y), p_2(y), \dots, p_r(y) \) such that
\begin{eqs}
    v' = v + p_1(y) \zeta(\mathcal{G}_1) +  p_2(y) \zeta(\mathcal{G}_2) + \dots + p_r(y) \zeta(\mathcal{G}_r),
\label{eq: equivalence relation of anyons}
\end{eqs}
Physically, two boundary anyons are equivalent if they differ only by the application of boundary gauge operators \( \mathcal{G}_{1}, \mathcal{G}_{2}, \dots, \mathcal{G}_{r} \) at positions determined by the polynomials \( p_1(y), p_2(y), \dots, p_r(y) \), meaning they can be transformed into one another through these local operations.

We now detail the algorithm used to solve the boundary anyon equation and ultimately obtain the basis boundary anyons. The steps are outlined as follows:
\begin{itemize}
    \item \textbf{Step 1:} Compute the gauge violation map~\eqref{eq: Gauge violation map} for generators of boundary gauge operators $\mathcal{G}_i \in \mathcal{G}$ for $i=1,2,\cdots r$.
        
    \item \textbf{Step 2:} To solve the boundary anyon equation~\eqref{eq: boundary anyon equation}, construct the following matrix $\wwide{M}_4$:
    \begin{eqs}
        \wwide{M}_4 :=
        \begin{bmatrix}
            \wwide{\mathrm{TD}_{m_x=0,m_y}(\zeta(\mathcal{G}_1)}) \\
            \wwide{\mathrm{TD}_{m_x=0,m_y}(\zeta(\mathcal{G}_2)})\\
            \vdots \\
            \wwide{\mathrm{TD}_{m_x=0,m_y}(\zeta(\mathcal{G}_r)}) \\
            \wwide{\mathrm{TD}_{m_x=0,m_y}([(1-y^n), 0 ,... , 0 }] \\
            \wwide{\mathrm{TD}_{m_x=0,m_y}([0, (1-y^n) ,... , 0 }] \\
            \vdots \\
            \wwide{\mathrm{TD}_{m_x=0,m_y}([0 ,0 ,... , (1-y^n) }] \\
        \end{bmatrix},
    \label{eq: M_4 matrix}
    \end{eqs}
    where $\wwide{M}_4$ is a $2(2m_y+1)r \times (2k+1)^2 r$ matrix. We examine \(n = 1, 2, \dots, n_0\) for large enough \(n_0\) to ensure that all anyons are obtained, similar to the bulk anyons derived in Ref.~\cite{liang2023extracting}.
    
    \item \textbf{Step 3:} By applying the modified Gaussian elimination, as outlined in Appendix~\ref{appendix: Modified Gaussian elimination (MGE)}, to $\wwide{M}_4$, we derive relations among the rows. These relations enable us to identify specific combinations of rows that sum to zero, where the coefficients of the top $(2m_y+1)r$ rows correspond to the boundary string operators, denoted by $\alpha_i(y)$ in Eq.~\eqref{eq: boundary anyon equation}. Meanwhile, the coefficients of the bottom $(2m_y+1)r$ rows correspond to the boundary anyons at the endpoints, labeled by $q_i(y)$ in Eq.~\eqref{eq: boundary anyon equation}.

    \item \textbf{Step 4:} 
    At this stage, we have a set of boundary anyons  $V=\{ v_1, v_2, v_3, \cdots \}$ that may contain redundancies. Two anyons, $v$ and $v'$, are considered equivalent if they are related by local boundary gauge operators shown in Eq.~\eqref{eq: equivalence relation of anyons}. Thus, we aim to retain only the basis boundary anyons while eliminating the redundant ones. To do this, we add local boundary gauge operators at the endpoints of the strings to check whether the endpoints of two strings are equivalent. The following steps should be taken to achieve this:
    \begin{itemize}
        \item \textbf{Step 4-1:} Construct the matrix $\wwide{M}_5$ as follows:
        \begin{eqs}
        \wwide{M}_5 :=
        \begin{bmatrix}
            \wwide{\mathrm{TD}_{m_x=0,m_y}(\zeta(\mathcal{G}_1)}) \\
            \wwide{\mathrm{TD}_{m_x=0,m_y}(\zeta(\mathcal{G}_2)})\\
            \vdots\\
            \wwide{\mathrm{TD}_{m_x=0,m_y}(\zeta(\mathcal{G}_r)})
        \end{bmatrix},
        \label{eq: M_5 matrix}
        \end{eqs}
        which corresponds to trivial boundary anyons.
        
        \item \textbf{Step 4-2:} We begin with $M_{span}:=\mathrm{MGE}(\wwide{M}_5)$ and an initially empty set $V{\mathrm{gen}} := \{ \}$. The goal is to sequentially examine the boundary anyons in the set $V$, while $M_{span}$ tracks the space spanned by trivial boundary anyons and those that have been processed up to that point.
        First, we check whether each boundary anyon $\wwide{v}$ can be expressed as a linear combination of the rows of $M_{span}$. If $\wwide{v}$ is spanned by the rows of $M_{span}$, it is redundant, and we move on to the next boundary anyon. However, if $\wwide{v}$ is not spanned by the rows of $M_{span}$, we treat it as a generator of $V$ and append it to the generator set $V_{\mathrm{gen}}$. To maintain the spanning space, we update $M_{span}$ by incorporating $\wwide{v}$ into the previous $M_{span}$ and performing Modified Gaussian Elimination to clean up the matrix. This process ensures that $M_{span}$ includes the newly identified basis boundary anyon $\wwide{v}$. This procedure is repeated for each subsequent anyon in $V$ until generators of $V$ have all been identified in $V_{\mathrm{gen}}$.

        \item \textbf{Step 4-3:}
        Boundary anyons in $V_{\mathrm{gen}}$ can still be redundant, for example $V_{\mathrm{gen}}= \{ e^2, e, m\}$ for the $\ZZ_4$ toric code. Following Ref.~\cite{liang2023extracting}, we can construct the relation matrix of these anyons and compute its Smith normal form.
        This process yields matrices $P$, $Q$, and $A$, which satisfy the relation $PMQ = A$, where $Q$ is unimodular (i.e., $\det Q = \pm 1$) and can be used to identify the rearranged basis boundary anyons, with the orders of the basis boundary anyons corresponding to the diagonal elements of matrix $A$.
    \end{itemize}
\end{itemize}

\subsection{Boundary anyon condensation and Topological order completion}
\label{sec: boundary anyon condensation and Topological order completion}

After deriving the boundary string operators, the next step is to explore various boundary constructions based on the Lagrangian subgroup. Using the folding argument in Fig.~\ref{fig: folding argument}, we can subsequently construct defects as the boundary of the folded system. A crucial part of the construction is to ensure that the topological order (TO) condition is satisfied. The construction steps are as follows:
\begin{itemize}
    \item \textbf{Step 1:} Using the boundary anyons and the corresponding short boundary string operators derived from Sec.~\ref{sec: Getting boundary gauge operators}, calculate the topological spin \(\theta(a)\) from Eq.~\eqref{eq: boundary topological spin computation} and the braiding statistics \(B(a,b)\) from Eq.~\eqref{eq: boundary braiding statistics computation}. According to the bulk-boundary correspondence (Theorem~\ref{thm: Bulk-boundary correspondence}), the topological data of the boundary matches that of the bulk. Thus, the Lagrangian subgroup determined for the boundary will dictate the bulk anyon condensation.  
    \item \textbf{Step 2:} Based on the Lagrangian subgroup, add products of short boundary string operators (with weak translational symmetry in the \(y\)-direction) to the Hamiltonian. These short boundary string operators become stabilizers in the new Hamiltonian. 
    \item \textbf{Step 3:} After adding all the short boundary string operators for the anyons in the chosen Lagrangian subgroup, apply the procedure from Sec.~\ref{sec: Getting boundary gauge operators} to derive an operator that commutes with the stabilizers but is not itself a product of stabilizers. Incorporate this operator, along with its translations, into the Hamiltonian as new stabilizers. Repeat this process iteratively until no additional operators can be found. This procedure is referred to as topological order completion, ensuring that the TO condition is satisfied.
    \item \textbf{Step 4:} With the boundary explicitly constructed, use the method outlined in Appendix~\ref{app: Get string on the boundary or defect} to derive the bulk string operators that can terminate on the boundary without energy cost.
\end{itemize}

\section{Applications to boundary and defect constructions of quantum codes}
\label{sec: Applications}

This section applies our algorithm to various topological Pauli stabilizer codes and presents the corresponding boundary and defect constructions. These include the $\mathbb{Z}_2$ standard toric code (Sec.~\ref{sec: example Z2 TC}), the $\mathbb{Z}_2$ fish toric code (Sec.~\ref{sec: example Z2 fish TC}), the $\ZZ_4$ standard toric code (Sec.~\ref{sec: example Z4 TC}), the double semion code (Sec.~\ref{sec: example double semion}), the six-semion code (Sec.~\ref{sec: example six semion}), the color code (Sec.~\ref{sec: example color code}), and the anomalous three-fermion code (Sec.~\ref{sec: example 3 fermion}). The number of distinct boundaries and defects is shown in Table~\ref{tab: example}. Since the three-fermion code is anomalous and can only exist on the boundary of (3+1)D topological phases \cite{haah_QCA_23, Shirley2022QCA, Chen2023HigherCup}, we discuss only its defects.
We note that it is not a coincidence that both the $\mathbb{Z}_2$ toric code and the three-fermion code have 6 defects, and both the $\mathbb{Z}_4$ toric code and the six-semion code have 22 defects. By re-arranging the anyons in two copies of the $\mathbb{Z}_2$ toric code, we can obtain two copies of the three-fermion code. Similarly, re-arranging the anyons in two copies of the $\mathbb{Z}_4$ toric code yields two copies of the six-semion code, as we will demonstrate in Secs.~\ref{sec: example six semion} and \ref{sec: example 3 fermion}.
{\change We find all defects of these models, including both invertible and non-invertible ones. In particular, the invertible defects contain the familiar constructions obtained by gauging lower-dimensional SPT phases~\cite{Beigi2011QuantumDouble, YOSHIDA2017Gapped, maissam2023codimension, li2024domainwallssptsewing}.}

Additionally, we provide the boundary gauge operators for two specific bivariate bicycle (BB) codes, $(3,3)$-BB codes and $(2,-3)$-BB codes, in Secs.~\ref{sec: example 33 BB code} and \ref{sec: example 2-3 BB code}. As proposed in Ref.~\cite{Bravyi2024HighThreshold}, BB codes offer high-threshold, low-overhead, fault-tolerant quantum memory. These two BB codes are equivalent to 8 and 10 copies of the $\mathbb{Z}_2$ toric code by Clifford circuits, which have 1,270,075,950 and 167,448,083,323,950 distinct gapped boundary constructions, respectively (as calculated in Appendix~\ref{app: counting Lagrangian}).
Consequently, we do not present explicit boundary or defect constructions for these codes; instead, we highlight the unique properties of the anyon string operators. Notably, the ``shortest string operators'' for anyons in BB codes are relatively long compared to the stabilizer generator size. In the two examples below, the string operator lengths are 12 and 1023 times the lattice constant in the square lattice, respectively. Stabilizer codes with long string operators have been studied in Refs.~\cite{watanabe2023ground, Dua2019Compactifyingfracton}.

For general $(a,b)$-BB codes, as defined in detail in Sec.~\ref{sec: example 33 BB code}, with small values of $a$ and $b$, the number of basis anyons, $k$, and the length $l$ of the shortest period of all anyons in the $y$-direction are presented in Table~\ref{tab: length of shortest string: maximal number of basis anyons}.\footnote{Due to symmetry, the string length in the $x$-direction for the $(a,b)$-BB code is equivalent to the string length in the $y$-direction for the $(b,a)$-BB code. Therefore, only the string length in the $y$-direction is listed in the table.}
Notably, $k$ corresponds precisely to the dimension of the logical space when the BB code is placed on an $l \times l$ torus.
\begin{table}[htb]
    \centering
    \renewcommand{\arraystretch}{1.25} 
    \scalebox{1}{\begin{tabular}{|c|c|c|}
        \hline
               & \hspace{3mm} boundary \hspace{3mm} & \hspace{3mm} defect \hspace{3mm} \\
        \hline
        $\mathbb{Z}_2$ toric code & 2 & 6 \\
        \hline
        $\mathbb{Z}_2$ fish toric code & 2 & 6 \\
        \hline
        $\mathbb{Z}_4$ toric code & 3 & 22 \\
        \hline
        double semion & 1 & 2 \\
        \hline
        six-semion code & 1 & 22 \\
        \hline
        color code & 6 & 270 \\
        \hline
        three-fermion code & N/A & 6 \\
        \hline
    \end{tabular}
    }
    \vspace{1em}
    \caption{The number of distinct boundaries and defects in various topological Pauli stabilizer codes, derived by identifying the Lagrangian subgroups of the corresponding anyon theories.}
    \label{tab: example}
\end{table}

\begin{table}[h]
\centering
\renewcommand{\arraystretch}{1.25} 
\scalebox{1}{
\begin{tabular}{|c|c|c|c|c|c|c|c|}
\hline
\textbf{a\textbackslash{}b} &\textbf{-3} & \textbf{-2} & \textbf{-1} & \textbf{0} & \textbf{1} & \textbf{2} & \textbf{3}  \\\hline 
\textbf{-3}                 &\parbox{1.25cm}{\vspace{0.2pt}$\begin{array}{l}k = 18  \\l = 42\end{array}$\vspace{0.2pt}}       & \parbox{1.25cm}{\vspace{0.2pt}$\begin{array}{l}k = 12 \\l = 63\end{array}$\vspace{0.2pt}}   & \parbox{1.25cm}{\vspace{0.2pt}$\begin{array}{l}k = 6 \\l = 7\end{array}$\vspace{0.2pt}}       & \parbox{1.25cm}{\vspace{0.2pt}$\begin{array}{l}k = 8 \\l = 3\end{array}$\vspace{0.2pt}}    & \parbox{1.25cm}{\vspace{0.2pt}$\begin{array}{l}k = 14 \\l = 127\end{array}$\vspace{0.2pt}}        &    \parbox{1.25cm}{\vspace{0.2pt}$\begin{array}{l}k = 20 \\l = 341\end{array}$\vspace{0.2pt}}  &    \parbox{1.25cm}{\vspace{0.2pt}$\begin{array}{l}k = 26 \\l = 762\end{array}$\vspace{0.2pt}}   \\ \hline
\textbf{-2}                 &\parbox{1.25cm}{\vspace{0.2pt}$\begin{array}{l}k = 12 \\l = 21\end{array}$\vspace{0.2pt}}       &\parbox{1.25cm}{\vspace{0.2pt}$\begin{array}{l}k = 8 \\l = 15\end{array}$\vspace{0.2pt}}     & \parbox{1.25cm}{\vspace{0.2pt}$\begin{array}{l}k = 0 \\l = 1\end{array}$\vspace{0.2pt}}         & \parbox{1.25cm}{\vspace{0.2pt}$\begin{array}{l}k = 8 \\l = 15\end{array}$\vspace{0.2pt}}      & \parbox{1.25cm}{\vspace{0.2pt}$\begin{array}{l}k = 12 \\l = 21\end{array}$\vspace{0.2pt}}       &     \parbox{1.25cm}{\vspace{0.2pt}$\begin{array}{l}k = 16 \\l = 217\end{array}$\vspace{0.2pt}}  &        \parbox{1.25cm}{\vspace{0.2pt}$\begin{array}{l}k = 20 \\l = 889\end{array}$\vspace{0.2pt}} \\ \hline
\textbf{-1}                 &\parbox{1.25cm}{\vspace{0.2pt}$\begin{array}{l}k = 6 \\l = 7\end{array}$\vspace{0.2pt}} & \parbox{1.25cm}{\vspace{0.2pt}$\begin{array}{l}k = 0 \\l = 1\end{array}$\vspace{0.2pt}} & \parbox{1.25cm}{\vspace{0.2pt}$\begin{array}{l}k = 6 \\l = 7\end{array}$\vspace{0.2pt}} & \parbox{1.25cm}{\vspace{0.2pt}$\begin{array}{l}k = 8 \\l = 15\end{array}$\vspace{0.2pt}} & \parbox{1.25cm}{\vspace{0.2pt}$\begin{array}{l}k = 10 \\l = 31\end{array}$\vspace{0.2pt}} & \parbox{1.25cm}{\vspace{0.2pt}$\begin{array}{l}k = 12 \\l = 21\end{array}$\vspace{0.2pt}} & \parbox{1.25cm}{\vspace{0.2pt}$\begin{array}{l}k = 14 \\l = 127\end{array}$\vspace{0.2pt}} \\ \hline
\textbf{0}                  &\parbox{1.25cm}{\vspace{0.2pt}$\begin{array}{l}k = 8 \\l = 3\end{array}$\vspace{0.2pt}} & \parbox{1.25cm}{\vspace{0.2pt}$\begin{array}{l}k = 8 \\l = 3\end{array}$\vspace{0.2pt}} & \parbox{1.25cm}{\vspace{0.2pt}$\begin{array}{l}k = 8 \\l = 3\end{array}$\vspace{0.2pt}} & \parbox{1.25cm}{\vspace{0.2pt}$\begin{array}{l}k = 8 \\l = 3\end{array}$\vspace{0.2pt}} & \parbox{1.25cm}{\vspace{0.2pt}$\begin{array}{l}k = 8 \\l = 3\end{array}$\vspace{0.2pt}} & \parbox{1.25cm}{\vspace{0.2pt}$\begin{array}{l}k = 8 \\l = 3\end{array}$\vspace{0.2pt}} & \parbox{1.25cm}{\vspace{0.2pt}$\begin{array}{l}k = 8 \\l = 3\end{array}$\vspace{0.2pt}} \\ \hline
\textbf{1}                  &\parbox{1.25cm}{\vspace{0.2pt}$\begin{array}{l}k = 14 \\l = 127\end{array}$\vspace{0.2pt}} & \parbox{1.25cm}{\vspace{0.2pt}$\begin{array}{l}k = 12 \\l = 9\end{array}$\vspace{0.2pt}} & \parbox{1.25cm}{\vspace{0.2pt}$\begin{array}{l}k = 10 \\l = 31\end{array}$\vspace{0.2pt}} & \parbox{1.25cm}{\vspace{0.2pt}$\begin{array}{l}k = 8 \\l = 15\end{array}$\vspace{0.2pt}} & \parbox{1.25cm}{\vspace{0.2pt}$\begin{array}{l}k = 0 \\l = 1\end{array}$\vspace{0.2pt}} & \parbox{1.25cm}{\vspace{0.2pt}$\begin{array}{l}k = 6 \\l = 7\end{array}$\vspace{0.2pt}} & \parbox{1.25cm}{\vspace{0.2pt}$\begin{array}{l}k = 6 \\l = 7\end{array}$\vspace{0.2pt}} \\ \hline
\textbf{2}                  &\parbox{1.25cm}{\vspace{0.2pt}$\begin{array}{l}k = 20 \\l = 1023\end{array}$\vspace{0.2pt}} & \parbox{1.25cm}{\vspace{0.2pt}$\begin{array}{l}k = 16 \\l = 217\end{array}$\vspace{0.2pt}} & \parbox{1.25cm}{\vspace{0.2pt}$\begin{array}{l}k = 12 \\l = 9\end{array}$\vspace{0.2pt}} & \parbox{1.25cm}{\vspace{0.2pt}$\begin{array}{l}k = 8 \\l = 15\end{array}$\vspace{0.2pt}} & \parbox{1.25cm}{\vspace{0.2pt}$\begin{array}{l}k = 6 \\l = 7\end{array}$\vspace{0.2pt}} & \parbox{1.25cm}{\vspace{0.2pt}$\begin{array}{l}k = 0 \\l = 1\end{array}$\vspace{0.2pt}} & \parbox{1.25cm}{\vspace{0.2pt}$\begin{array}{l}k = 10 \\l = 31\end{array}$\vspace{0.2pt}} \\ \hline
\textbf{3}                  &\parbox{1.25cm}{\vspace{0.2pt}$\begin{array}{l}k = 26 \\l = 762\end{array}$\vspace{0.2pt}} & \parbox{1.25cm}{\vspace{0.2pt}$\begin{array}{l}k = 20 \\l = 889\end{array}$\vspace{0.2pt}} & \parbox{1.25cm}{\vspace{0.2pt}$\begin{array}{l}k = 14 \\l = 127\end{array}$\vspace{0.2pt}} & \parbox{1.25cm}{\vspace{0.2pt}$\begin{array}{l}k = 8 \\l = 3\end{array}$\vspace{0.2pt}} & \parbox{1.25cm}{\vspace{0.2pt}$\begin{array}{l}k = 6 \\l = 7\end{array}$\vspace{0.2pt}} & \parbox{1.25cm}{\vspace{0.2pt}$\begin{array}{l}k = 10 \\l = 31\end{array}$\vspace{0.2pt}} & \parbox{1.25cm}{\vspace{0.2pt}$\begin{array}{l}k = 16 \\l = 12\end{array}$\vspace{0.2pt}} \\ \hline
\end{tabular}
}
\vspace{1em}
\caption{
For each $(a, b)$-BB code defined by the stabilizers in Eq.~\eqref{eq: BB code 3 3}, the table lists the number $k$ of basis anyons and the minimal string length $l$ along the $y$-direction required to generate all anyons.
Notably, all lengths can be expressed by the forms $2$ and $2^i - 1$. For instance, $217 = 7 \times 31$, $341 = 1023/3$, $762 = 2 \times 3 \times 127$, and $889 = 7 \times 127$.
}
\label{tab: length of shortest string: maximal number of basis anyons}
\end{table}

\subsection{$\mathbb{Z}_2$ standard toric code}
\label{sec: example Z2 TC}

\begin{figure}[h]
    \centering
    \includegraphics[width=0.55\textwidth]{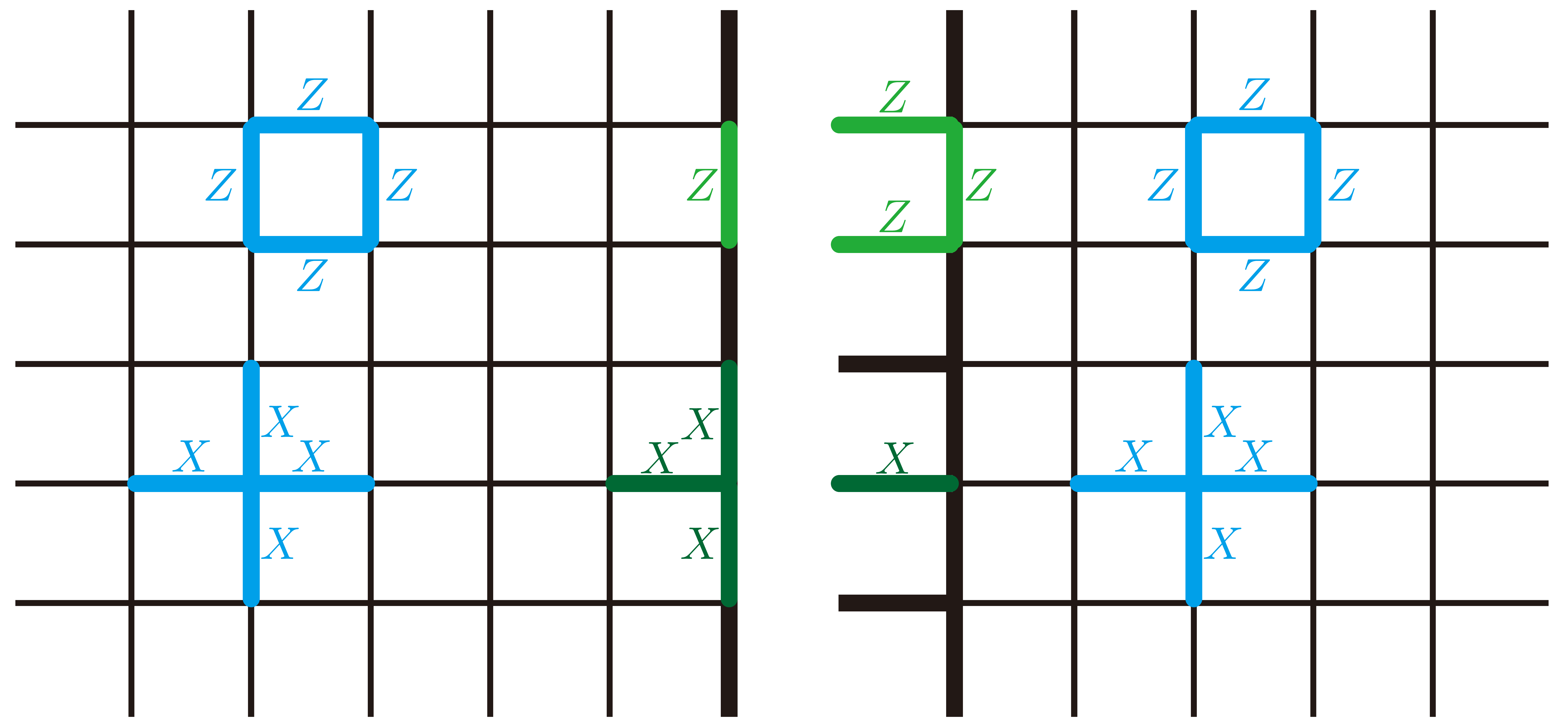}
    \caption{The left side illustrates a smooth boundary, while the right side illustrates a rough boundary. The blue components represent the bulk stabilizers of the $\mathbb{Z}_2$ toric code, and the green components represent the boundary gauge operators, which also serve as the short boundary string operators. The corresponding boundary anyons are labeled from top to bottom as $e_1$ and $m_1$ for the smooth boundary, and $e_2$ and $m_2$ for the rough boundary.}
    \label{fig: Z2_TC_boundary}
\end{figure}

We consider the $\mathbb{Z}_2$ standard toric code as an introductory example. The boundary gauge operators for both the smooth and rough boundaries, obtained using Algorithm~\ref{algorithm: Getting boundary gauge operators} outlined in Sec.~\ref{sec: Getting boundary gauge operators} and Appendix~\ref{appendix: pseudocode}, are shown in Fig.~\ref{fig: Z2_TC_boundary} in green, with the bulk stabilizers depicted in blue. In the case of the $\mathbb{Z}_2$ toric code, the boundary gauge operators also serve as the short boundary string operators, derived through Algorithm~\ref{algorithm: Getting boundary anyon and boundary string}, as described in Sec.~\ref{sec: Getting boundary string} and Appendix~\ref{appendix: pseudocode}.
For clarity, we label the corresponding boundary anyons in Fig.~\ref{fig: Z2_TC_boundary} from top to bottom as $e_1$ and $m_1$ for the smooth boundary, and $e_2$ and $m_2$ for the rough boundary. The fusion rules are $e_i^2= m_i^2=1, ~~ \forall i \in \{1, 2\}$, indicating that all have order $2$. Their topological spins (Eq.~\eqref{eq: boundary topological spin computation}) and braiding statistics (Eq.~\eqref{eq: boundary braiding statistics computation}) are given by:
\begin{eqs}
    &\theta(e_i) = \theta(m_i) = 1, ~~ B(e_i, m_i) = -1, ~~ \forall i \in \{1, 2\}, \\
    &B(a_1, a_2) = 1, ~~ \forall a_1 \in \{e_1, m_1\},~ a_2 \in \{e_2, m_2\}.
    \label{eq: standard_TC_spin_B}
\end{eqs}
This describes two copies of the $\mathbb{Z}_2$ toric code (due to the presence of both smooth and rough boundaries), thereby confirming the bulk-boundary correspondence in Theorem~\ref{thm: Bulk-boundary correspondence}.

\begin{figure}[htb]
    \centering
    \subfigure[$\{e_1\}$-condensed]{\includegraphics[width=0.22\textwidth]{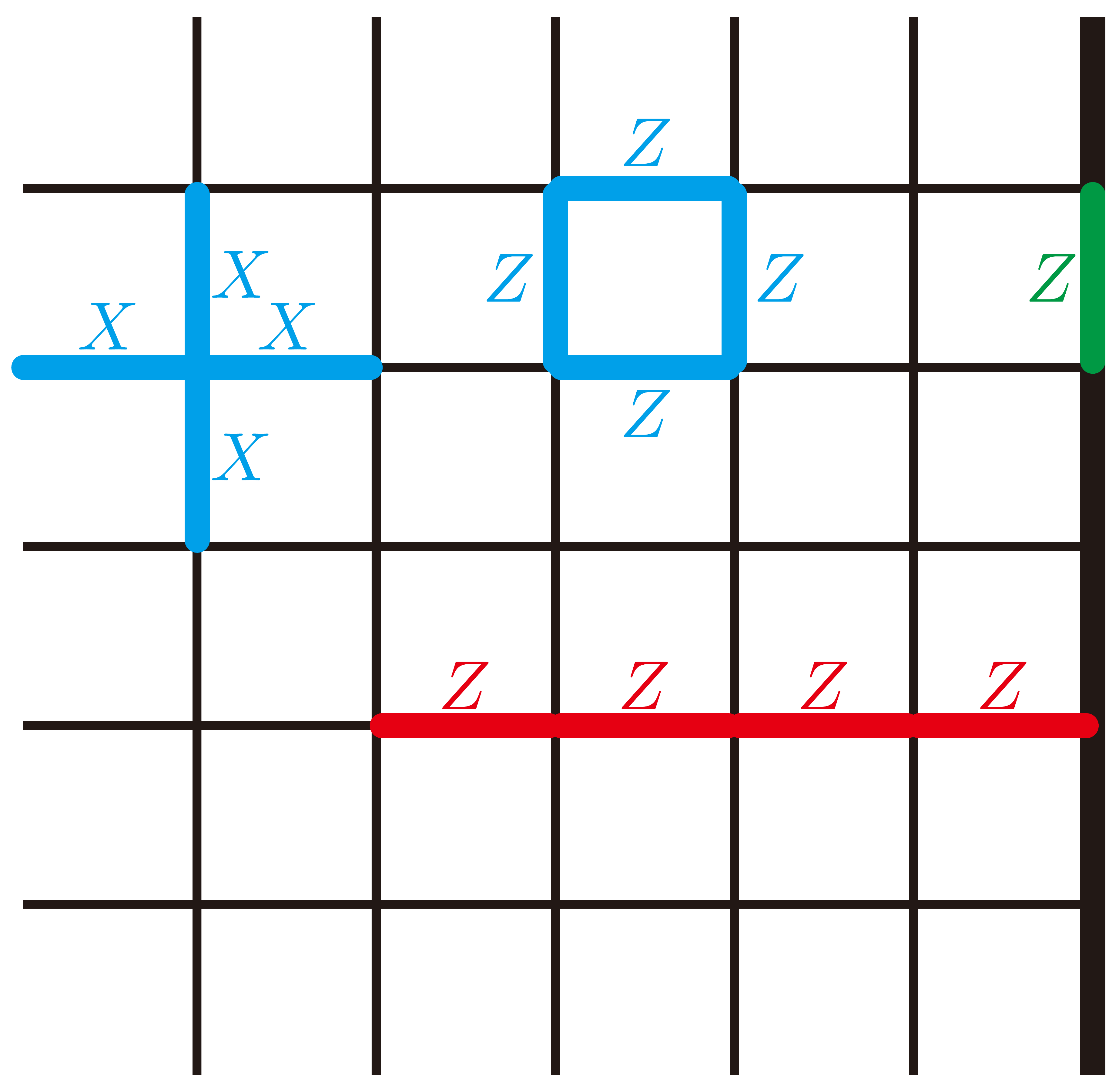}}
    \hspace{0.2cm}
    \subfigure[$\{e_2\}$-condensed]{\includegraphics[width=0.22\textwidth]{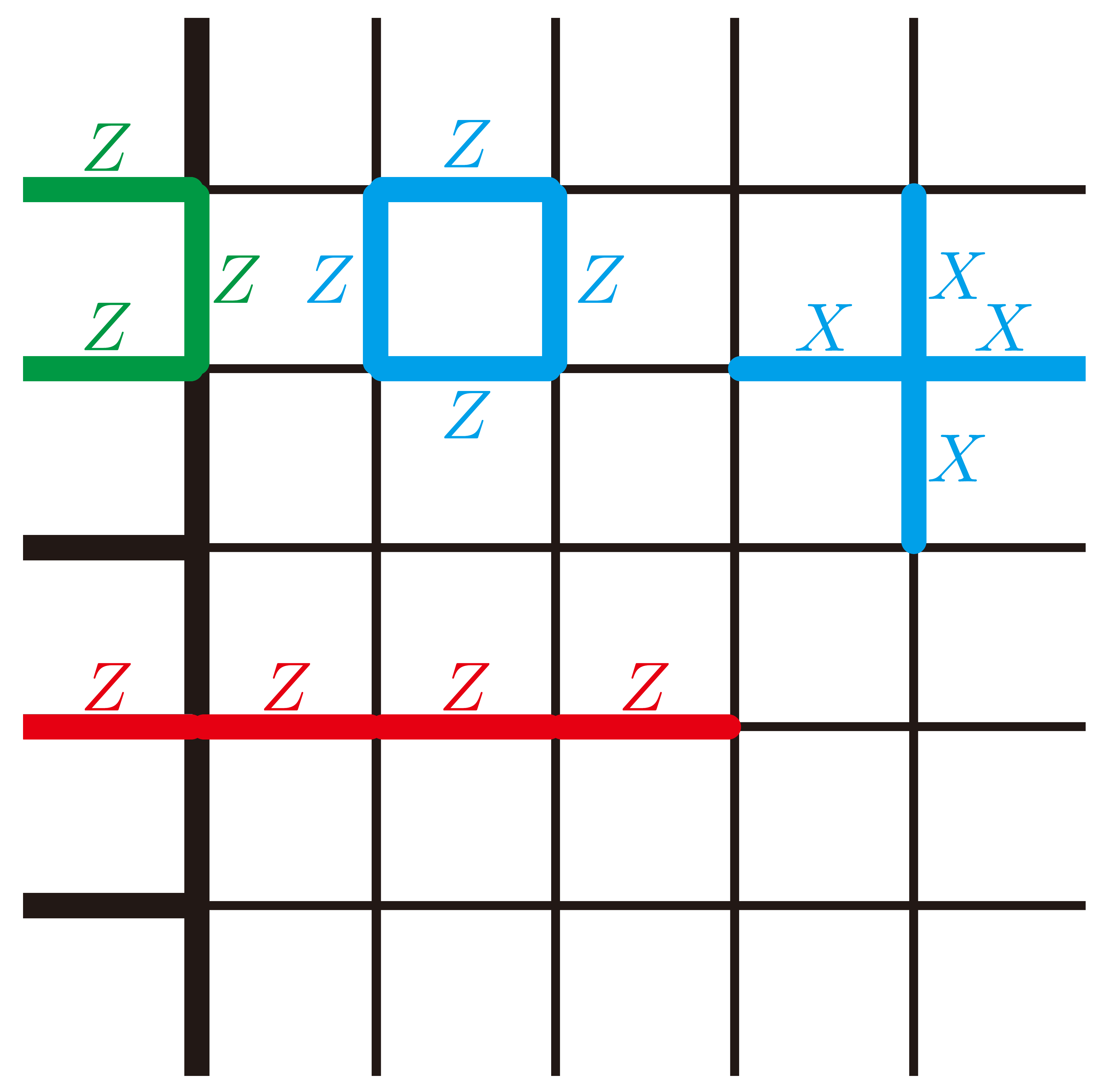}}
    \hspace{0.2cm}
    \subfigure[$\{m_1\}$-condensed]{\includegraphics[width=0.22\textwidth]{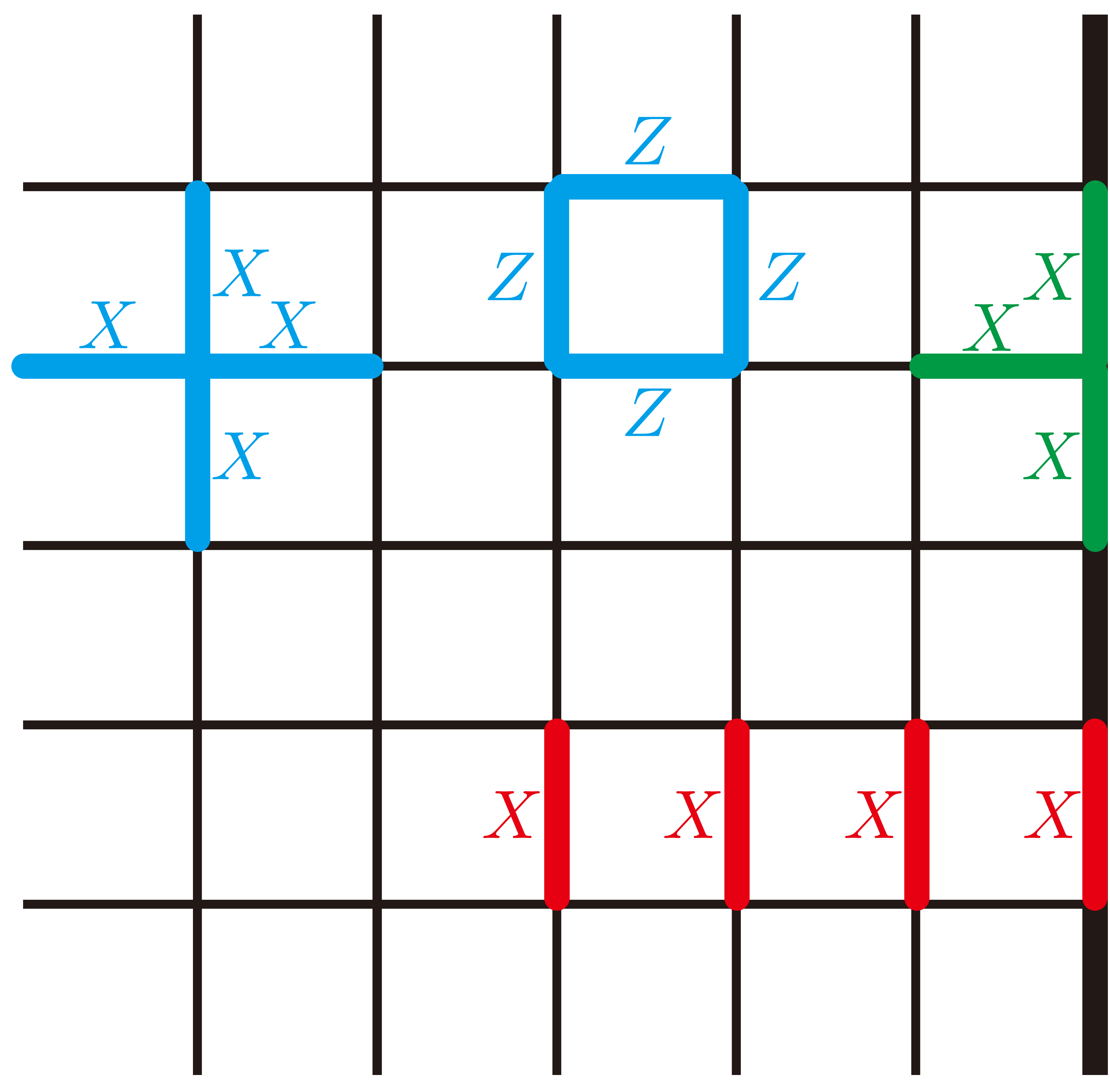}}
    \hspace{0.2cm}
    \subfigure[$\{m_2\}$-condensed]{\includegraphics[width=0.22\textwidth]{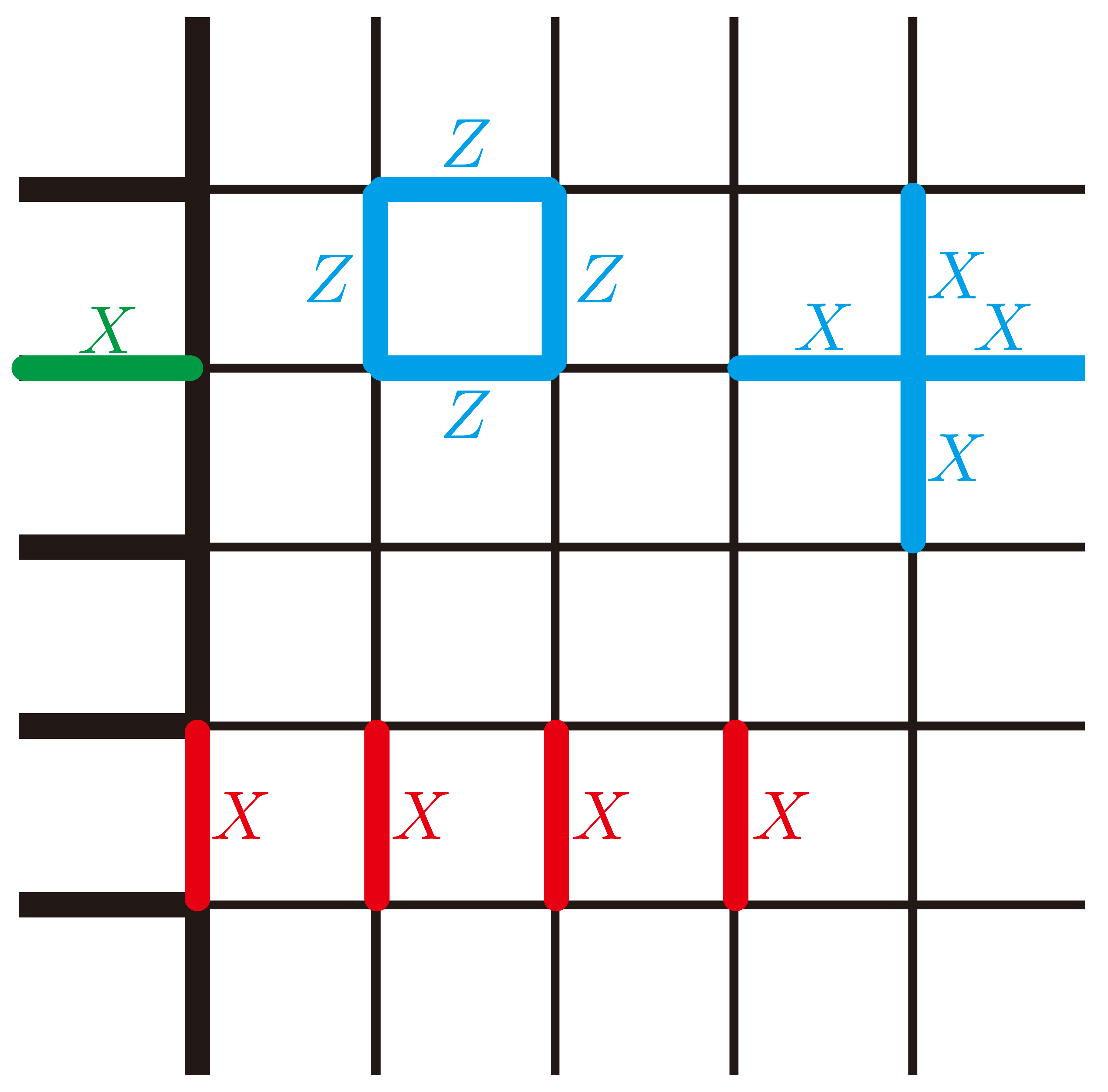}}
    \caption{The boundaries of $\mathbb{Z}_2$ toric code. Blue components represent the bulk stabilizers of the $\mathbb{Z}_2$ toric code, green components represent the boundary Hamiltonian, and red components represent bulk strings that terminate at the boundary without causing energy violations.}
    \label{fig: Lagrangian_subgroup_toric_code_Z2_boundary}
\end{figure}

\begin{figure*}[htb]
    \centering
    \subfigure[$\{e_1,e_2\}$-condensed]{\includegraphics[width=0.32\textwidth]{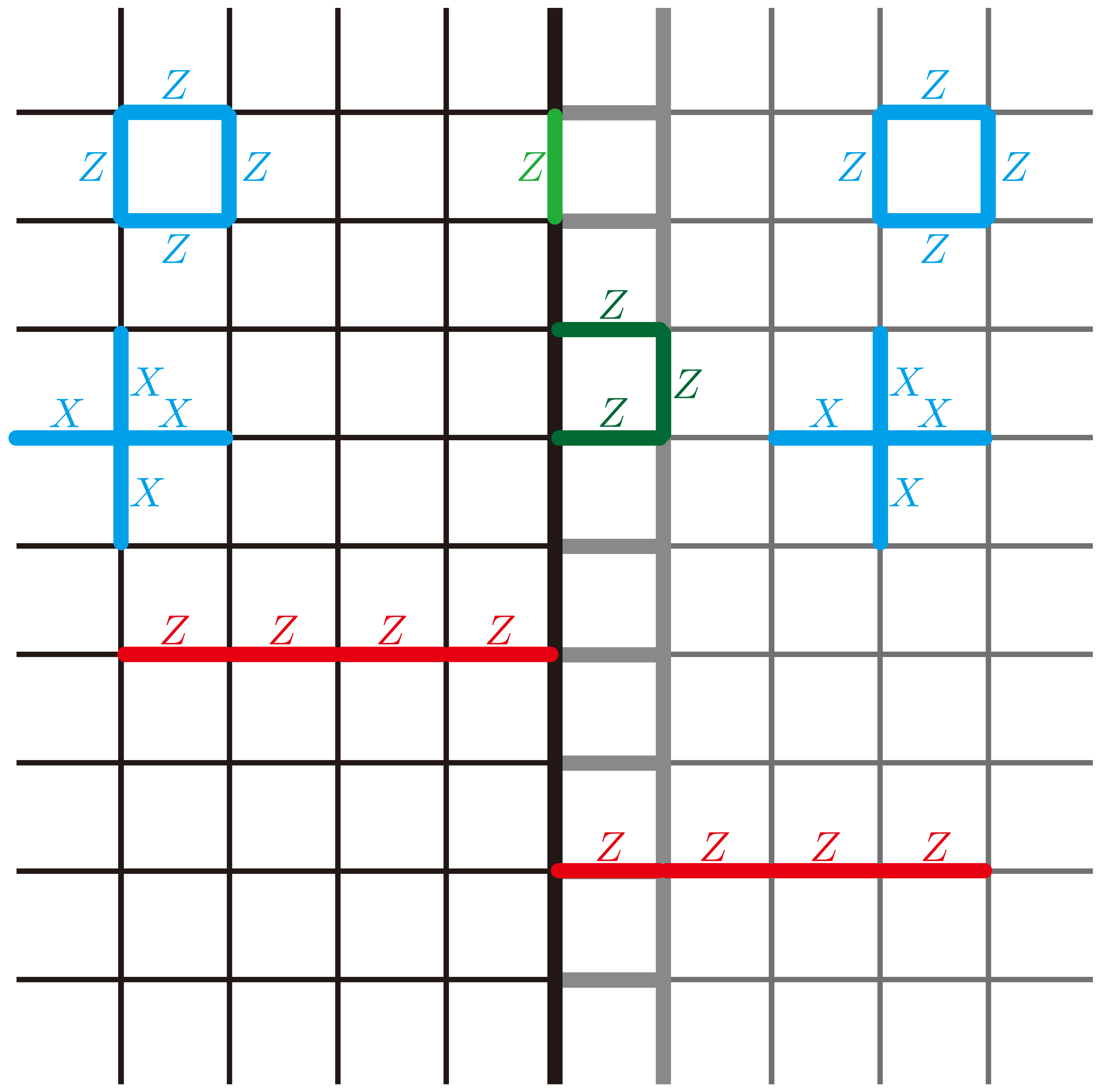}}
    \hspace{0.01cm}
    \subfigure[$\{e_1,m_2\}$-condensed]{\includegraphics[width=0.32\textwidth]{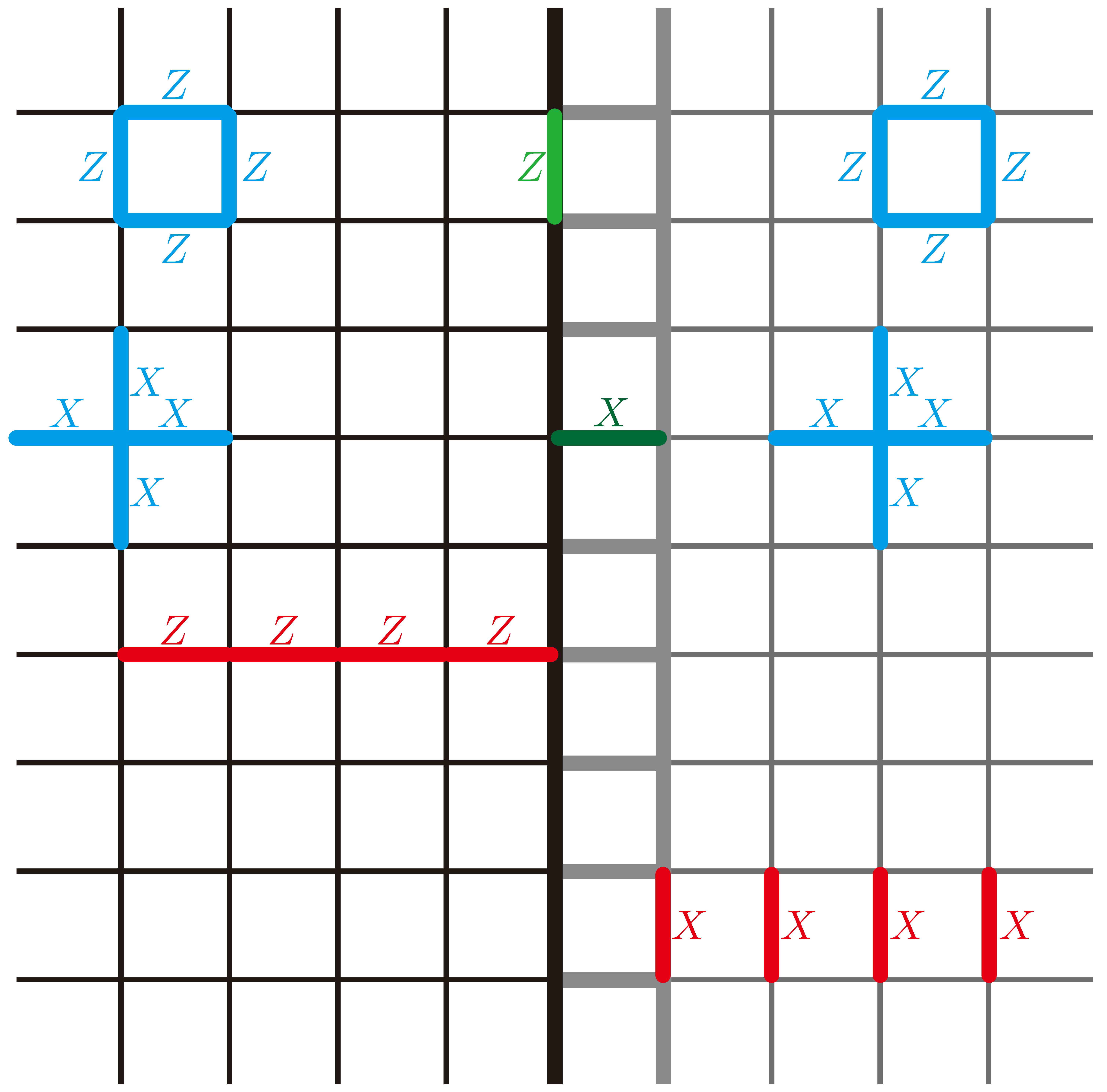}}
    \hspace{0.01cm}
    \subfigure[$\{m_1,e_2\}$-condensed]{\includegraphics[width=0.32\textwidth]{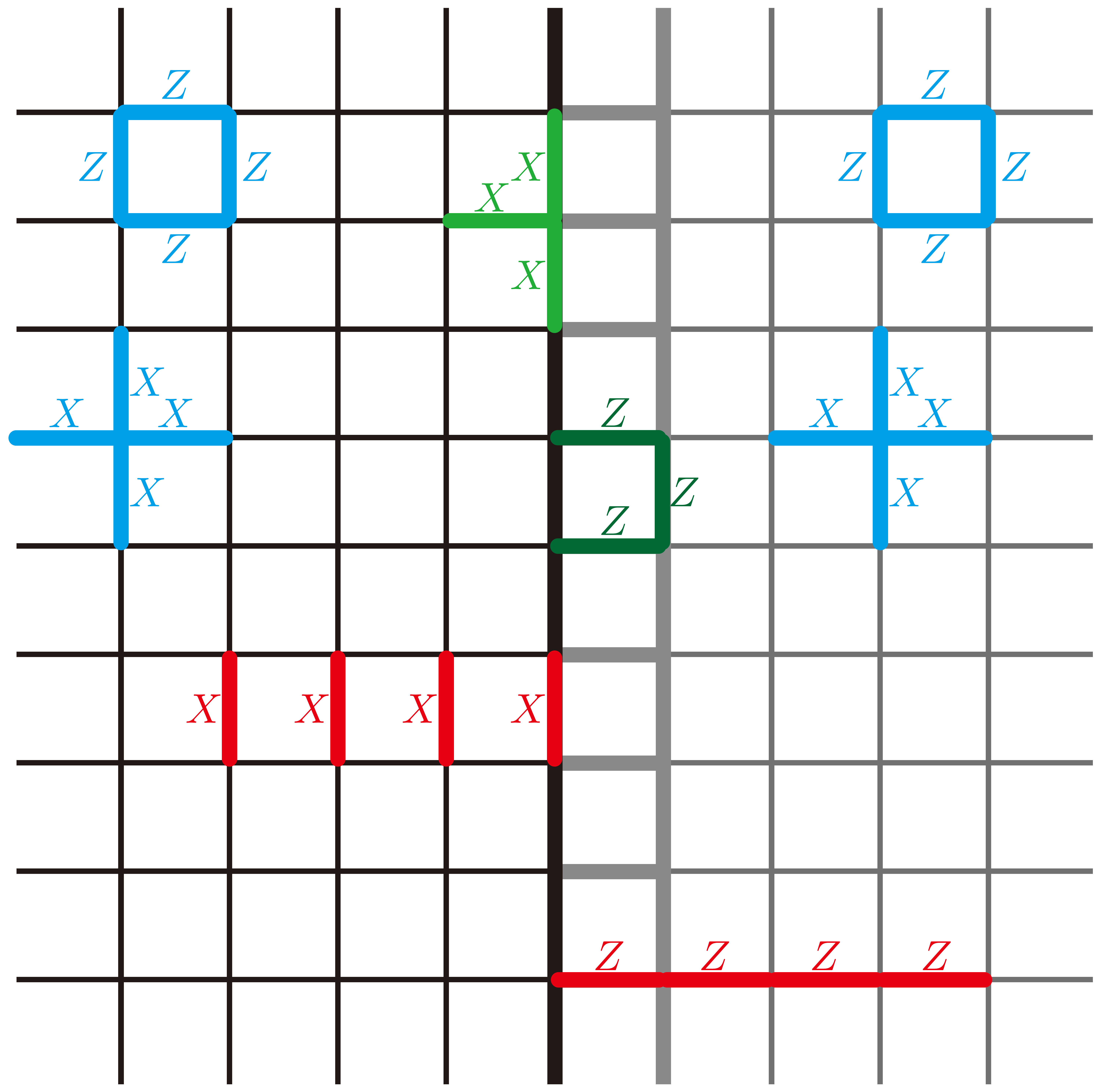}}\\
    \subfigure[$\{m_1,m_2\}$-condensed]{\includegraphics[width=0.32\textwidth]{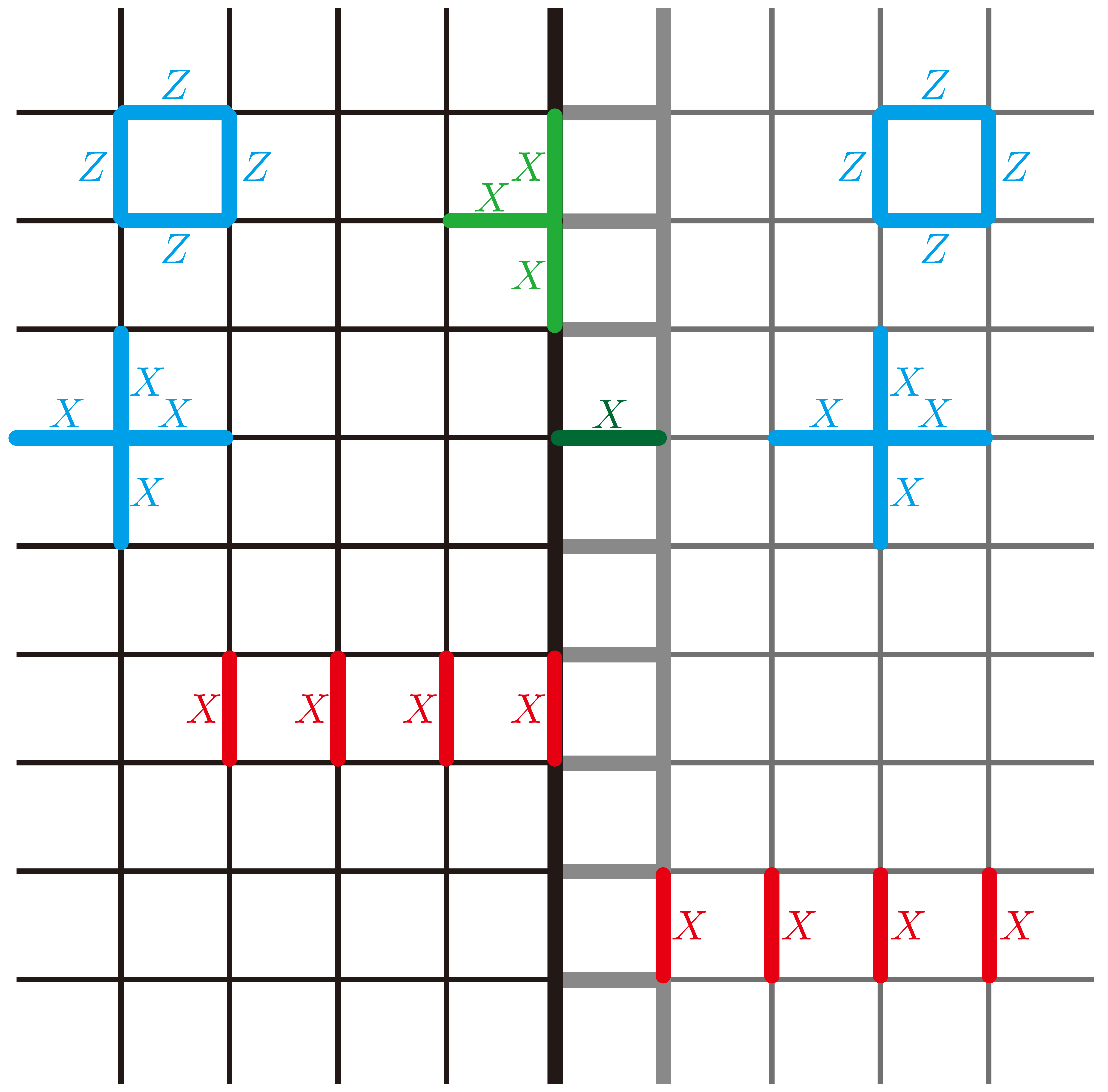}}
    \hspace{0.01cm}
    \subfigure[$\{e_1e_2,m_1m_2\}$-condensed]{\includegraphics[width=0.32\textwidth]{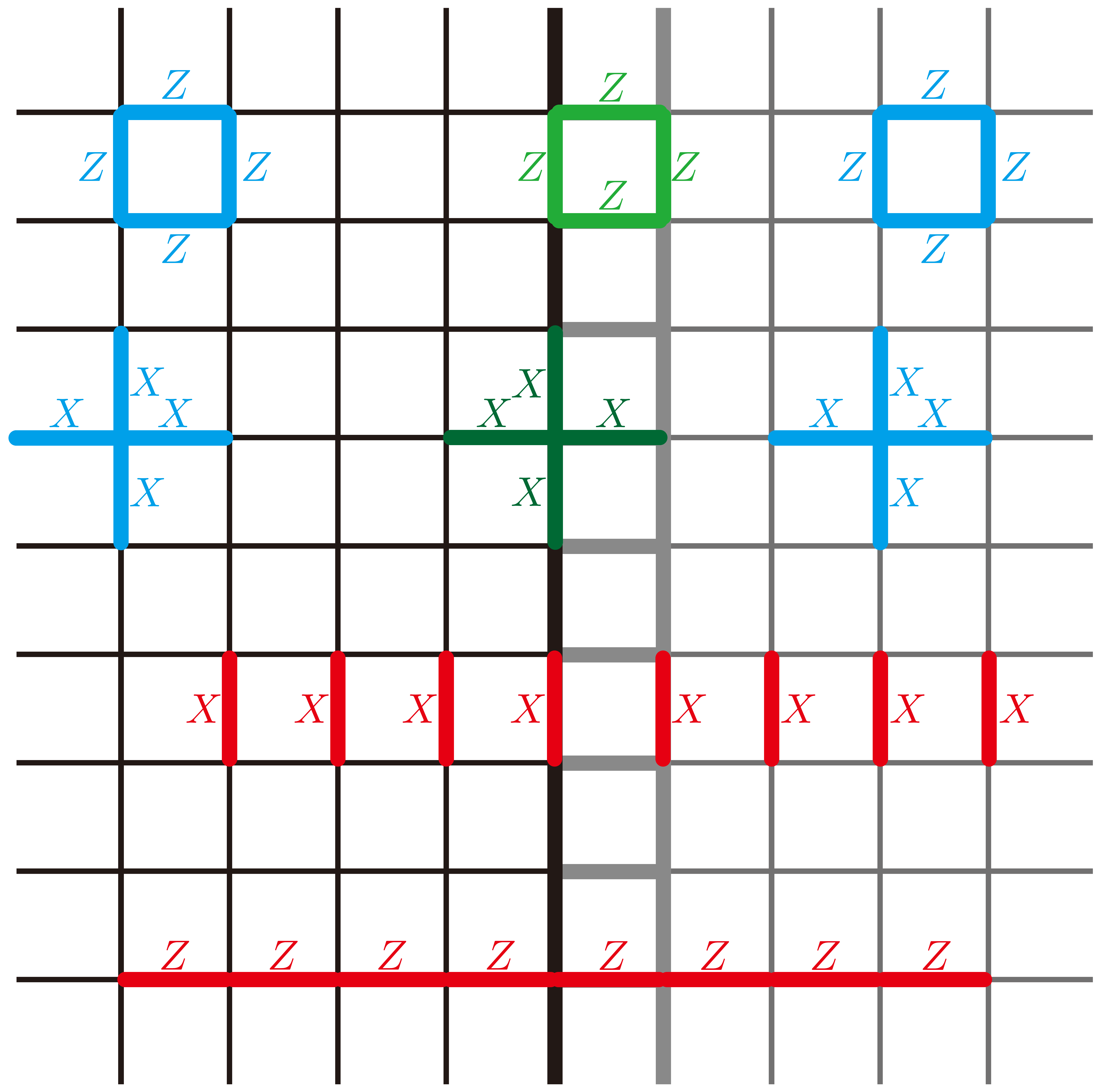}}
    \hspace{0.01cm}
    \subfigure[$\{e_1m_2,m_1e_2\}$-condensed]{\includegraphics[width=0.32\textwidth]{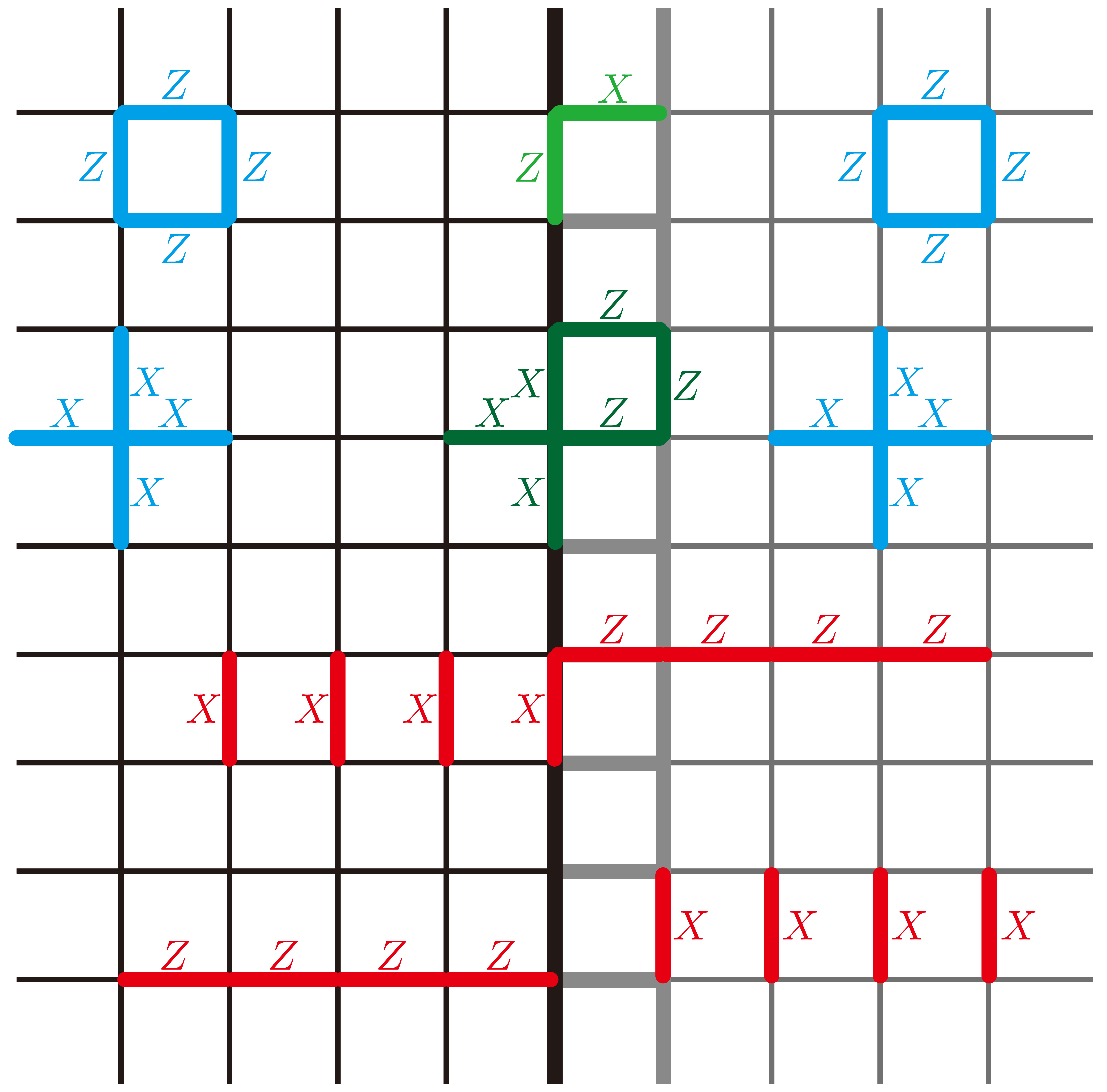}\label{fig: Lagrangian_subgroup_toric_code_Z2_defect (f)}}
    \caption{The defects of the $\mathbb{Z}_2$ toric code. Blue components indicate bulk stabilizers, green components represent the defect Hamiltonian, and red components show bulk string operators that terminate on or pass through the defect. The red strings commute with the green defect Hamiltonian. Figures (a), (b), (c), and (d) depict non-invertible defects, where the left-hand side and right-hand side are decoupled, with $e_1$ or $m_1$ and $e_2$ or $m_2$ condensed independently. Figures (e) and (f) illustrate invertible defects: (e) corresponds to the trivial defect, where the defect Hamiltonian matches the bulk Hamiltonian, and (f) represents the $e$-$m$ exchange defect, where $e$ and $m$ are permuted as they pass through the defect.}
    \label{fig: Lagrangian_subgroup_toric_code_Z2_defect}
\end{figure*}

We now demonstrate the construction of the gapped boundary. The explicit boundary constructions of the $\mathbb{Z}_2$ toric code are shown in Fig.~\ref{fig: Lagrangian_subgroup_toric_code_Z2_boundary}. Green components represent short boundary string operators of bosons in the Lagrangian subgroups (with translational symmetry in the $y$-direction), while red components correspond to bulk anyon string operators. These bulk string operators violate the bulk stabilizers without violating the boundary Hamiltonian, indicating the condensation of bulk anyons at the boundary.
For clarity, we denote the short boundary string operators of boundary anyons $e_i$ and $m_i$ as $O_{e_i}$ and $O_{m_i}$, respectively. The boundary construction process is outlined as follows:
\begin{enumerate}
    \item \textbf{Determine Lagrangian subgroups:} First, we identify the Lagrangian subgroups for the boundary anyons: $\{1, e_i \}$ and $\{1, m_i \}$. The $\mathbb{Z}_2$ toric code allows two types of boundary condensations.
    \item \textbf{Boundary anyon condensation:}
    According to the condensation procedure in Theorem~\ref{thm: add boundary string to condense anyon}, there are two possible constructions for the boundary Hamiltonian: either adding $O_{e_i}$ to condense $e_i$, or adding $O_{m_i}$ to condense $m_i$. Since $O_{e_i}$ and $O_{m_i}$ do not commute, adding $O_{e_i}$ with translational symmetry in the $y$-direction to the boundary prevents the inclusion of $O_{m_i}$.
    \item \textbf{Topological order completion:} After adding the short boundary string operators, we apply Algorithm~\ref{algorithm: Getting boundary gauge operators} to search for any additional boundary gauge operators needed to satisfy the TO condition, as described in Theorem~\ref{thm: add other boundary terms to satisfy the TO condition}. In the case of the standard toric code, no additional boundary gauge operators are required.
    \item \textbf{Bulk strings condensed at the boundary:} Following the procedure in Appendix~\ref{app: Get string on the boundary or defect}, we obtain the condensed bulk string operators at the boundary after selecting the boundary Hamiltonian. This corresponds to the red components in Fig.~\ref{fig: Lagrangian_subgroup_toric_code_Z2_boundary}.
\end{enumerate}

So far, we have only considered the boundary Hamiltonian on the smooth or rough boundary of the $\mathbb{Z}_2$ toric code. For defect construction, however, we must simultaneously consider both the left and right semi-infinite systems in Fig.~\ref{fig: Lagrangian_subgroup_toric_code_Z2_boundary}. Specifically, we select Lagrangian subgroups in the doubled theory $\{1, e_1, m_1, f_1\} \times \{1, e_2, m_2, f_2\}$ to condense at the defect.
The explicit defect construction is illustrated in Fig.~\ref{fig: Lagrangian_subgroup_toric_code_Z2_defect}, where blue components represent bulk stabilizers on both sides, green components represent defect Hamiltonian, and red components show bulk string operators terminating on or passing through the defect.
In the example of $\{e_1m_2, m_1e_2\}$ condensation (Fig.~\ref{fig: Lagrangian_subgroup_toric_code_Z2_defect (f)}), as $e_1$ moves through the defect, it transforms into $m_2$. Similarly, $m_1$ transforms into $e_2$.
The defect construction process is outlined below and closely resembles the boundary construction process:
\begin{enumerate}
    \item \textbf{Determine Lagrangian subgroups:} The Lagrangian subgroups of two copies of $\mathbb{Z}_2$ toric codes are: $\{e_1,e_2\}$, $\{e_1,m_2\}$, $\{m_1,m_2\}$, $\{m_1,e_2\}$, $\{e_1e_2,m_1m_2\}$, $\{e_1m_2,m_1e_2\}$. 
    
    \item \textbf{Defect anyon condensation:}
    Following the condensation procedure in Theorem~\ref{thm: add boundary string to condense anyon}, there are six distinct types of defect constructions. These defect string operators\footnote{We refer to these as defect string operators to maintain consistency with the previously used term boundary string operators. This naming convention will also apply to defect gauge operators in the following discussion.} are formed by appropriately combining boundary string operators at both smooth and rough boundaries, following the structure of the Lagrangian subgroup. A crucial part of this process is ensuring the correct combination and placement of the boundary string operators.

    For example, when condensing the set $\{e_1m_2, m_1e_2\}$, we construct the $e_1m_2$ defect string operator, denoted $O_{e_1m_2}$, by combining $O_{e_1}$ at the left smooth boundary with $O_{m_2}$ at the right boundary. Similarly, we form the $m_1e_2$ defect string operator, denoted $O_{m_1e_2}$, by combining $O_{m_1}$ at the left smooth boundary with $O_{e_2}$ at the right boundary. It is crucial that $O_{m_1e_2}$ commutes with $O_{e_1m_2}$.
    
    \item \textbf{Topological order completion:}
    After adding the short defect string operators into the defect Hamiltonian, we apply Algorithm~\ref{algorithm: Getting boundary gauge operators} to search for any additional defect gauge operators to satisfy the TO condition. In the case of the standard toric code, no additional defect gauge operators are required.
    
    \item \textbf{Bulk strings terminating on or passing through the defect:}
    Following the procedure in Appendix~\ref{app: Get string on the boundary or defect}, we obtain the bulk string operator terminating on or passing through the defect after selecting the defect Hamiltonian. This is represented by the red components in the configuration shown in Fig.~\ref{fig: Lagrangian_subgroup_toric_code_Z2_defect}.
\end{enumerate}

\begin{figure}[h]
    \centering
    \subfigure[
    ]{\includegraphics[width=0.227\textwidth]{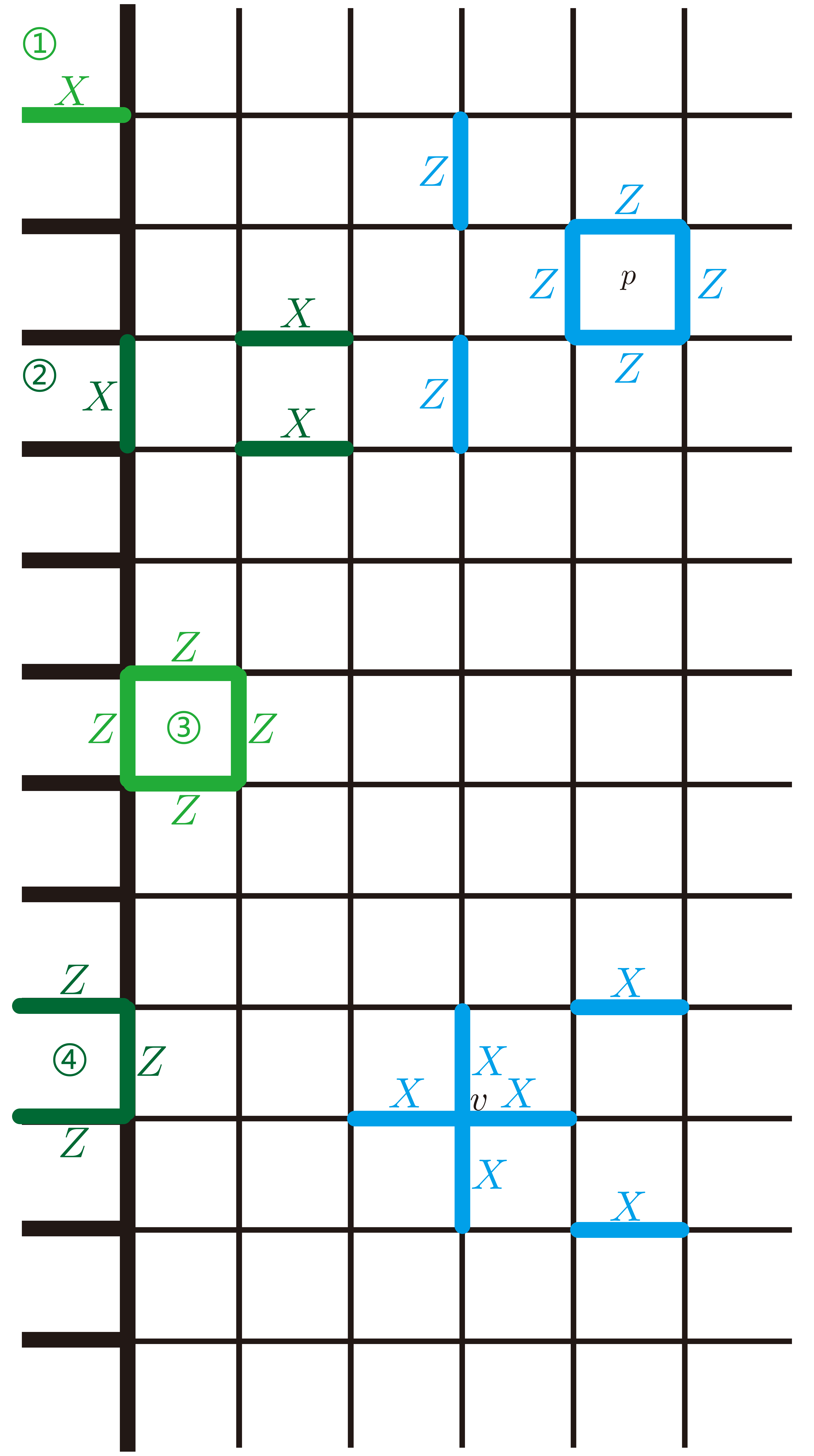}\label{fig: Modified_toric_code_secondary_boundaries_rough}}
    \hspace{0.04cm}
    \subfigure[
    ]{\includegraphics[width=0.1\textwidth]{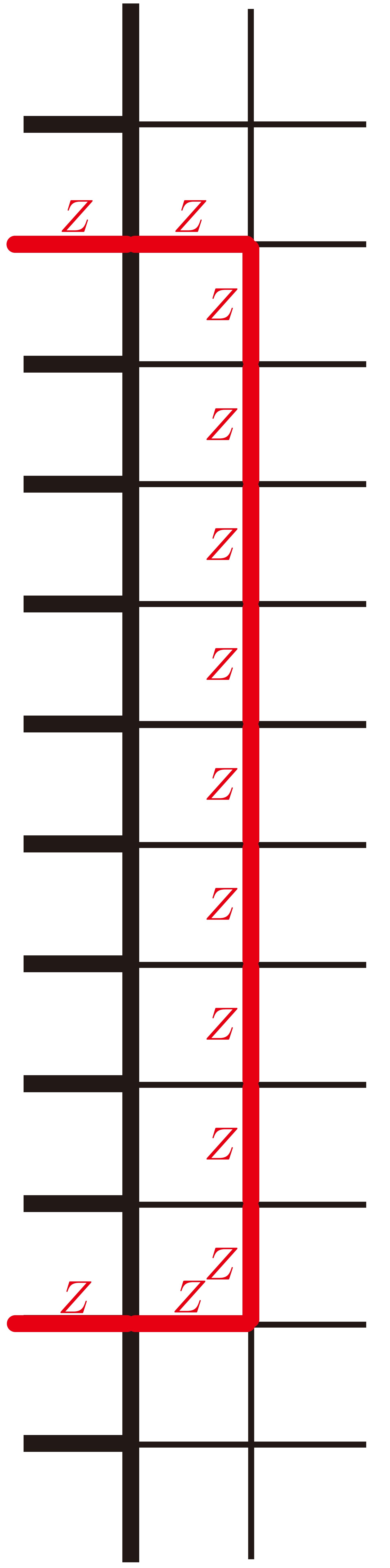}\label{fig: Modified_toric_code_rough_boundary_string1}}
    \hspace{0.04cm}
    \subfigure[
    ]{\includegraphics[width=0.1\textwidth]{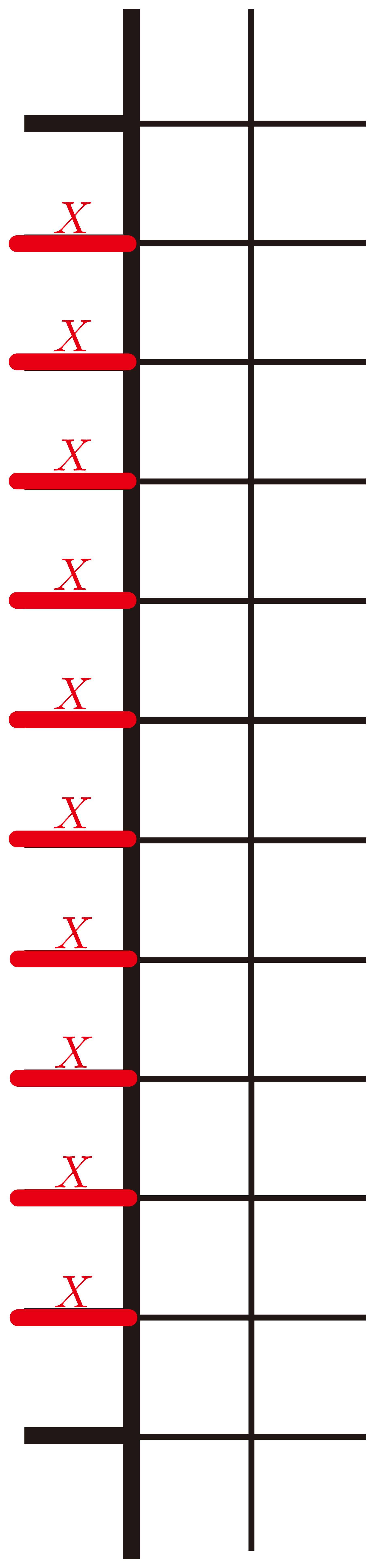}\label{fig: Modified_toric_code_rough_boundary_string2}}
    \caption{
    (a) Boundary gauge operators for the rough boundary, labeled \(\textcircled{\raisebox{-0.9pt}{1}}\), \(\textcircled{\raisebox{-0.9pt}{2}}\), \(\textcircled{\raisebox{-0.9pt}{3}}\), and \(\textcircled{\raisebox{-0.9pt}{4}}\), along with their translational counterparts. The operators \(\textcircled{\raisebox{-0.9pt}{1}}\) and \(\textcircled{\raisebox{-0.9pt}{2}}\) represent nontrivial \textit{secondary boundary gauge operators}, meaning they commute with all bulk stabilizers but are not truncated \(A_v^{\text{fish}}\) or \(B_p^{\text{fish}}\) operators.  
    (b) and (c) depict the boundary string operators along the rough boundary.
    }
    \label{fig: Modified_toric_code_boundaries_rough}
\end{figure}
\begin{figure}[h]
    \centering
    \subfigure[$\{e_1\}$-condensed]{\includegraphics[width=0.235\textwidth]{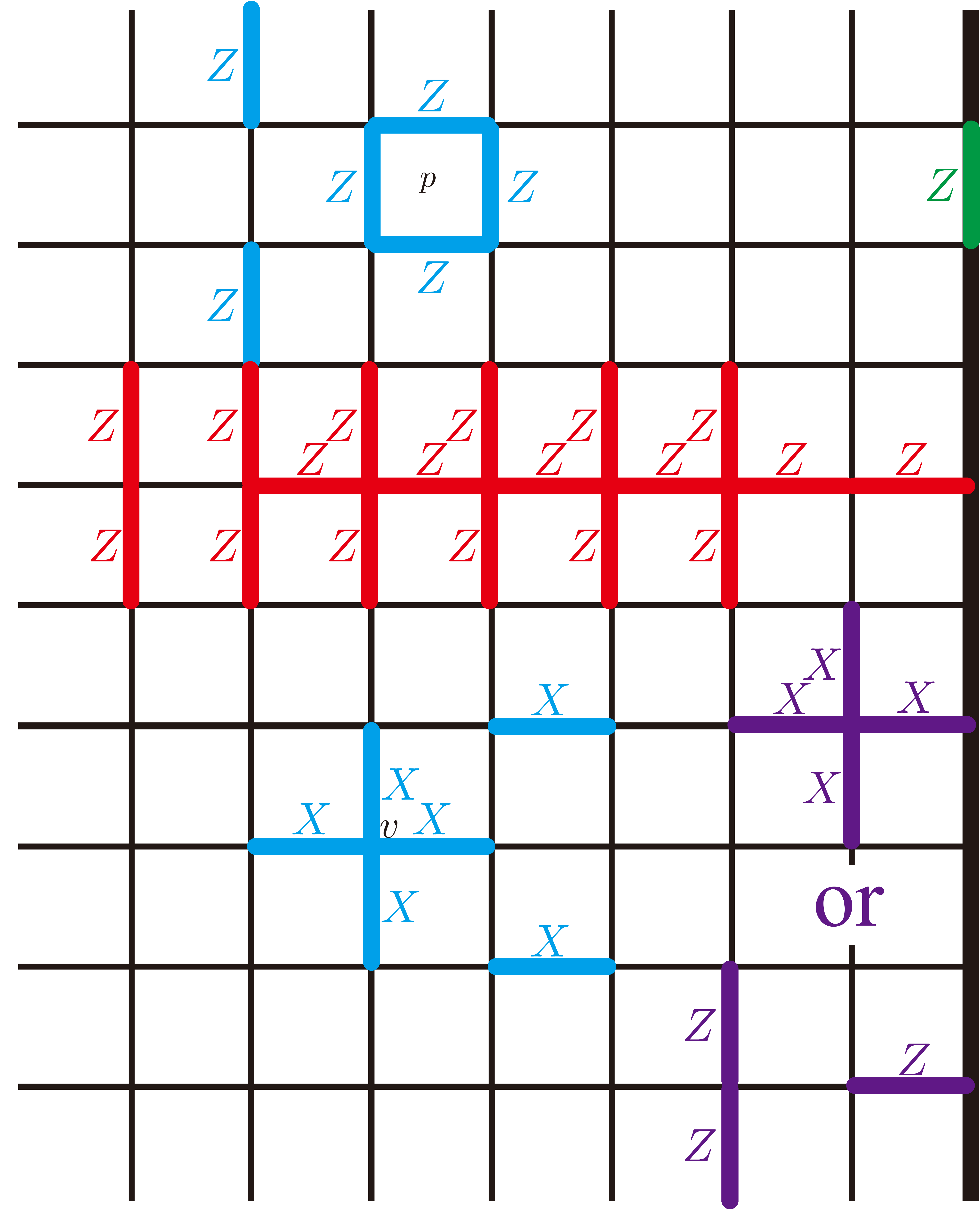}}
    \hspace{0.1cm}
    \subfigure[$\{e_2\}$-condensed]{\includegraphics[width=0.235\textwidth]{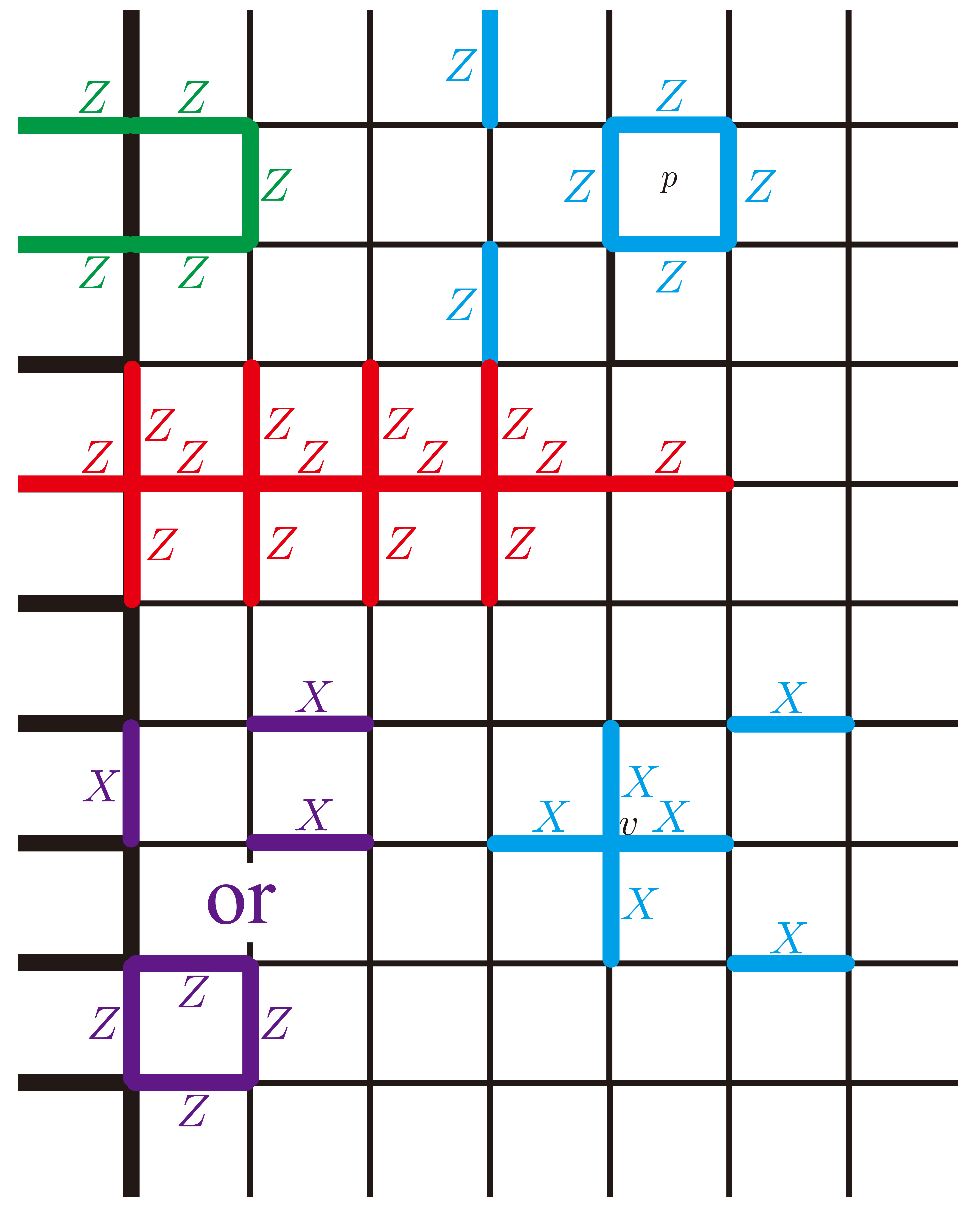}}
    \hspace{0.1cm}
    \subfigure[$\{m_1\}$-condensed]{\includegraphics[width=0.235\textwidth]{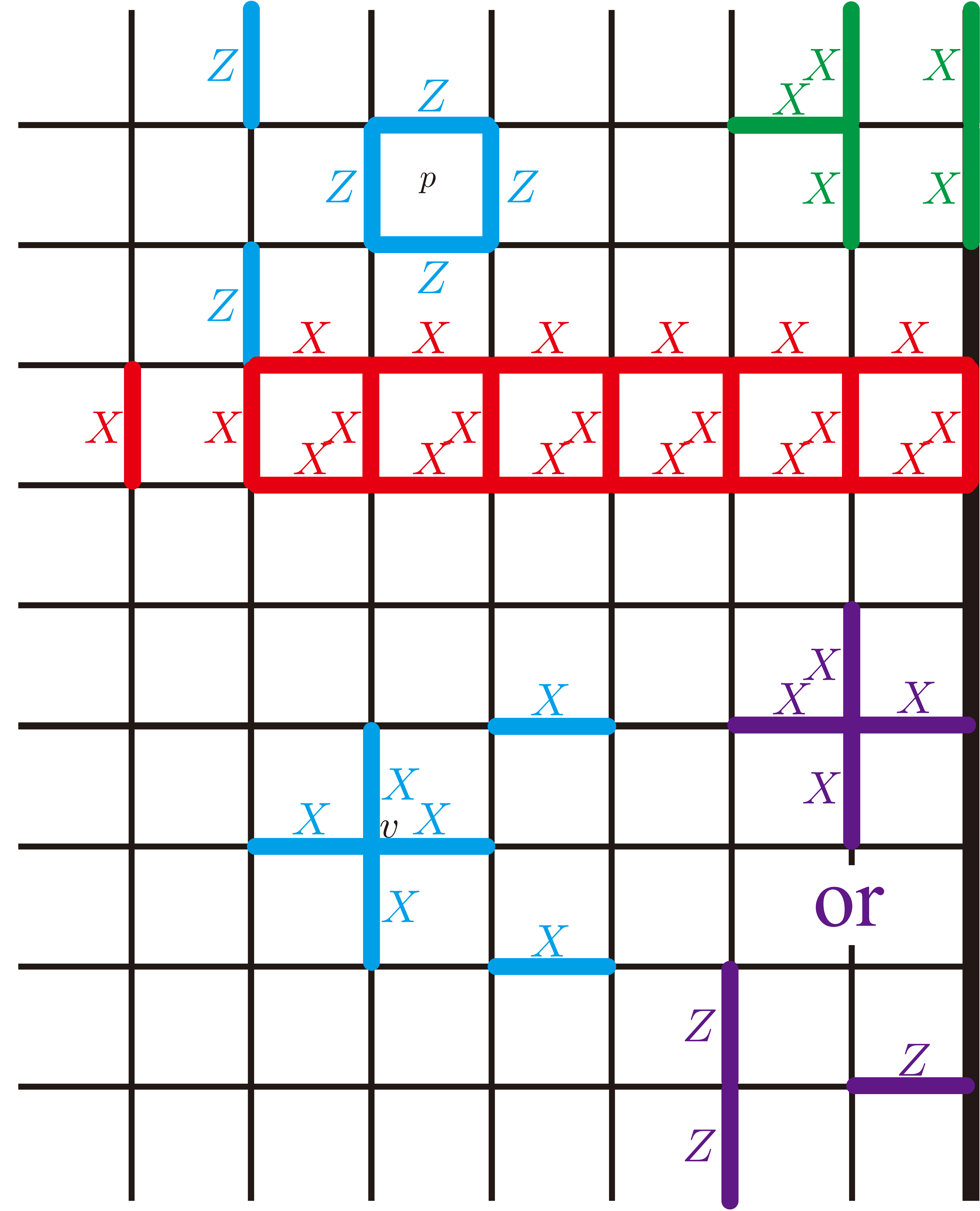}}
    \hspace{0.1cm}
    \subfigure[$\{m_2\}$-condensed]{\includegraphics[width=0.235\textwidth]{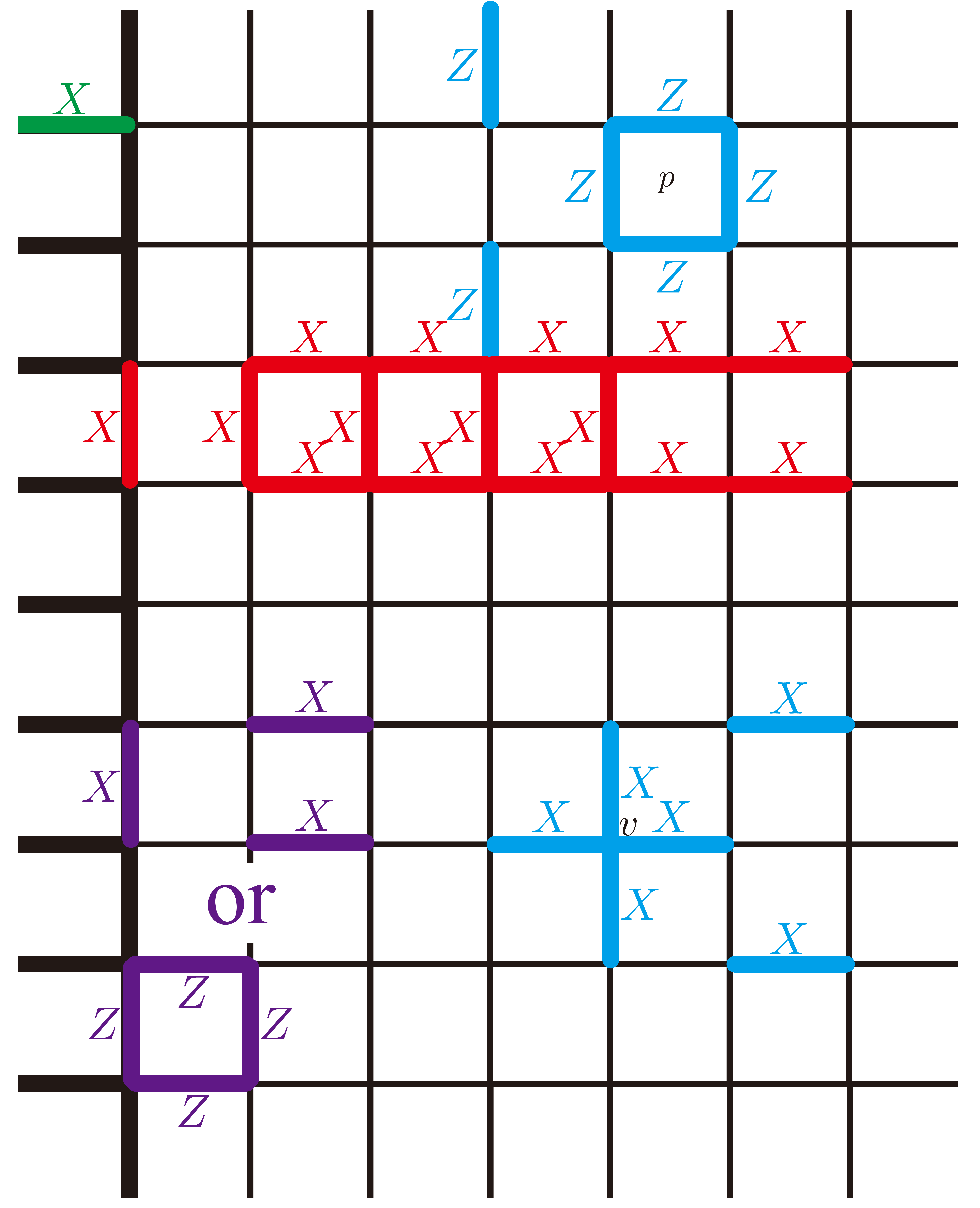}}
    \caption{
    The boundaries of $\mathbb{Z}_2$ fish toric code. Blue components represent the bulk stabilizers of the $\mathbb{Z}_2$ fish toric code, green and purple components represent the boundary Hamiltonian, and red components represent bulk strings that terminate at the boundary without causing energy violations.}
    \label{fig: Lagrangian_subgroup_fish_toric_code_Z2_boundary}
\end{figure}

\subsection{$\mathbb{Z}_2$ fish toric code}
\label{sec: example Z2 fish TC}

As a second example, in contrast to the $\mathbb{Z}_2$ standard toric code, we demonstrate the necessity of \textbf{topological order completion} (Theorem~\ref{thm: add other boundary terms to satisfy the TO condition}) in this case. The bulk stabilizers, boundary gauge operators, and boundary string operators of the $\mathbb{Z}_2$ fish toric code for the smooth boundary are shown in Fig.~\ref{fig: Modified_toric_code_boundaries} in Sec.~\ref{sec: Physical intuition}, while the rough boundary is depicted in Fig.~\ref{fig: Modified_toric_code_boundaries_rough}.

Since the $\mathbb{Z}_2$ fish toric code is equivalent to the $\mathbb{Z}_2$ standard toric code conjugated by a finite-depth Clifford circuit, it does not alter the bulk topological data of the standard toric code. Therefore, according to the bulk-boundary correspondence, the topological properties of the boundary anyons remain the same as in Eq.~\eqref{eq: standard_TC_spin_B}.
Consequently, there are two types of boundary constructions and six types of defect constructions. Following the procedure outlined in the previous section, the boundary constructions of the fish toric code are depicted in Fig.~\ref{fig: Lagrangian_subgroup_fish_toric_code_Z2_boundary}. In addition to the short boundary string operators along the boundary, other boundary gauge operators (highlighted in purple) are included in the boundary Hamiltonian.
For the smooth boundary, we can add the boundary gauge operator $\textcircled{\raisebox{-0.9pt}{1}}$ or $\textcircled{\raisebox{-0.9pt}{5}}$ as shown in Fig.~\ref{fig: Modified_toric_code_secondary_boundaries}. For the rough boundary, we can add a boundary gauge operator $\textcircled{\raisebox{-0.9pt}{2}}$ or $\textcircled{\raisebox{-0.9pt}{3}}$ as shown in Fig.~\ref{fig: Modified_toric_code_secondary_boundaries_rough}. These boundary gauge operators can be selected arbitrarily as long as the topological order condition is satisfied, and the choice does not affect the condensation properties (Theorem~\ref{thm: add other boundary terms to satisfy the TO condition}).

Finally, the explicit defect constructions are illustrated in Fig.~\ref{fig: Lagrangian_subgroup_defect_fish_TC} in Appendix~\ref{app: explicit construction}. Additional defect gauge operators must be incorporated into the defect Hamiltonian, similar to the boundary case. In each instance of condensation, the defect Hamiltonian satisfies the topological order condition.

\subsection{$\ZZ_4$ toric code}
\label{sec: example Z4 TC}

\begin{figure}[h]
    \centering
    \includegraphics[width=0.55\textwidth]{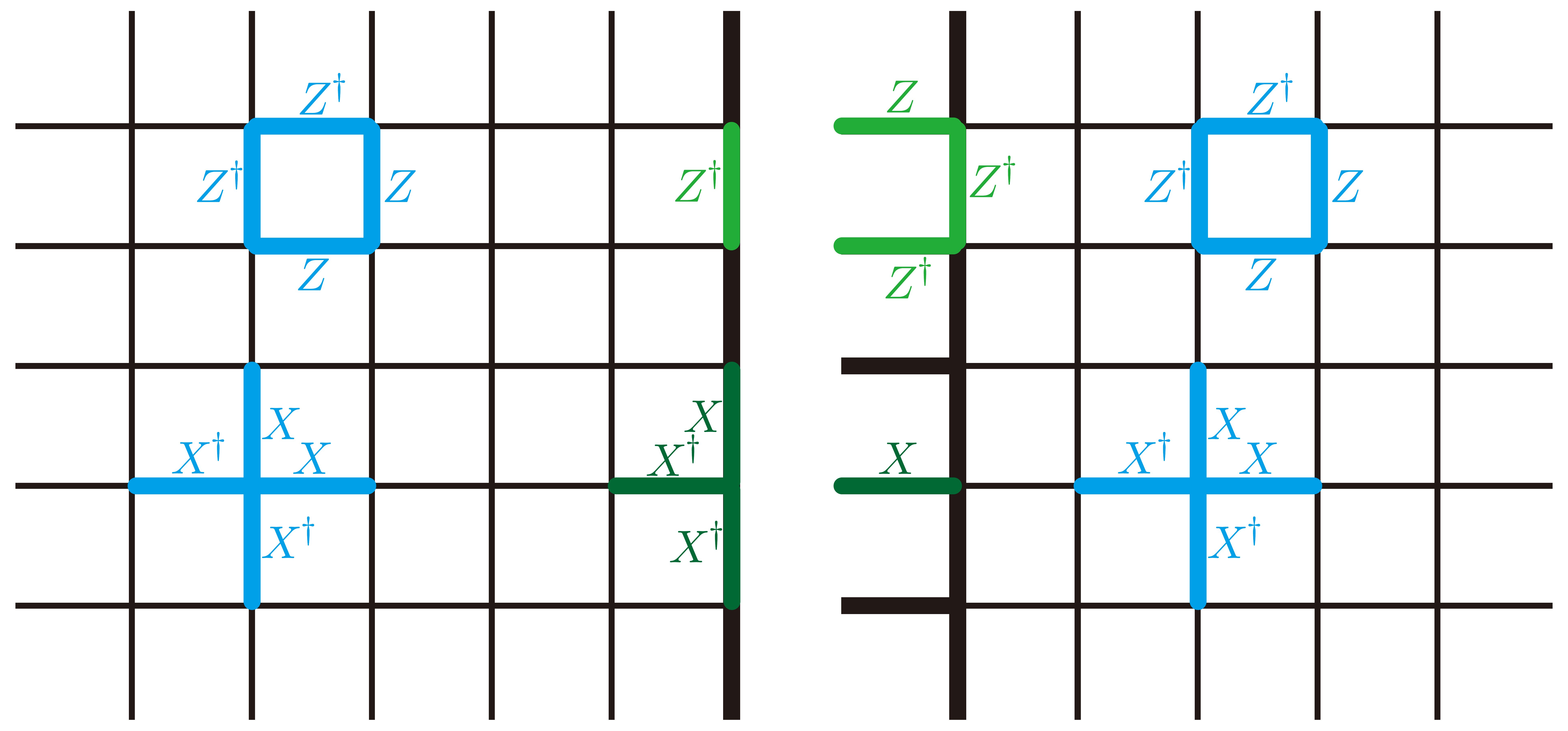}
    \caption{The left side illustrates a smooth boundary, while the right side illustrates a rough boundary. The blue components represent the bulk stabilizers of the $\mathbb{Z}_4$ toric code, and the green components represent the short boundary gauge operators, which also serve as the short boundary string operators. The corresponding boundary anyons are labeled from top to bottom as $e_1$ and $m_1$ for the smooth boundary, and $e_2$ and $m_2$ for the rough boundary.}
    \label{fig: Z4_TC_boundary}
\end{figure}

We apply our algorithm to nonprime-dimensional qudits by exploring explicit boundary and defect constructions for the $\mathbb{Z}_4$ toric code, in contrast to the $\mathbb{Z}_2$ case. The boundary gauge operators are shown in Fig.~\ref{fig: Z4_TC_boundary}, which also serve as the short boundary string operators. For clarity, we label the corresponding boundary anyons from the top boundary string operator to the bottom as $e_1$ and $m_1$ for the smooth boundary, and $e_2$ and $m_2$ for the rough boundary. The fusion rules are $e_i^4= m_i^4=1, ~~ \forall i \in \{1, 2\}$, indicating that all have order $4$. Their topological spins (Eq.~\eqref{eq: boundary topological spin computation}) and braiding statistics (Eq.~\eqref{eq: boundary braiding statistics computation}) are given by:
\begin{eqs}
    &\theta(e_j) = \theta(m_j) = 1, \quad B(e_j, m_j) = i, \quad j \in \{1, 2\}, \\
    &B(a_1, a_2) = 1, ~~ \forall a_1 \in \{e_1, m_1\},~ a_2 \in \{e_2, m_2\}.
    \label{eq: Z4 TC spin and braiding}
\end{eqs}
Following the same construction method, Fig.~\ref{fig: Lagrangian_subgroup_toric_code_Z4_boundary} illustrates 3 types of boundary constructions, while 22 types of defect constructions are presented in Figs.~\ref{fig: Lagrangian_subgroup_toric_code_Z4_defect1} and \ref{fig: Lagrangian_subgroup_toric_code_Z4_defect2} in Appendix~\ref{app: explicit construction}. The Lagrangian subgroups are labeled individually in each of these figures.

\begin{figure*}[htb]
    \centering
    \subfigure[$\{e_1\}$-condensed]{\includegraphics[width=0.2\linewidth]{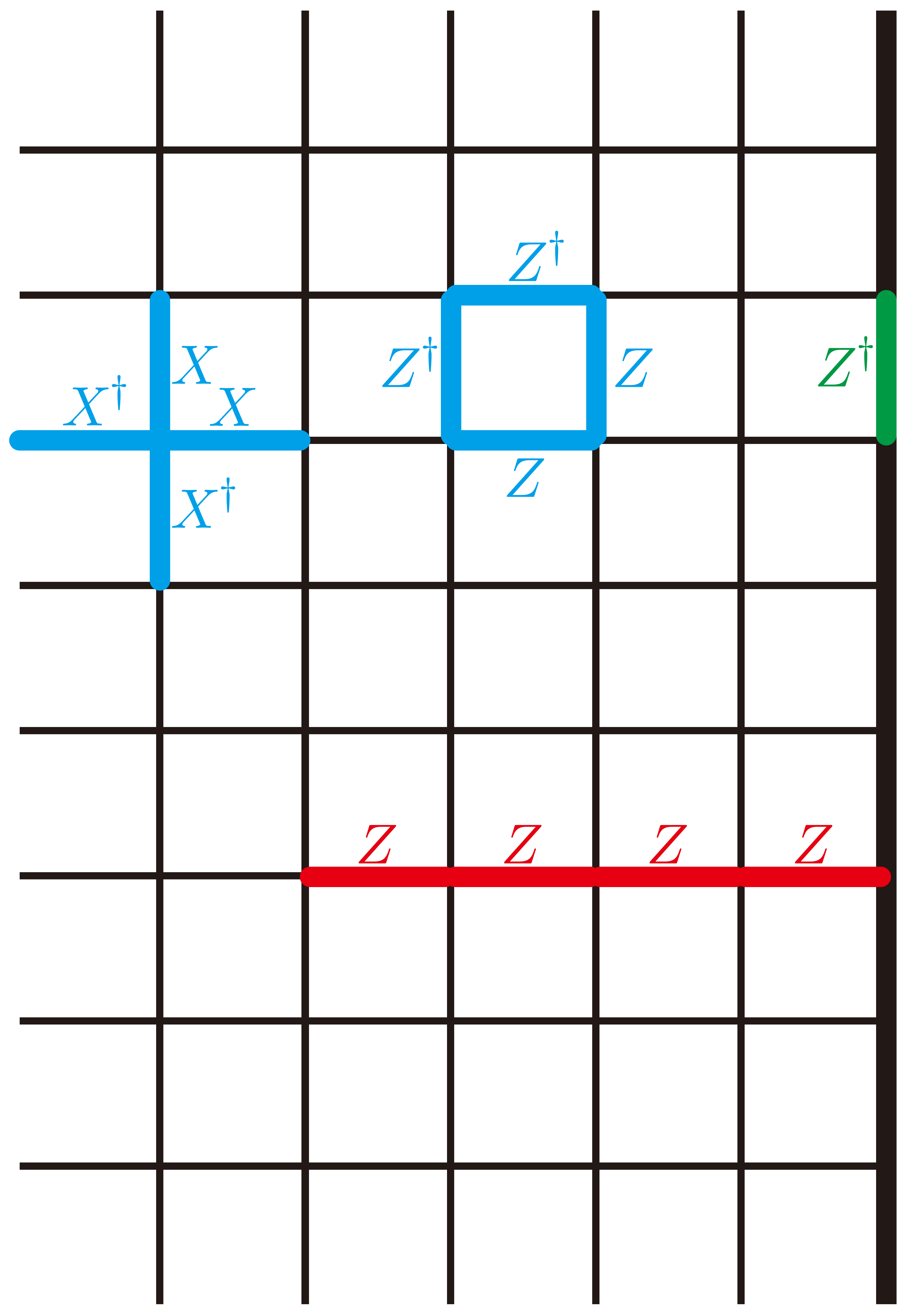}}
    \hspace{0.3cm}
    \subfigure[$\{m_1\}$-condensed]{\includegraphics[width=0.213\linewidth]{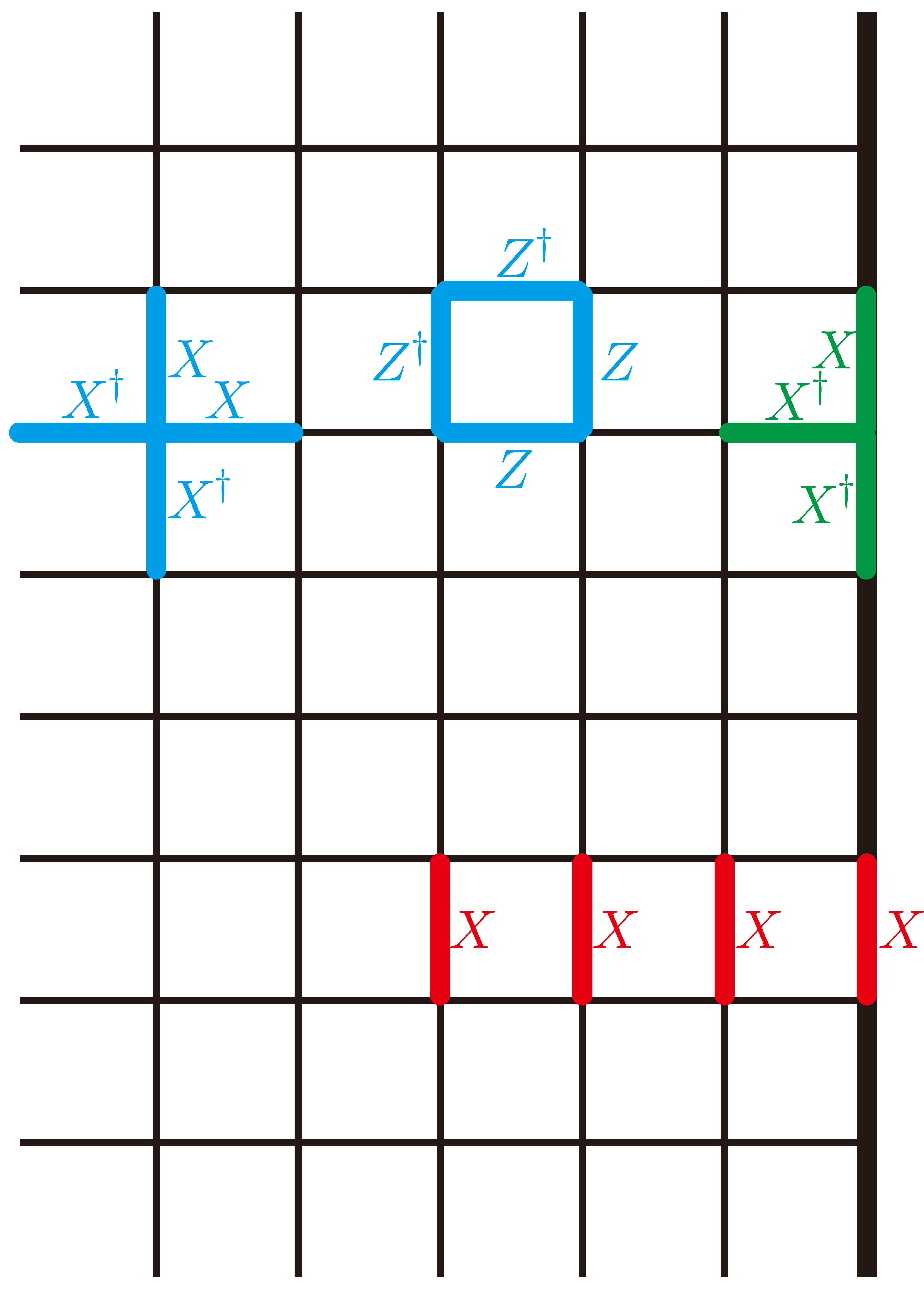}}
    \hspace{0.3cm}
    \subfigure[$\{e_1^2, m_1^2\}$-condensed]{\includegraphics[width=0.213\linewidth]{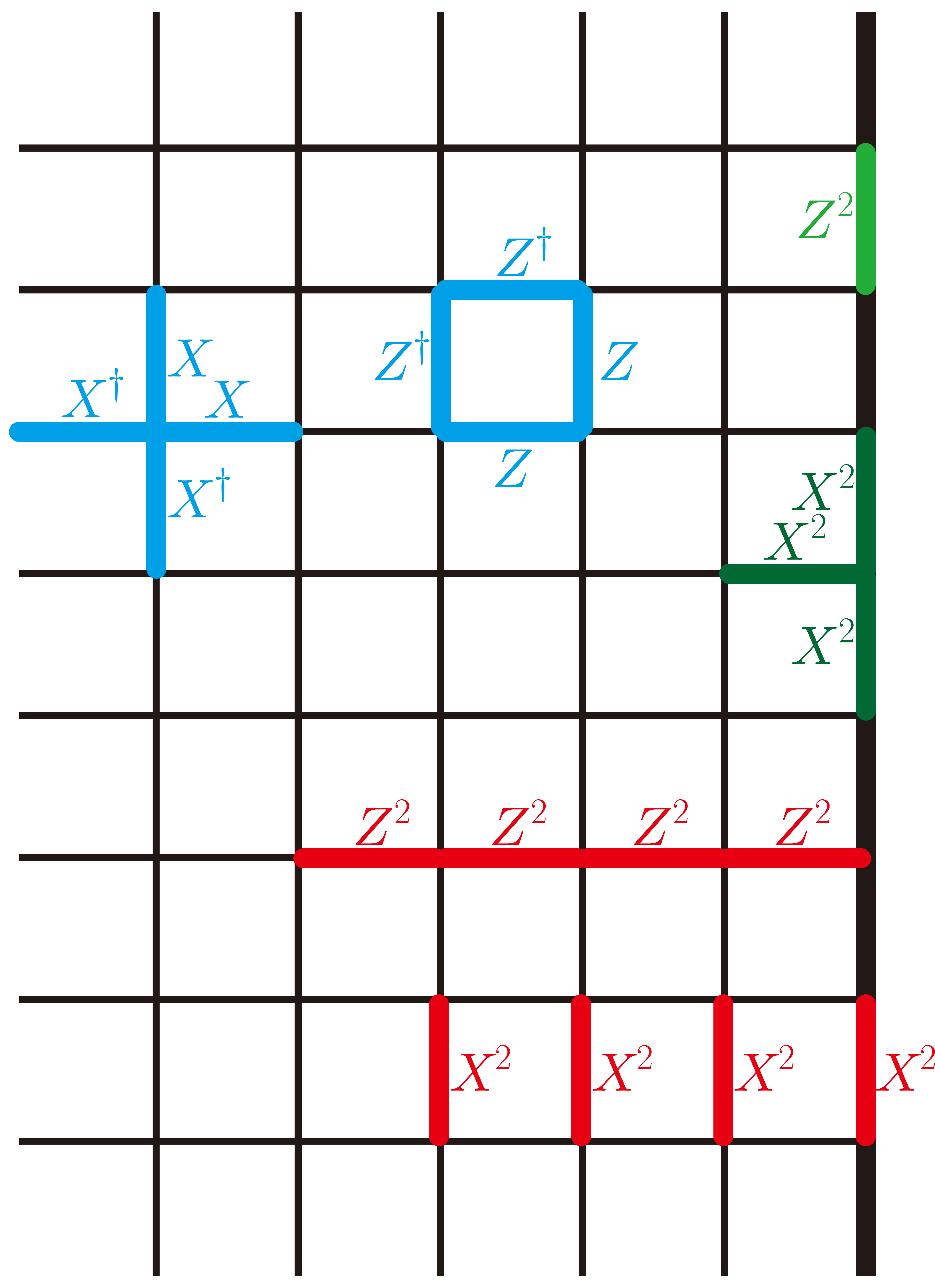}}\\
    \subfigure[$\{e_2\}$-condensed]{\includegraphics[width=0.2\linewidth]{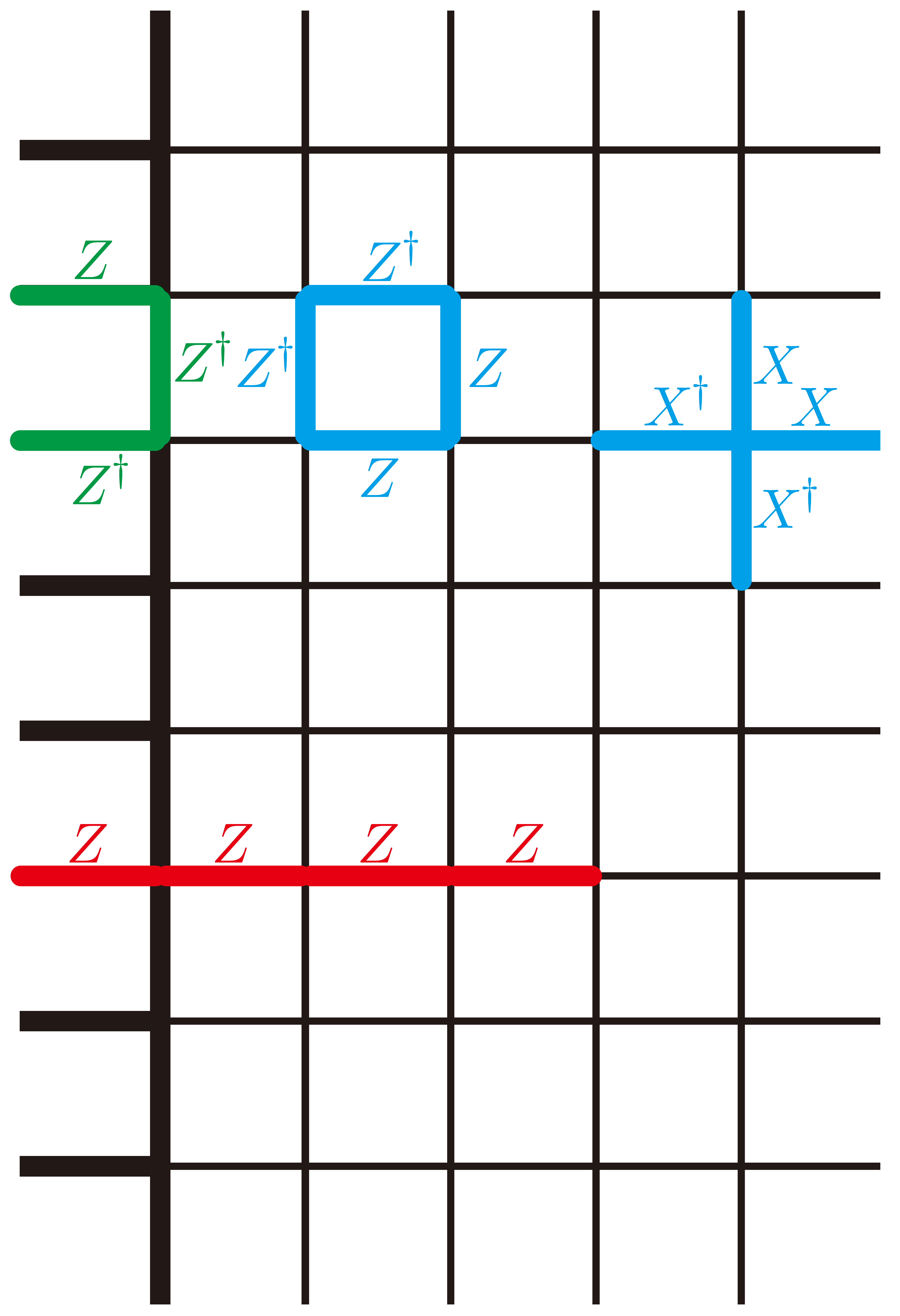}}
    \hspace{0.3cm}
    \subfigure[$\{m_2\}$-condensed]{\includegraphics[width=0.2\linewidth]{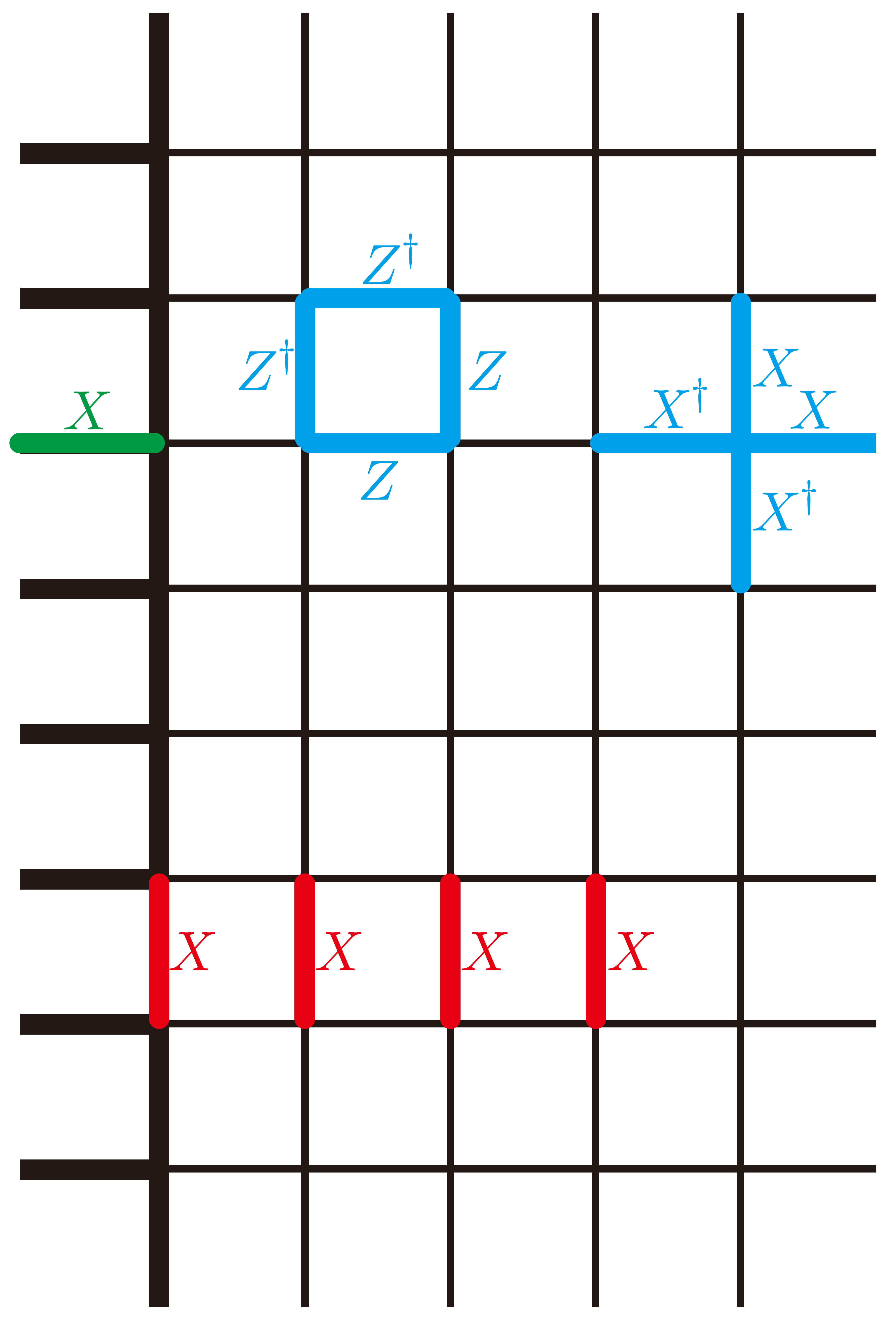}}
    \hspace{0.3cm}
    \subfigure[$\{e_2^2, m_2^2\}$-condensed]{\includegraphics[width=0.2\linewidth]{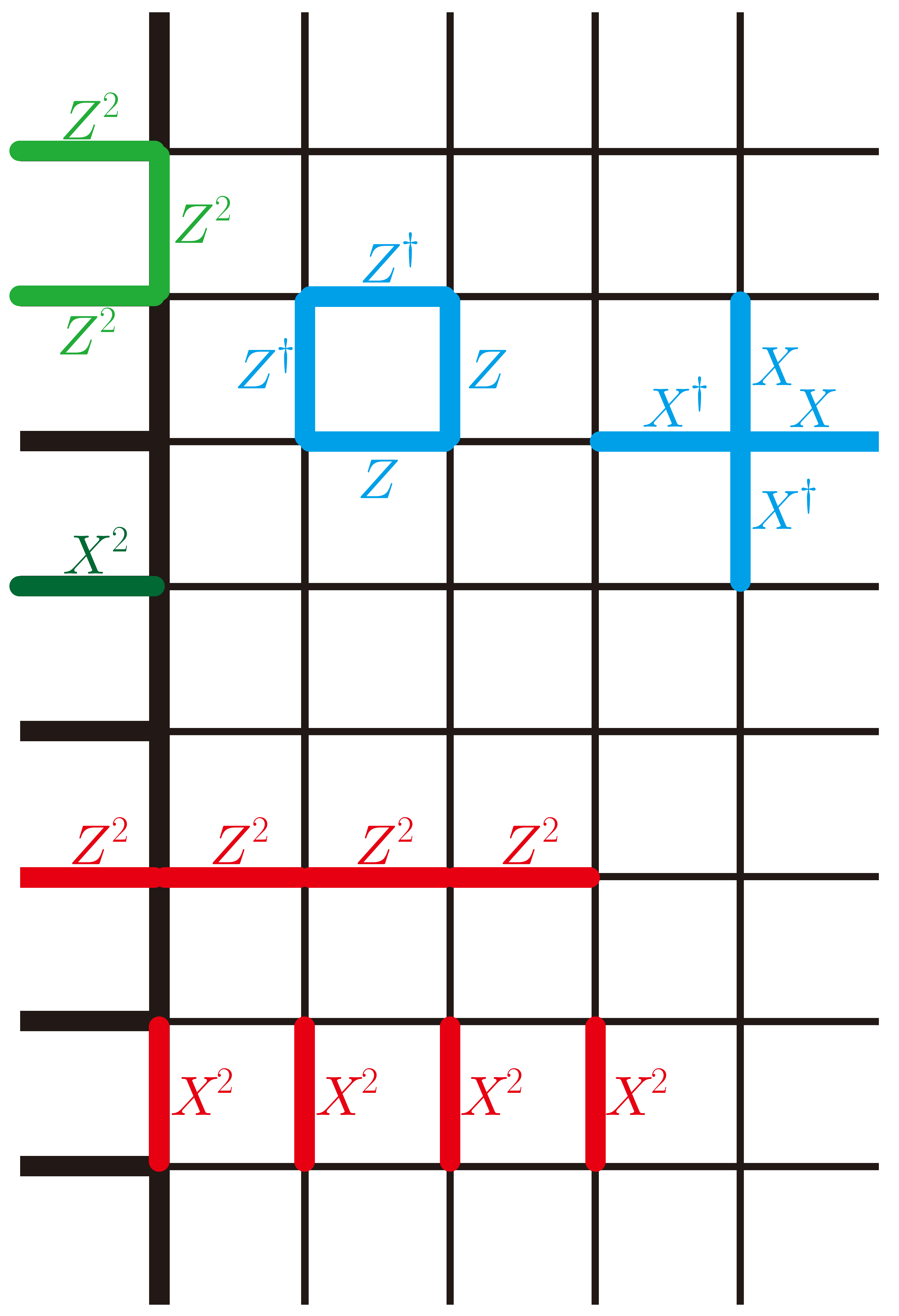}}
    \caption{The boundaries of $\mathbb{Z}_4$ toric code. Blue components represent the bulk stabilizers of the $\mathbb{Z}_4$ toric code, green components represent the boundary Hamiltonian, and red components represent bulk strings that terminate at the boundary without causing energy violations.}
    \label{fig: Lagrangian_subgroup_toric_code_Z4_boundary}
\end{figure*}

\subsection{Double semion code}
\label{sec: example double semion}

\begin{figure}[h]
    \centering
    \includegraphics[width=0.5\linewidth]{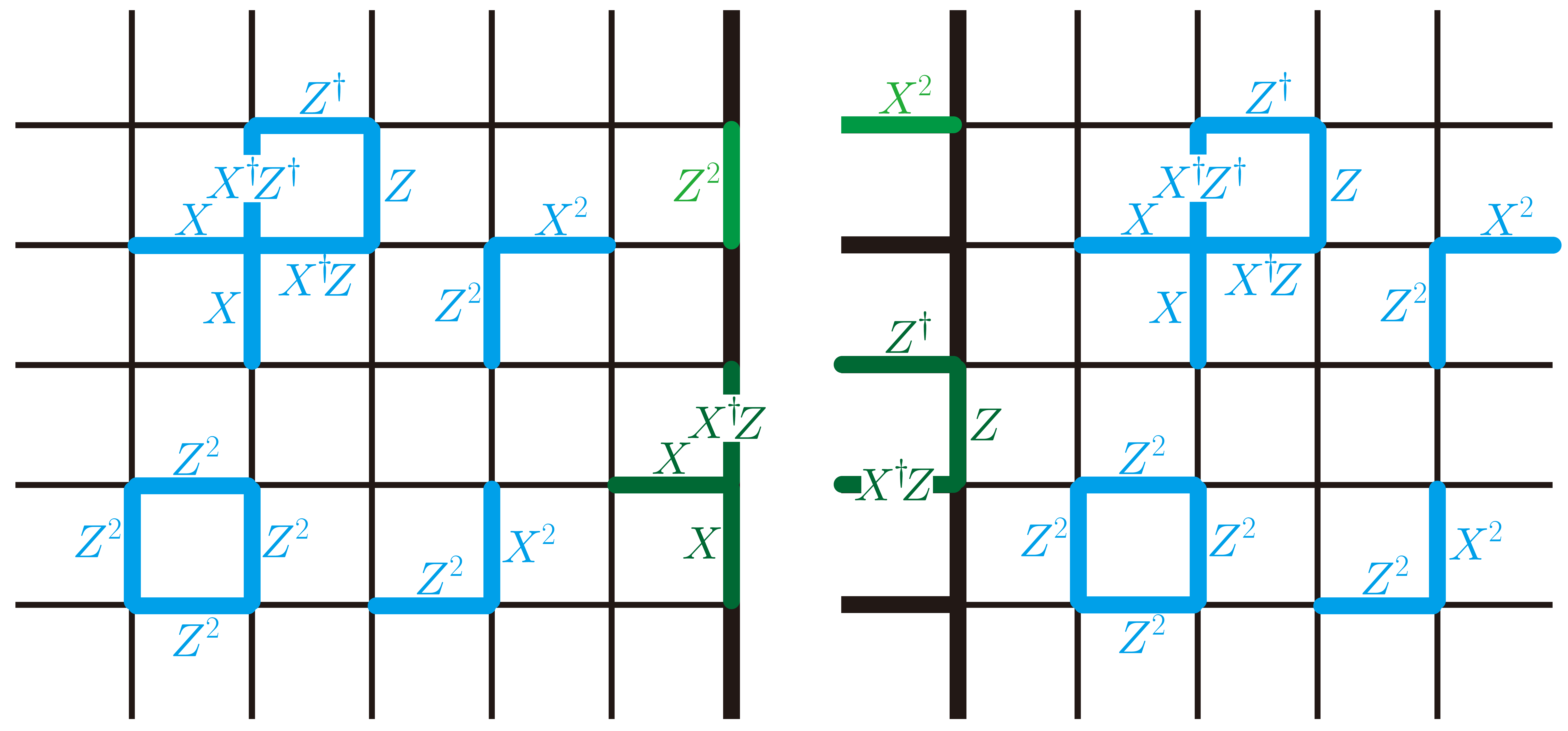}
    \caption{The left side illustrates a smooth boundary, while the right side illustrates a rough boundary. The blue components represent the bulk stabilizers of the double semion code with $\ZZ_4$ qudits, and the green components represent the short boundary gauge operators, which also serve as the short boundary string operators. The corresponding boundary anyons are labeled from top to bottom as $b_1$ and $s_1$ for the smooth boundary, and $b_2$ and $s_2$ for the rough boundary.}
    \label{fig: double_semion_boundary}
\end{figure}

We study boundary and defect constructions for the double semion code with $\mathbb{Z}_4$ qudits as another example of nonprime-dimensional qudits. The boundary gauge operators are illustrated in Fig.~\ref{fig: double_semion_boundary}, which also serve as the short boundary string operators. For clarity, we label the corresponding boundary anyons from the top to the bottom as $b_1$ and $s_1$ for the smooth boundary, and $b_2$ and $s_2$ for the rough boundary. The fusion rules are $b_i^2= s_i^2=1, ~~ \forall i \in \{1, 2\}$, indicating that all have order $2$. Their topological spins (Eq.~\eqref{eq: boundary topological spin computation}) and braiding statistics (Eq.~\eqref{eq: boundary braiding statistics computation}) are given by:
\begin{eqs}
    &\theta(b_j) = 1, ~~ \theta(s_j) = i, ~~ B(b_j, s_j) = -1, ~~ j \in \{1, 2\}, \\
    &B(a_1, a_2) = 1, ~~ \forall a_1 \in \{b_1, s_1\},~ a_2 \in \{b_2, s_2\}.\nonumber
\end{eqs}
where $s_i$ is a semion and $b_i$ is a boson. Thus, we can only condense $b_i$ at both the smooth and rough boundaries, as shown in Fig.~\ref{fig: Lagrangian_subgroup_double_semion_boundary} ($s_i$ cannot be condensed since its boundary string operators do not commute with themselves). Furthermore, there are 2 types of defect construction, as illustrated in Fig.~\ref{fig: Lagrangian_subgroup_double_semion_defect}: condensing $\{b_1, b_2\}$ and condensing $\{b_1b_2, \bar{s}_1s_2\}$. 

\begin{figure}[htb]
    \centering
    \subfigure[$\{b_1\}$-condensed]{\includegraphics[width=0.25\textwidth]{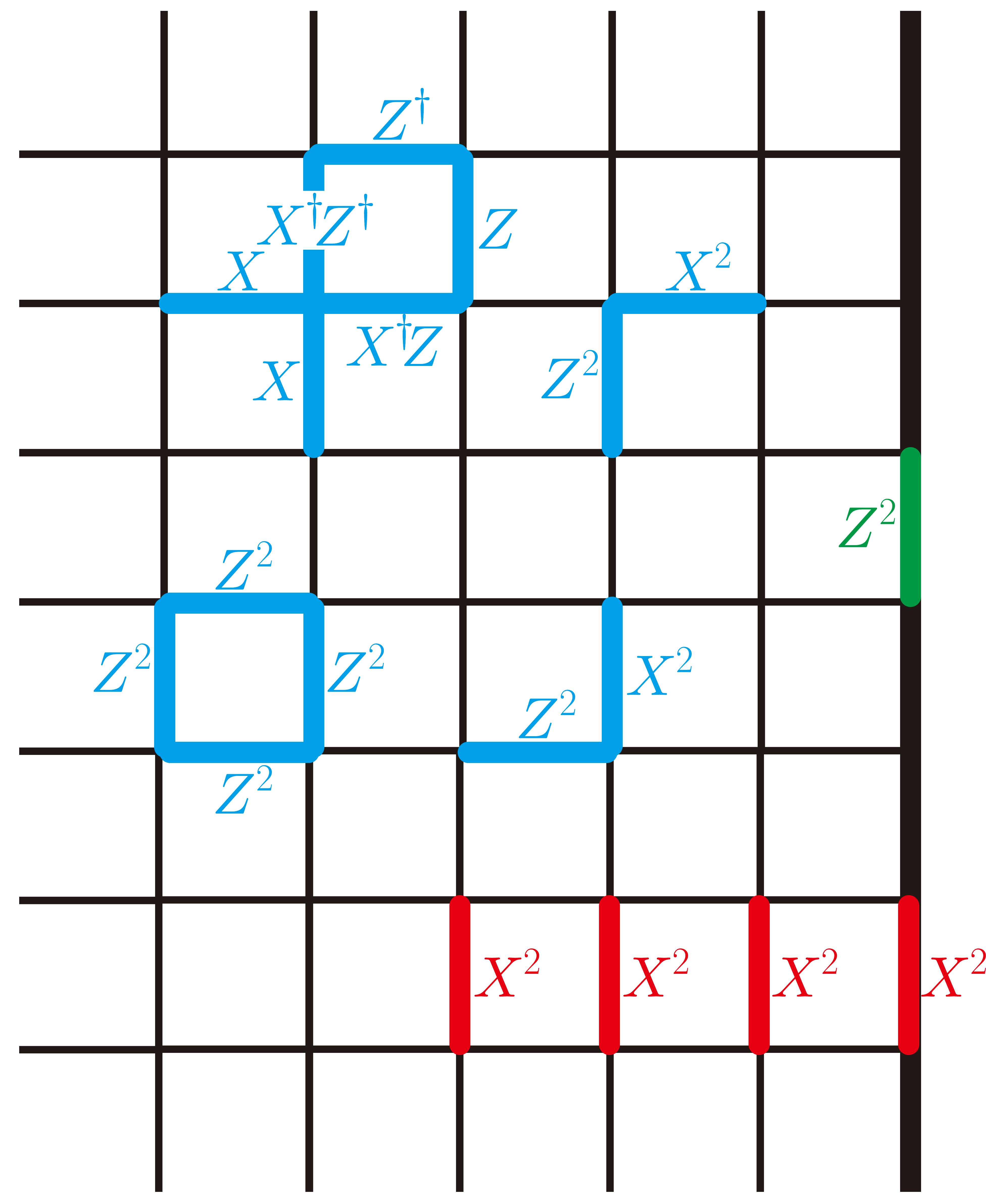}}
    \hspace{0.3cm}
    \subfigure[$\{b_2\}$-condensed]{\includegraphics[width=0.24\textwidth]{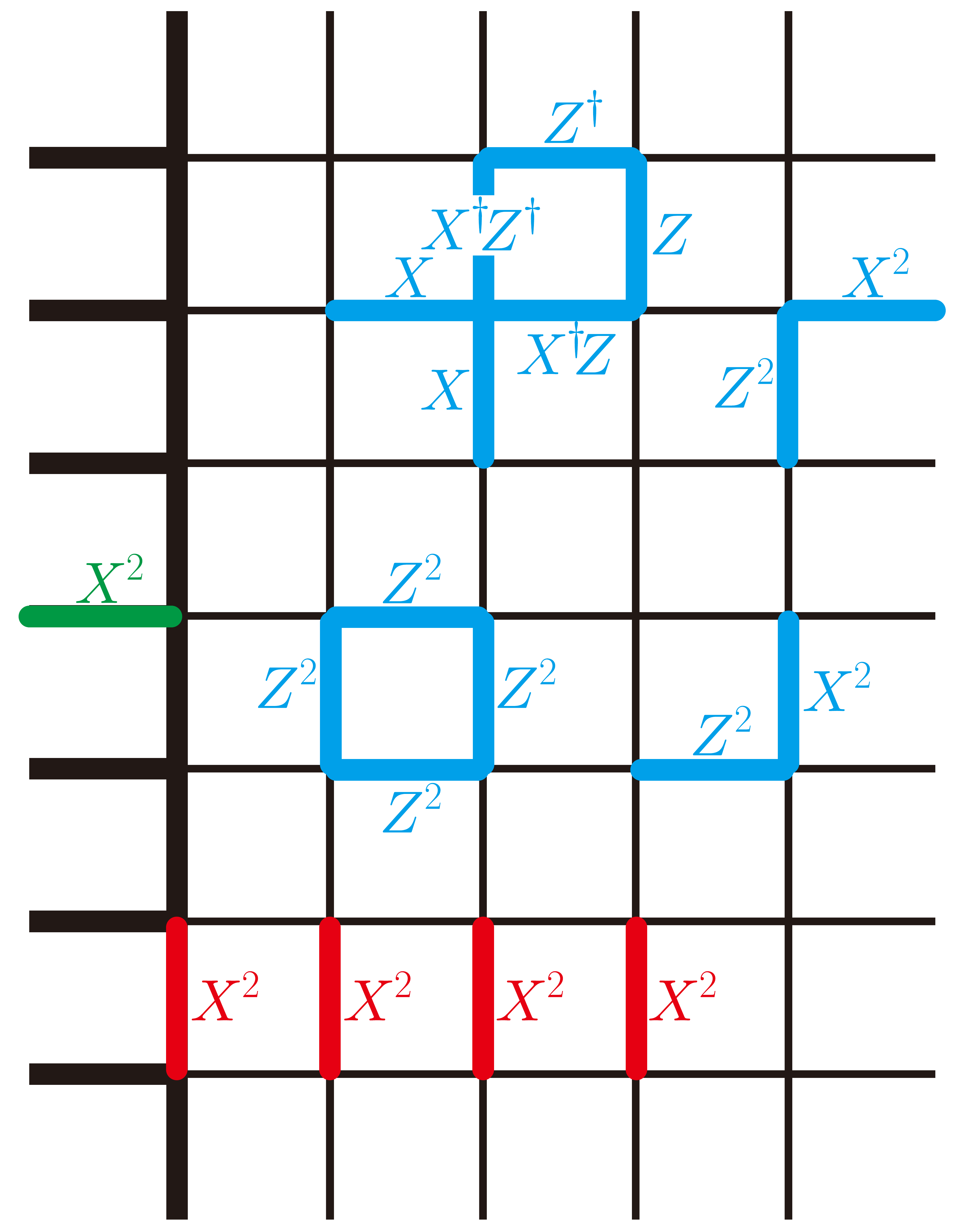}}
    \caption{The boundaries of $\ZZ_4$ double semion. Blue components represent the bulk stabilizers of the $\ZZ_4$ double semion, green components represent the boundary Hamiltonian, and red components represent bulk strings that terminate at the boundary without causing energy violations.}
    \label{fig: Lagrangian_subgroup_double_semion_boundary}
\end{figure}

Interestingly, we can consider orientation-reversing defects in the double semion code, where one side of the semi-infinite plane is flipped upside-down. This setup is equivalent to placing quantum codes on unorientable manifolds, such as a Klein bottle, which can lead to the emergence of additional logical gates \cite{Kobayashi2024CrossCap}.
Although the higher-group symmetry structure of these defects has been studied at the field theory level \cite{maissam2023codimension, Maissam2024Higher}, an explicit lattice construction remains to be demonstrated. Fig.~\ref{fig: Lagrangian_subgroup_double_semion_defect_orientation} illustrates the defect constructions, where the double semion code on the left is flipped upside-down relative to the one on the right.

\begin{figure*}[htb]
    \centering
    \subfigure[$\{b_1,b_2\}$-condensed]{\includegraphics[width=0.4\textwidth]{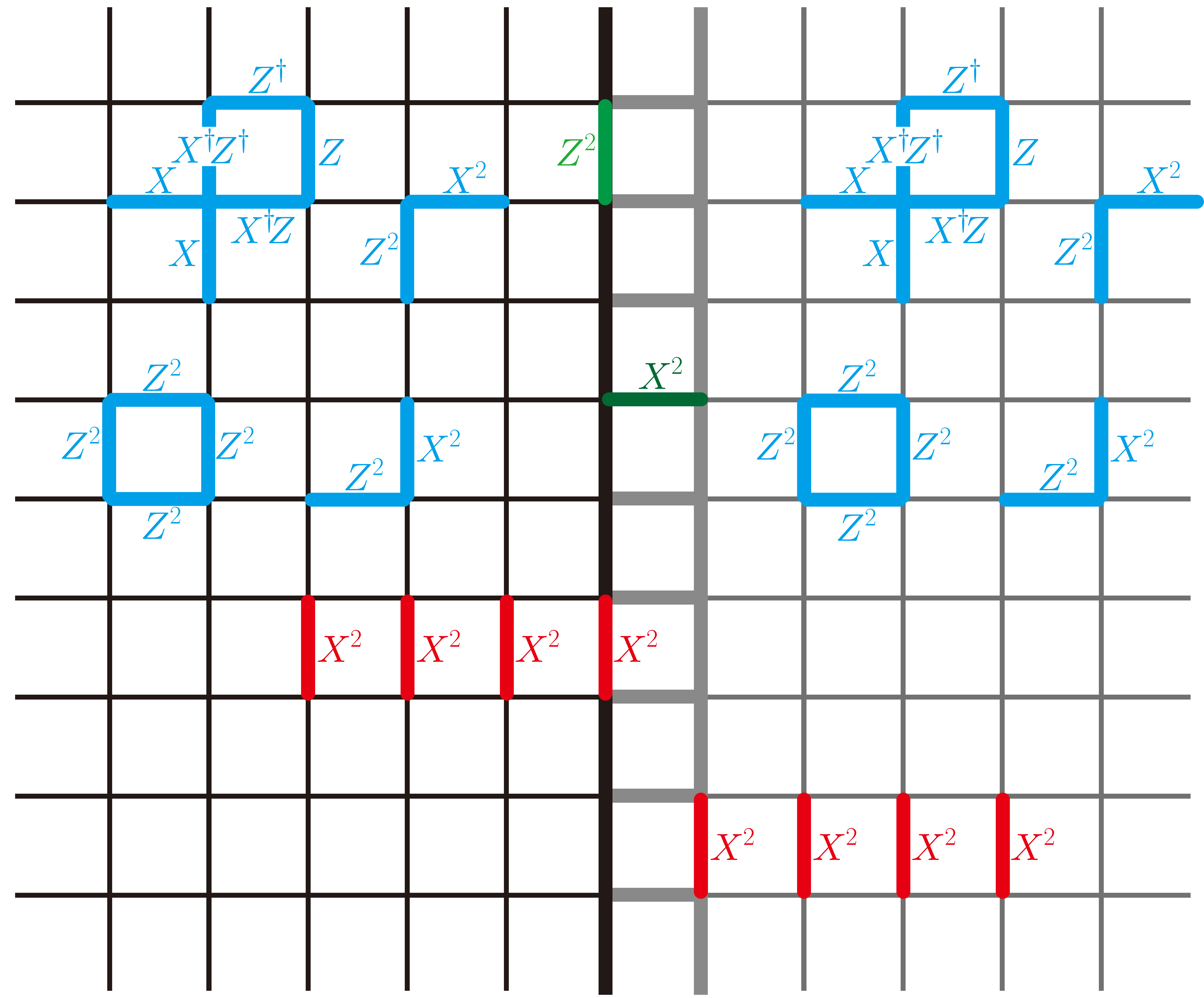}}
    \hspace{0.4cm}
    \subfigure[$\{b_1b_2,\bar{s}_1s_2\}$-condensed ]{\includegraphics[width=0.4\textwidth]{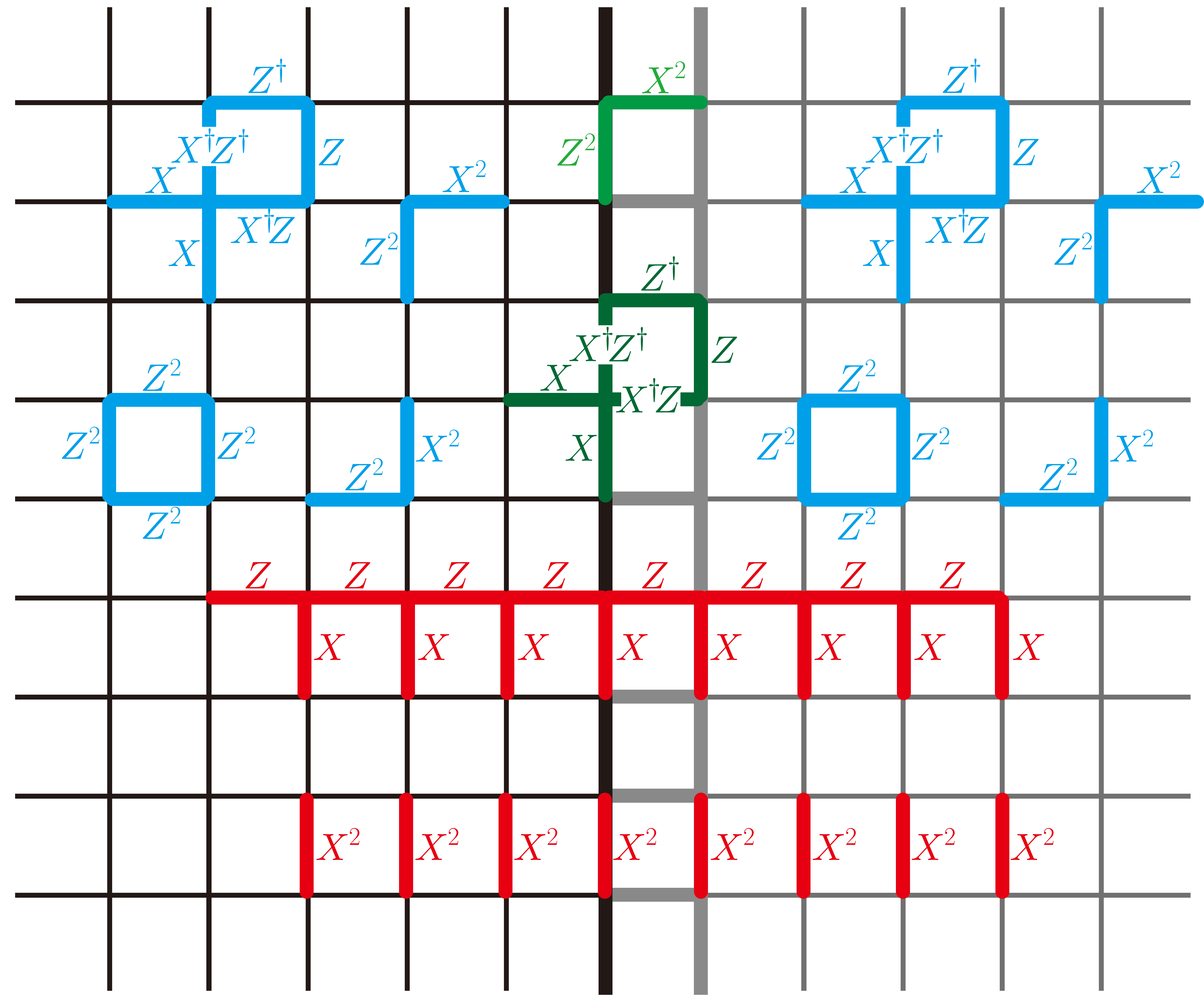}}
    \caption{The defects of the $\ZZ_4$ double semion code. Blue components indicate bulk stabilizers, green components represent the defect Hamiltonian, and red components show bulk string operators that terminate on or pass through the defect. The red strings commute with the green defect Hamiltonian.}
    \label{fig: Lagrangian_subgroup_double_semion_defect}
\end{figure*}

\begin{figure*}[htb]
    \centering
    \subfigure[$\{b_1,b_2\}$-condensed]{\includegraphics[width=0.4\textwidth]{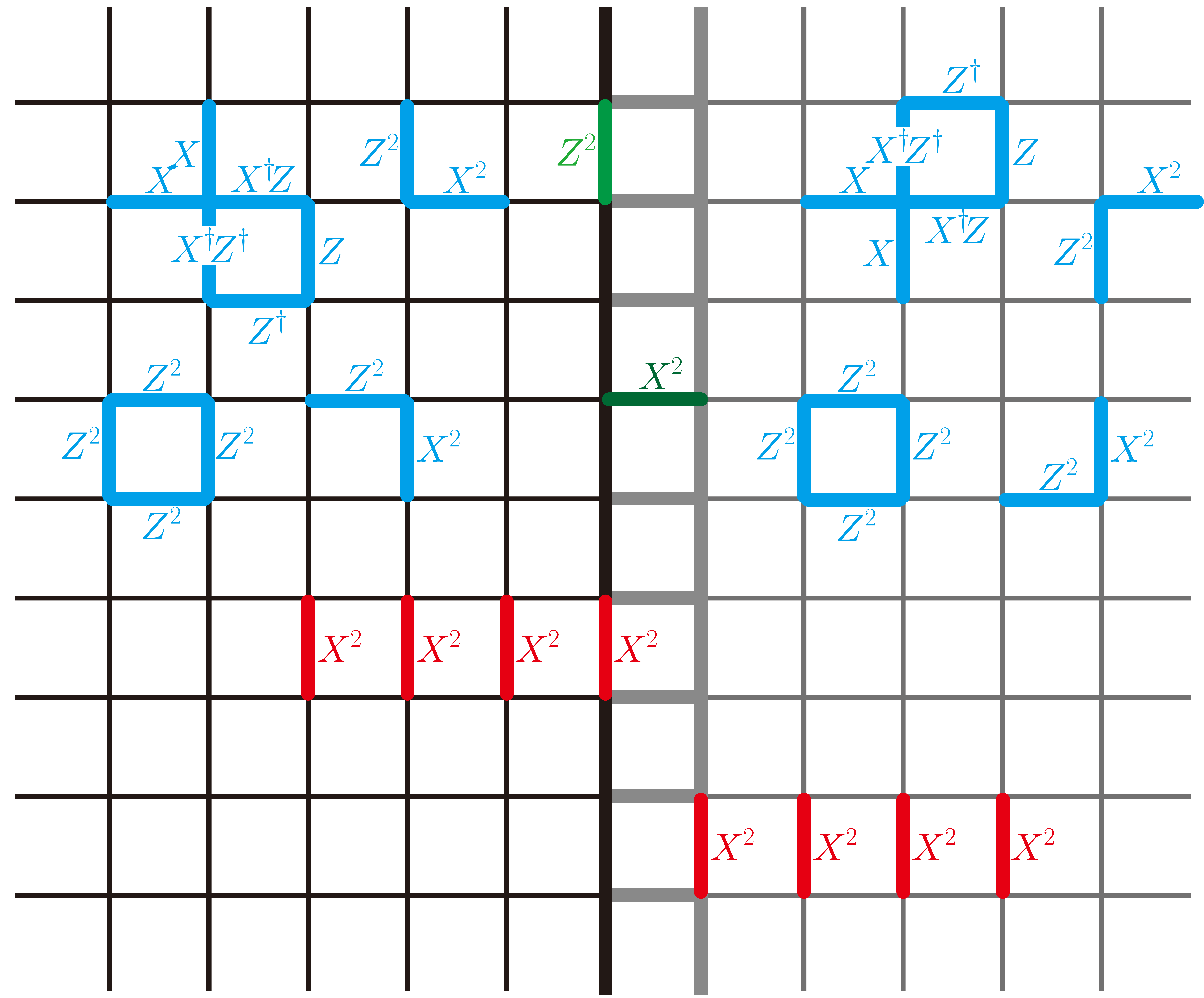}}
    \hspace{0.4cm}
    \subfigure[$\{b_1b_2,\bar{s}_1s_2\}$-condensed ]{\includegraphics[width=0.4\textwidth]{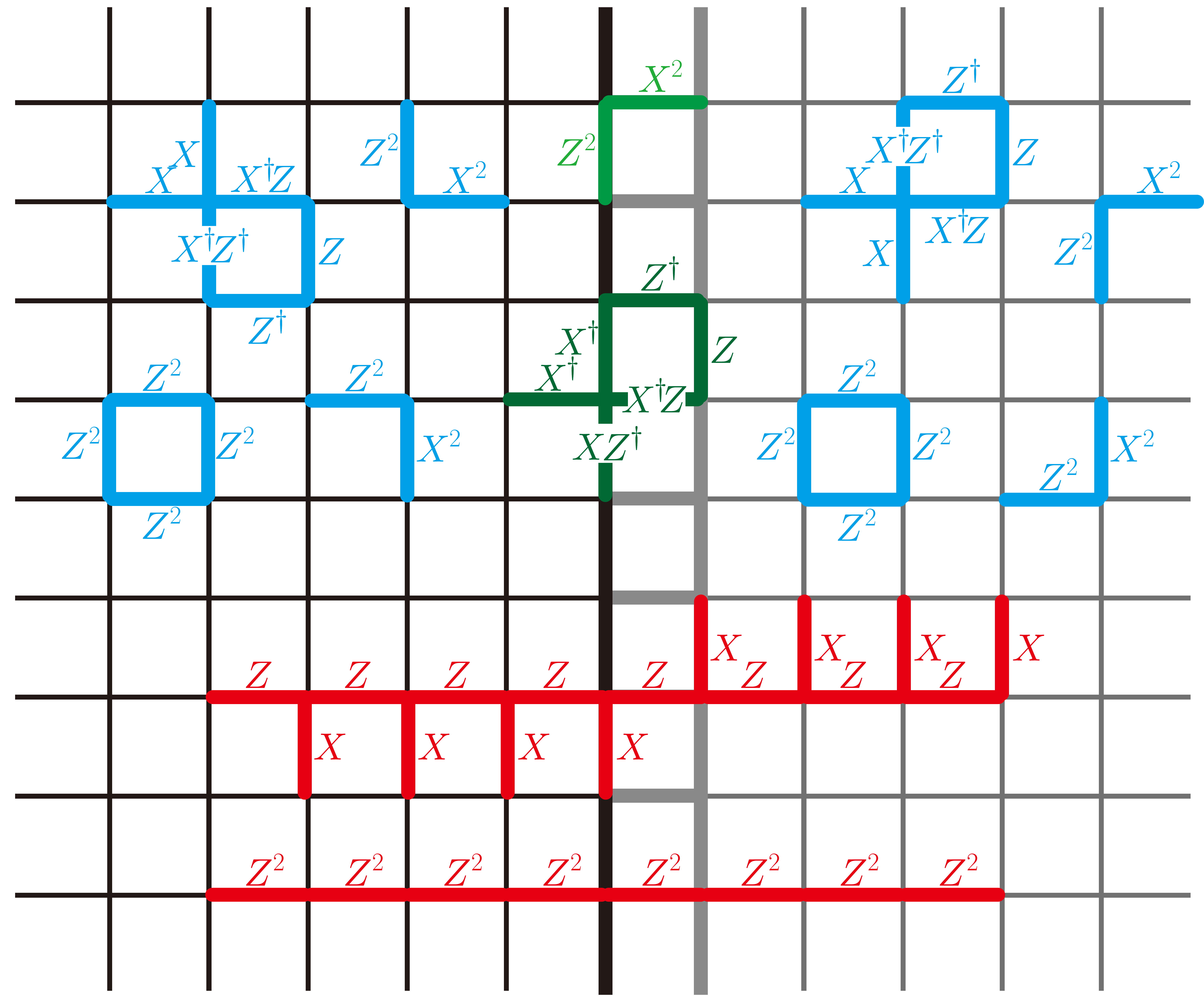}}
    \caption{Orientation-reversing defects in the $\mathbb{Z}_4$ double semion code. The left side shows the orientation-reversed double semion code, flipped upside-down, compared to the double semion code on the right.
    Blue components indicate bulk stabilizers, green components represent the defect Hamiltonian, and red components show bulk string operators that terminate on or pass through the defect. The red strings commute with the green defect Hamiltonian.}
    \label{fig: Lagrangian_subgroup_double_semion_defect_orientation}
\end{figure*}

\subsection{Six-semion code}
\label{sec: example six semion}

\begin{figure}[h]
    \centering
    \includegraphics[width=0.5\textwidth]{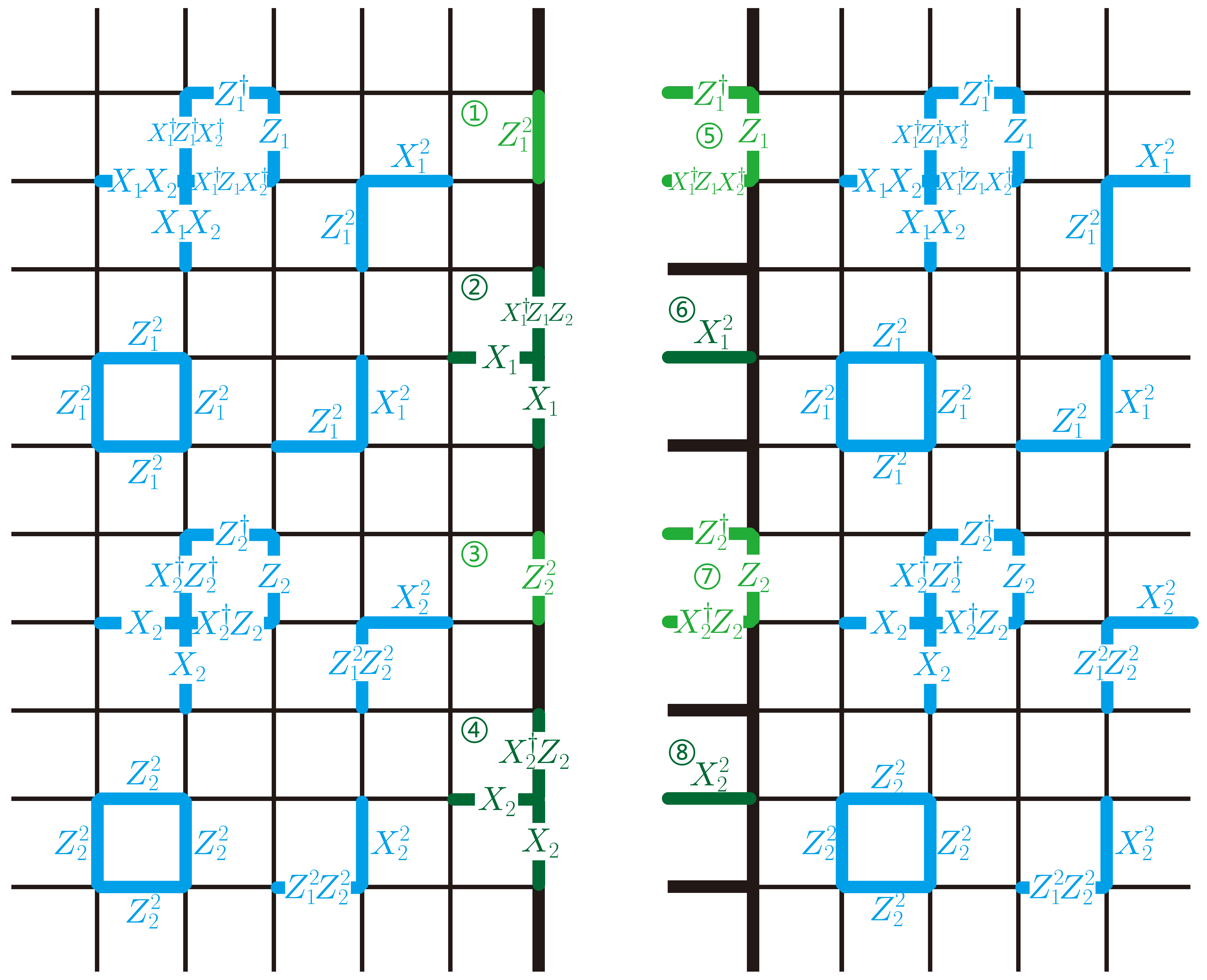}
    \caption{The left side illustrates a smooth boundary, while the right side illustrates a rough boundary. The blue components represent the bulk stabilizers of the $\ZZ_4$ six-semion code, while the green components represent the boundary gauge operators.
    The boundary gauge operators \(\textcircled{\raisebox{-0.9pt}{2}}\) and \(\textcircled{\raisebox{-0.9pt}{4}}\) on the smooth boundary, as well as \(\textcircled{\raisebox{-0.9pt}{5}}\) and \(\textcircled{\raisebox{-0.9pt}{7}}\) on the rough boundary also serve as the short boundary string operators. The corresponding boundary anyons are labeled as $\tilde{e}_1$, $\tilde{m}_1$, $\tilde{e}_2$ and $\tilde{m}_2$.
    }
    \label{fig: six_semion_boundary}
\end{figure}

We present another example involving $\mathbb{Z}_4$ qudits: the six-semion code. The stabilizers and boundary gauge operators are depicted in Fig.~\ref{fig: six_semion_boundary}, where each edge contains two $\mathbb{Z}_4$ qudits labeled by subscripts $1$ and $2$. The boundary gauge operators \(\textcircled{\raisebox{-0.9pt}{2}}\) and \(\textcircled{\raisebox{-0.9pt}{4}}\) on the smooth boundary, as well as \(\textcircled{\raisebox{-0.9pt}{5}}\) and \(\textcircled{\raisebox{-0.9pt}{7}}\) on the rough boundary, also serve as short boundary string operators. We label the corresponding boundary anyons as $\tilde{e}_1$, $\tilde{m}_1$, $\tilde{e}_2$, and $\tilde{m}_2$. The fusion rules are $\tilde{e}_i^4= \tilde{m}_i^4=1, ~~ \forall i \in \{1, 2\}$, indicating that all have order $4$. Their topological spins (Eq.~\eqref{eq: boundary topological spin computation}) and braiding statistics (Eq.~\eqref{eq: boundary braiding statistics computation}) are given by:
\begin{eqs}
    &\theta(\tilde{e}_j) = \theta(\tilde{m}_j) = i, ~~ B(\tilde{e}_j , \tilde{m}_j ) = i, ~~ j \in \{1, 2\}, \\
    &B(a_1 , a_2 ) = 1, ~~ \forall a_1 \in \{\tilde{e}_1, \tilde{m}_1\},~ a_2 \in \{\tilde{e}_2, \tilde{m}_2\}.
    \label{eq: Z4 six semion spin and braiding}
\end{eqs}
Compared to Eq.~\eqref{eq: Z4 TC spin and braiding}, the six-semion code is analogous to the $\mathbb{Z}_4$ toric code, but with bosons $e$ and $m$ replaced by semions $\tilde{e}$ and $\tilde{m}$. In this anyon theory, there are 16 anyons, with 4 bosons ${1, \tilde{e}^2, \tilde{m}^2, \tilde{e}^2\tilde{m}^2}$, 6 semions ${\tilde{e}, \tilde{m}, \tilde{e}^3, \tilde{m}^3, \tilde{e}\tilde{m}^3, \tilde{e}^3\tilde{m}}$, and 6 anti-semions ${\tilde{e}\tilde{m}^2, \tilde{e}^2\tilde{m}, \tilde{e}^3\tilde{m}^2, \tilde{e}^2\tilde{m}^3, \tilde{e}\tilde{m}, \tilde{e}^3\tilde{m}^3}$. There is a single boundary construction, corresponding to the condensation of all bosons in the six-semion code, generated by ${\tilde{e}^2, \tilde{m}^2}$, as shown in Fig.~\ref{fig: Lagrangian_subgroup_double_six_semion_code_boundary}. Additionally, 22 types of defect constructions correspond to the condensation of the following Lagrangian subgroups:
\begin{eqs}
    &\mathcal{L}_{1}=\{\tilde{e}_1^2,\tilde{m}_1^2,\tilde{e}_2^2,\tilde{m}_2^2\},~
    \mathcal{L}_{2}=\{\tilde{e}_1\tilde{m}_1\tilde{e}_2\tilde{m}_2^3,\tilde{m}_1^2\tilde{m}_2^2,\tilde{e}_2^2\tilde{m}_2^2\},~
    \mathcal{L}_3=\{\tilde{e}_1\tilde{e}_2^2\tilde{m}_2^3,\tilde{m}_1\tilde{e}_2\tilde{m}_2\},\\
    &\mathcal{L}_4=\{\tilde{e}_1\tilde{e}_2\tilde{m}_2^2,\tilde{m}_1^2\tilde{m}_2^2,\tilde{e}_2^2\},~
    \mathcal{L}_5=\{\tilde{e}_1\tilde{e}_2^2\tilde{m}_2^3,\tilde{m}_1\tilde{e}_2\tilde{m}_2^2\},~
    \mathcal{L}_6=\{\tilde{e}_1\tilde{m}_1\tilde{m}_2,\tilde{m}_1^2\tilde{e}_2^2,\tilde{m}_2^2\},\\
    &\mathcal{L}_7=\{\tilde{e}_1\tilde{e}_2^3\tilde{m}_2^2,\tilde{m}_1\tilde{e}_2^2\tilde{m}_2\},~
    \mathcal{L}_8=\{\tilde{e}_1^2\tilde{m}_2^2,\tilde{m}_1\tilde{e}_2\tilde{m}_2,\tilde{e}_2^2\tilde{m}_2^2\},~
    \mathcal{L}_9=\{\tilde{e}_1\tilde{e}_2\tilde{m}_2,\tilde{m}_1\tilde{e}_2^2\tilde{m}_2^3\},\\
    &\mathcal{L}_{10}=\{\tilde{e}_1\tilde{e}_2\tilde{m}_2,\tilde{m}_1^2\tilde{m}_2^2,\tilde{e}_2^2\tilde{m}_2^2\},~
    \mathcal{L}_{11}=\{\tilde{e}_1\tilde{e}_2\tilde{m}_2,\tilde{m}_1\tilde{e}_2^3\tilde{m}_2^2\},~
    \mathcal{L}_{12}=\{\tilde{e}_1\tilde{e}_2^3\tilde{m}_2^3,\tilde{m}_1\tilde{e}_2^2\tilde{m}_2\},\\
    &\mathcal{L}_{13}=\{\tilde{e}_1\tilde{e}_2\tilde{m}_2^2,\tilde{m}_1\tilde{e}_2^3\tilde{m}_2^3\},~
    \mathcal{L}_{14}=\{\tilde{e}_1\tilde{e}_2\tilde{m}_2^2,\tilde{m}_1\tilde{e}_2^2\tilde{m}_2^3\},~
    \mathcal{L}_{15}=\{\tilde{e}_1^2\tilde{m}_2^2,\tilde{m}_1\tilde{e}_2\tilde{m}_2^2,\tilde{e}_2^2\},\\
    &\mathcal{L}_{16}=\{\tilde{e}_1\tilde{e}_2^3\tilde{m}_2^2,\tilde{m}_1\tilde{e}_2\tilde{m}_2\},~
    \mathcal{L}_{17}=\{\tilde{e}_1\tilde{e}_2^2\tilde{m}_2,\tilde{m}_1^2\tilde{e}_2^2,\tilde{m}_2^2\},~
    \mathcal{L}_{18}=\{\tilde{e}_1\tilde{e}_2^3\tilde{m}_2^3,\tilde{m}_1\tilde{e}_2\tilde{m}_2^2\},\\
    &\mathcal{L}_{19}=\{\tilde{e}_1\tilde{m}_1\tilde{e}_2,\tilde{m}_1^2\tilde{m}_2^2,\tilde{e}_2^2\},~
    \mathcal{L}_{20}=\{\tilde{e}_1\tilde{e}_2^2\tilde{m}_2,\tilde{m}_1\tilde{e}_2^3\tilde{m}_2^2\},~
    \mathcal{L}_{21}=\{\tilde{e}_1^2\tilde{e}_2^2,\tilde{m}_1\tilde{e}_2^2\tilde{m}_2,\tilde{m}_2^2\},\\
    &\mathcal{L}_{22}=\{\tilde{e}_1\tilde{e}_2^2\tilde{m}_2,\tilde{m}_1\tilde{e}_2^3\tilde{m}_2^3\}.
    \nonumber
\end{eqs}
For simplicity, we only list the generators of each Lagrangian subgroup above.
The construction follows the same method in previous sections, so we omit the detailed constructions of all 22 defects. The defect Hamiltonian terms are formed by combining the short boundary string operators of the corresponding bosons in the given Lagrangian subgroup. In this case, we have verified that the topological order completion step is not required for these defect constructions.

Note that these 22 defects correspond to the 22 defects in the $\mathbb{Z}_4$ toric code. This correspondence arises from the fact that anyons in two copies of the six-semion code can be mapped to anyons in two copies of the $\mathbb{Z}_4$ toric code, and vice versa, as follows:
\begin{eqs}
    \begin{cases}
        \tilde{e}_1 &\leftrightarrow e_1 m_1,\\
        \tilde{m}_1 &\leftrightarrow e_1^2 m_1^3 e_2 m_2^3,\\
        \tilde{e}_2 &\leftrightarrow e_2 m_2,\\
        \tilde{m}_2 &\leftrightarrow e_1^3 m_1 e_2 m_2^2,\\
    \end{cases}
    \text{ or }
    \begin{cases}
        \tilde{e}_1 \tilde{m}_1^3 \tilde{e}_2^3 \tilde{m}_2^2 &\leftrightarrow e_1 ,\\
        \tilde{e}_1^3 \tilde{m}_1^2  \tilde{m}_2^3 &\leftrightarrow m_1,\\
        \tilde{m}_1 \tilde{e}_2 \tilde{m}_2^2 &\leftrightarrow e_2 ,\\
        \tilde{e}_1 \tilde{m}_1^2 \tilde{e}_2 \tilde{m}_2 &\leftrightarrow m_2 .\\
    \end{cases}
\end{eqs}
This mapping is consistent with the topological spins and braiding statistics of the $\mathbb{Z}_4$ toric code and the six-semion code in Eqs.~\eqref{eq: Z4 TC spin and braiding} and \eqref{eq: Z4 six semion spin and braiding}.

\begin{figure}[htb]
    \centering
    \subfigure[$\{\tilde{e}_1^2, \tilde{m}_1^2\}$-condensed]{\includegraphics[width=0.21\textwidth]{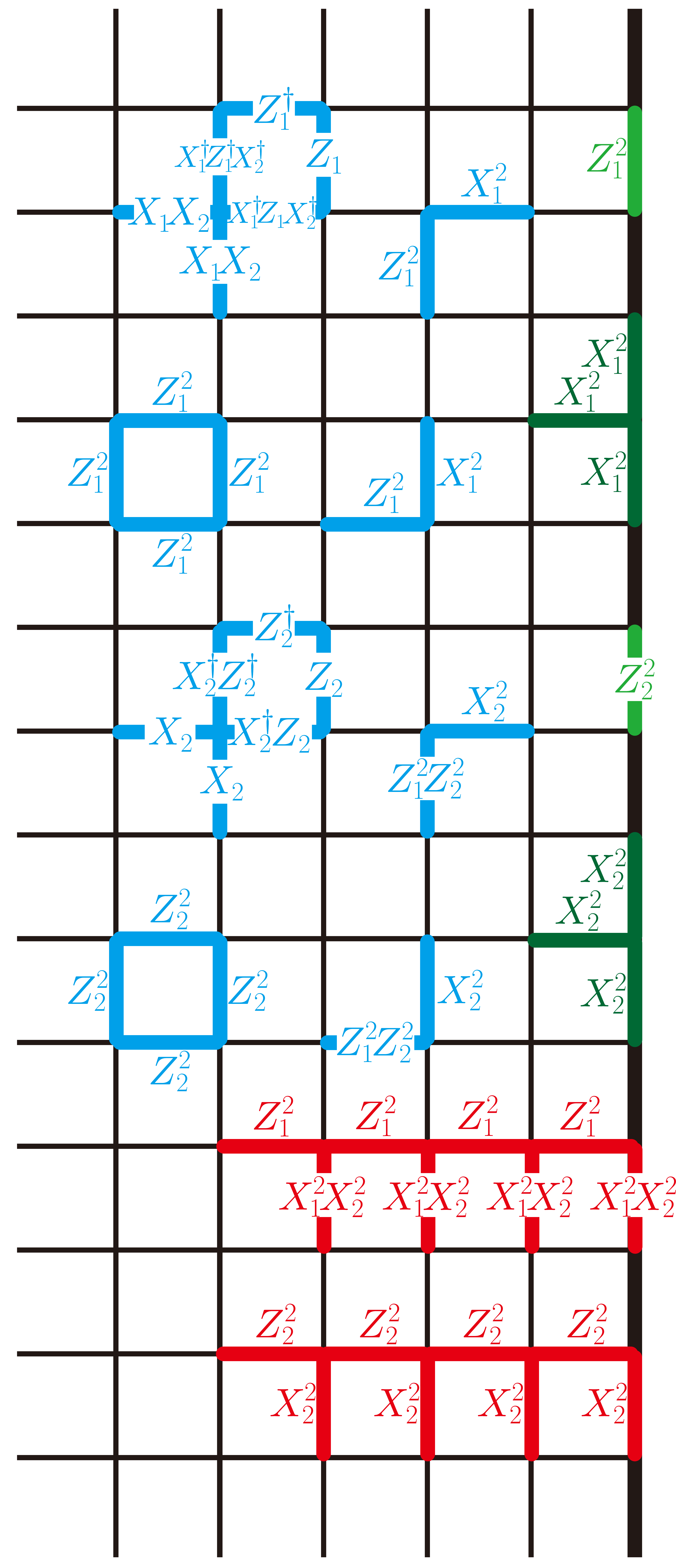}}
    \hspace{0.2cm}
    \subfigure[$\{\tilde{e}_2^2, \tilde{m}_2^2\}$-condensed ]{\includegraphics[width=0.2\textwidth]{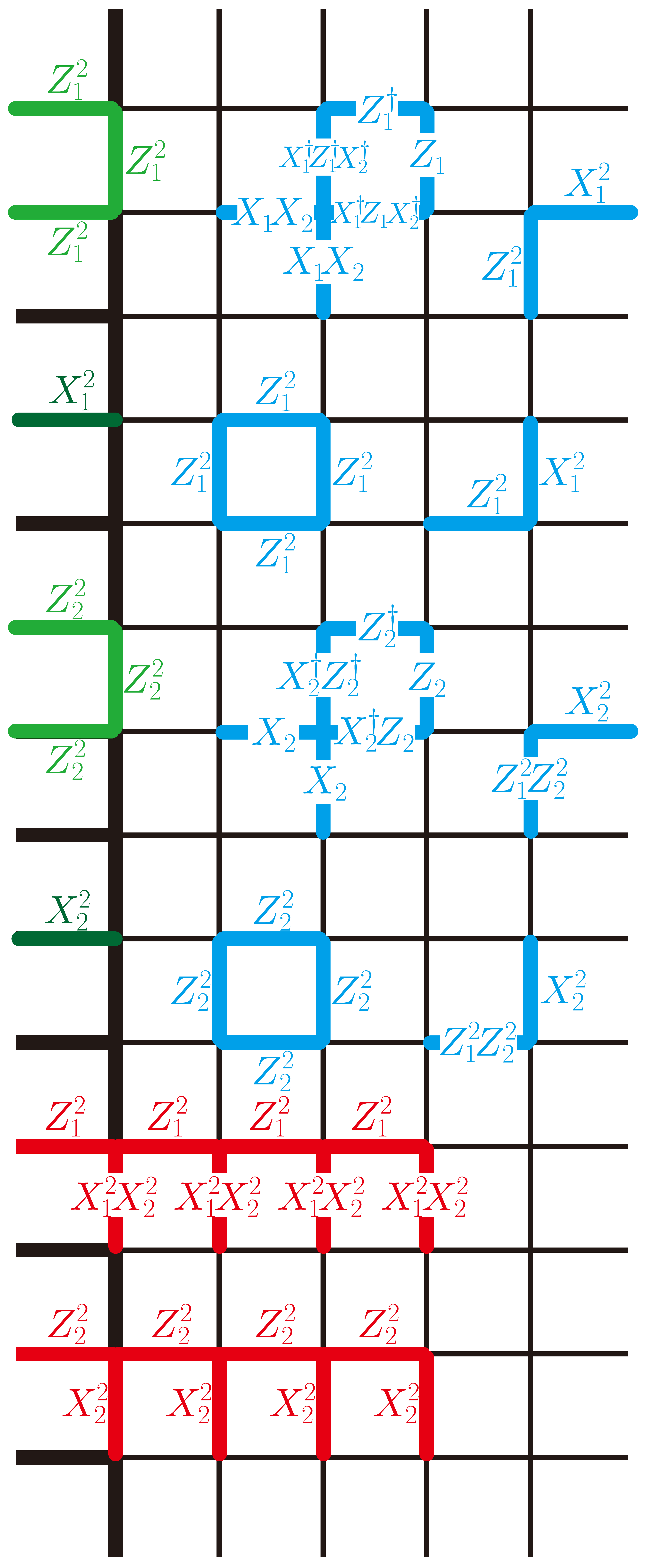}}   
    \caption{The boundaries of the $\ZZ_4$ six-semion code. Blue components represent the bulk stabilizers of the $\ZZ_4$ six-semion code, green components represent the boundary Hamiltonian, and red components represent bulk strings that terminate at the boundary without causing energy violations.}
    \label{fig: Lagrangian_subgroup_double_six_semion_code_boundary}
\end{figure}

\subsection{Color code}
\label{sec: example color code}

\begin{figure}[h]
    \centering
    \includegraphics[width=0.4\textwidth]{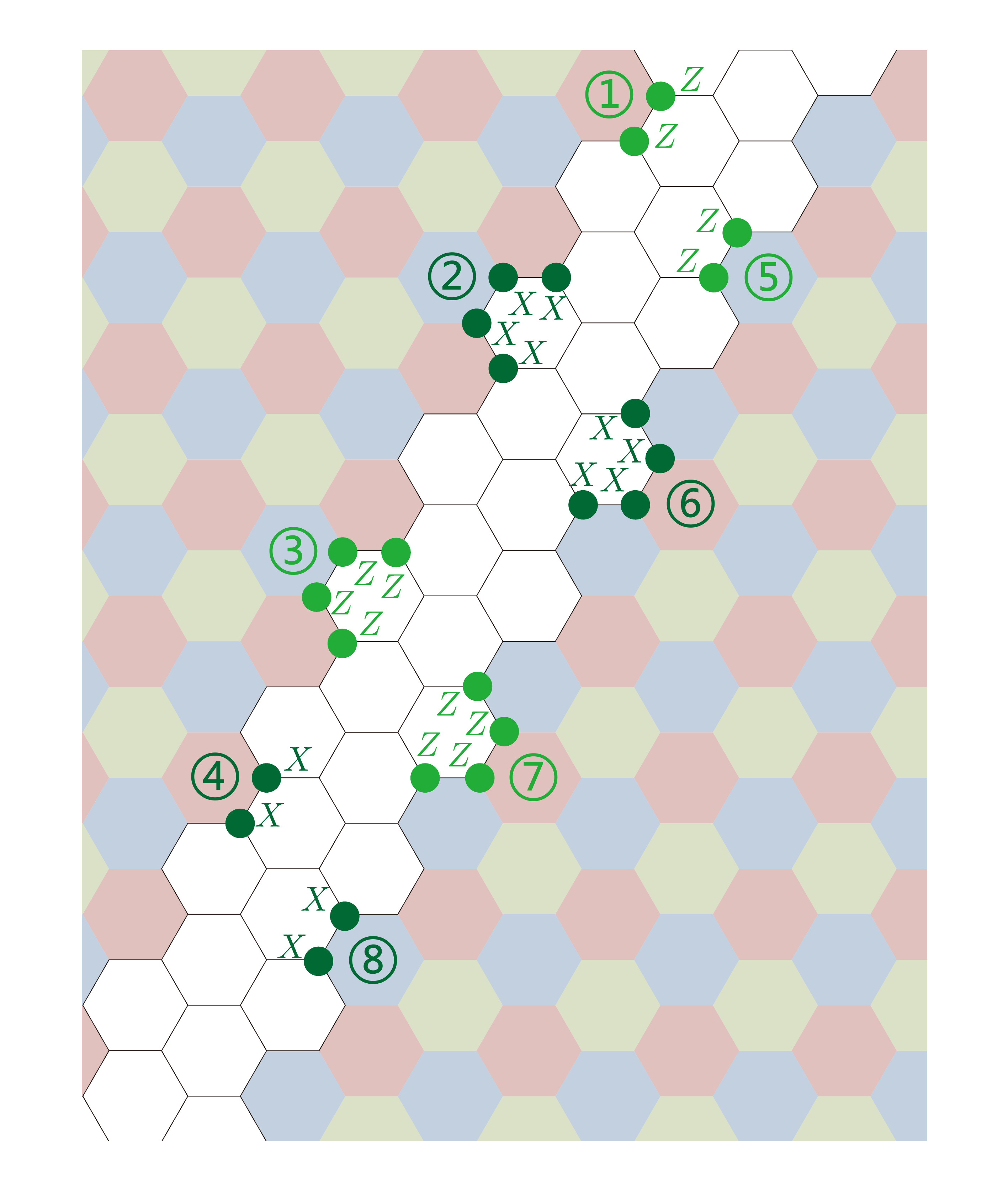}
    \caption{The stabilizers of the color code consist of products of Pauli $X$ or $Z$ operators acting on the vertices of each hexagon. The hexagons are colored red, yellow, and blue to distinguish the different types of anyons. A white region in the center represents a defect, with the boundary gauge operators on both sides depicted in green, exhibiting translational symmetry along the defect line. The boundary gauge operators labeled \(\textcircled{\raisebox{-0.9pt}{1}}\) through \(\textcircled{\raisebox{-0.9pt}{8}}\) also serve as short boundary string operators. The corresponding boundary anyons are labeled $e_1$, $m_1$, $e_2$, $m_2$, $e_3$, $m_3$, $e_4$, and $m_4$, respectively.
    }
    \label{fig: the_boundary_of_color_code}
\end{figure}

In this section, we examine the boundaries and defects of the color code on a honeycomb lattice, where each vertex hosts a single qubit. The honeycomb lattice is naturally embedded into a square lattice, with the vertices of the honeycomb lattice positioned on the edges of the square lattice, following the conventions established in Refs.~\cite{Chen2018Exactbosonization, Chen2023Equivalence, liang2023extracting}. The boundary gauge operators of the color code, which also function as short boundary string operators, are illustrated in Fig.~\ref{fig: the_boundary_of_color_code}. The boundary anyons are labeled as $e_1$, $m_1$, $e_2$, $m_2$ for the left boundary, and $e_3$, $m_3$, $e_4$, $m_4$ for the right boundary. The fusion rules are $e_i^2= m_i^2=1, ~~ \forall i \in \{1, 2, 3, 4\}$, indicating that all have order $2$. 
Their topological spins (Eq.~\eqref{eq: boundary topological spin computation}) and braiding statistics (Eq.~\eqref{eq: boundary braiding statistics computation}) are described by the following relations:
\begin{eqs}
    &\theta(e_i) = \theta(m_i) = 1, \quad B(e_i, m_i) = -1, \\
    &B(e_i, e_j) = B(e_i, m_j) = B(m_i, m_j)=1, \quad \forall i \neq j,
\end{eqs}
where $i, j \in \{ 1, 2, 3, 4\}$. Note that each side corresponds to two copies of the $\mathbb{Z}_2$ toric code, as the color code can be interpreted as the “folded” toric code \cite{kubica2015unfolding}.
The boundaries of the color code exhibit 6 types of boundary anyon condensations, as illustrated in Fig.~\ref{fig: Lagrangian_subgroup_color_code_boundary}. These boundaries are equivalent to defects in the $\mathbb{Z}_2$ toric code by the folding argument. The color code itself allows for 270 distinct defect constructions, which is proved in Appendix~\ref{app: counting Lagrangian}.
A few examples of Lagrangian subgroups for these defect constructions are listed below:
\begin{eqs}
    \mathcal{L}_1 &= \{e_1, e_2, e_3, e_4\}, \\
    \mathcal{L}_2 &= \{e_1, e_2, m_1e_4, e_3m_4\}, \\
    \mathcal{L}_3 &= \{m_1e_3, e_1m_1, m_2m_4, e_2e_4\}, \\ 
    \mathcal{L}_4 &= \{m_1e_3e_4, e_1e_2e_3m_4, m_2e_3e_4, m_1e_3m_4e_4\}, \\
    &~~\vdots
\end{eqs}
The detailed constructions of these Lagrangian subgroups are omitted for brevity. Computational verification confirms that topological order completion is not required for these 270 anyon condensations.

\subsection{Anomalous three-fermion code}
\label{sec: example 3 fermion}

\begin{figure}[htb]
    \centering
    \includegraphics[width=0.35\textwidth]{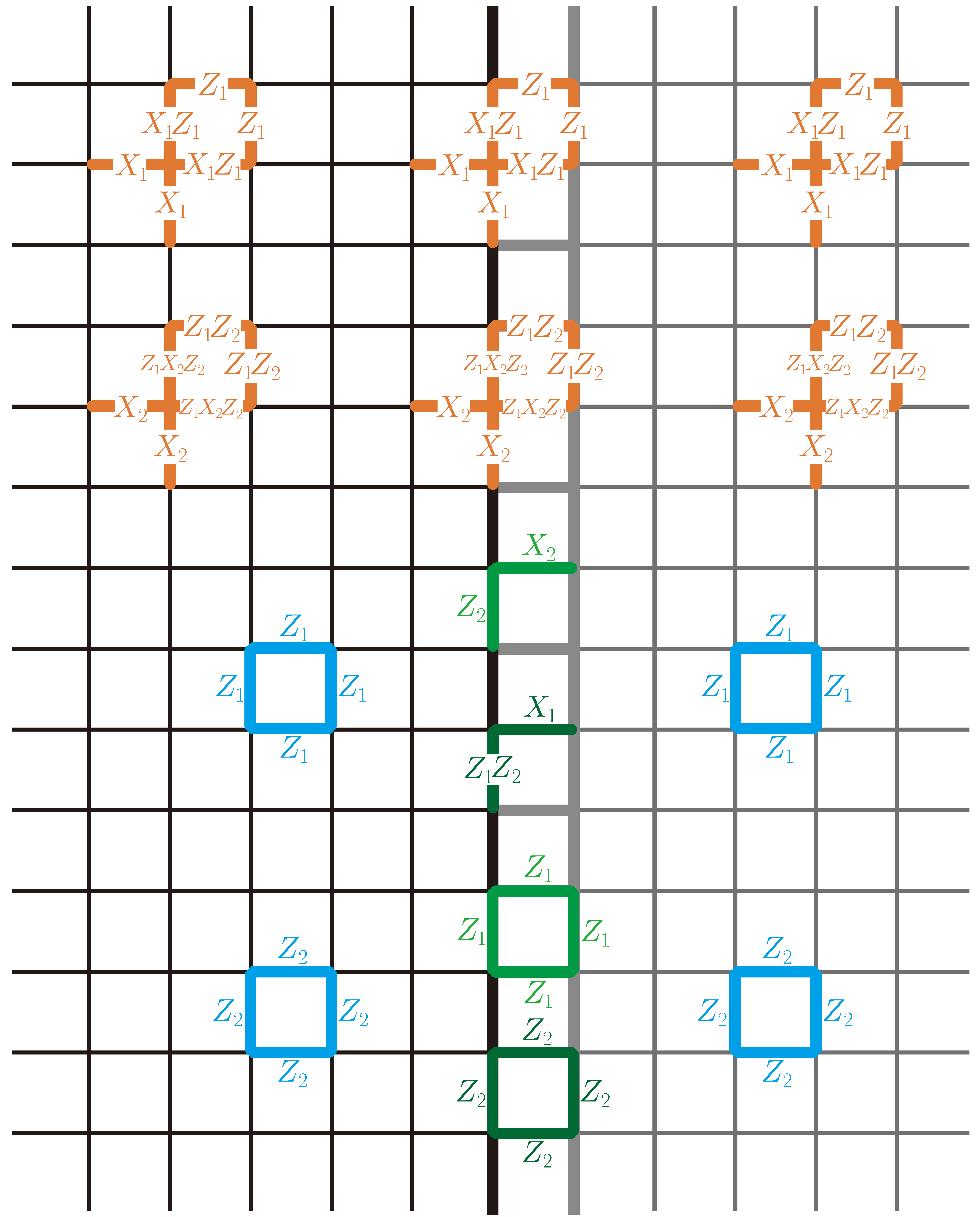}
    \caption{Three-fermion code. The orange components represent the gauge constraints, the blue components represent the stabilizers, and the green components represent the defect gauge operators that commute with all gauge constraints and stabilizers.}
    \label{fig: 3_fermion_code_defect}
\end{figure}

Our algorithm can also be applied to anomalous Pauli stabilizer codes, where the Hilbert space lacks a tensor product structure.
For example, with two $\ZZ_2$ qudits on each edge, consider the three-fermion code:
\begin{equation} 
    \begin{gathered}
    G_1 = \vcenter{\hbox{\includegraphics[scale=.25]{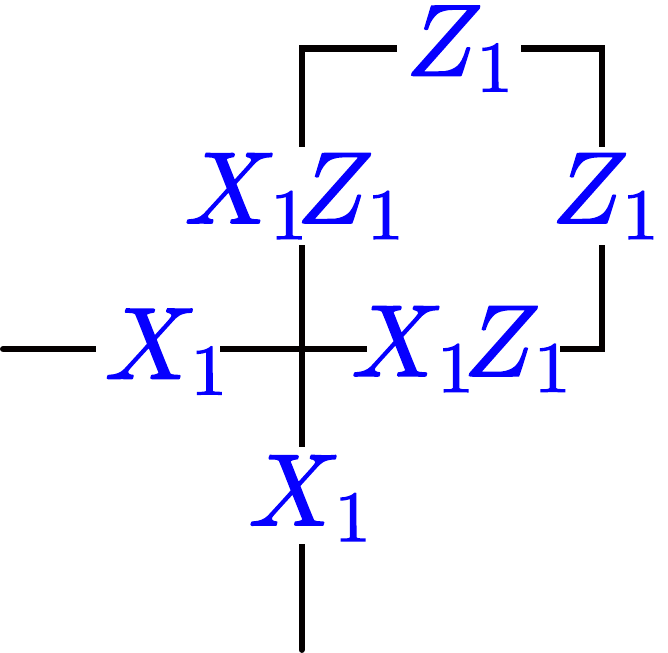}}}=1, \quad
    \mathcal{S}_1 = \vcenter{\hbox{\includegraphics[scale=.25]{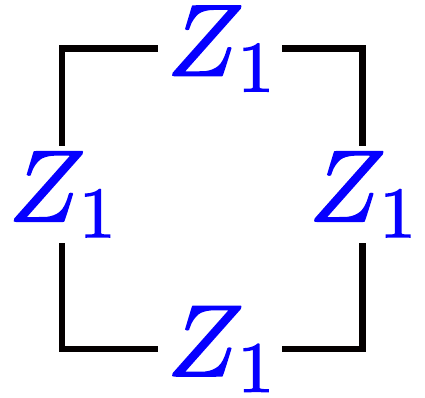}}}, \quad
    G_2 = \vcenter{\hbox{\includegraphics[scale=.25]{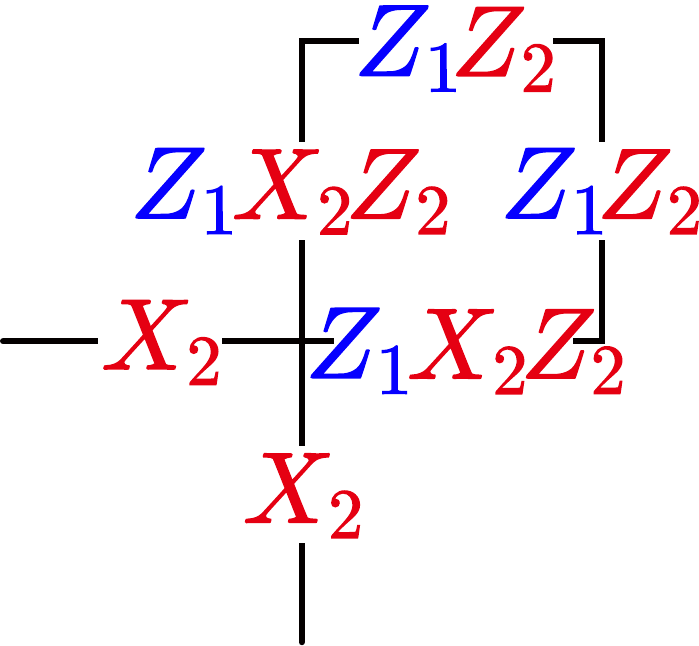}}}=1, \quad
    \mathcal{S}_2 = \vcenter{\hbox{\includegraphics[scale=.25]{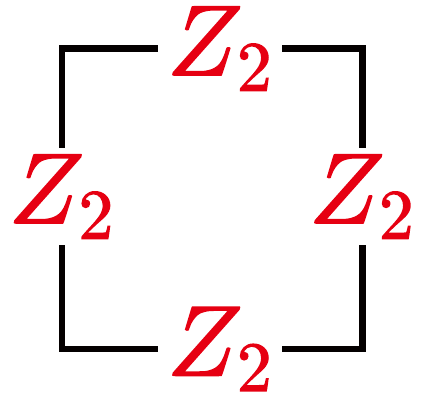}}},
    \end{gathered}
\label{eq: Anomalous three-fermion code}
\end{equation}
where $G_1 = 1$ and $G_2 = 1$ are gauge constraints on the Hilbert space of the three-fermion code, indicating that $G_1$ and $G_2$ cannot be violated, and we only consider operators that commute with them. Under these constraints, anyons are defined as violations of $\mathcal{S}_1$ and $\mathcal{S}_2$, which together form the three-fermion topological order \cite{haah_QCA_23, Chen2023HigherCup, Ellison2023paulitopological, Haah2023InvertibleSubalgebras}.

As this anomalous theory lies at the boundary of a (3+1)D invertible topological phase \cite{chen2023exactly}, it cannot support a further boundary, i.e., the boundary has no boundary. Therefore, we follow the procedure outlined in Sec.~\ref{sec: Getting boundary gauge operators} to derive the defect gauge operators that commute with $G_1$, $G_2$, $S_1$, and $S_2$.
Fig.~\ref{fig: 3_fermion_code_defect} illustrates the gauge constraints $G_1$ and $G_2$, represented in orange, throughout the defect lattice, while the stabilizers $\mathcal{S}_1$ and $\mathcal{S}_2$ are only present away from the defect.
The defect gauge operators are depicted in green. For clarity, the bulk anyons in Fig.~\ref{fig: 3_fermion_code_bulk_string} are labeled as $f^a_1$, $f^b_1$, and $f^c_1$ on the left side, and $f^a_2$, $f^b_2$, and $f^c_2$ on the right side. The fusion rules are $(f_j^a)^2= (f_j^b)^2=1, ~~ \forall j \in \{1, 2\}$, indicating that all have order $2$. 
Their topological spins (Eq.~\eqref{eq: boundary topological spin computation}) and braiding statistics (Eq.~\eqref{eq: boundary braiding statistics computation}) are given by:
\begin{eqs}
    &\theta(f^a_j) = \theta(f^b_j) = -1, ~~ B(f^a_j , f^b_j ) = -1, ~~ j \in \{1, 2\}, \\
    &B(a_1 , a_2 ) = 1, ~~ \forall a_1 \in \{f^a_1, f^b_1 \},~ a_2 \in \{f^a_2, f^b_2\},
    \label{eq: 3 fermion spin and braiding}
\end{eqs}
where we have omitted the data for $f^c_j = f^a_j f^b_j$ as it can be generated from $f^a_j$ and $f^b_j$.
According to the anyon theory, we can enumerate 6 distinct Lagrangian subgroups, with the corresponding defect constructions provided in Fig.~\ref{fig: Lagrangian_subgroup_3_fermion_code_Z2} in Appendix~\ref{app: explicit construction}.

\begin{figure}[h]
    \centering
    \includegraphics[width=0.4\textwidth]{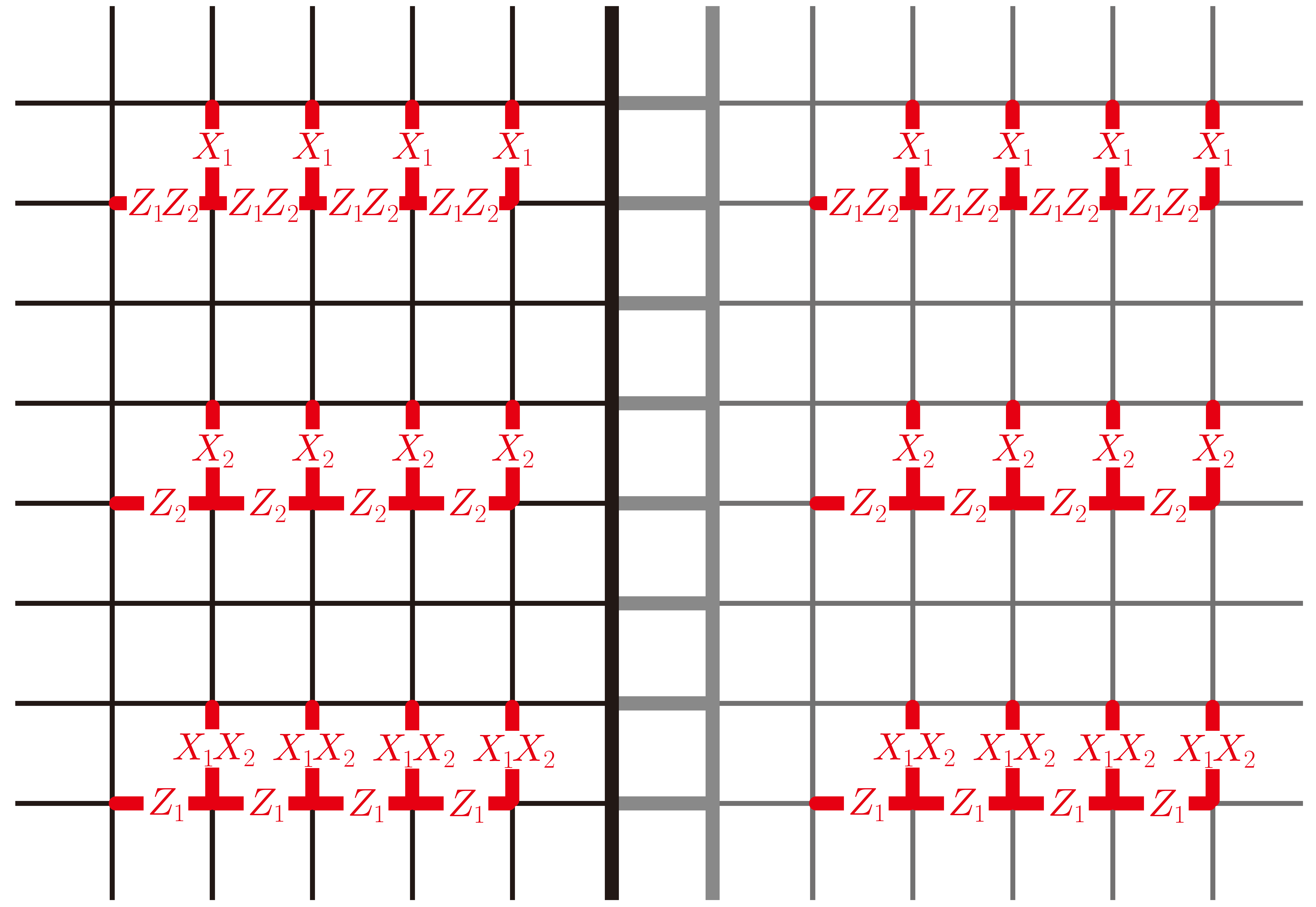}
    \caption{The bulk string operators are labeled with anyons, from top to bottom, as $f^a_1$, $f^b_1$, and $f^c_1 = f^a_1 f^b_1$ on the left, and $f^a_2$, $f^b_2$, and $f^c_2 = f^a_2 f^b_2$ on the right.}
    \label{fig: 3_fermion_code_bulk_string}
\end{figure}
Note that these 6 defects are related to the 6 defects in the $\mathbb{Z}_2$ toric code. This is because anyons in two copies of the three-fermion code can be mapped to those in two copies of the $\mathbb{Z}_2$ toric code, and vice versa, as follows:
\begin{eqs}
    \begin{cases}
        f^a_1 &\leftrightarrow e_1 m_1,\\
        f^b_1 &\leftrightarrow e_1 e_2 m_2,\\
        f^a_2 &\leftrightarrow e_2 m_2,\\
        f^b_2 &\leftrightarrow e_2 e_1 m_1,\\
    \end{cases}
    \text{ or }
    \begin{cases}
        f^b_1 f^a_2 &\leftrightarrow e_1,\\
        f^a_1 f^b_1 f^a_2 &\leftrightarrow m_1,\\
        f^a_1 f^b_2 &\leftrightarrow e_2,\\
        f^a_1 f^a_2 f^b_2 &\leftrightarrow m_2.\\
    \end{cases}
\end{eqs}
This mapping is consistent with the topological spins and braiding statistics of the $\mathbb{Z}_2$ toric code and the three-fermion code in Eqs.~\eqref{eq: standard_TC_spin_B} and \eqref{eq: 3 fermion spin and braiding}.

\subsection{(3,3)-bivariate bicycle codes}
\label{sec: example 33 BB code}

\begin{figure}[h]
    \centering
    \subfigure[Stabilizers from Eq.~\eqref{eq: original BB code 3 3}.]{\includegraphics[width=0.25\textwidth]{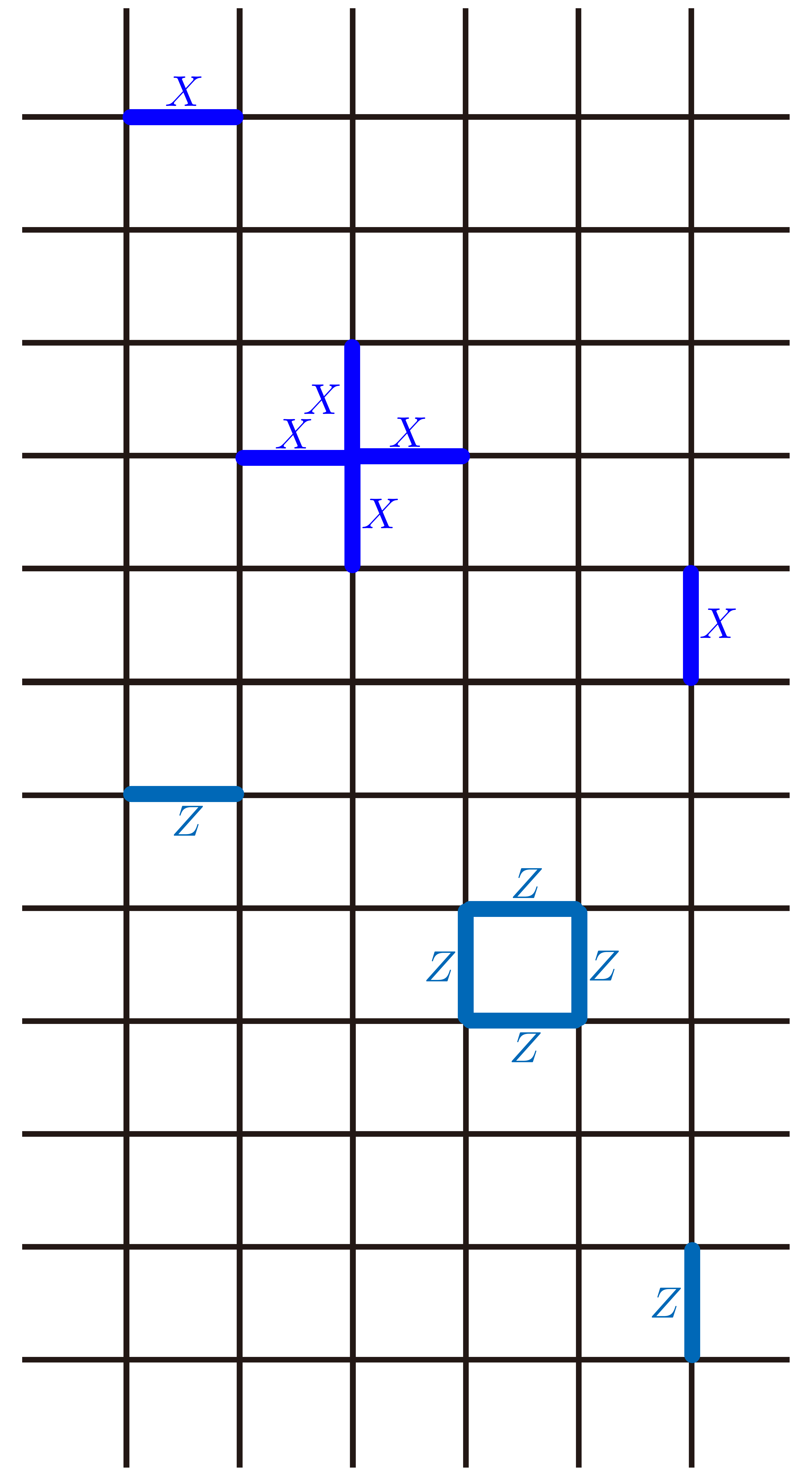}}
    \hspace{0.02cm}
    \subfigure[Stabilizers from Eq.~\eqref{eq: BB code 3 3}.]{\includegraphics[width=0.25\textwidth]{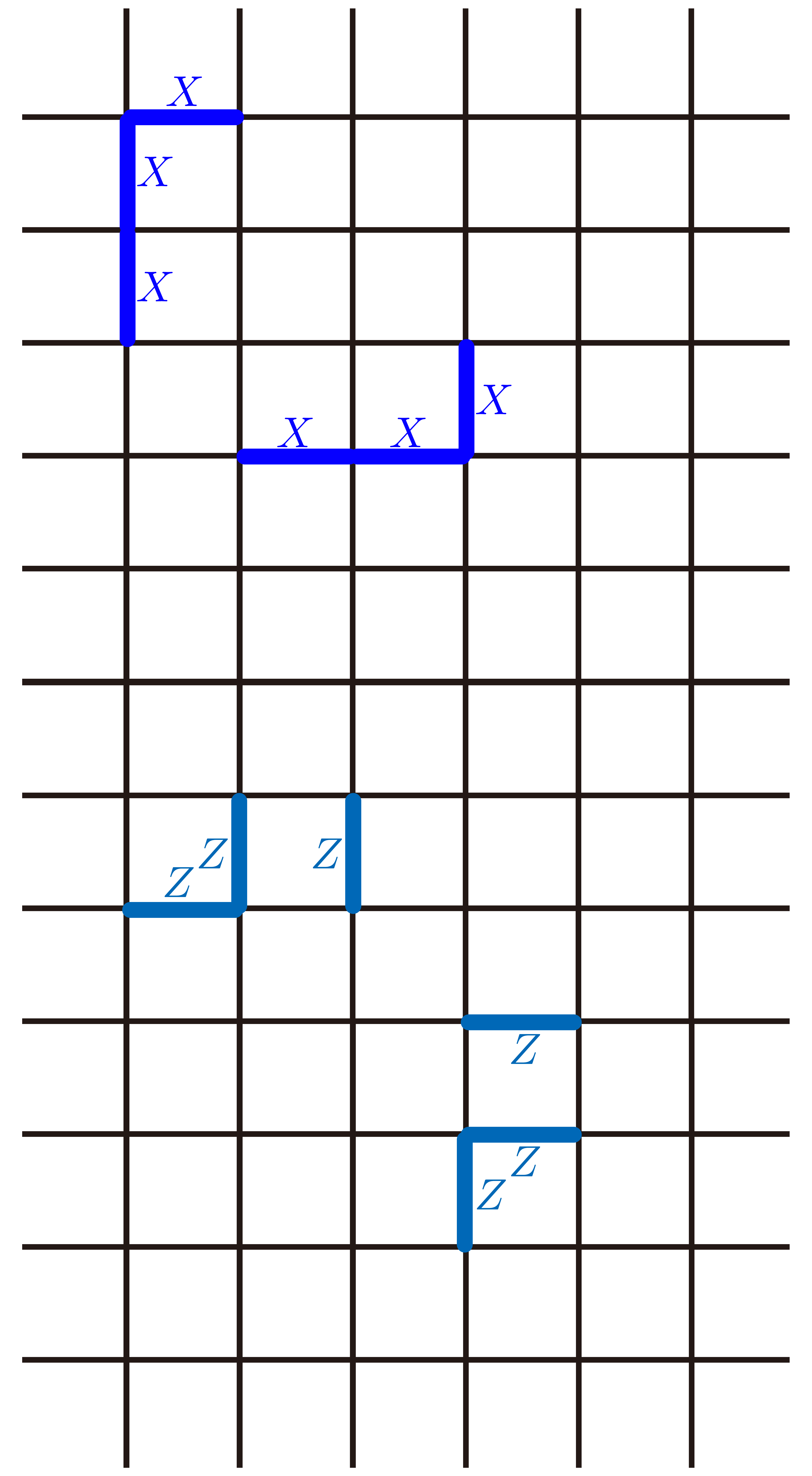}}
    \caption{The stabilizer $S_1$ is depicted in light blue at the top, while $S_2$ is shown in dark blue at the bottom.}
    \label{fig: BB code 3 3}
\end{figure}

\begin{figure}[h]
    \centering
    \includegraphics[width=0.5\textwidth]{BB_code_boundary.pdf}
    \caption{The left side illustrates a smooth boundary, while the right side illustrates a rough boundary. The blue components represent the bulk stabilizers of the (3, 3)-BB code in Eq.~\eqref{eq: BB code 3 3}, and the green components represent the boundary gauge operators. The operator \(\textcircled{\raisebox{-0.9pt}{1}}\) for the smooth boundary and the operator \(\textcircled{\raisebox{-0.9pt}{2}}\) for the rough boundary are secondary boundary gauge operators that cannot be obtained through truncation.}
    \label{fig: BB_code_boundary}
\end{figure}

This section focuses on the $(3,3)$-bivariate bicycle (BB) code, which is part of the family of LDPC codes proposed in Ref.~\cite{Bravyi2024HighThreshold}.
This code is represented as the $[[144, 12, 12]]$ code on a finite $12 \times 6$ torus, with its stabilizers expressed as follows:
\begin{equation}
    \mS_1 = \left[
    \def\arraystretch{1.2}
    \begin{array}{c}
        f \\
        g \\
        \hline
         0 \\
        0
    \end{array}\right],
    \mS_2 = \left[
    \def\arraystretch{1.2}
    \begin{array}{c}
        0 \\
        0 \\
        \hline
        \overline{g} \\
        \overline{f}
    \end{array}\right],~
    \begin{cases}
        f = y^2(x + x^2 + y^b), \\
        g = x^2(y + y^2 + x^a),
    \end{cases}
    \label{eq: original BB code 3 3}
\end{equation}
where $a = b = 3$, and we are working with $\ZZ_2$ qubits, making the minus signs irrelevant.
Since we have the freedom to redefine the generators of the Pauli operators, we can shift \( f \) and \( g \) by arbitrary polynomials. For instance, we can use alternative stabilizers given by:
\begin{equation}
    \mS'_1 = \left[
    \def\arraystretch{1.2}
    \begin{array}{c}
        f' \\
        g' \\
        \hline
         0 \\
        0
    \end{array}\right],
    \mS'_2 = \left[
    \def\arraystretch{1.2}
    \begin{array}{c}
        0 \\
        0 \\
        \hline
        \overline{g'} \\
        \overline{f'}
    \end{array}\right],
    ~
    \begin{cases}
        f' = x + x^2 + y^b, \\
        g' = y + y^2 + x^a,
    \end{cases}
    \label{eq: BB code 3 3}
\end{equation}
with $a=b=3$. The stabilizers in Eqs.~\eqref{eq: original BB code 3 3} and \eqref{eq: BB code 3 3} are illustrated in Fig.~\ref{fig: BB code 3 3}. Notably, we can shift the horizontal edges two steps down and the vertical edges two steps to the left to transform the stabilizers from Eq.~\eqref{eq: original BB code 3 3} to Eq.~\eqref{eq: BB code 3 3}. Under periodic boundary conditions, these two expressions of stabilizers are equivalent. However, with open boundary conditions, their boundary constructions must be examined separately, as there is no straightforward correspondence achieved by simply shifting the edges. We will analyze both expressions, as Eq.~\eqref{eq: BB code 3 3} is more compact locally and may be more practical for experimental implementation.
We refer to Eq.~\eqref{eq: BB code 3 3}, with general integers $a$ and $b$, as the {\bf $\boldsymbol{(a,b)}$-BB code}.

We apply our algorithm to compute the boundary gauge operators for these stabilizers. For the bulk stabilizers in Eq.~\eqref{eq: original BB code 3 3}, the boundary gauge operators consist solely of primary ones, meaning they can all be obtained as truncated stabilizers. In contrast, for the bulk stabilizers in Eq.~\eqref{eq: BB code 3 3}, there are secondary boundary gauge operators, as illustrated in Fig.~\ref{fig: BB_code_boundary}.

Using these boundary gauge operators, the number of boundary anyon generators for string lengths from 1 to 12 is given by:
\begin{eqs}
\begin{tabularx}{1\linewidth}{|c|X|X|X|X|X|X|X|X|X|X|X|X|}
\hline
string length        & 1 & 2 & 3 & 4 & 5 & 6  & 7 & 8 & 9 & 10 & 11 & 12  \\ \hline
$\#$ of generators & 0 & 0 & 8 & 0 & 0 & 12 & 0 & 0 & 8 & 0  & 0  & 16   \\ \hline
\end{tabularx}\nonumber
\end{eqs}
We have verified string lengths up to thousands to ensure all anyons are found. Notably, there are 16 generators of boundary string operators, as illustrated in Fig.~\ref{fig: BB_code_3_3_boundary_string} in Appendix~\ref{app: explicit construction}, each with a string length of 12. When rearranged, the boundary anyons correspond to 8 copies of toric codes.
In Ref.~\cite{Bravyi2024HighThreshold}, the $(3,3)$-BB code is placed on a $6 \times 12$ torus, resulting in a $[[144, 12, 12]]$ qLDPC code.
The logical dimension is $12$ because, with a period of 6 in the $x$-direction, only 12 of the 16 boundary anyon generators remain.
When the $(3,3)$-BB code is placed on a $12 \times 12$ torus, it utilizes all anyons, resulting in a $[[288, 16, 12]]$ qLDPC code. More generally, on a torus of size $12L \times 12L$, the code parameters can be compared with those of the standard $\mathbb{Z}_2$ toric code as follows:
\begin{equation}
\renewcommand{\arraystretch}{1.25}
    \begin{array}{|c|c|}
        \hline
        \text{Codes on a } 12L \times 12L \text{ torus} & [[n, k, d]] \\ \hline
        (3,3)\text{-BB code} & [[288L^2, 16, 12L]] \\ \hline
        \text{Standard } \mathbb{Z}_2 \text{ toric code} & [[288L^2, 2, 12L ]] \\ \hline
    \end{array}
    \label{eq:code_table_torus}
\end{equation}
where both codes exhibit similar scaling with $L$, but the logical space of the $(3,3)$-BB code is eight times larger than that of the toric code.

Moreover, to make the experimental realization of the $(3,3)$-BB code more feasible, one could choose the open boundary condition, i.e., using a $12L \times 12L$ square instead of a torus. For the most straightforward architecture of the {\bf $(3,3)$-BB surface code}, we impose $\{e_i\}$-condensed boundaries on the top and bottom edges of the square, while $\{m_i\}$-condensed boundaries are imposed on the left and right edges. Due to the presence of secondary boundary gauge operators, following the anyon condensation, it is necessary to perform the ``topological order completion'' by adding additional boundary terms to the Hamiltonian. 

In this $(3,3)$-BB surface code, the logical operators include vertical $\{e_i\}$ strings and horizontal $\{m_i\}$ strings, resulting in the number of logical qubits being reduced by half. We expect the $(3,3)$-BB surface code to have the following parameters:
\begin{equation}
\renewcommand{\arraystretch}{1.25}
    \begin{array}{|c|c|}
        \hline
        \text{Codes on a } 12L \times 12L \text{ square} & [[n, k, d]] \\ \hline
        (3,3)\text{-BB code} & [[288L^2, 8, 12L-O(1)]] \\ \hline
        \text{Standard } \mathbb{Z}_2 \text{ toric code} & [[288L^2, 1, 12L]] \\ \hline
    \end{array}
    \label{eq:code_table_square}
\end{equation}
where the code distance of the $(3,3)$-BB surface code is affected by the secondary boundary terms added to the boundary and decreases by some constant. It is important to emphasize that the boundary string operators added to the boundary Hamiltonian have a length of 12, making the boundary stabilizers slightly more non-local than those in the bulk (though they remain local as their length does not scale with $L$). An interesting direction for future research would be to explore whether the boundary terms can be made more local and retain the code distance precisely at $12L$ without subtracting a constant.

\subsection{(2,-3)-bivariate bicycle codes}
\label{sec: example 2-3 BB code}

\begin{figure}[h]
    \centering
    \subfigure[Stabilizers from Eq.~\eqref{eq: original BB code 2 -3}.]{\includegraphics[width=0.25\textwidth]{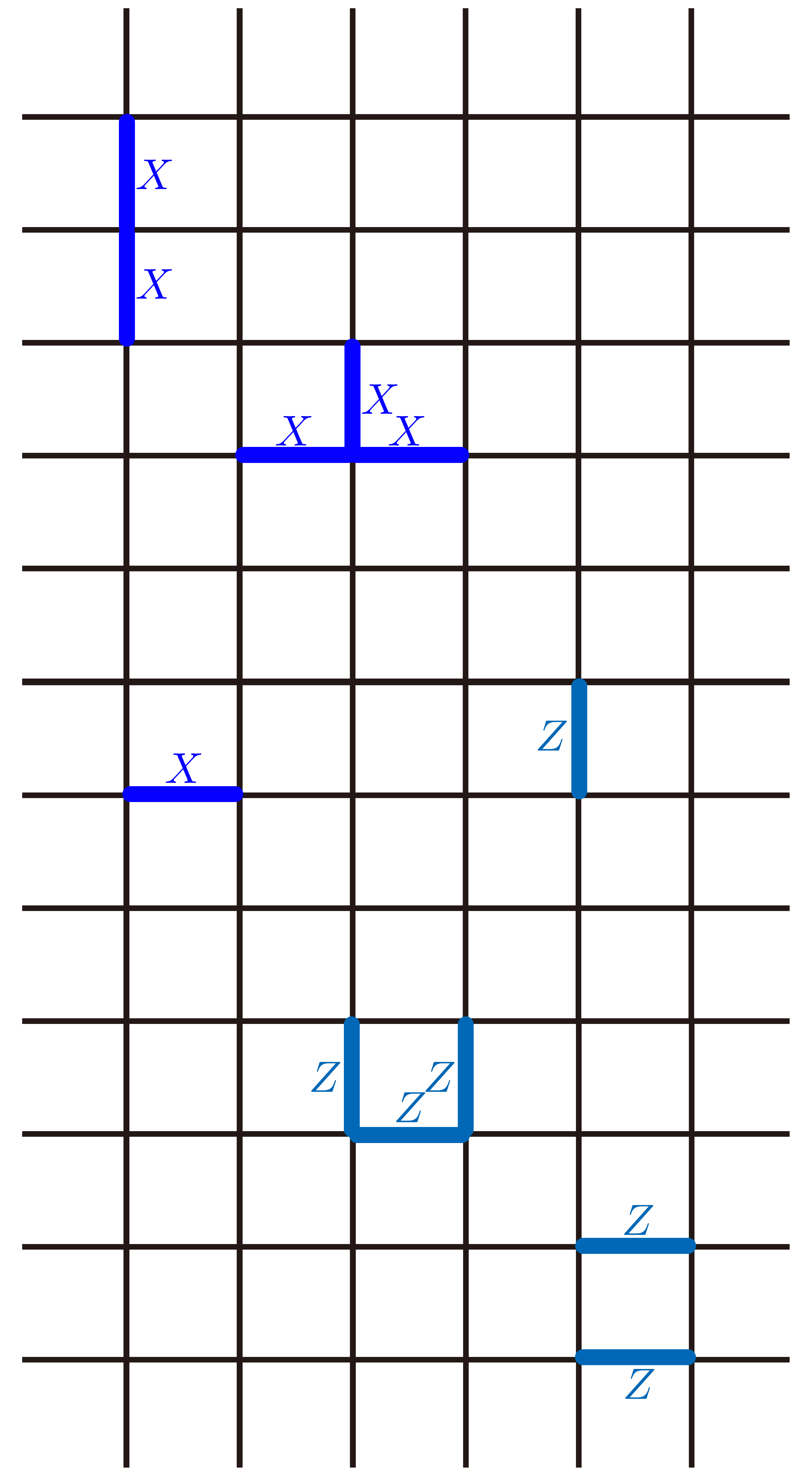}}
    \hspace{0.02cm}
    \subfigure[Stabilizers from Eq.~\eqref{eq: BB code 2 -3}.]{\includegraphics[width=0.25\textwidth]{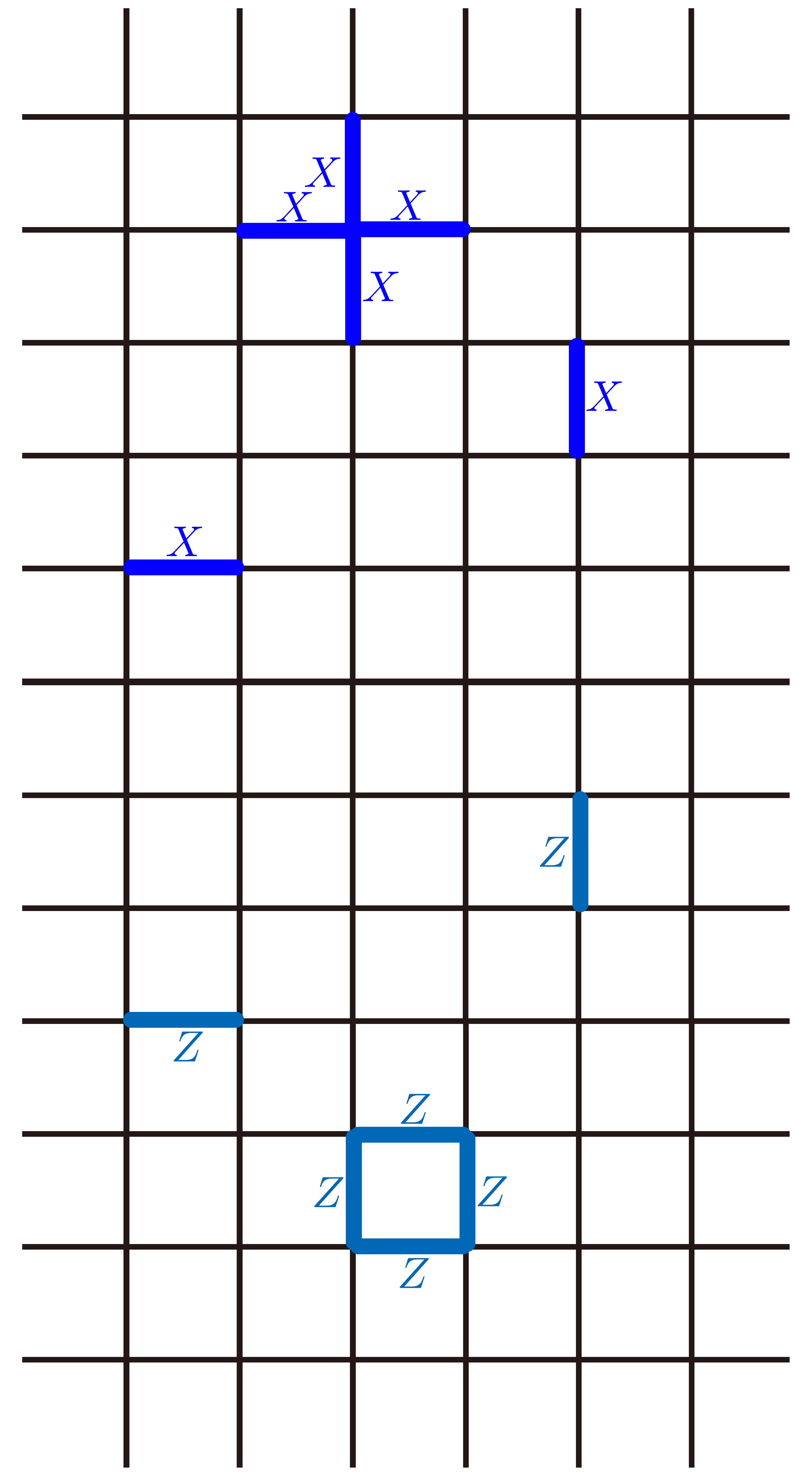}}
    \caption{ The stabilizer $S_1$ is depicted in light blue at the top, while $S_2$ is shown in dark blue at the bottom.
    }
    \label{fig: BB code 2 -3}
\end{figure}

The stabilizers of the $\mathbb{Z}_2$ (2,-3)-bivariate bicycle code are given by the following expressions:
\begin{equation}
    \mS_1 = \left[
    \def\arraystretch{1.2}
    \begin{array}{c}
        h' \\
        k' \\
        \hline
        0 \\
        0
    \end{array}\right],~
    \mS_2 = \left[
    \def\arraystretch{1.2}
    \begin{array}{c}
        0 \\
        0 \\
        \hline
        \overline{k'} \\
        \overline{h'}
    \end{array}\right],~
    \begin{cases}
        h' = x + x^2 + y^{-3}, \\        
        k' = y + y^2 + x^2.
    \end{cases}
    \label{eq: original BB code 2 -3}
\end{equation}
Alternatively, they can also be written as:
\begin{equation}
    \mS'_1 = \left[
    \def\arraystretch{1.2}
    \begin{array}{c}
        h \\
        k \\
        \hline
        0 \\
        0
    \end{array}\right],~
    \mS'_2 = \left[
    \def\arraystretch{1.2}
    \begin{array}{c}
        0 \\
        0 \\
        \hline
        \overline{k} \\
        \overline{h}
    \end{array}\right],~
    \begin{cases}
        h = y^2 (x + x^2 + y^{-3}), \\        
        k = x^2 (y + y^2 + x^2),
    \end{cases}
    \label{eq: BB code 2 -3}
\end{equation}
Both sets of stabilizers are depicted in Fig.~\ref{fig: BB code 2 -3}. On a periodic lattice, such as a finite torus, the stabilizers in Eqs.~\eqref{eq: original BB code 2 -3} and \eqref{eq: BB code 2 -3} are equivalent up to edge shifts. However, when considering open boundary conditions, these two cases are treated separately, similar to the distinction between Eqs.~\eqref{eq: original BB code 3 3} and \eqref{eq: BB code 3 3} from the previous section.

From these stabilizers, we compute the boundary gauge operators. We have verified that, for both smooth and rough boundaries, only primary boundary gauge operators are present for the bulk stabilizers in Eqs.~\eqref{eq: original BB code 2 -3} or \eqref{eq: BB code 2 -3}.\footnote{A subtle point arises for the boundary: certain qubits on specific edges are not acted upon by any bulk stabilizers, so these edges are excluded when computing the boundary gauge operators.}

Unlike the bulk stabilizers, which possess translational symmetry in two dimensions, the boundary gauge operators exhibit translational symmetry in only one dimension, facilitating more efficient computation of the anyons. Consequently, we verified that at a string length of 1023, there are 20 generators of boundary string operators corresponding to the anyons in 10 copies of the toric code. The bulk anyons also exhibit a periodicity of 1023 due to the bulk-boundary correspondence established in Theorem~\ref{thm: Bulk-boundary correspondence}. This long periodicity of anyons is a typical feature of the BB code family.

\section{Discussion}

This work improves upon the method in Ref.~\cite{liang2023extracting} by introducing an operator algebra formalism on a truncated lattice, offering a rigorous framework for proving key lemmas and theorems. We develop algorithms to extract boundary gauge and string operators from truncated Pauli stabilizer codes, enabling the calculation of topological spin, braiding statistics, and Lagrangian subgroups for constructing boundaries and defects. The algorithm applies to both prime and non-prime dimensional qudit stabilizer codes and has been tested on various examples, demonstrating its effectiveness and versatility.
Our method extends to lattices with different codes on either side, enabling the construction of defects that bridge and modify bulk anyons. Using bulk-boundary correspondence, we infer bulk properties from the (1+1)D boundary, accelerating the computation of topological data in the (2+1)D bulk, such as identifying anyons and their string operators. For instance, in the (2,-3)-bivariate bicycle (BB) codes, we derive 16 basis anyons with period-1023 string operators at a speed ten times faster than bulk computation using the method in Ref.~\cite{liang2023extracting}. Our approach offers an efficient tool for deriving topological data in Pauli stabilizer codes.
A promising future direction is optimizing our algorithm to handle stabilizers with larger weights, which is crucial for studying BB codes, where stabilizers typically involve large structures. Recent studies have focused on the properties of BB codes \cite{Bravyi2024HighThreshold, wang2024coprime, eberhardt2024logical, wolanski2024ambiguity, berthusen2024toward, gong2024toward, maan2024machine, cowtan2024ssip, shaw2024lowering, poole2024architecture, sayginel2024fault, cross2024linear, iolius2024closed}, making this optimization especially relevant.

Beyond Pauli stabilizer codes, extensions like the XS \cite{ni2015non} and XP \cite{webster2022xp, shen2023quantum} formalisms modify the stabilizer framework by incorporating roots of the Pauli $Z$ operator. Since XP stabilizer codes also have a symplectic representation, extending the polynomial formalism and our algorithm to these codes is feasible. This represents a crucial step toward exploring non-Pauli or non-Clifford stabilizer codes, which could exhibit non-Abelian anyon statistics—an essential feature for universal topological quantum computation.

Another potential extension of our polynomial formalism involves generalizing the $\mathbb{Z}^2$ translational symmetry to more general, possibly non-Abelian, groups. Currently, we focus on Pauli stabilizer codes on two-dimensional lattices with $\mathbb{Z}^2$ symmetry, represented by the Laurent polynomial generators $x$ and $y$. We aim to extend this to more complex graphs with arbitrary translation groups $G$, where the generators are labeled $g_1, g_2, g_3$, etc. This generalization would still use a ``polynomial'' ring over these generators, enabling the representation of Pauli stabilizer codes on the Cayley graph of $G$. This approach, widely used in constructing quantum low-density parity-check (qLDPC) codes~\cite{Couvreur2011Cayley, breuckmann2021balanced, breuckmann2021quantum, Panteleev2022goodqldpc, Dinur2023Good}, will guide the adaptation of our algorithm, providing new insights into advanced qLDPC codes.

In addition, we are exploring a (3+1)D generalization of our algorithm. A new protocol would be required to detect loop excitations. One approach involves using dimensional reduction by compactifying one dimension of the (3+1)D system, transforming it into a quasi-2D system \cite{Tantivasadakarn2017Dimensional}. This technique would enable efficient detection of both particle and compactified loop excitations. Also, we can study the (2+1)D anomalous boundaries of (3+1)D Pauli stabilizer codes, such as the three-fermion code at the boundary of the Walker-Wang model \cite{Walker2012TQFT}, to investigate chiral boundary theories.

Furthermore, an additional generalization involves subsystem codes~\cite{Poulin2005subsystem, bombin_Stabilizer_14, Ellison2023paulitopological, liu2023subsystem} and Floquet codes~\cite{hastings2021dynamically, Aasen2022Adiabatic, davydova2023floquet, Ellison2023floquet, Dua2024Floquet}. Subsystem codes relax the need for all Hamiltonian terms to commute, allowing non-commuting terms to act as gauge operators—an approach we have used to describe boundary theories in this paper.
Floquet codes add further complexity by leveraging the temporal sequence of measurements, where each cycle generates an instantaneous stabilizer code, such as the toric code derived from the honeycomb model~\cite{hastings2021dynamically}. Applying our algorithm to a broader range of subsystem and Floquet codes could provide deeper insights into these dynamic systems.

\section*{Acknowledgements}

Y.-A.C. would like to thank Nathanan Tantivasadakarn for insightful discussions on the holographic principle for stabilizer codes and for suggesting the name ``fish toric code.'' Y.-A.C. also thanks Tyler Ellison for clarifying subtleties regarding anyons in subsystem codes and appreciates valuable conversations with Po-Shen Hsin, Sri Tata, and Shu-Heng Shao on Lagrangian subgroups. Additionally, Y.-A.C. is grateful to Ke Liu and Hao Song for highlighting the applications of bivariate bicycle codes.
B.Y. would like to express heartfelt gratitude to Blazej Ruba for the many insightful hours of discussion, which have greatly enriched our perspective.
We also express appreciation to Andreas Bauer, Ryohei Kobayashi, Anton Kapustin, and Yijia Xu for their valuable discussions and feedback.

Y.-A.C. is supported by the National Natural Science Foundation of China (Grant No.~12474491), and the Fundamental Research Funds for the Central Universities, Peking University.
B.Y. gratefully acknowledges support from Harvard CMSA and the Simons Foundation through Simons
Collaboration on Global Categorical Symmetries.


\bibliography{bib}

\appendix

\begin{appendices}

\section{Topological spins and braiding statistics from boundary string operators}\label{appendix: Topological spins and braiding statistics}

In this appendix, we will prove Theorems~\ref{thm: topological spins of boundary anyons} and \ref{thm: braiding between boundary anyons}. We begin by showing that the expression for the boundary topological spin in Eq.~\eqref{eq: boundary topological spin computation},
\begin{eqs}
    \theta(a) = [U(a)_{1\rightarrow2}, U(a)_{2\rightarrow3}] 
    := U(a)_{1\rightarrow2} U(a)_{2\rightarrow3} U(a)^\dagger_{1\rightarrow2} U(a)_{2\rightarrow3}^\dagger,
    \label{eq: appendix B boundary topological spin}
\end{eqs}
is a topological invariant, meaning it is independent of the specific choice of $U(a)_{1\rightarrow2}$ and $U(a)_{2\rightarrow3}$. Next, we will demonstrate that this topological invariant corresponds to the topological spin obtained from the T-junction process for bulk anyons. Finally, we will apply the same concept to derive the braiding statistics between boundary anyons.

\subsection{Topological invariant}
\begin{equation*}
        \vcenter{\hbox{\includegraphics[scale=0.09]{Topological_invariant.pdf}}}
\end{equation*}
Given the boundary string operators \( U(a)_{1\rightarrow2} \) and \( U(a)_{2\rightarrow3} \), which move the boundary anyon \( a \) from vertex 1 to 2 and from vertex 2 to 3, respectively, as illustrated in the above figure, we first show that Eq.~\eqref{eq: appendix B boundary topological spin} is independent of the specific choices of \( U(a) \). For simplicity, we denote \( U(a)_{1\rightarrow2} \) and \( U(a)_{2\rightarrow3} \) as \( U_1 \) and \( U_2 \). 

Since we have the freedom to redefine the boundary anyon by applying local boundary gauge operators, we can ``dress'' the string operators with some boundary term \( \mathcal{G} \) near their endpoints, resulting in new boundary string operators:
\begin{eqs}
    U_1 \rightarrow \wwide{U_1} = \mathcal{G}_2 U_1 \mathcal{G}_1^\dagger, \quad
    U_2 \rightarrow \wwide{U_2} = \mathcal{G}_3 U_2 \mathcal{G}_2^\dagger.
\end{eqs}
As a result, a string operator from point 1 to 3 transforms as follows:
\begin{align}
    U_2 U_1 &\rightarrow \mathcal{G}_3 U_2 \mathcal{G}_2^\dagger \mathcal{G}_2 U_1 \mathcal{G}_1^\dagger = \mathcal{G}_3 U_2 U_1 \mathcal{G}_1^\dagger = \wwide{U_2} \wwide{U_1}.
\end{align}
Thus, we see that the transformation preserves the overall structure of the string operator, with the redefinition occurring only at the endpoints.

Before we proceed to compute the commutator between the new \( U_1 \) and \( U_2 \), we need to mention a fact: \( [\mathcal{G}_2, U_2 U_1] = 0 \). In other words, the boundary term \( \mathcal{G}_2 \) at point 2 commutes with the longer string operator from point 1 to point 3. This follows from the definition of boundary string operators, which commute with all boundary terms in their middle parts and only violate something near their endpoints.

We now compute the commutator between the new string operators \( \wwide{U_1} \) and \( \wwide{U_2} \):
\begin{eqs}
    \wwide{U_1} \wwide{U_2} \wwide{U_1}^\dagger \wwide{U_2}^\dagger
    &= (\mathcal{G}_2 U_1 \mathcal{G}_1^\dagger) (\mathcal{G}_3 U_2 \mathcal{G}_2^\dagger)(\mathcal{G}_1 U_1^\dagger \mathcal{G}_2^\dagger)(\mathcal{G}_2 U_2^\dagger \mathcal{G}_3^\dagger)\\
    &= \mathcal{G}_2 U_1 \mathcal{G}_1^\dagger \mathcal{G}_3 U_2 \mathcal{G}_2^\dagger \mathcal{G}_1 U_1^\dagger U_2^\dagger \mathcal{G}_3^\dagger\\
    &= \mathcal{G}_2 U_1 \mathcal{G}_1^\dagger \chi U_2 \mathcal{G}_3 \mathcal{G}_2^\dagger \mathcal{G}_1 U_1^\dagger \chi^{-1} \mathcal{G}_3^\dagger U_2^\dagger,
    \nonumber
\end{eqs}
where we have used the relation \( \mathcal{G}_3 U_2 = \chi U_2 \mathcal{G}_3 \), where \( \chi \in U(1) \), given that both \( \mathcal{G} \) and \( U \) are Pauli operators. Since \( U_1 \), \( \mathcal{G}_1 \), and \( \mathcal{G}_2 \) are far from \( \mathcal{G}_3 \), they commute with \( \mathcal{G}_3 \). Similarly, \( \mathcal{G}_1 \) commutes with both \( U_2 \) and \( \mathcal{G}_2 \). Therefore, we can simplify the expression as follows:
\begin{eqs}
    \wwide{U_1} \wwide{U_2} \wwide{U_1}^\dagger \wwide{U_2}^\dagger
    &= \mathcal{G}_2 U_1 \mathcal{G}_1^\dagger U_2 \mathcal{G}_2^\dagger \mathcal{G}_1 U_1^\dagger U_2^\dagger\\
    &= \mathcal{G}_2 U_1 U_2 \mathcal{G}_2^\dagger U_1^\dagger U_2^\dagger\\
    &= U_1 U_2 U_1^\dagger U_2^\dagger.
\end{eqs}
The commutator remains unchanged by the redefinition of the string operators, confirming that the expression is independent of the specific choice of \( U(a) \).

\subsection{Topological spin}

\begin{figure}[h]
    \centering
    \includegraphics[width=0.6\textwidth]{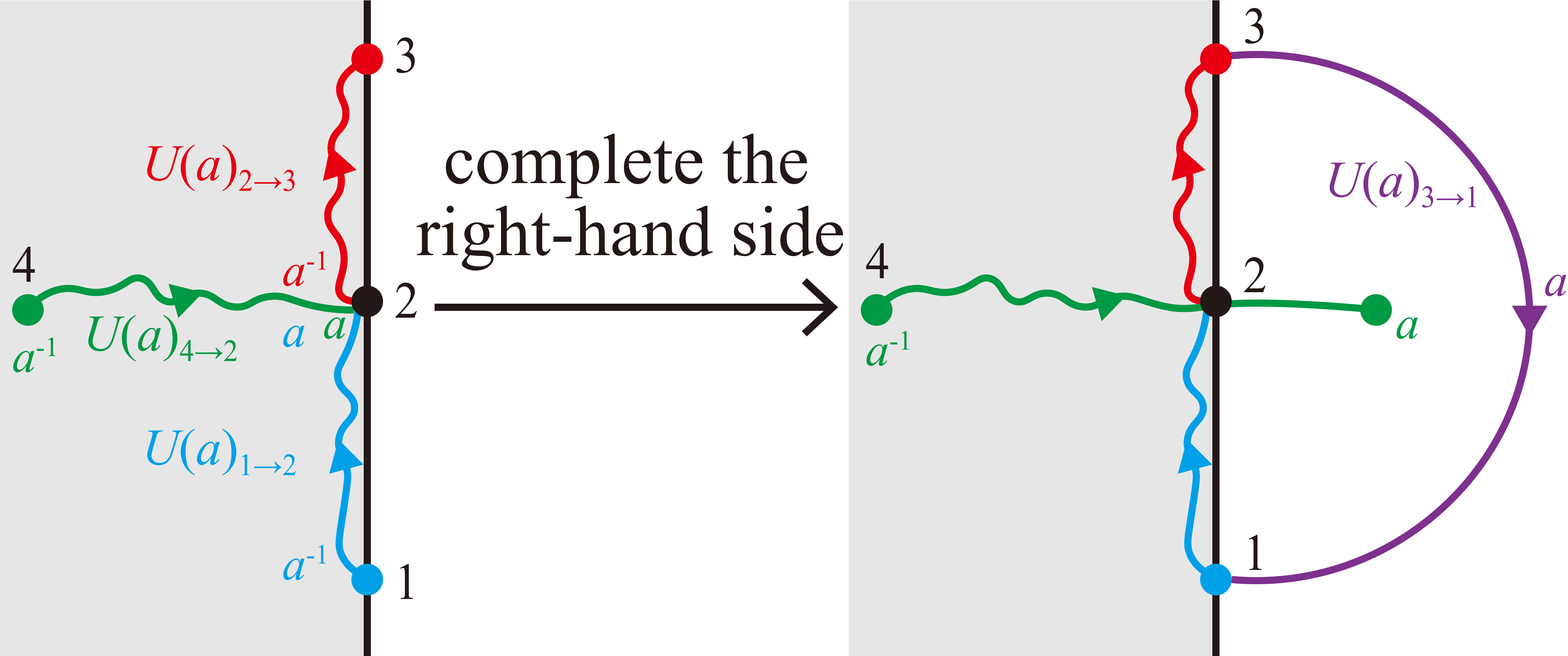}
    \caption{The boundary string operators \( U(a)_{1 \rightarrow 2} \) and \( U(a)_{2 \rightarrow 3} \) move boundary anyon \( a \), while \( U(a)_{4 \rightarrow 2} \) represents a bulk string operator terminating at the boundary, also creating the boundary anyon \( a \). We first embed the truncated system into the original infinite plane by completing the right-hand side, transforming \( U(a)_{1 \rightarrow 2} \), \( U(a)_{2 \rightarrow 3} \), and \( U(a)_{4 \rightarrow 2} \) into bulk string operators in the completed system. Subsequently, we slightly extend \( U(a)_{4 \rightarrow 2} \) to the right, represented by the green string, and append a string operator along the semi-circle (purple) to \( U(a)_{2 \rightarrow 3} U(a)_{1 \rightarrow 2} \) to form a closed loop.}
    \label{fig: appendix topological spin}
\end{figure}

Next, we will demonstrate that this topological invariant corresponds to the topological spin of the anyon. As shown in Fig.~\ref{fig: appendix topological spin}, the boundary string operators \( U(a)_{1 \rightarrow 2} \) and \( U(a)_{2 \rightarrow 3} \) move the boundary anyon \( a \) from vertex 1 to vertex 2 and from vertex 2 to vertex 3, respectively (as shown in the figure above). Additionally, the bulk string operator \( U(a)_{4 \rightarrow 2} \) moves the bulk anyon \( a \) to the boundary.

For simplicity, we will refer to these string operators as \( U_1 = U(a)_{1 \rightarrow 2} \), \( U_2 = U(a)_{2 \rightarrow 3} \), and \( U_3 = U(a)_{4 \rightarrow 2} \). When the right-hand side of the system is restored (recalling that the open system is a truncated version of an infinite system), these string operators are embedded as bulk string operators within the complete system. In this completed configuration, we can apply the T-junction process~\eqref{eq: statistics formula} to compute the topological spin of the anyon \( a \):\footnote{The orientation of \( U_2 \) is reversed compared to the setup in Eq.~\eqref{eq: statistics formula}, so \( W_2^\dagger \) is replaced with \( U_2 \).}
\begin{equation}
    \theta(a) = U_3^\dagger U_2^\dagger U_1^\dagger U_3 U_2 U_1.
\end{equation}
Since \( U_1 \), \( U_2 \), and \( U_3 \) are Pauli operators, their commutators result in \( U(1) \) phases. Using the relation \( [U^\dagger_i, U_j] = [U_i, U_j]^{-1} \), we can express the topological spin as:
\begin{equation}
    \theta(a) = [U_3, U_2] \times [U_3, U_1] \times [U_2, U_1].
\end{equation}
This shows that the commutation relations between these string operators determine the topological spin.

Now, we study the two commutators \( [U_3, U_2] \) and \( [U_3, U_1] \). Using the fact that \( U \) represents Pauli operators, we find that their product is equal to \( [U_3, U_2 U_1] \).
We can gain insight into the commutation between $U_3$ and $U_2 U_1$ in the completed system. We extend $U_3 = U(a)_{4 \rightarrow 2}$ slightly into the bulk (green string in Fig.~\ref{fig: appendix topological spin}) and complete $U_2 U_1 = U(a)_{2 \rightarrow 3} U(a)_{1 \rightarrow 2}$ by adding a bulk string operator $U(a)_{3 \rightarrow 1}$ (purple semi-circle in Fig.~\ref{fig: appendix topological spin}). Importantly, the extended semi-circle does not affect the commutation, as it is spatially distant from the extended $U_3$. We define the extension of $U_3$ as $U_3 O_{\text{LHS}} O_{\text{RHS}}$, where $O_{\text{LHS}}$ and $O_{\text{RHS}}$ are Pauli operators fully supported in the LHS and RHS, respectively. Since the middle part of the extended green string still commutes with the bulk stabilizers in the LHS, $O_{\text{LHS}}$ must be a boundary gauge operator. Consequently, by definition, $O_{\text{LHS}}$ commutes with the boundary string operator $U_2 U_1$. Furthermore, it is evident that $O_{\text{RHS}}$ also commutes with $U_2 U_1$, as they do not overlap. Thus, we conclude that the extensions of $U_3$ and $U_2 U_1$ do not affect their commutation relation.

This transforms the commutator \( [U_3, U_2 U_1] \) into the full (counter-clockwise) braiding of the anyon \( a^{-1} \) in the bulk:
\begin{equation}
    [U_3, U_2 U_1] = U_3 (U_2 U_1) U_3^\dagger (U_2 U_1)^\dagger = \theta(a)^{2}.
\end{equation}
Thus, we obtain
\begin{equation}
    [U_2, U_1] =  \theta(a)^{-1},
\end{equation}
or more precisely,
\begin{equation}
    \theta(a) = [U(a)_{(1 \rightarrow 2)}, U(a)_{(2 \rightarrow 3)}].
    \label{eq: topological_spin_boundary}
\end{equation}
This expression captures the topological spin of anyon \( a \) from the boundary string operators.

\subsection{Braiding statistics}

\begin{equation*}
    \vcenter{\hbox{\includegraphics[width=0.6\textwidth]{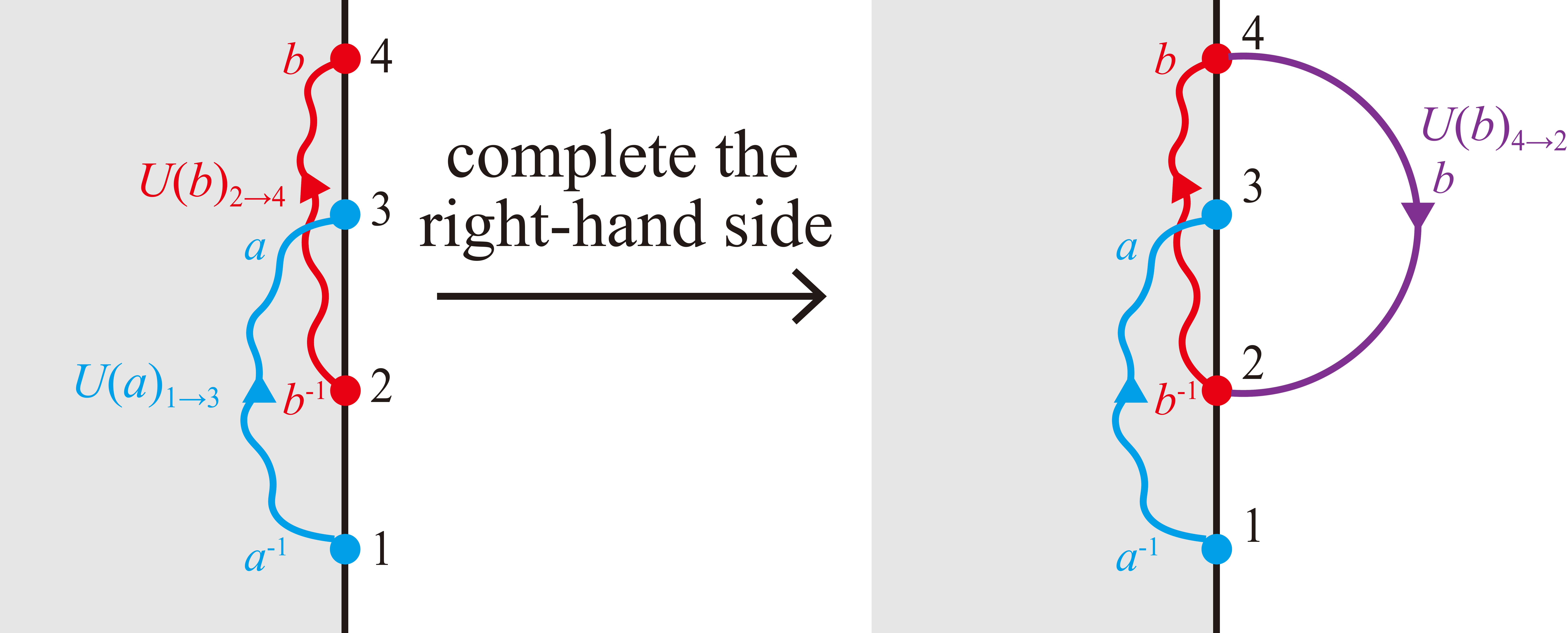}}}
\end{equation*}

To analyze the braiding process between boundary anyons, we slightly modify the setup by defining the boundary string operators \( U(a)_{1 \rightarrow 3} \) and \( U(b)_{2 \rightarrow 4} \), which move boundary anyon \( a \) from vertex 1 to 3 and anyon \( b \) from vertex 2 to 4 along the boundary, as illustrated in the figure above. After restoring the right-hand side, we append the boundary operator \( U(b)_{2 \rightarrow 4} \) with a bulk string operator \( U(b)_{4 \rightarrow 2} \), which moves the bulk anyon \( b \) from vertex 4 back to 2 along the semi-circle.

Following the same intuition used in deriving the topological spin, we can express the full braiding between anyons $a$ and $b$ as:
\begin{eqs}
    B(a,b) =
    U(a)_{1 \rightarrow 3} ( U(b)_{2 \rightarrow 4} U(b)_{4 \rightarrow 2})
    U(a)_{1 \rightarrow 3}^\dagger ( U(b)_{2 \rightarrow 4} U(b)_{4 \rightarrow 2})^\dagger.
\end{eqs}
Since \( U(b)_{4 \rightarrow 2} \) and \( U(a)_{1 \rightarrow 3} \) are spatially separated, they commute with each other. Therefore, the braiding statistics of boundary anyons can be computed as:
\begin{equation}
    B(a,b) = U(a)_{1 \rightarrow 3} U(b)_{2 \rightarrow 4} U(a)_{1 \rightarrow 3}^\dagger U(b)_{2 \rightarrow 4}^\dagger.
\end{equation}

\section{Units in the formal Laurent series}\label{appnedix: Units in the formal Laurent series}

In this section, we derive the necessary and sufficient condition for an element in the formal Laurent series $\ZZ_d((x))$ to be a unit.

First, the {\bf formal Laurent series $\ZZ_d((x))$} is defined by the polynomials of $x$ and $x^{-1}$, and the degree of $x$ is allowed to be positive infinite. Any element $f(x) \in \ZZ_d((x))$ can be written as
\begin{equation}
    f(x) = \sum_{i=-k}^\infty a_i x^i,
\end{equation}
with $k \in \ZZ$ and $a_i \in \ZZ_d$.
The formal Laurent series $\ZZ_d((x))$ forms a ring since the multiplication of the polynomials can be defined in the standard way. A {\bf unit} is an element in the ring with a multiplicative inverse.

\begin{lemma}\label{lemma: units in Laurent Series}
    An element in the formal Laurent series \( f(x) = \sum_{i=-k}^\infty a_i x^i \in \mathbb{Z}_d((x)) \) with \( a_i \in \mathbb{Z}_d \) is a unit if and only if the ideal generated by \( \{ a_i \} \) is \(\mathbb{Z}_d\). Equivalently, there exist coefficients \( \{m_i\} \) with each \( m_i \in \mathbb{Z}_d \) and an integer \( n \) such that
    \begin{equation}
        \sum_{i=-k}^n m_i a_i \equiv 1 \pmod{d}.
    \label{eq: gcd condition}
    \end{equation}
\end{lemma}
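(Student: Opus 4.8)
The plan is to prove both directions of the biconditional. The key algebraic fact driving everything is that $\ZZ_d((x))$ is a ring in which the ``order'' (the lowest power of $x$ appearing with nonzero coefficient) behaves multiplicatively enough to let us reduce to the case of an ordinary power series with nonzero constant term, while the condition on the coefficients $\{a_i\}$ reduces to understanding units in $\ZZ_d$ itself.

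\emph{The easy direction} ($\Rightarrow$): Suppose $f$ is a unit, so $f g = 1$ for some $g = \sum_j b_j x^j \in \ZZ_d((x))$. The plan is to consider the $R$-module (or ideal) $\mathfrak{a} = (a_{-k}, a_{-k+1}, \dots) \subseteq \ZZ_d$ generated by the coefficients of $f$. Because $\ZZ_d$ is finite, $\mathfrak{a} = (c)$ for some $c \mid d$ (every ideal of $\ZZ_d$ is principal, generated by a divisor of $d$). Every coefficient of $fg$ is a $\ZZ_d$-linear combination of products $a_i b_j$, hence lies in $\mathfrak{a} = (c)$; in particular the constant term, which is $1$, lies in $(c)$, forcing $c$ to be a unit in $\ZZ_d$, i.e. $\mathfrak{a} = \ZZ_d$. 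Unwinding the statement ``$1 \in \mathfrak{a}$'' gives precisely the existence of finitely many $m_i$ with $\sum m_i a_i \equiv 1 \pmod d$ (only finitely many $a_i$ with negative index, and we only need finitely many positive-index terms to generate the unit ideal since $\ZZ_d$ is Noetherian/finite).

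\emph{The harder direction} ($\Leftarrow$): Assume $\sum_{i} m_i a_i \equiv 1 \pmod d$. First reduce to a power series: write $f(x) = x^{-k}\tilde f(x)$ where $\tilde f(x) = \sum_{i \geq 0} a_{i-k} x^i \in \ZZ_d[[x]]$; since $x^{-k}$ is a unit in $\ZZ_d((x))$, it suffices to show $\tilde f$ is a unit in $\ZZ_d[[x]]$. Now the standard result for formal power series over a commutative ring states that $\tilde f$ is a unit iff its constant term $a_{-k}$ is a unit in $\ZZ_d$. So the remaining task is: deduce from the hypothesis $\sum_i m_i a_i \equiv 1$ that we can arrange the constant term to be a unit. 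This is the step I expect to be the main obstacle, because the hypothesis only says the \emph{ideal} generated by all coefficients is everything, not that any single coefficient is a unit. The fix is the trick used implicitly throughout the paper (cf.\ the gcd condition and the ``translational duplicate'' idea): multiply $f$ by a suitable Laurent \emph{polynomial} $p(x) = \sum_i m_i x^{-i}$. Then the constant coefficient of $p(x) f(x)$ is $\sum_i m_i a_i \equiv 1 \pmod d$, which is a unit in $\ZZ_d$. Since $p(x)$ is a unit in $\ZZ_d((x))$ whenever it is itself handled by the same lemma --- here one must be slightly careful: $p$ is a \emph{polynomial}, and I should instead argue directly that $p(x)f(x)$ is a unit (its lowest term is $x^{(\text{some power})}\cdot(\text{power series with unit constant term})$, hence a unit), and then $f = p^{-1}\cdot(pf)$ is a product of units, provided $p$ itself is invertible. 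To avoid circularity, note $p(x)$ has a unit constant/leading structure by construction --- actually the cleanest route is: set $h(x) = p(x) f(x)$; its order is $\mathrm{ord}(p) + \mathrm{ord}(f) =: N$, and its coefficient of $x^N$ equals $\sum m_i a_i = 1$ (matching up the lowest-order terms), so $h(x) = x^N u(x)$ with $u \in \ZZ_d[[x]]$ having constant term $1$, a unit; hence $h$ is a unit in $\ZZ_d((x))$. Then $f(x) = p(x)^{-1} h(x)$ would need $p$ invertible --- so instead observe directly that $f(x) \cdot [p(x) h(x)^{-1}] = 1$, exhibiting $p h^{-1} \in \ZZ_d((x))$ as a two-sided inverse of $f$. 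This completes the argument, and the only genuinely nontrivial ingredient beyond bookkeeping is the power-series unit criterion (constant term invertible $\Rightarrow$ series invertible), which follows by the usual recursive construction of the inverse coefficients, valid over any commutative ring.

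In summary: (i) recall ideals of $\ZZ_d$ are principal and $1$ generates the whole ring iff $1$ is in the ideal; (ii) for $\Rightarrow$, push the coefficient ideal through the product $fg$ and land $1$ in it; (iii) for $\Leftarrow$, strip off a power of $x$ to land in $\ZZ_d[[x]]$, multiply by the Laurent polynomial $p(x) = \sum m_i x^{-i}$ to force a unit lowest-order coefficient, and conclude via the formal-power-series unit criterion that the product is invertible, whence $f$ is too. The main obstacle is the mismatch between ``ideal of coefficients is everything'' and ``some coefficient is a unit,'' resolved by the multiply-by-$p(x)$ trick.
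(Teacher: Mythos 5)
Your forward direction is sound (and in fact phrased more cleanly than the paper's, which argues via the gcd rather than the coefficient ideal): pushing the coefficient ideal through the product $fg$ and observing that the constant term $1$ of $fg$ lies in it is correct, since each coefficient of $fg$ is a finite $\mathbb{Z}_d$-linear combination of the $a_i$.

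The converse direction, however, has a genuine gap at exactly the step you flagged as the main obstacle. You claim that $h(x) = p(x)f(x)$ with $p(x)=\sum_i m_i x^{-i}$ has lowest-order coefficient equal to $\sum_i m_i a_i = 1$. That is false: the quantity $\sum_i m_i a_i$ is the coefficient of $x^{0}$ in $h$, whereas the coefficient of $x^{N}$ with $N=\mathrm{ord}(p)+\mathrm{ord}(f)$ is the \emph{product} of the two individual lowest-order coefficients. Concretely, take $d=4$ and $f(x)=2+x$; the ideal generated by $\{2,1\}$ is all of $\mathbb{Z}_4$, and choosing $m_1=1$ gives $p(x)=x^{-1}$ and $h(x)=1+2x^{-1}$. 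Here $\mathrm{ord}(h)=-1$ with lowest coefficient $2$, a zero divisor, so $h$ does not factor as $x^{N}u(x)$ with $u$ a power series of unit constant term, and the power-series unit criterion simply does not apply. No choice of $p$ repairs this, since the lowest coefficient of $pf$ is always (leading coefficient of $p$)$\cdot a_{-k}$, which can only be a unit if $a_{-k}$ already is. The element $h=1+2x^{-1}$ \emph{is} invertible, but for a different reason: $2x^{-1}$ is nilpotent in $\mathbb{Z}_4((x))$. This is the ingredient your argument is missing and the one the paper uses: it reduces via the Chinese remainder theorem to $d=p^r$, inverts the reduction $\bar f\neq 0$ in the field $\mathbb{Z}_p((x))$, lifts to get $fg=1+h$ with all coefficients of $h$ divisible by $p$, and concludes because such an $h$ satisfies $h^r=0$. (A minor additional slip: your reduction "it suffices to show $\tilde f$ is a unit in $\mathbb{Z}_d[[x]]$" asks for too much — $2+x$ is a unit in $\mathbb{Z}_4((x))$ but not in $\mathbb{Z}_4[[x]]$ — though this is subsumed by the main issue.)
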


\begin{proof}
    We first show that Eq.~\eqref{eq: gcd condition} is necessary for \( f(x) \) to be a unit in \( \mathbb{Z}_d((x)) \). If the greatest common divisor (gcd) of the coefficients \( \{ a_{-k}, a_{-k+1}, \ldots \} \) is greater than 1, then every multiple of \( f(x) \) will have coefficients divisible by this gcd, which implies that \( f(x) \) cannot have an inverse in \( \mathbb{Z}_d((x)) \). Therefore, Eq.~\eqref{eq: gcd condition} is necessary for \( f(x) \) to be invertible.

    To demonstrate that the condition in Eq.~\eqref{eq: gcd condition} is sufficient, we begin by considering the primary decomposition of the ring \( \mathbb{Z}_d \), which can be expressed as a product of rings corresponding to the prime power factors of \( d \):
    \begin{equation}
    \mathbb{Z}_d \cong \mathbb{Z}_{p_1^{k_1}} \times \mathbb{Z}_{p_2^{k_2}} \times \dots \times \mathbb{Z}_{p_r^{k_r}},
    \end{equation}
    where \( p_i \) are the distinct prime factors of \( d \) and \( k_i \) are the corresponding exponents in the factorization of \( d \).
    By the Chinese remainder theorem, \( f(x) \) has an inverse in \( \mathbb{Z}_d((x)) \) if and only if it has an inverse in each \( \mathbb{Z}_{p_i^{k_i}}((x)) \).
    More concretely, if \( f(x) \) has an inverse \( f(x)^{-1}_{k_i} \) modulo \( p_i^{k_i} \) for each \( i \), then by the Chinese remainder theorem, there exists a unique solution \( I(x) \) modulo \( d \) that satisfies:
    \begin{eqs}
        &I(x) \equiv f(x)^{-1}_{{k_1}} \pmod{p_1^{k_1}}, \\
        &I(x) \equiv f(x)^{-1}_{{k_2}} \pmod{p_2^{k_2}}, \\
        &\vdots \\
        &I(x) \equiv f(x)^{-1}_{{k_r}} \pmod{p_r^{k_r}}.
    \end{eqs}
    Multiplying both sides by \( f(x) \) and applying the Chinese remainder theorem again, we obtain:
    \begin{equation}
        f(x)I(x) \equiv 1 \pmod{d}.
    \end{equation}
    Thus, it suffices to show that \( f(x) \) has an inverse in each \( \mathbb{Z}_{p_i^{k_i}}((x)) \).

    Without loss of generality, consider \( d = p^r \) where \( p \) is a prime. If the gcd of the coefficients \( \{ a_{-k}, a_{-k+1}, \ldots \} \) is 1, then not all \( a_i \) are divisible by \( p \). This implies that when we reduce modulo \( p \), the image \( \bar{f}(x) \in \mathbb{Z}_p((x)) \) is nonzero. Since \( \mathbb{Z}_p((x)) \) is a field of Laurent series over the finite field \( \mathbb{Z}_p \), \( \bar{f}(x) \) is invertible in \( \mathbb{Z}_p((x)) \). Let \( \bar{g}(x) \in \mathbb{Z}_p((x)) \) be the inverse of \( \bar{f}(x) \).
    We then lift \( \bar{g}(x) \) to some \( g(x) \in \mathbb{Z}_{p^r}((x)) \) such that:
    \begin{equation}
    f(x)g(x) = 1 + h(x),
    \end{equation}
    where \( h(x) \in (p) \), meaning that all coefficients of \( h(x) \) are divisible by \( p \). 

    Since \( h(x) \) is nilpotent (i.e., some power \( h(x)^m = 0 \) in \( \mathbb{Z}_{p^r}((x)) \) because \( p^r = 0 \) in \( \mathbb{Z}_{p^r} \)), the element \( 1 + h(x) \) has an inverse in \( \mathbb{Z}_{p^r}((x)) \), denoted \( (1 + h(x))^{-1} \) (see Proposition 1.9 in Ref.~\cite{atiyah2018introduction}). Thus, we define the inverse of \( f(x) \) as:
    \begin{equation}
    I(x) = g(x)(1 + h(x))^{-1}.
    \end{equation}
    This inverse satisfies:
    \begin{equation}
    f(x)I(x) \equiv 1 \pmod{d}.
    \end{equation}
    Hence, \( I(x) \) is indeed the inverse of \( f(x) \) in \( \mathbb{Z}_d((x)) \).

    Therefore, the condition in Eq.~\eqref{eq: gcd condition} is both necessary and sufficient for \( f(x) \) to be a unit in \( \mathbb{Z}_d((x)) \).
\end{proof}

In the proof above, we only prove the existence of the inverse but do not construct it explicitly. For practical purposes, we would like to know how to obtain the inverse precisely, which can be used to construct the (semi-infinite) boundary string operators for boundary anyons. Now, we are going to provide an alternative constructive proof for the inverse of $f(x)$ in $\ZZ_d((x))$ with $d=p^k$.
\begin{lemma}
    Let \( f(x) \in \mathbb{Z}_d((x)) \), where \( d = p^k \) for some prime \( p \) and integer \( k \geq 1 \). If at least one coefficient of \( f(x) \) is not divisible by \( p \), then \( f(x) \) is invertible in \( \mathbb{Z}_d((x)) \).
\end{lemma}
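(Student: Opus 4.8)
The plan is to build the inverse in two stages: first invert $f$ modulo $p$, where the coefficient ring becomes the field $\FF_p$ and the computation is entirely explicit, and then lift that inverse from $\ZZ_p$ up to $\ZZ_{p^k}$ by a $p$-adic Newton iteration. Write $f(x)=\sum_{i\ge m}a_i x^{i}$ and set $j:=\min\{\,i: p\nmid a_i\,\}$; this minimum exists and is finite because by hypothesis the set is nonempty and only finitely many negative powers of $x$ occur. A stage-based approach is needed, rather than a naive ``factor out the leading monomial'' argument, because over $\ZZ_d$ the lowest nonzero coefficient $a_m$ may be a zero divisor, so $f$ need not equal a monomial times a unit of $\ZZ_d[[x]]$; passing to the residue field $\FF_p$ is exactly what repairs this.

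For the first stage I would reduce coefficients mod $p$, obtaining $\bar f=x^{j}\bar g$ with $\bar g=\sum_{l\ge0}\bar a_{j+l}x^{l}\in\FF_p[[x]]$, whose constant term $\bar a_j$ is nonzero, hence a unit. Its inverse $\bar g^{-1}=\sum_{l\ge0}\bar c_l x^{l}$ is produced by the convolution recursion $\bar c_0=\bar a_j^{-1}$, $\bar c_l=-\bar a_j^{-1}\sum_{i=1}^{l}\bar a_{j+i}\bar c_{l-i}$, so that $\bar b_0:=x^{-j}\bar g^{-1}$ satisfies $\bar f\bar b_0=1$ in $\ZZ_p((x))$. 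Lifting the coefficients of $\bar b_0$ arbitrarily back to $\ZZ_d$ (without introducing new support) gives some $b_0\in\ZZ_d((x))$ with $f b_0\equiv1\pmod p$, since coefficientwise reduction mod $p$ is a ring homomorphism on bounded-below-support series.

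For the second stage I would set $b_{n+1}:=b_n(2-f b_n)$ and use the identity $1-f b_{n+1}=(1-f b_n)^{2}$, which upgrades $f b_n\equiv1\pmod{p^{2^{n}}}$ to $f b_{n+1}\equiv1\pmod{p^{2^{n+1}}}$; after $\lceil\log_2 k\rceil$ iterations one obtains $b$ with $f b\equiv1\pmod{p^k}$, i.e.\ $fb=1$ in $\ZZ_d((x))$. An equivalent closed form, worth recording for practical use, is to write $f=(x^{j}g_0)(1+h)$, where $g_0:=\sum_{l\ge0}a_{j+l}x^{l}$ has unit constant term $a_j$ (as $p\nmid a_j$ and $d=p^k$ force $\gcd(a_j,d)=1$), so $x^{j}g_0$ is invertible in $\ZZ_d((x))$, and $h:=(x^{-j}g_0^{-1})\big(\sum_{i<j}a_i x^{i}\big)$; the second factor has all coefficients in $p\ZZ_d$, hence so does $h$, so $h^{k}\in p^{k}\ZZ_d((x))=0$ and $(1+h)^{-1}=\sum_{l=0}^{k-1}(-h)^{l}$ is a finite sum, giving $f^{-1}=(1+h)^{-1}x^{-j}g_0^{-1}$ explicitly.

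The routine overhead I would still need to dispatch is that every series appearing above has support bounded below, so that all products and powers invoked have coefficients given by finite sums and genuinely lie in $\ZZ_d((x))$; this is stable under the operations used and presents no difficulty. The one genuinely delicate step, and the conceptual heart of the argument, is the reduction mod $p$ in the first stage: correctly locating the pivot index $j$ and recognizing that the low-order, non-unit coefficients $a_m,\dots,a_{j-1}$ cannot obstruct invertibility. Once that is in place, the lifting is a textbook $p$-adic Newton (or linear Hensel) iteration.
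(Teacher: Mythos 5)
Your proof is correct. Both you and the paper start from the same first stage --- reduce mod $p$, invert in the field $\mathbb{Z}_p((x))$ by the standard convolution recursion, and lift --- but the lifting mechanics differ. The paper runs a linear Hensel induction on $k$: assuming an inverse mod $p^{k_0}$ with defect $p^{k_0}\alpha(x)$, it subtracts the correction $\alpha(x)p^{k_0}h(x)$ (where $h$ is the mod-$p$ inverse) to gain one power of $p$ per step, so $k-1$ corrections in total. You instead use the quadratic Newton iteration $b_{n+1}=b_n(2-fb_n)$, which doubles the precision each step via $1-fb_{n+1}=(1-fb_n)^2$ and terminates in $\lceil\log_2 k\rceil$ iterations; this is a genuine efficiency gain if one cares about the construction, which is the paper's stated motivation for giving this second, constructive proof at all. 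Your closed form $f=(x^jg_0)(1+h)$ with $h\in p\,\mathbb{Z}_d((x))$ nilpotent and $(1+h)^{-1}=\sum_{l=0}^{k-1}(-h)^l$ is a different decomposition again: it is essentially the mechanism of the paper's \emph{first} lemma in this appendix (which invokes nilpotency of $h(x)\in(p)$ and cites Atiyah--Macdonald), but made fully explicit, so it arguably unifies the two proofs the paper presents. One point worth stating explicitly if you write this up: the identification of the pivot $j=\min\{i:p\nmid a_i\}$ and the factorization $\bar f=x^j\bar g$ silently use that all coefficients below index $j$ vanish mod $p$, which is exactly the definition of $j$; and the claim that multiplication by $g_0^{-1}$ preserves the property ``all coefficients in $p\mathbb{Z}_d$'' relies on the coefficients of a product being finite convolutions, which your bounded-below-support remark covers.
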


\begin{proof}
    We will prove this lemma by induction on \( k \).

    For \( k = 1 \), \( \mathbb{Z}_d((x)) = \mathbb{Z}_p((x)) \), which is a field. If at least one coefficient of \( f(x) \) is not divisible by \( p \), then \( f(x) \neq 0 \) in \( \mathbb{Z}_p((x)) \). Since \( \mathbb{Z}_p((x)) \) is a field, any non-zero element has an inverse, which can be computed by recursively solving for its coefficients in the Laurent series. Therefore, \( f(x) \) is invertible in \( \mathbb{Z}_p((x)) \), and the lemma holds for \( k = 1 \).

    Assume that the lemma holds for all \( k \leq k_0 \), where \( k_0 \geq 1 \). We need to show that it also holds for \( k = k_0 + 1 \).
    Consider \( f(x) \) in \( \mathbb{Z}_{p^{k_0+1}}((x)) \). We can express \( f(x) \) as a formal Laurent series:
    \begin{equation}
    f(x) = \sum_{n=-N_0}^{\infty} a_n x^n.
    \end{equation}
    Now, focus on the terms where the coefficients are not divisible by \( p \). Define a series \( g(x) \) in \( \mathbb{Z}_p((x)) \) using only those coefficients \( a_n \) of \( f(x) \) that are not divisible by \( p \).
    Since \( g(x) \) has at least one coefficient not divisible by \( p \), there exists an element \( h(x) \) in \( \mathbb{Z}_p((x)) \) such that \( g(x)h(x) = 1 \) in \( \mathbb{Z}_p((x)) \). This implies that \( f(x) \) and \( h(x) \) satisfy \( f(x) h(x) = 1 \) modulo \( p \).

    Multiplying through by \( p^{k_0} \), we get:
    \begin{equation}
    p^{k_0} f(x) h(x) = p^{k_0} \pmod{p^{k_0+1}},
    \end{equation}
    where $h(x)$ has been lifted to $\ZZ_d((x))$.
    This shows that \( f(x) \) can generate \( p^{k_0} \) modulo \( p^{k_0+1} \).
    By the induction hypothesis, there exists an inverse \( f(x)^{-1}_{k_0} \) in \( \mathbb{Z}_{p^{k_0}}((x)) \) such that
    \begin{equation}
        f(x) f(x)^{-1}_{k_0} = 1 \pmod{p^{k_0}}.
    \end{equation}
    We can express this as
    \begin{equation}
        f(x) f(x)^{-1}_{k_0} = 1 + p^{k_0} \alpha(x),
    \end{equation}
    where \( \alpha(x) \in \mathbb{Z}((x)) \).
    Notice that the term \( p^{k_0} \alpha(x) \) represents a correction needed for lifting to \( \mathbb{Z}_{p^{k_0+1}}((x)) \), and since \( f(x) \) can generate \( p^{k_0} \) modulo \( p^{k_0+1} \) as shown earlier, we can adjust \( f(x)^{-1}_{k_0} \) to construct the inverse of \( f(x) \) in \( \mathbb{Z}_{p^{k_0+1}}((x)) \). Specifically, consider the modified inverse:
    \begin{equation}
        f(x)^{-1}_{k_0+1} = f(x)^{-1}_{k_0} - \alpha(x)p^{k_0}h(x).
    \end{equation}
    Then, we have:
    \begin{equation}
        f(x) f(x)^{-1}_{k_0+1} = 1 \pmod{p^{k_0+1}}.
    \end{equation}
    This demonstrates that \( f(x) \) has an inverse in \( \mathbb{Z}_{p^{k_0+1}}((x)) \). By the principle of induction, the lemma holds for all \( k \geq 1 \).
\end{proof}

\section{Modified Gaussian elimination (MGE)}
\label{appendix: Modified Gaussian elimination (MGE)}
This appendix reviews the modified Gaussian elimination method introduced in Ref.~\cite{liang2023extracting}. Standard Gaussian elimination is not applicable over the ring \(\mathbb{Z}_d\) because the multiplicative inverse of an element may not exist; for example, the element \(2\) in \(\mathbb{Z}_4\) does not have an inverse. Furthermore, \(\mathbb{Z}_d\) can contain zero divisors, meaning there exist nonzero elements \(a\) and \(x \in \mathbb{Z}_d\) such that \(ax = 0\). For instance, in \(\mathbb{Z}_4\), \(2 \times 2 = 0\), indicating that \(2\) is a zero divisor.

Therefore, we introduce the modified Gaussian elimination algorithm over $\mathbb{Z}_d$, which is based on the Hermite normal form:
\begin{enumerate}
    \item Given an $n \times m$ matrix $A$ over $\mathbb{Z}_d$, we treat the entries in the first column $a_{i,1}$ for all $i \in \{1, 2, \cdots, n\}$ as integers $\{0, 1, 2, \ldots, d-1\}$ in $\mathbb{Z}$. To restore the $\mathbb{Z}_d$ periodicity, we append a new row $[d, 0, 0, \cdots, 0]$ to the bottom of matrix $A$, transforming it into a $(n+1) \times m$ matrix, denoted as $A'$. Next, we find the greatest common divisor of its first column $\{a_{1,1}, a_{2,1}, \ldots, a_{n,1}, d \}$, denoted as $\gcd(a_{i,1})_d$.\footnote{For convenience, we choose $\gcd(a_{i,1})_d$ to be in the set $\{0, 1, 2, \ldots, d-1\}$.} From the extended Euclidean algorithm, there exists a linear combination:
    \begin{eqs}
        r_1 a_{1,1} + r_2 a_{2,1} + \cdots + r_n a_{n,1} +{r_0 d}= \gcd(a_{i,1})_d.
    \end{eqs}
    Moreover, this linear combination can be obtained by repeatedly subtracting one entry from another entry, starting from
    \begin{equation}
        [a_{1,1}, a_{2,1}, a_{3,1}, \ldots, a_{n,1}, d],
    \end{equation}
    and obtaining the final form
    \begin{equation}
        [\gcd(a_{i,1})_d, 0, 0, \ldots, 0, 0].
    \end{equation}
    Subtracting one entry from another and reordering corresponds to row operations in the matrix $A'$. Therefore, we apply the corresponding row operations in the matrix $A'$ according to the extended Euclidean algorithm, which transforms the first column into
    \begin{equation}
        [\gcd(a_{i,1})_d, 0, 0, \ldots, 0, 0]^T.
    \end{equation}
    
    \item If $\gcd(a_{i,1})_d$ is not a zero divisor, meaning there is no $r^*$ such that $0 < r^* \leq d-1$ for which
    \begin{equation}
        \gcd(a_{i,1})_d \times r^* \equiv 0 \quad \pmod{d},
    \end{equation}
    we can multiply an invertible number in this row to make it equal to $+1$.

    \item The first column and the first row have been processed. We then repeat the above procedures on the submatrix, which excludes the first column and the first row.

\end{enumerate}

In the original matrix \( A \), linear \textbf{relations} exist among the row vectors; specifically, certain row vectors can be combined linearly to yield the zero row vector (mod \( d \)). These relations are crucial as they provide insights into the connections between the row vectors. For instance, suppose one row \( r_1 \) represents the syndrome pattern of an anyon \( v \), another row \( r_2 \) represents the syndrome pattern of a different anyon \( v' \), and a third row \( r_3 \) represents the syndrome of a local Pauli operator \( \mathcal{P} \). If these rows satisfy the relation $r_1 - r_2 + r_3 = 0$, it implies that the anyons \( v \) and \( v' \) are related through the local operator \( \mathcal{P} \), or more specifically,
$$
v' = v + \epsilon(\mathcal{P}),
$$
indicating that \( v \) and \( v' \) are of the same anyon type. Therefore, identifying all such relations among the row vectors is essential.

We present a concrete example of implementing the modified Gaussian elimination algorithm on a selected matrix \( A \) over \( \mathbb{Z}_8 \). The matrix \( A \) is defined as follows:
\begin{equation}
    A = \begin{bmatrix}
    4 & 2 & 0  \\
    6 & 0 & 3  \\
    0 & 7 & 4  
    \end{bmatrix} = \begin{bmatrix}
    - & v_1 & - \\
    - & v_2 & - \\
    - & v_3 & - 
    \end{bmatrix},
\label{eq: example A matrix}
\end{equation}
where \( v_1 \), \( v_2 \), and \( v_3 \) denote the row vectors of \( A \). Our objective is to derive the relationships among the row vectors \( v_1 \), \( v_2 \), and \( v_3 \).

First, we embed the matrix over \( \mathbb{Z} \) such that each entry is chosen from \( 0, 1, 2, \ldots, 7 \). We then insert a row \( [8, 0, 0] \) at the bottom:
\begin{equation}
    [A'|R] = \left[\begin{array}{c|c}
    \begin{matrix}
    4 & 2 & 0   \\
    6 & 0 & 3   \\
    0 & 7 & 4 \\
    \color{blue} 8 & \color{blue} 0 & \color{blue} 0  
    \end{matrix} &
    \begin{matrix}
         1 & 0 & 0 & 0\\
         0 & 1 & 0 & 0\\
         0 & 0 & 1 & 0\\
         \color{blue} 0 & \color{blue} 0 & \color{blue} 0 & \color{blue} 1 
    \end{matrix}
    \end{array}\right],
\end{equation}
where the matrix \( R \) is used to track row operations during the following process, recording how each current row is derived from the rows in the original matrix \( A \). The greatest common divisor (gcd) of the first column is \( 2 \), which can be computed from \( (-1) \times 6 + 1 \times 8 \):

\begin{equation}
    \left[\begin{array}{c|c}
    \begin{matrix}
        4 & 2 & 0  \\
        6 & 0 & 3  \\
        0 & 7 & 4  \\
        \color{blue} 2 & \color{blue} 0 & \color{blue} -3 
    \end{matrix} &
    \begin{matrix}
        1 & 0 & 0 & 0\\
        0 & 1 & 0 & 0\\
        0 & 0 & 1 & 0\\
        \color{blue} 0 & \color{blue} -1 & \color{blue} 0 & \color{blue} 1
    \end{matrix}
    \end{array}\right].
\end{equation}
Subsequently, we move the last row to the top and use it to eliminate entries in the other rows:
\begin{equation}
\left[\begin{array}{c|c}
    \begin{matrix}
    2 & 0 & -3  \\
    \color{blue} 0 & \color{blue} 2 & \color{blue} 6  \\
    \color{blue} 0 & \color{blue} 0 & \color{blue} 12 \\
    \color{blue} 0 & \color{blue} 7 & \color{blue} 4 
    \end{matrix} &
    \begin{matrix}
        0 & -1 & 0 & 1\\
        \color{blue} 1 & \color{blue} 2 & \color{blue} 0 & \color{blue} -2 \\
        \color{blue} 0 & \color{blue} 4 & \color{blue} 0 & \color{blue} -3\\
        \color{blue} 0 & \color{blue} 0 & \color{blue} 1 & \color{blue} 0\\
    \end{matrix}
    \end{array}\right]
\end{equation}
The first row and column have been completed. From this point onward, the first row will not be involved in subsequent calculations. We will continue the process by initially inserting \( [0, 8, 0] \):
\begin{equation}
\left[\begin{array}{c|c}
    \begin{matrix}
        2 & 0 & -3  \\
        0 & 2 & 6  \\
        0 & 0 & 12  \\
        0 & 7 & 4  \\
        \color{blue} 0 & \color{blue} 8 & \color{blue} 0 
    \end{matrix}& 
    \begin{matrix}
        0 & -1 & 0 & 1 & 0\\
        1 & 2 & 0 & -2 & 0 \\
        0 & 4 & 0 & -3 & 0\\
        0 & 0 & 1 & 0 & 0\\
        \color{blue} 0 & \color{blue} 0 & \color{blue} 0 & \color{blue} 0 & \color{blue} 1
    \end{matrix}
    \end{array}\right].
\end{equation}
The gcd of the second column (excluding the entry in the first row) is \( 1 \), obtained from \( 8 - 7 \):
\begin{equation}
\left[\begin{array}{c|c}
    \begin{matrix}
        2 & 0 & -3  \\
        0 & 2 & 6  \\
        0 & 0 & 12  \\
        0 & 7 & 4  \\
        \color{blue} 0 & \color{blue} 1 & \color{blue} -4  
    \end{matrix} & 
    \begin{matrix}
        0 & -1 & 0 & 1 & 0\\
        1 & 2 & 0 & -2 & 0 \\
        0 & 4 & 0 & -3 & 0\\
        0 & 0 & 1 & 0 & 0\\
        \color{blue} 0 & \color{blue} 0 & \color{blue} -1 & \color{blue} 0 & \color{blue} 1
    \end{matrix}
    \end{array}\right].
\end{equation}
Next, we place the last row in the second position and use it to cancel entries in the rows below:
\begin{equation}
\left[\begin{array}{c|c}
    \begin{matrix}
        2 & 0 & -3  \\
        0 & 1 & -4  \\
        \color{blue} 0 & \color{blue} 0 & \color{blue} 14  \\
        \color{blue} 0 & \color{blue} 0 & \color{blue} 12 \\
        \color{blue} 0 & \color{blue} 0 & \color{blue} 32  
    \end{matrix} & 
    \begin{matrix}
        1 & 0 & 0 & -1 & 0\\
        0 & 1 & 0 & 1 & -1\\
        \color{blue} 1 & \color{blue} 2 & \color{blue} 2 & \color{blue} -2 & \color{blue} -2\\
        \color{blue} 0 & \color{blue} 4 & \color{blue} 0 & \color{blue} -3 & \color{blue} 0\\
        \color{blue} 0 & \color{blue} 0 & \color{blue} 8 & \color{blue} 0 & \color{blue} -7 \\        
    \end{matrix}
    \end{array}\right].
\end{equation}
Finally, we insert \( [0, 0, 8] \):
\begin{equation}
\left[\begin{array}{c|c}
    \begin{matrix}
        2 & 0 & -3  \\
        0 & 1 & -4  \\
        0 & 0 & 14  \\
        0 & 0 & 12 \\
        0 & 0 & 32  \\
        \color{blue} 0 & \color{blue} 0 & \color{blue} 8
    \end{matrix} & 
    \begin{matrix}
        1 & 0 & 0 & -1 & 0 & 0\\
        0 & 1 & 0 & 1 & -1 & 0\\
        1 & 2 & 2 & -2 & -2 & 0\\
        0 & 4 & 0 & -3 & 0 & 0\\
        0 & 0 & 8 & 0 & -7 & 0\\ 
        \color{blue} 0 & \color{blue} 0 & \color{blue} 0 & \color{blue} 0 & \color{blue} 0 & \color{blue} 1
    \end{matrix}
    \end{array}\right],
\end{equation}
and find the gcd of the third column (excluding the entries in the first and second rows) is \( 2 \), which can be obtained from \( 14 - 12 \):
\begin{equation}
\left[\begin{array}{c|c}
    \begin{matrix}
        2 & 0 & -3  \\
        0 & 1 & -4  \\
        \color{blue} 0 & \color{blue} 0 & \color{blue} 2  \\
        0 & 0 & 12 \\
        0 & 0 & 32  \\
        0 & 0 & 8
    \end{matrix} & 
    \begin{matrix}
        1 & 0 & 0 & -1 & 0 & 0\\
        0 & 1 & 0 & 1 & -1 & 0\\
        \color{blue} 1 & \color{blue} -2 & \color{blue} 2 & \color{blue} 1 & \color{blue} -2 & \color{blue} 0\\
        0 & 4 & 0 & -3 & 0 & 0\\
        0 & 0 & 8 & 0 & -7 & 0\\ 
        0 & 0 & 0 & 0 & 0 & 1
    \end{matrix}
\end{array}\right].
\end{equation}
Finally, we use this \( 2 \) to cancel all entries below:
\begin{equation*}
[A'|R] = \left[\begin{array}{c|c}
    \begin{matrix}
        2 & 0 & -3  \\
        0 & 1 & -4  \\
        0 & 0 & 2  \\
        \color{blue} 0 & \color{blue} 0 & \color{blue} 0 \\
        \color{blue} 0 & \color{blue} 0 & \color{blue} 0  \\
        \color{blue} 0 & \color{blue} 0 & \color{blue} 0
    \end{matrix} & 
    \begin{matrix}
        1 & 0 & 0 & -1 & 0 & 0\\
        0 & 1 & 0 & 1 & -1 & 0\\
        1 & -2 & 2 & 1 & -2 & 0\\
        \color{blue} -6 & \color{blue} 16 & \color{blue} -12 & \color{blue} -9 & \color{blue} 12 & \color{blue} 0\\
        \color{blue} -16 & \color{blue} 32 & \color{blue} -24 & \color{blue} -16 & \color{blue} 25 & \color{blue} 0\\ 
        \color{blue} -4 & \color{blue} 8 & \color{blue} -8 & \color{blue} -4 & \color{blue} 8 & \color{blue} 1
    \end{matrix}
\end{array}\right].
\label{eq: A'|R matrix}
\end{equation*}
We have achieved the row echelon form for the integer matrix \( A \).
We then select the bottom-left \(3 \times 3\) block of matrix \(R\) to serve as the relation matrix resulting from the modified Gaussian elimination:
\begin{equation*}
    \text{relation} := \left[
    \begin{matrix}
        -6 & 16 & -12 \\
        -16 & 32 & -24 \\
        -4 & 8 & -8 \\
    \end{matrix}\right]
    =
    \left[
    \begin{matrix}
        2 & 0 & 4 \\
        0 & 0 & 0 \\
        4 & 0 & 0 \\
    \end{matrix}\right]
    \pmod{8}.
\end{equation*}
The last three columns in \(R\) will be reduced modulo \( \mathbb{Z}_8\), and thus they do not play a role in the obtained relations. The relation matrix traces the relationships among \(v_1\), \(v_2\), and \(v_3\) as derived from Eq.~\eqref{eq: example A matrix}:
\begin{equation}
    2v_1 + 4v_3 = [0, 0, 0], \quad 4v_1 = [0, 0, 0] \pmod{8}.
\end{equation}

{\change
\paragraph{Time complexity of modified Gaussian elimination (MGE).}
Consider MGE applied to an $n\times m$ matrix over $\mathbb{Z}_d$.
Let $N=\min\{n,m\}$ be the maximum number of elimination rounds.
In each round, MGE performs two types of operations:

\begin{enumerate}
\item \emph{Pivot extraction via gcd.}
Instead of requiring an invertible pivot, MGE computes a gcd-type pivot for the current column by adjoining the extra row
$[d,0,\dots,0]$ and applying the (extended) Euclidean algorithm to obtain a linear combination that reduces the column to
$[\gcd,0,\dots,0]^T$.
This costs $O(n\log d)$ arithmetic operations per round (gcd aggregation over $n$ entries with bit-cost $O(\log d)$).

\item \emph{Elimination of the remaining submatrix.}
After the pivot is produced, MGE applies elementary row operations to eliminate the current column entry in the other rows
and updates the remaining submatrix.
At round $i$ the active submatrix has size roughly $(n-i)\times(m-i)$, so the update cost is
$O\bigl((n-i)(m-i)\bigr)$.
Summing over $i=0,\dots,N-1$ yields
\[
\sum_{i=0}^{N-1} O\bigl((n-i)(m-i)\bigr)
= O(nmN).
\]
\end{enumerate}

Combining the two contributions gives
\begin{equation}
T_{\mathrm{MGE}}(n,m;d)
= O(nm\,N)\;+\;O(N\,n\log d),
\qquad N=\min\{n,m\}.
\label{eq:MGE_complexity}
\end{equation}
In the typical near-square case $n\sim m\sim N$, this simplifies to
\[
T_{\mathrm{MGE}} = O(N^3) + O(N^2\log d)=O(N^3),
\]
i.e.\ MGE is of the same asymptotic order as ordinary Gaussian elimination, with an additional (usually mild) $\log d$
overhead from the Euclidean/gcd pivoting step.
If one explicitly tracks row operations by augmenting the matrix with a relation matrix (as in Appendix~C),
the column dimension increases by $O(n)$, changing the prefactor but not the asymptotic scaling.
}

\section{Algorithm pseudocode }\label{appendix: pseudocode}

This appendix presents the pseudocode for the algorithm described in Sections~\ref{sec: Getting boundary gauge operators} and~\ref{sec: Getting boundary string}. By adjusting the stabilizer polynomials $\mathcal{S}$ and the input range $A$ of Pauli operators, the algorithm can also be used to derive the condensed bulk string at the boundary or the bulk string passing through defects, as outlined in Appendix~\ref{app: Get string on the boundary or defect}. Additionally, it can be applied to determine the endpoints of defect lines, as discussed in Appendix~\ref{sec: Get the endpoint of the defect line}.
\begin{breakablealgorithm}
\label{algorithm: Getting boundary gauge operators}
    \caption{Solving for boundary gauge operators}
    \begin{algorithmic}[1] 
        \Require Stabilizer polynomials $\mathcal{S}$, truncation range $k$, {\change range of Pauli $l_x$ and $l_y$}, range of stabilizer $m_x$ and $m_y${, $\ZZ_d$ $d$}
        \Ensure The boundary gauge operators $\mathcal{G}$
        \State Construct matrix $M_1$ shown in Eq.~\eqref{eq: matrix M1 definition} based on the range of Pauli $l_x$ and $l_y$, and stabilizer $m_x$ and $m_y$,
        \begin{equation}
    M_1 \leftarrow \begin{pmatrix}
    \langle \mathcal{P}_1 \cdot \mathcal{BS}_1 \rangle_0 & \langle \mathcal{P}_1 \cdot \mathcal{BS}_2 \rangle_0 & \langle \mathcal{P}_1 \cdot \mathcal{BS}_3 \rangle_0 & \cdots \\
    \langle \mathcal{P}_2 \cdot \mathcal{BS}_1 \rangle_0 & \langle \mathcal{P}_2 \cdot \mathcal{BS}_2 \rangle_0 & \langle \mathcal{P}_2 \cdot \mathcal{BS}_3 \rangle_0 & \cdots \\
    \langle \mathcal{P}_3 \cdot \mathcal{BS}_1 \rangle_0 & \langle \mathcal{P}_3 \cdot \mathcal{BS}_2 \rangle_0 & \langle \mathcal{P}_3 \cdot \mathcal{BS}_3 \rangle_0 & \cdots \\
    \vdots & \vdots & \vdots & \ddots
    \end{pmatrix}.
    \end{equation}
    {\change $\wwide{M}_1$ is a $[4(2l_x+1)(2l_y+1)] \times [t(2m_x+1)(2m_y+1)] $ matrix with 2 qudits per unit cell and $t$ stabilizers.}
    
        \State Perform modified Gaussian elimination (for nonprime dimensional qudits) or Gaussian elimination (for prime dimensional qudits) on $M_1$ to get a relation matrix $R_1$.
        \State Obtain local operator $\mathcal{O}$ from rows of $R_1$ corresponding to zero rows in Modified Gaussian elimination
        \State Construct $\wwide{M}_2$ in Eq.~\eqref{eq: M_2 matrix} by applying translation duplication map $\text{TD}_{m_x,m_y}$ with range $m_x$ and $m_y$ to stabilizer matrix $\mathcal{S}$, 
        \begin{eqs}
            \wwide{M}_2 \leftarrow \begin{bmatrix}
            \wwide{\text{TD}_{m_x,m_y}(\mS_1)}\\
            \wwide{\text{TD}_{m_x,m_y}(\mS_2)}\\
            \vdots\\
            \wwide{\text{TD}_{m_x,m_y}(\mS_t)}
        \end{bmatrix}.\end{eqs} 
        $\wwide{M}_2$ is a $[t(2m_x+1)(2m_y+1)]\times [(2k+1)^2]$ matrix with $t$ stabilizers.
        \State Perform MGE on $\wwide{M}_2$ to get $\text{MGE}(\wwide{M}_2)$.
        \State Define the boundary gauge operator $\mathcal{G}$.
        \State Local operator matrix $\wwide{\mathcal{O}} \leftarrow$ apply the truncation map to local operator set $\mathcal{O}$
        \For{$\wwide{\mathcal{O}}_i$ in local operator matrix $\wwide{\mathcal{O}}$}
            \If{$\wwide{\mathcal{O}}_i$ is not in the row span of $\text{MGE}(\wwide{M}_2)$}
                \State $\mathcal{G}_i \leftarrow \wwide{\mathcal{O}}_i$
                \State Construct $\wwide{M}_3$ in Eq.~\eqref{eq: M_3 matrix},
                    \begin{eqs}
                    \wwide{M}_3 \leftarrow \wwide{\mathrm{TD}_{m_x=0,m_y}(\mathcal{G}}) .
                    \end{eqs}
                \State $\text{MGE}(\wwide{M}_2) \leftarrow \text{concat}(\text{MGE}(\wwide{M}_2),\wwide{M}_3)$
            \EndIf
        \EndFor
        
        \State \Return The boundary gauge operators $\mathcal{G}$
    \end{algorithmic}
\end{breakablealgorithm}

{\change
\paragraph{Time complexity of Algorithm~1.}
The runtime is dominated by (modified) Gaussian elimination, for which we use
\[
T_{\mathrm{MGE}}(n,m)=O\!\bigl(nm\min\{n,m\}\bigr).
\]
In Algorithm~1, the main eliminations are:
(i) $\mathrm{MGE}(M_1)$ with 
\begin{equation}
    M_1\in\mathbb{Z}_d^{\,n_1\times m_1},\qquad
    n_1= 4(2l_x+1)(2l_y+1)\quad
    m_1=t(2m_x+1)(2m_y+1)
\end{equation}
(ii) $\mathrm{MGE}(M_2)$ with
\begin{equation}
M_2\in\mathbb{Z}_d^{\,n_2\times m_2},\qquad
n_2=t(2m_x+1)(2m_y+1),\quad 
m_2=(2k+1)^2.
\end{equation}
Thus
\begin{equation}
T_{\mathrm{Alg1}}
=
O\!\bigl(n_1m_1\min\{n_1,m_1\}\bigr)
+
O\!\bigl(n_2m_2\min\{n_2,m_2\}\bigr),
\end{equation}
up to lower-order costs from constructing the matrices and membership tests in the loop.
}

~\\
\begin{breakablealgorithm}
\label{algorithm: Getting boundary anyon and boundary string}
    \caption{Computing boundary anyons and boundary string operators}
    \begin{algorithmic}[1] 

    \Require The boundary gauge operators $\mathcal{G}$, range of boundary gauge operators $m_y$, search range $N_y$ for $y$-direction.
    \Ensure Basis boundary anyons $V$ and boundary string operators $P$
    \For{$n = 1, 2, ..., N_y$}
        \State Get the error syndromes $\zeta(\mathcal{G})$ between boundary gauge operators. Since $\mathcal{G}$ only has translational symmetry in the y-direction, we should only extract the polynomials containing $x^0$ of each dot product.
        \State Define matrix $\wwide{M}_4$ in Eq.~\eqref{eq: M_4 matrix} as 
        \begin{eqs}
            \wwide{M}_4 \leftarrow
            \begin{bmatrix}
        \wwide{\mathrm{TD}_{m_x=0,m_y}(\zeta(\mathcal{G}_1)}) \\
        \wwide{\mathrm{TD}_{m_x=0,m_y}(\zeta(\mathcal{G}_2)})\\
        \vdots \\
        \wwide{\mathrm{TD}_{m_x=0,m_y}(\zeta(\mathcal{G}_r)}) \\
        \wwide{\mathrm{TD}_{m_x=0,m_y}([(1-y^n), 0 ,... , 0 }] \\
        \wwide{\mathrm{TD}_{m_x=0,m_y}([0, (1-y^n) ,... , 0 }] \\
        \vdots \\
        \wwide{\mathrm{TD}_{m_x=0,m_y}([0 ,0 ,... , (1-y^n) }] \\
        \end{bmatrix}.
        \end{eqs}
        $\wwide{M}_4$ is a $[2r(2m+1)^2] \times [r(2k+1)^2]$ matrix with $r$ boundary gauge operators. 
        \State Calculate $\text{MGE}(\wwide{M}_4 )$, obtain a boundary anyon matrix 
        \begin{eqs}
            \wwide{V}\leftarrow \begin{bmatrix}
                \wwide{v_1}\\
                \vdots\\
                \wwide{v_{\alpha} }
            \end{bmatrix}
        \end{eqs} which is a $\alpha \times (r(2k+1)^2)$ matrix and relation matrix $R_1$.\footnote{Assume we get $\alpha$ different boundary anyon solutions here.}
        Their string operators along the $x$-direction form the string operator matrix
        \begin{eqs}
            \wwide{P}\leftarrow\begin{bmatrix}
                \wwide{P^{v_1}}\\
                \vdots\\
                \wwide{P^{v_\alpha}}
            \end{bmatrix}
        \end{eqs}
        which is an $\alpha \times (2r(2k+1)^2)$ matrix obtained from the relation matrix $R_1$.

        \State Construct $\wwide{M}_5$ in Eq.~\eqref{eq: M_5 matrix},
        \begin{eqs}
        \wwide{M}_5 :=
        \begin{bmatrix}
            \wwide{\mathrm{TD}_{m_x=0,m_y}(\zeta(\mathcal{G}_1)}) \\
            \wwide{\mathrm{TD}_{m_x=0,m_y}(\zeta(\mathcal{G}_2)})\\
            \vdots\\
            \wwide{\mathrm{TD}_{m_x=0,m_y}(\zeta(\mathcal{G}_r)})
        \end{bmatrix},
        \end{eqs}
        {\change $\wwide{M}_5$ is a $[r(2m+1)^2] \times [r(2k+1)^2]$ matrix with $r$ boundary gauge operators. }
        \State Perform MGE on $\wwide{M}_5$ to get $\text{MGE}(\wwide{M}_5)$.
        \State Define the basis boundary anyon matrix $\wwide{V}^{'}$ and corresponding boundary string operator matrix $\wwide{P}^{'}$.
        \For{$\wwide{V}_i$, $\wwide{P}_i$ in anyon matrix $\wwide{V}, \wwide{P}$}
            \If{$\widetilde{V}_i$ is not in the row span of $\text{MGE}(\wwide{M}_5)$}
                \State $\wwide{V}^{'} \leftarrow \text{concat}(\wwide{V}^{'},\widetilde{V}_i)$
                \State $\wwide{P}^{'} \leftarrow \text{concat}(\wwide{P}^{'},\widetilde{P}_i)$
                \State $\text{MGE}(\wwide{M}_5) \leftarrow \text{concat}(\text{MGE}(\wwide{M}_5),\widetilde{V}_i)$
            \EndIf
        \EndFor
        \If{the $d$ of $\ZZ_d$ is nonprime}
        \State Construct boundary anyon relation matrix $M_v$ by $\wwide{V}^{'}$ and $\wwide{P}^{'}$. 
        \State Calculate the Smith normal form of $M_v $ as  $PM_vQ=A$
        \State $index \leftarrow \arg_i{A(i,i)\neq \pm 1}$
        \State Refresh the basis boundary anyon matrix by $\wwide{V}^{'}=\wwide{V}^{'}Q(:,index) $. 
        \State Refresh the corresponding boundary string operator matrix by $\wwide{P}^{'}=\wwide{P}^{'}Q(:,index) $.
        \EndIf
    \EndFor
    \State Choose the smallest $n$ as $n^*$ and ensure that the maximum number of boundary anyons can be obtained.
    \begin{eqs}
        &\text{Basis boundary anyons}~V \leftarrow \wwide{V}^{'}(n_y^*)\\
        &\text{Boundary string operators}~P \leftarrow \wwide{P}^{'}(n_y^*)
    \end{eqs}
    \State \Return Basis boundary anyons $V$ and boundary string operators $P$
    \end{algorithmic}
\end{breakablealgorithm}

{\change 
\paragraph{Time complexity of Algorithm~2.}
Algorithm~2 is also dominated by (modified) Gaussian elimination with the main step $\mathrm{MGE}(M_4)$ and $\mathrm{MGE}(M_5)$, where
\begin{equation}
M_4\in\mathbb{Z}_d^{\,2n_3\times m_3}, \quad M_5\in\mathbb{Z}_d^{\,n_3\times m_3},\qquad
n_3=2(2m_y+1)r,\quad 
m_3=(2k+1)^2 r.
\end{equation}
Hence,
\begin{equation}
T_{\mathrm{Alg2}}
=
O\bigl(3\,n_3 m_3 \min\{n_3,m_3\}\bigr),
\end{equation}
up to lower-order contributions from the subsequent span-reduction and Smith normal form computations on the small matrix $M_v$.
}

{\change
\section{Tensor-categorical dictionary for Abelian stabilizer anyon theories}
\label{app: category dictionary}

This appendix gives a dictionary between the microscopic terminology used in this
manuscript and standard tensor-categorical terminology. We restrict the discussion to
the non-anomalous Abelian anyon theories realized by topological Pauli stabilizer
models. Let \(\mathcal A\) denote the braided fusion category of bulk anyons. In the
Pauli stabilizer setting considered here, \(\mathcal A\) is pointed: its simple objects
are invertible Abelian anyons, fusion is given by an Abelian group law, and every
simple object has quantum dimension one.

When the bulk topological order admits a gapped boundary to the vacuum, it is
Witt-trivial. Equivalently, one can choose a unitary fusion category \(\mathcal C\)
such that
\[
    \mathcal A \simeq \mathcal Z(\mathcal C),
\]
where \(\mathcal Z(\mathcal C)\) is the Drinfeld center of \(\mathcal C\). The choice
of \(\mathcal C\) is not unique; different choices related by Morita equivalence
describe the same bulk anyon theory. For example, in a Dijkgraaf-Witten theory one
may take \(\mathcal C=\mathrm{Vec}^{\omega}_{G}\), whose Drinfeld center gives the
bulk anyons.

For a pointed Abelian theory, a Lagrangian subgroup \(L\subset \mathrm{Irr}(\mathcal A)\)
is a maximal subgroup of bosons with trivial mutual braiding:
\[
    \theta_\ell=1,\qquad B(\ell,\ell')=1,
    \qquad \forall \ell,\ell'\in L.
\]
The corresponding condensable algebra is
\[
    \mathsf A_L=\bigoplus_{\ell\in L}\ell .
\]
In the stabilizer-code language, adding mutually commuting short boundary string
operators for the anyons in \(L\) realizes the condensation of \(\mathsf A_L\) on the
lattice. The subsequent topological-order completion step is a microscopic completion
of the Hamiltonian and does not change the categorical condensation data.

\begingroup
\small
\setlength{\tabcolsep}{5pt}
\renewcommand{\arraystretch}{1.18}
\setlength{\LTleft}{0pt}
\setlength{\LTright}{0pt}

\begin{longtable}{@{}L{0.31\textwidth} L{0.65\textwidth}@{}}
\caption{Dictionary between the stabilizer-code terminology used in this manuscript
and standard tensor-categorical terminology.}
\label{tab:category-dictionary}
\\
\toprule
\textbf{Terminology in this manuscript} &
\textbf{Tensor-categorical terminology} \\
\midrule
\endfirsthead

\toprule
\textbf{Terminology in this manuscript} &
\textbf{Tensor-categorical terminology} \\
\midrule
\endhead

\bottomrule
\endfoot

bulk topological Pauli stabilizer model &
Its intrinsic bulk anyon theory is a pointed Abelian UMTC \(\mathcal A\)
of bulk anyons. If the bulk admits a gapped boundary to the vacuum, one may write
\(\mathcal A\simeq \mathcal Z(\mathcal C)\) for some unitary fusion category
\(\mathcal C\). \\

bulk anyons / bulk excitations &
Simple objects of \(\mathcal A\). Under a realization
\(\mathcal A\simeq\mathcal Z(\mathcal C)\), these are simple objects of the
Drinfeld center \(\mathcal Z(\mathcal C)\). \\

fusion of anyons \(a\times b\) &
Tensor product \(a\otimes b\) in \(\mathcal A\). Since the theories considered here
are Abelian, fusion is group-like. \\

topological spin and mutual braiding &
Twist \(\theta_a\) and double braiding \(B(a,b)=c_{b,a}c_{a,b}\). In the Abelian
case, both are \(U(1)\) phases. \\

\midrule

boundary gauge operators &
Microscopic generators of the anomalous boundary operator algebra obtained after
truncating the bulk. They are not yet a choice of gapped boundary condition. \\

boundary anyons before condensation &
Pre-condensation boundary syndrome labels. Categorically, these are labels
identified with \(\operatorname{Irr}(\mathcal A)\), not yet the simple objects
of a chosen boundary fusion category. In the construction used here, these
labels are in one-to-one correspondence with bulk anyons.  \\

choice of a gapped boundary &
For a fixed presentation \(\mathcal A\simeq\mathcal Z(\mathcal C)\), an
indecomposable left \(\mathcal C\)-module category
\({}_{\mathcal C}\mathcal M\). Intrinsically, equivalently, a Lagrangian algebra \(\mathsf A\in\mathcal A\). In the pointed Abelian
case this algebra is supported on a Lagrangian subgroup \(L=L^\perp\) of
bosons. \\

condensed anyons &
Simple objects in the Lagrangian subgroup \(L\), equivalently the simple
summands of the Lagrangian algebra object
\(\mathsf A_L=\bigoplus_{\ell\in L}\ell\), with its commutative separable
algebra structure. \\

short boundary string operators for \(L\) &
Microscopic lattice representatives of the condensation algebra
\(\mathsf A_L\). Their mutual commutativity reflects that the objects in
\(L\) have trivial twists and trivial mutual braiding with one another. \\

anyon \(a\) terminating on a gapped boundary &
The image of \(a\) under the bulk-to-boundary functor. It terminates as the
boundary vacuum sector precisely when
\(\operatorname{Hom}_{\mathcal A}(a,\mathsf A)\neq 0\). For
\(\mathsf A=\mathsf A_L\) in a pointed Abelian theory, this condition is
precisely \(a\in L\). \\

boundary excitations after choosing a gapped boundary &
Objects of \(\operatorname{Fun}_{\mathcal C}(\mathcal M,\mathcal M)\), the fusion
category of \(\mathcal C\)-module endofunctors of \(\mathcal M\). Equivalently, they can be described as right
\(\mathsf A\)-modules in \(\mathcal A\). \\

bulk excitations brought to a boundary &
The canonical bulk-to-boundary functor
\(\mathcal Z(\mathcal C)\to
\operatorname{Fun}_{\mathcal C}(\mathcal M,\mathcal M)\). For a center object
\((x,c_{x,-})\), it acts on \(m\in\mathcal M\) by \(m\mapsto x\otimes m\). \\

topological-order completion &
A microscopic saturation step ensuring that the boundary Hamiltonian satisfies the
TO condition. It is not an additional categorical datum and does not change the
condensation data. \\

\midrule

domain wall between
\(\mathcal A\simeq\mathcal Z(\mathcal C)\) and
\(\mathcal B\simeq\mathcal Z(\mathcal D)\) &
For fixed center presentations, an indecomposable
\(\mathcal C\)-\(\mathcal D\) bimodule category
\({}_{\mathcal C}\mathcal N_{\mathcal D}\). Equivalently, after folding, it is a
gapped boundary of
\(\mathcal A\boxtimes\overline{\mathcal B}\). \\

defect in one bulk theory &
A domain wall from the theory to itself. By folding, a defect of \(\mathcal A\) is
treated as a boundary of
\(\mathcal A\boxtimes\overline{\mathcal A}\), where
\(\overline{\mathcal A}\) has the opposite braiding and inverse topological spins. \\

excitations living on a wall &
Objects of
\(\operatorname{Fun}_{\mathcal C\mid\mathcal D}(\mathcal N,\mathcal N)\), the
category of \(\mathcal C\)-\(\mathcal D\) bimodule endofunctors. This is a fusion
category when \(\mathcal N\) is indecomposable, and a multifusion category in
general. \\

fusion of compatible walls &
The relative tensor product of bimodule categories. If
\({}_{\mathcal C}\mathcal N_{\mathcal D}\) and
\({}_{\mathcal D}\mathcal P_{\mathcal E}\) are compatible walls, their fusion is
\({}_{\mathcal C}(\mathcal N\boxtimes_{\mathcal D}\mathcal P)_{\mathcal E}\). \\

invertible domain wall &
An invertible bimodule category. There exists an opposite bimodule
\(\mathcal N^{\mathrm{op}}\) such that
\(\mathcal N\boxtimes_{\mathcal D}\mathcal N^{\mathrm{op}}\simeq\mathcal C\) and
\(\mathcal N^{\mathrm{op}}\boxtimes_{\mathcal C}\mathcal N\simeq\mathcal D\).
Equivalently, \(\mathcal C\) and \(\mathcal D\) are Morita equivalent, so their
Drinfeld centers are braided equivalent. \\

anyon permutation across an invertible self-defect &
A braided autoequivalence of \(\mathcal A\). In the Abelian case, its induced
action on simple objects is an automorphism of the anyon group preserving fusion,
topological spins, and mutual braiding. \\

twist-defect endpoint &
An extrinsic defect object, not an intrinsic simple object of the bulk UMTC
\(\mathcal A\). Such endpoints may have non-Abelian quantum dimension even when the
intrinsic bulk anyon theory is Abelian. \\

\end{longtable}

\endgroup

The anomalous three-fermion code discussed in this manuscript should be interpreted
with care in this dictionary. Its intrinsic anyon theory is pointed, but it is not the
Drinfeld center of a two-dimensional fusion category and therefore does not admit a
standalone gapped boundary to the vacuum. It can nevertheless appear as an anomalous
boundary theory of a higher-dimensional bulk, and its defects can be analyzed after
folding or anomaly cancellation.
}

\section{Counting the Lagrangian subgroups of the $\mathbb{Z}^n_2$ toric code}
\label{app: counting Lagrangian}

In this section, we count the number of Lagrangian subgroups in the $n$-copy $\mathbb{Z}_2$ toric code. For a $G$-gauge theory, it is well known that gapped boundaries (Lagrangian subgroups) are classified by a subgroup $N \subset G$ and a 2-cocycle in $H^2(N, U(1))$ \cite{Beigi2011QuantumDouble, Alexei2017Lagrangianalgebras, yue2024condensation}. Ref.~\cite{Kesselring2024Anyon} provides explicit representations of the Lagrangian subgroups for the $\mathbb{Z}_2^n$ toric code when $n=1, 2, 4$. Here, we present the formula for counting the number of Lagrangian subgroups for general $n$.

For small values of $n$, we first use a computer to enumerate all Lagrangian subgroups:
\begin{enumerate}
    \item For $n=1$, $|\mathcal{L}(\mathbb{Z}_2 \text{ toric code})| = 2$.
    \item For $n=2$, $|\mathcal{L}(\mathbb{Z}_2^2 \text{ toric code})| = 6$.
    \item For $n=3$, $|\mathcal{L}(\mathbb{Z}_2^3 \text{ toric code})| = 30$.
    \item For $n=4$, $|\mathcal{L}(\mathbb{Z}_2^4 \text{ toric code})| = 270$.
    \item For $n=5$, $|\mathcal{L}(\mathbb{Z}_2^5 \text{ toric code})| = 4590$.
\end{enumerate}
In the following, we will prove
\begin{theorem}
    The number of Lagrangian subgroups of the $\mathbb{Z}_2^n$ toric code is given by:
    \begin{equation}
    |\mathcal{L}(\mathbb{Z}_2^n ~\mathrm{toric~code})| = \prod_{i=0}^{n-1}(2^i + 1).
\label{eq: counting Lagrangian subgroup}
\end{equation}
\label{thm: counting Lagrangain}
\end{theorem}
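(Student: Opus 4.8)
The plan is to count the Lagrangian subgroups of the $\mathbb{Z}_2^n$ toric code by recognizing its anyon group as a symplectic $\mathbb{Z}_2$-vector space and then counting maximal isotropic subspaces, but with the crucial refinement that we must also count the $H^2$ data. Concretely, the anyon group of the $\mathbb{Z}_2^n$ toric code is $A \cong \mathbb{Z}_2^{2n}$, equipped with the braiding form $B$, which (since all anyons here have $\theta \in \{\pm 1\}$ and actually a Lagrangian subgroup must consist of bosons) behaves like a nondegenerate symplectic form over $\mathbb{F}_2$. A Lagrangian subgroup is a maximal subgroup $N$ on which $B$ is trivial \emph{together with} a choice of quadratic refinement making its elements bosons — i.e., the pair $(N, \text{2-cocycle} \in H^2(N,U(1)))$ as cited from Refs.~\cite{Beigi2011QuantumDouble, Kesselring2024Anyon}. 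So first I would set up the bijection: Lagrangian subgroups $\leftrightarrow$ pairs $(N, q)$ where $N$ is a maximal isotropic subspace of $(\mathbb{F}_2^{2n}, B)$ and $q$ ranges over the quadratic forms on $N$ refining $B|_N \equiv 0$, equivalently over $\mathrm{Hom}(N, \mathbb{Z}_2)$ (a torsor of size $2^{\dim N} = 2^n$) modulo the constraint that $q$ must assign $\theta=1$ to all elements — I need to check how many of these $2^n$ refinements actually give bosons vs.\ fermions.

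Next I would do the counting in two stages. Stage one: count the number of maximal isotropic (Lagrangian-in-the-linear-algebra-sense) subspaces of a nondegenerate symplectic $\mathbb{F}_2$-space of dimension $2n$; this is the classical number $\prod_{i=1}^{n}(2^i+1)/$... — actually the standard count of Lagrangian subspaces of $\mathrm{Sp}(2n,\mathbb{F}_2)$ is $\prod_{i=1}^{n}(2^{i}+1)$, which already overshoots by one factor. Stage two: reconcile with the actual answer $\prod_{i=0}^{n-1}(2^i+1) = 2\cdot 3 \cdot 5 \cdots (2^{n-1}+1)$. Comparing, the ratio between the naive Lagrangian-subspace count and the target is off; the correct bookkeeping is that for each isotropic subspace $N$, the number of valid bosonic refinements is not constant, so I would instead directly count pairs. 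A cleaner route: set up the recursion. Let $f(n) = |\mathcal{L}(\mathbb{Z}_2^n)|$. Pick a nontrivial anyon $a$ that must be condensed; either $a$ is a "pure charge/flux" type with no partner forced, or condensing $a$ reduces the problem to $\mathbb{Z}_2^{n-1}$. By conditioning on the structure of $N \cap \langle a \rangle^{\perp}$, I expect to derive $f(n) = (2^{n-1}+1) f(n-1)$ with base case $f(1) = 2$, which immediately yields $f(n) = \prod_{i=0}^{n-1}(2^i+1)$ by induction.

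The main obstacle is getting the recursion step exactly right — specifically, accounting for both the number of ways a new generator can extend a Lagrangian subgroup of the smaller code \emph{and} the number of compatible 2-cocycle (discrete torsion) choices, so that the multiplicative factor comes out as $2^{n-1}+1$ rather than $2^{n-1}$ or $2^n+1$. I would verify the recursion against the computed values $2, 6, 30, 270, 4590$ (ratios $3, 5, 9, 17 = 2^1+1, 2^2+1, 2^3+1, 2^4+1$), which confirms $f(n)/f(n-1) = 2^{n-1}+1$ and pins down exactly which combinatorial quantity the factor is counting. A secondary subtlety is justifying that the braiding form on $\mathbb{Z}_2^n$-toric-code anyons is genuinely the standard hyperbolic symplectic form over $\mathbb{F}_2$ and that "Lagrangian subgroup" in the anyon-theory sense (maximal set of mutually bosonic bosons) matches the field-theory classification by $(N \subset G, \text{cocycle})$ — but for $G = \mathbb{Z}_2^n$ this is exactly the content cited from Refs.~\cite{Beigi2011QuantumDouble, Alexei2017Lagrangianalgebras, yue2024condensation}, so I would invoke it rather than reprove it. Once the recursion is established, the closing induction is immediate and the proof of Theorem~\ref{thm: counting Lagrangain} is complete.
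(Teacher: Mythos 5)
Your proposal correctly identifies what must be counted, and even the correct multiplicative structure of the answer, but the step that carries the entire mathematical content of the theorem --- establishing the recursion $f(n) = (2^{n-1}+1)\,f(n-1)$ --- is explicitly deferred as ``the main obstacle'' and supported only by numerical verification for $n \le 5$. Checking that the ratios $3,5,9,17$ match $2^{n-1}+1$ tells you what to prove; it is not a proof, so as written the argument has a genuine gap exactly where the work is. Your first route (counting pairs $(N,q)$, i.e.\ maximal isotropics of the braiding form together with quadratic refinements) is also abandoned midstream once you notice that the symplectic Lagrangian count $\prod_{i=1}^{n}(2^i+1)$ overshoots. The clean statement, which the paper records only as a remark, is that an anyon-theoretic Lagrangian subgroup is a maximal totally isotropic subspace for the \emph{quadratic} form $q(a)=\theta(a)$ on $\mathbb{F}_2^{2n}$ (hyperbolic/plus type), not merely for the bilinear braiding form, and the orthogonal-geometry count of such subspaces is exactly $\prod_{i=0}^{n-1}(2^i+1)$.

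To close the gap along your recursive route you need an actual incidence count, not a numerical check: the number of nonzero singular vectors of the hyperbolic form on $\mathbb{F}_2^{2n}$ is $2^{2n-1}+2^{n-1}-1=(2^n-1)(2^{n-1}+1)$; each maximal isotropic contains $2^n-1$ of them; and the maximal isotropics containing a fixed singular $v$ are in bijection with the maximal isotropics of $v^{\perp}/\langle v\rangle$, a hyperbolic space of dimension $2(n-1)$. Double-counting pairs $(v,L)$ gives $f(n)\,(2^n-1)=(2^n-1)(2^{n-1}+1)\,f(n-1)$, which is the missing factor. Note this is a genuinely different route from the paper's, which never leaves the $(N,\omega)$ picture: it writes $|\mathcal{L}| = \sum_{k=0}^{n} a_k^n\, 2^{\binom{k}{2}}$, where $a_k^n$ counts $k$-dimensional subspaces of $\mathbb{F}_2^n$ and $2^{\binom{k}{2}}=|H^2(\mathbb{Z}_2^k,U(1))|$, proves the recurrence $a_k^{n+1}=2^k a_k^n + a_{k-1}^n$, and telescopes the sum into the product via an auxiliary summation identity. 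Either completion is legitimate, but one of them has to be carried out; at present neither is.
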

This corresponds to OEIS sequence A028361, enumerating totally isotropic spaces of index $n$ in orthogonal geometry of dimension $2n$. In the $\mathbb{Z}_2$ case, the symplectic bilinear form $\Lambda$, defined in Eq.~\eqref{eq: definition of Lambda}, becomes symmetric and can be interpreted as an orthogonal space. The Lagrangian subgroups correspond to the isotropic spaces. In addition, this sequence represents the number of nodes in the bosonic orbifold groupoid for $\mathbb{Z}_2^n$ symmetry \cite{Gaiotto2021Orbifoldgroupoids}.

Using the counting formula from Eq.~\eqref{eq: counting Lagrangian subgroup}, the number of Lagrangian subgroups for the two bivariate bicycle codes studied in Sec.~\ref{sec: Applications} are 1,270,075,950 for $n=8$ and 167,448,083,323,950 for $n=10$.

\begin{proof}[Proof of Theorem~\ref{thm: counting Lagrangain}]
    The $\mathbb{Z}_2^n$ toric code corresponds to a $G$ gauge theory with $G = \mathbb{Z}_2^n$, where any subgroup $N$ must be isomorphic to $\mathbb{Z}_2^k$ for $0 \leq k \leq n$. In the second cohomology group $H^2(N, U(1))$, only type-II cocycles exist, and their generators are composed of pairs of elements from $\mathbb{Z}_2$ within $\mathbb{Z}_2^k$, leading to
    \begin{equation}
        |H^2(\mathbb{Z}_2^k, U(1))| = 2^{\begin{pmatrix}k\\2\end{pmatrix}}.
    \end{equation}
    Let $a^n_k$ denote the number of $k$-dimensional subspaces in an $n$-dimensional vector space over $\mathbb{Z}_2$, which corresponds to the number of subgroups $N \subset G$ such that $N \simeq \mathbb{Z}_2^k$. The total number of Lagrangian subgroups is then given by:
    \begin{equation}
        |\mathcal{L}(\mathbb{Z}_2^n \text{ toric code})| = \sum_{k=0}^{n} {a_k^{n} \times 2^{\begin{pmatrix}k\\2\end{pmatrix}}}.
        \label{eq: L a_k^n}
    \end{equation}

    Next, the basis of a $k$-dimensional subspace can always be reduced to the row echelon form:
    \begin{equation}
        \left[\vphantom{\begin{array}{c} 0 \\ 0 \\ 0 \\ 0 \\ 0 \\ 0 \\ 0 \end{array}} \right.
        \raisebox{0.2cm}{$
        \begin{array}{cccccccccccccccc}
            \overbrace{0 ~\cdots}^{b_0} & 1 & \overbrace{* ~\cdots}^{b_1} & 0 & \overbrace{* ~\cdots}^{b_2} & 0 & \overbrace{* ~\cdots}^{b_3} & 0 & \overbrace{* ~\cdots}^{b_k}  \\
            0 ~\cdots & 0 & 0 ~ \cdots & 1 & * ~\cdots & 0 & * ~ \cdots & 0 & * ~ \cdots \\
            0 ~\cdots & 0 & 0 ~ \cdots & 0 & 0 ~\cdots & 1 & * ~ \cdots & 0 & * ~ \cdots \\
            0 ~\cdots & 0 & 0 ~ \cdots & 0 & 0 ~\cdots & 0 & 0 ~ \cdots & 0 & * ~ \cdots \\
            \vdots ~\ddots & \vdots & \vdots~\ddots & \vdots & \vdots~\ddots & \vdots & \vdots~\ddots  & \vdots & \vdots~\ddots \\
            0 ~\cdots & 0 & 0 ~ \cdots & 0 & 0 ~\cdots & 0 & 0 ~ \cdots & 1 & * ~ \cdots \\
        \end{array}
        $}
        \left.\vphantom{\begin{array}{c} 0 \\ 0 \\ 0 \\ 0 \\ 0 \\ 0 \\ 0 \end{array}} \right]
        \left.\vphantom{\begin{array}{c} 0 \\ 0 \\ 0 \\ 0 \\ 0 \\ 0 \end{array}}\right\} k
    \label{eq: k-dim subspace}
    \end{equation}
    where $b_0 + b_1 + \cdots + b_k = n - k$, ensuring that the matrix has size $k \times n$.
    Each $*$ in the matrix~\eqref{eq: k-dim subspace} can be either $0$ or $1$; thus, the number of $k$-dimensional subspaces in an $n$-dimensional vector space over $\mathbb{Z}_2$ is:
    \begin{eqs}
        a_k^{n}=\sum_{\substack{b_0,b_1,...,b_k| \\ b_0+b_1+...+b_k=n-k}} {2^{b_1+ 2 b_2 + 3 b_3 + \cdots +k b_k}} .
        \label{eq: a_k^n definition}
    \end{eqs}

    To simplify the expression in Eq.~\eqref{eq: L a_k^n}, we derive a useful recursive relation for $a_k^{n+1}$:
    \begin{eqs}
        a_k^{n+1} = 2^k a_k^n + a_{k-1}^n.
        \label{eq: a_k^n relation}
    \end{eqs}
    The first term arises from the case where $b_k \geq 1$, contributing $2^k a_k^n$ by redefining $b'_k = b_k - 1$ and factoring out the overall $2^k$ from the summation. The second term results from the case where $b_k = 0$, which reduces to computing $a_{k-1}^n$.

    Before proceeding, we introduce a useful lemma for our subsequent computations:
    \begin{lemma} 
        The number of $k$-dimensional subspaces of an $n$-dimensional vector space over $\mathbb{Z}_2$, $a_k^n$ defined in Eq.~\eqref{eq: a_k^n definition}, satisfy the following relation:
            \begin{eqs}
                 \sum_{k=0}^{n-l} \left( \prod_{i=0}^{l-1} 2^{k}(1+2^{i}) \right) \times a_k^{n-l} \times 2^{\begin{pmatrix}k\\2\end{pmatrix}} 
                = \sum_{k=0}^{n-l-1} \left( \prod_{i=0}^{l} 2^{k}(1+2^{i}) \right) \times a_k^{n-l-1} \times 2^{\begin{pmatrix}k\\2\end{pmatrix}},
                \label{eq: recurrence relation2}
            \end{eqs}
        for any $1 \leq l \leq n-1$.
    \label{lemma: recurrence relation2}
    \end{lemma}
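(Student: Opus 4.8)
\textbf{Proof proposal for Lemma~\ref{lemma: recurrence relation2}.}

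The plan is to prove the identity in Eq.~\eqref{eq: recurrence relation2} directly by manipulating the sum on the left-hand side, using the recursion $a_k^{n-l} = 2^k a_k^{n-l-1} + a_{k-1}^{n-l-1}$ from Eq.~\eqref{eq: a_k^n relation}. Abbreviate $m = n-l$ and define the prefactor $C_l := \prod_{i=0}^{l-1} (1+2^i)$, so that the left-hand side is $C_l \sum_{k=0}^{m} 2^{lk}\, a_k^{m}\, 2^{\binom{k}{2}}$ (pulling the $k$-dependent factor $2^{k}$, repeated $l$ times, out as $2^{lk}$, and the $i$-dependent factors out as $C_l$). Substituting the recursion for $a_k^m$ splits the sum into two pieces: one proportional to $\sum_k 2^{lk} 2^{k} a_k^{m-1} 2^{\binom{k}{2}} = \sum_k 2^{(l+1)k} a_k^{m-1} 2^{\binom{k}{2}}$, and one proportional to $\sum_k 2^{lk} a_{k-1}^{m-1} 2^{\binom{k}{2}}$.

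First I would handle the second piece by reindexing $j = k-1$, which turns $\sum_{k} 2^{lk} a_{k-1}^{m-1} 2^{\binom{k}{2}}$ into $\sum_{j} 2^{l(j+1)} a_{j}^{m-1} 2^{\binom{j+1}{2}} = 2^{l} \sum_j 2^{lj} a_j^{m-1} 2^{\binom{j}{2} + j} = 2^l \sum_j 2^{(l+1)j} a_j^{m-1} 2^{\binom{j}{2}}$, using $\binom{j+1}{2} = \binom{j}{2}+j$. Hence both pieces are scalar multiples of the same sum $T_{l+1} := \sum_j 2^{(l+1)j} a_j^{m-1} 2^{\binom{j}{2}}$, and combining them gives $C_l (1 + 2^l)\, T_{l+1}$. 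Since $C_l (1+2^l) = \prod_{i=0}^{l} (1+2^i) = C_{l+1}$, the left-hand side equals $C_{l+1}\, T_{l+1}$, which is exactly the right-hand side with $l$ replaced by $l+1$ (i.e., $\sum_{k=0}^{m-1} 2^{(l+1)k} a_k^{m-1} 2^{\binom{k}{2}}$ times $C_{l+1}$). One should double-check the summation ranges: the top term $k=m$ in the left sum has $a_m^m = 1$ and the recursion $a_m^m = 2^m a_m^{m-1} + a_{m-1}^{m-1}$ holds with the convention $a_m^{m-1}=0$, so the boundary contributions are consistent, and the reindexed sum naturally terminates at $j = m-1$.

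The main obstacle is bookkeeping rather than conceptual: one must be careful to correctly identify how the product $\prod_{i=0}^{l-1} 2^k (1+2^i)$ factors into the $k$-independent constant $C_l$ and the $k$-dependent $2^{lk}$, and to track the shift $\binom{k}{2} \to \binom{k}{2}+k$ when reindexing. Once these two observations are in place, the identity collapses to the single algebraic fact $C_l(1+2^l) = C_{l+1}$ together with the defining recursion for $a_k^n$, and the proof is complete. (This lemma will then presumably be applied iteratively, starting from $l=0$ where the left-hand side is $\sum_{k=0}^n a_k^n 2^{\binom{k}{2}} = |\mathcal{L}(\mathbb{Z}_2^n\text{ toric code})|$ by Eq.~\eqref{eq: L a_k^n}, and telescoping down to $l=n$ to obtain the closed product $\prod_{i=0}^{n-1}(2^i+1)$ of Eq.~\eqref{eq: counting Lagrangian subgroup}.)
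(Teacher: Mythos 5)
Your proposal is correct and follows essentially the same route as the paper's proof: substitute the recursion $a_k^{n-l}=2^k a_k^{n-l-1}+a_{k-1}^{n-l-1}$, reindex the second sum via $\binom{k}{2}=\binom{k-1}{2}+(k-1)$, and absorb the resulting factor $(1+2^l)$ into the product. Your factorization of the prefactor into the $k$-independent constant $C_l$ and the $k$-dependent $2^{lk}$ is a slightly cleaner way of organizing the same bookkeeping.
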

    This lemma will be proved later in this appendix.

    Now, we simplify Eq.~\eqref{eq: L a_k^n} by substituting \( a_k^n \) with \( a_k^{n-1} \) and \( a_{k-1}^{n-1} \):
    \begin{eqs}
        |\mathcal{L}(\mathbb{Z}_2^n \text{ toric code})| 
        &= \sum_{k=0}^{n} {(2^{k}a_k^{n-1} + a_{k-1}^{n-1}) \times 2^{\begin{pmatrix}k\\2\end{pmatrix}}}\\
        &= \sum_{k=0}^{\color{blue}{n-1}} {(2^{k}a_k^{n-1} ) \times 2^{\begin{pmatrix}k\\2\end{pmatrix}}} + \sum_{k=1}^{\color{blue}{n}} {a_{k-1}^{n-1} \times 2^{\begin{pmatrix}k\\2\end{pmatrix}}},
    \end{eqs}
    where we must carefully adjust the range of the summations. Since $\begin{pmatrix}k\\2\end{pmatrix} = \begin{pmatrix}k-1\\2\end{pmatrix}+\begin{pmatrix}k-1\\1\end{pmatrix}$, we introduce the substitution \( k' = k-1 \) for the second term and rewrite the expression as:
    \begin{eqs}
        |\mathcal{L}(\mathbb{Z}_2^n \text{ toric code})| 
        &= \sum_{k=0}^{n-1} {(2^{k}a_k^{n-1} ) \times 2^{\begin{pmatrix}k\\2\end{pmatrix}}} + \sum_{k=1}^{n} {a_{k-1}^{n-1} \times 2^{\color{blue}{\begin{pmatrix}k-1\\2\end{pmatrix} + \begin{pmatrix}k-1\\1\end{pmatrix}}}}\\
        &= \sum_{k=0}^{n-1} {(2^{k}a_k^{n-1} ) \times 2^{\begin{pmatrix}k\\2\end{pmatrix}}} + \color{blue}{\sum_{k'=0}^{n-1} {2^{k'} a_{k'}^{n-1} \times 2^{\begin{pmatrix}k'\\2\end{pmatrix} }}} \\
        &=\sum_{k=0}^{n-1} (\prod_{i=0}^{0}{2^{k}(1+2^{i})}) \times {a_k^{n-1} \times 2^{\begin{pmatrix}k\\2\end{pmatrix}}}.\nonumber
    \end{eqs}
    By repeatedly applying the recurrence relation in Eq.~\eqref{eq: recurrence relation2}, we reduce the summation involving $a_k^n$ down to $a_k^0$:
    \begin{eqs}
        |\mathcal{L}(\mathbb{Z}_2^n \text{ toric code})| 
        &=\sum_{k=0}^{n-2} \left(\prod_{i=0}^{1}{2^{k}(1+2^{i})}\right) \times {a_k^{n-2} \times 2^{\begin{pmatrix}k\\2\end{pmatrix}}}\\
        &=\sum_{k=0}^{n-3} \left(\prod_{i=0}^{2}{2^{k}(1+2^{i})}\right) \times {a_k^{n-3} \times 2^{\begin{pmatrix}k\\2\end{pmatrix}}}\\
        &\quad \quad \quad \quad \vdots\\
        &=\sum_{k=0}^{0} \left(\prod_{i=0}^{n-1}{2^{k}(1+2^{i})}\right) \times {a_k^{0} \times 2^{\begin{pmatrix}k\\2\end{pmatrix}}}\\
    \end{eqs}
    Finally, by $a_0^{0}=1$ and $\begin{pmatrix}0\\2\end{pmatrix}=0$, we derive
    \begin{eqs}
        |\mathcal{L}(\mathbb{Z}_2^n \text{ toric code})| 
        = \prod_{i=0}^{n-1}(2^i + 1).
    \end{eqs}
\end{proof}

In the final part of this appendix, we prove Lemma~\ref{lemma: recurrence relation2}.
\begin{proof}[Proof of Lemma~\ref{lemma: recurrence relation2}]
    We begin the proof by simplifying the left-hand side. Using Eq.~\eqref{eq: a_k^n relation}, we substitute \( a_k^{n-l} \) with \( a_k^{n-l-1} \) and \( a_{k-1}^{n-l-1} \):
    \begin{eqs}
        &\sum_{k=0}^{n-l} \left(\prod_{i=0}^{l-1}{2^{k}(1+2^{i})}\right) \times {a_k^{n-l} \times 2^{\begin{pmatrix}k\\2\end{pmatrix}}} \\
        =&\sum_{k=0}^{n-l} \left(\prod_{i=0}^{l-1}{2^{k}(1+2^{i})}\right) \times {(2^k a_k^{n-l-1}+ a_{k-1}^{n-l-1}) \times 2^{\begin{pmatrix}k\\2\end{pmatrix}}} \\
        =&\sum_{k=0}^{\color{blue}{n-l-1}} \left(\prod_{i=0}^{l-1}{2^{k}(1+2^{i})}\right) \times {2^k a_k^{n-l-1} \times 2^{\begin{pmatrix}k\\2\end{pmatrix}}} 
        +\sum_{k=1}^{\color{blue}{n-l}} \left(\prod_{i=0}^{l-1}{2^{k}(1+2^{i})}\right) \times {a_{k-1}^{n-l-1} \times 2^{\begin{pmatrix}k\\2\end{pmatrix}}},\nonumber
    \end{eqs}
    where we have adjusted the range of the summations. By  the identity $\begin{pmatrix}k\\2\end{pmatrix} = \begin{pmatrix}k-1\\2\end{pmatrix}+\begin{pmatrix}k-1\\1\end{pmatrix}$ and the substitution $k' = k-1$ for the second term, we rewrite the above expression as:
    \begin{eqs}
        &\sum_{k=0}^{n-l-1} \left(\prod_{i=0}^{l-1}{2^{k}(1+2^{i})}\right) \times 2^k a_k^{n-l-1} \times 2^{\begin{pmatrix}k\\2\end{pmatrix}} 
        +\sum_{k=1}^{n-l}
        \left(\prod_{i=0}^{l-1}{2^{k}(1+2^{i})}\right)
        \times {a_{k-1}^{n-l-1} \times 2^{\color{blue}{{\begin{pmatrix}k-1\\2\end{pmatrix}+\begin{pmatrix}k-1\\1\end{pmatrix}}}}} \\
        =&\sum_{k=0}^{n-l-1} \left(\prod_{i=0}^{l-1}{2^{k}(1+2^{i})}\right) \times {2^k a_k^{n-l-1} \times 2^{\begin{pmatrix}k\\2\end{pmatrix}}} 
        +\color{blue}{\sum_{k'=0}^{n-l-1} \left(\prod_{i=0}^{l-1}{2^{k'+1}(1+2^{i})}\right) \times 2^{k'}{a_{k'}^{n-l-1} \times 2^{\begin{pmatrix}k'\\2\end{pmatrix}}}} \\
        =&\sum_{k=0}^{n-l-1} \left(\prod_{i=0}^{l-1}{2^{k}(1+2^{i})}\right) \times {2^k a_k^{n-l-1} \times 2^{\begin{pmatrix}k\\2\end{pmatrix}}} 
        +\sum_{k'=0}^{n-l-1} \left(\prod_{i=0}^{l-1}{2^{k'}(1+2^{i})}\right) \times \color{blue}{2^{l+k'}}  {a_{k'}^{n-l-1} \times 2^{\begin{pmatrix}k'\\2\end{pmatrix}}}\\
        =& \sum_{k=0}^{n-l-1} (\prod_{i=0}^{l}{2^{k}(1+2^{i})}) \times {a_k^{n-l-1} \times 2^{\begin{pmatrix}k\\2\end{pmatrix}}}. \nonumber
    \end{eqs}
    Therefore, we have proved Eq.~\eqref{eq: recurrence relation2}.
\end{proof}

\section{Extended applications of Algorithm~\ref{algorithm: Getting boundary gauge operators}}
\label{app: Additional use}

As shown in previous sections, we have employed Algorithm~\ref{algorithm: Getting boundary gauge operators} to determine boundary gauge operators that commute with the bulk stabilizers. This algorithm has broader applications, allowing us to identify all operators that commute with a given set of input operators within a specified range. In Sec.~\ref{sec: Getting boundary gauge operators}, we concentrated on finding boundary gauge operators within a specific boundary range. By selecting different input operators over various ranges, we can derive further results of interest.

The essence of the algorithm lies in computing operators that commute with the input operators. This method can be used to obtain condensed bulk strings that terminate on the boundary or to generate bulk strings passing through a defect, as discussed in Appendix~\ref{app: Get string on the boundary or defect}. Moreover, the algorithm can be applied to identify the operator located at the endpoint of a finite defect line, as described in Appendix~\ref{sec: Get the endpoint of the defect line}.

\subsection{Bulk strings terminating on boundaries or passing through defects}
\label{app: Get string on the boundary or defect}

We can apply Algorithm~\ref{algorithm: Getting boundary gauge operators} to obtain condensed bulk strings, as it ensures that these strings commute with all stabilizers and the additional boundary terms. Instead of selecting the range of Pauli operators along the boundary, as described in Sec.~\ref{sec: Getting boundary gauge operators}, we choose a range where the Pauli operators act along a finite-width line extending from the boundary into the bulk. Using Algorithm~\ref{algorithm: Getting boundary gauge operators}, we can find operators that commute with both the boundary Hamiltonian and the bulk stabilizers, except near the endpoint of the line in the bulk. In practice, this can be done on a finite lattice by defining the range of Pauli operators within a rectangle that spans from the right boundary to the left boundary.

The procedure for bulk strings passing through a defect is similar. We select a range of Pauli operators that crosses the defect from left to right. By applying the same algorithm from Sec.~\ref{sec: Getting boundary gauge operators}, we can find the bulk string passing through the defect, ensuring it commutes with all defect Hamiltonians and bulk stabilizers, except at its endpoints. For specific examples, refer to Sec.~\ref{sec: Applications}.

\subsection{Determining commuting operators at defect line endpoints}
\label{sec: Get the endpoint of the defect line}

\begin{figure}[htb]
    \centering
    \includegraphics[width=0.4\textwidth]{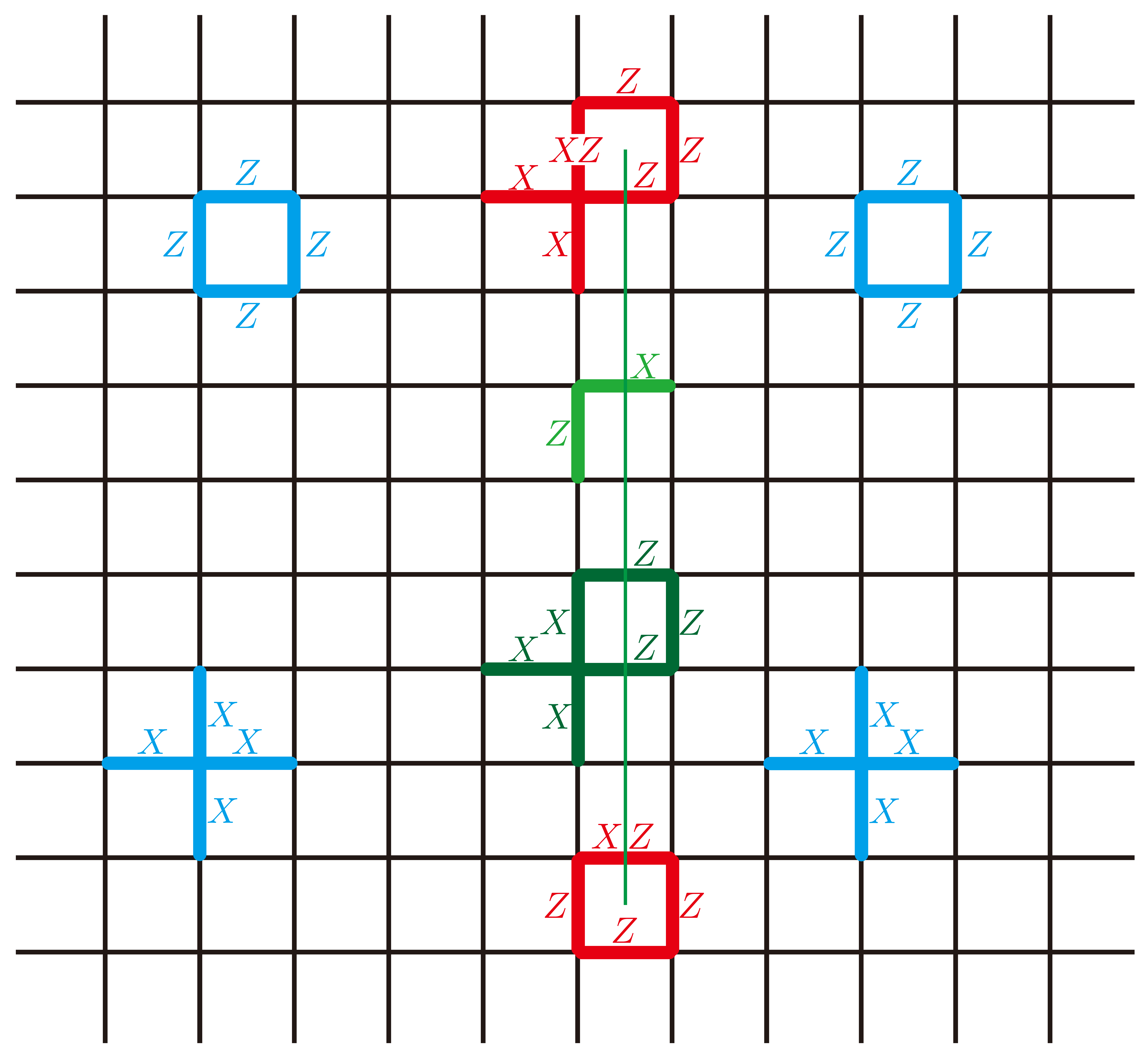}
    \caption{The blue components represent the bulk stabilizers of the $\mathbb{Z}_2$ toric code. The green components indicate the defect terms added along the defect line, extending up to its endpoints. The red components highlight the defect Hamiltonian terms specifically introduced near the endpoints.}
    \label{fig: TC_endpoint_defect}
\end{figure}

\begin{figure}[htb]
    \centering
    \includegraphics[width=0.37\textwidth]{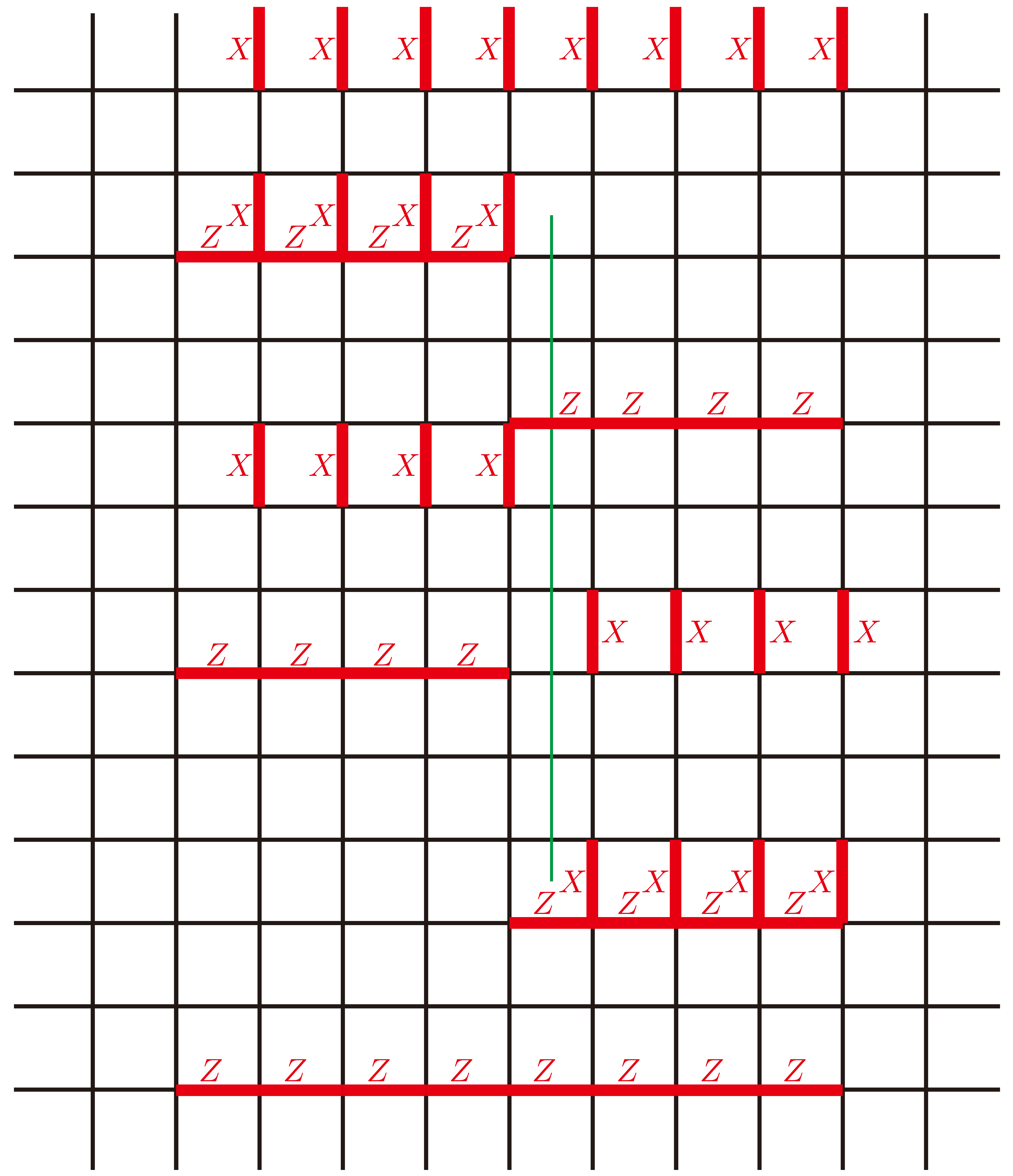}
    \caption{Based on the construction shown in Fig.~\ref{fig: TC_endpoint_defect}, we obtain the string operators that commute with all defect Hamiltonian terms, highlighted in red.}
    \label{fig: TC_endpoint_defect_string}
\end{figure}

If the defect line is finite, we design the Hamiltonian terms at its endpoints. First, we introduce all the defect terms with translational symmetry in the $y$-direction along the defect line, extending up to the endpoints. Then, we select a region of Pauli operators large enough to cover one of the endpoints and apply the algorithm to identify the commutant of the defect Hamiltonian terms and the bulk stabilizers away from the defect. This process effectively completes the topological order around the defect endpoints.

Using this method, we can identify the operator at the endpoint of the defect line, as illustrated in Fig.~\ref{fig: TC_endpoint_defect}.
Furthermore, by applying the approach outlined in Appendix~\ref{app: Get string on the boundary or defect}, we can derive the corresponding bulk string operators that terminate at the defect endpoint, as shown in Fig.~\ref{fig: TC_endpoint_defect_string}.

\section{Explicit boundary and defect constructions for various quantum codes}\label{app: explicit construction}

This section presents the explicit boundary and defect constructions for various examples discussed in Sec.~\ref{sec: Applications}.
Fig.~\ref{fig: Lagrangian_subgroup_defect_fish_TC} illustrates 6 defects in the $\ZZ_2$ fish toric code.
Figs.~\ref{fig: Lagrangian_subgroup_toric_code_Z4_defect1} and \ref{fig: Lagrangian_subgroup_toric_code_Z4_defect2} demonstrate 22 defects in the $\ZZ_4$ toric code.
Fig.~\ref{fig: Lagrangian_subgroup_3_fermion_code_Z2} shows 6 defects in the three-fermion code.
Fig.~\ref{fig: Lagrangian_subgroup_color_code_boundary} constructs 6 boundaries of the color code, representing both the left and right boundaries of semi-infinite planes.
Fig.~\ref{fig: BB_code_3_3_boundary_string} lists 16 generators of the boundary string operators for the $(3,3)$-BB code.

\begin{figure*}[htb]
    \centering
    \subfigure[$\{e_1,e_2\}$-condensed]{\includegraphics[width=0.32\textwidth]{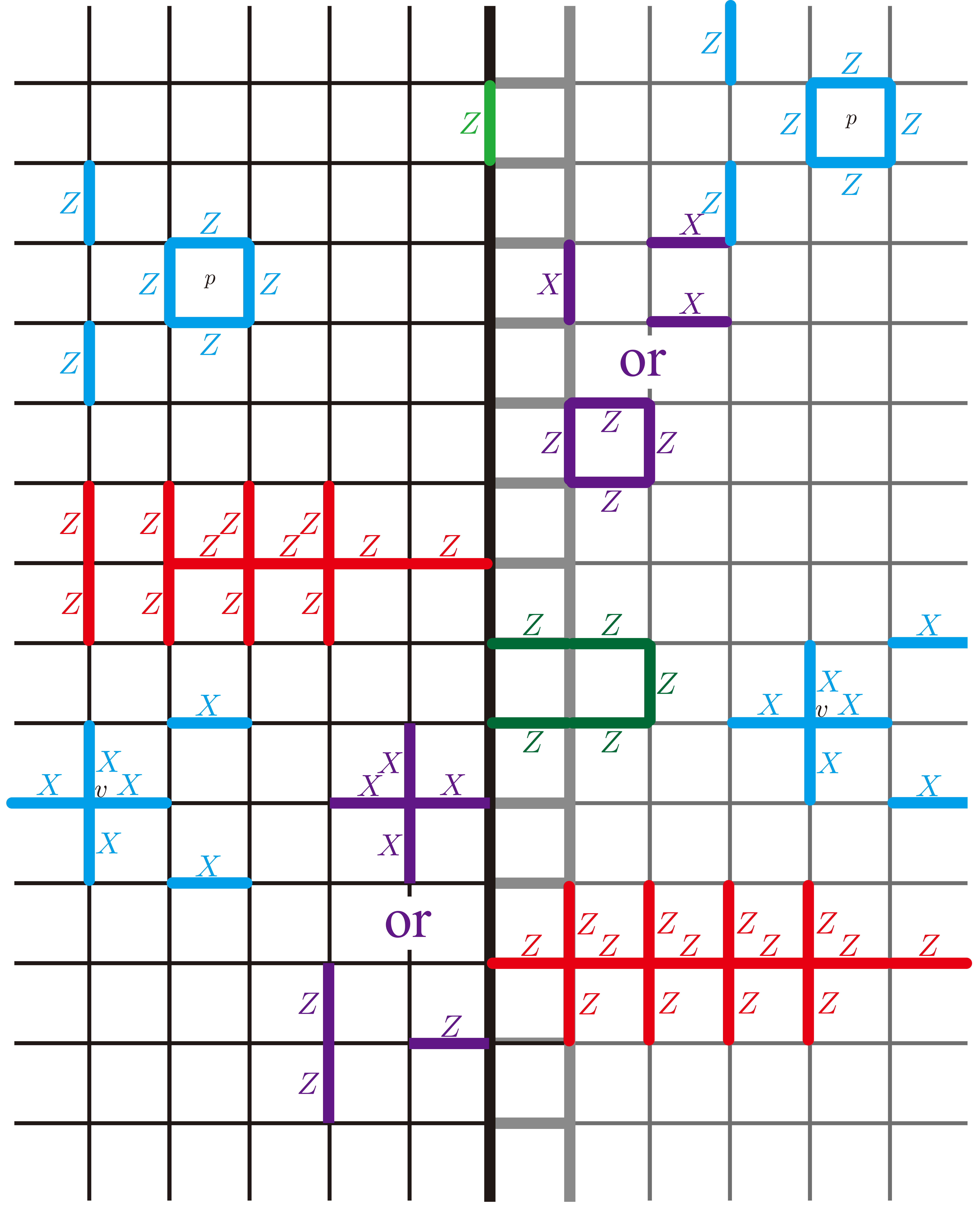}\label{fig:color_code_smooth_e1_e2}}
    \hspace{0.02cm}
    \subfigure[$\{e_1,m_2\}$-condensed]{\includegraphics[width=0.32\textwidth]{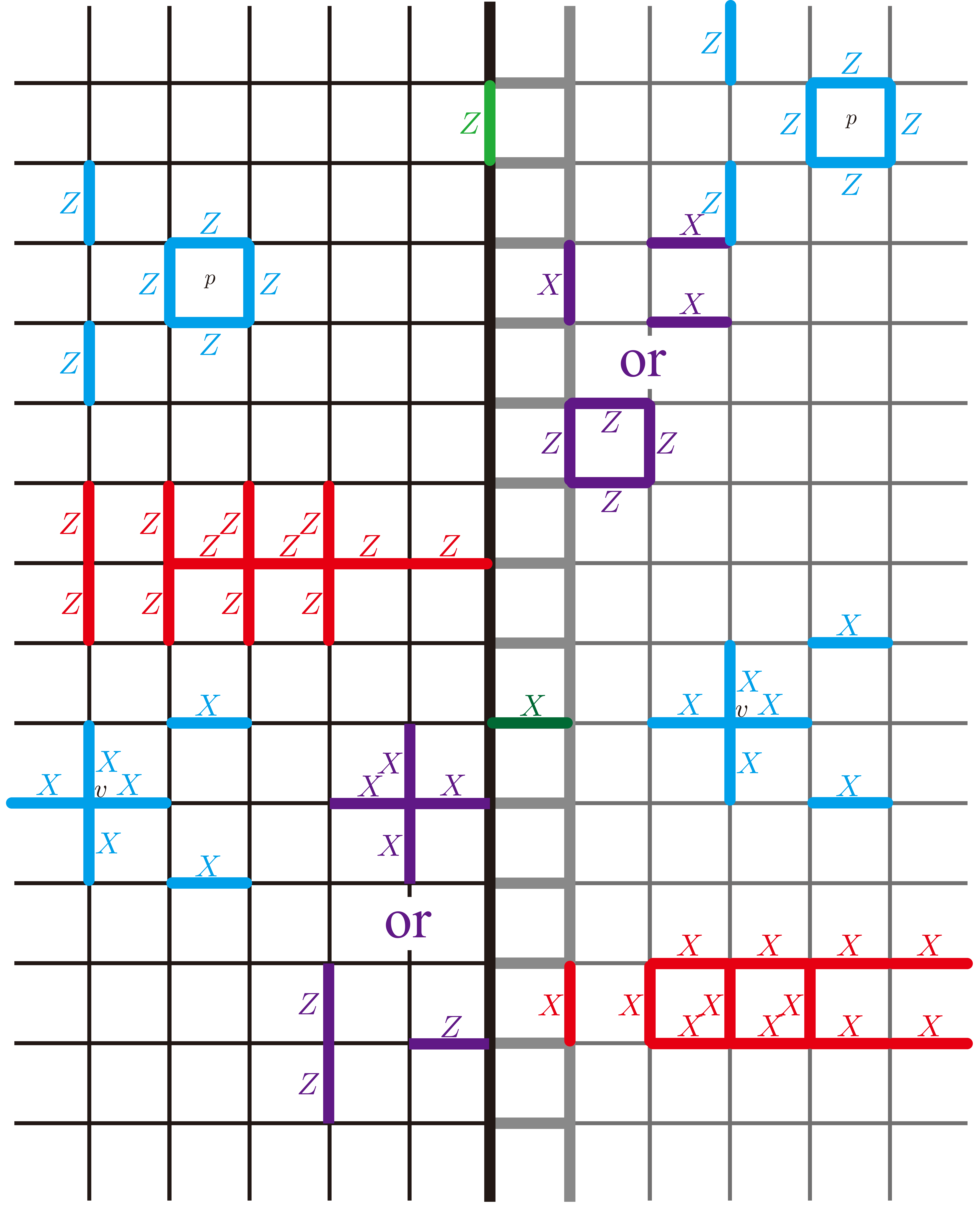}}
    \hspace{0.02cm}
    \subfigure[$\{m_1,m_2\}$-condensed]{\includegraphics[width=0.32\textwidth]{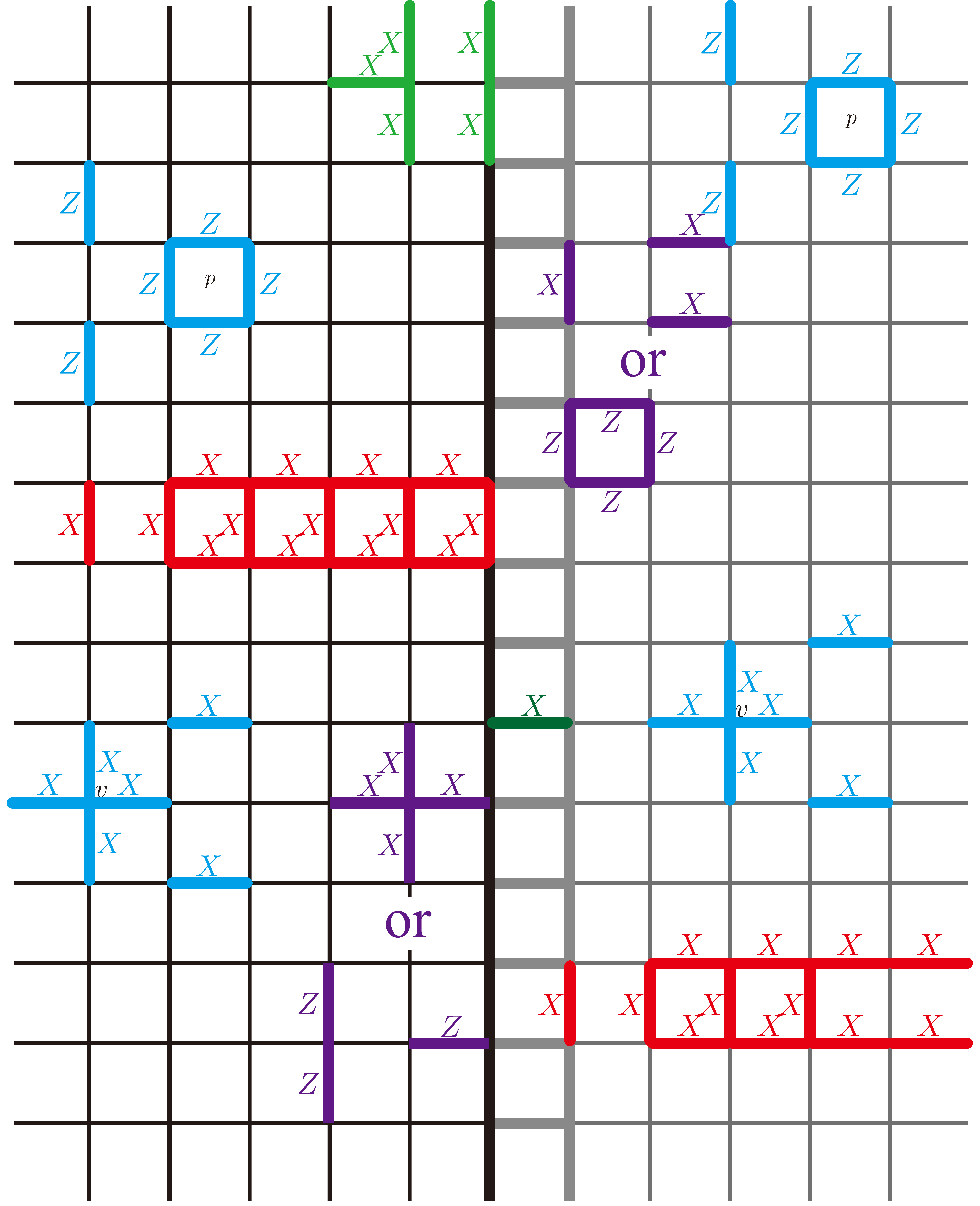}}\\
    \subfigure[$\{m_1,e_2\}$-condensed]{\includegraphics[width=0.32\textwidth]{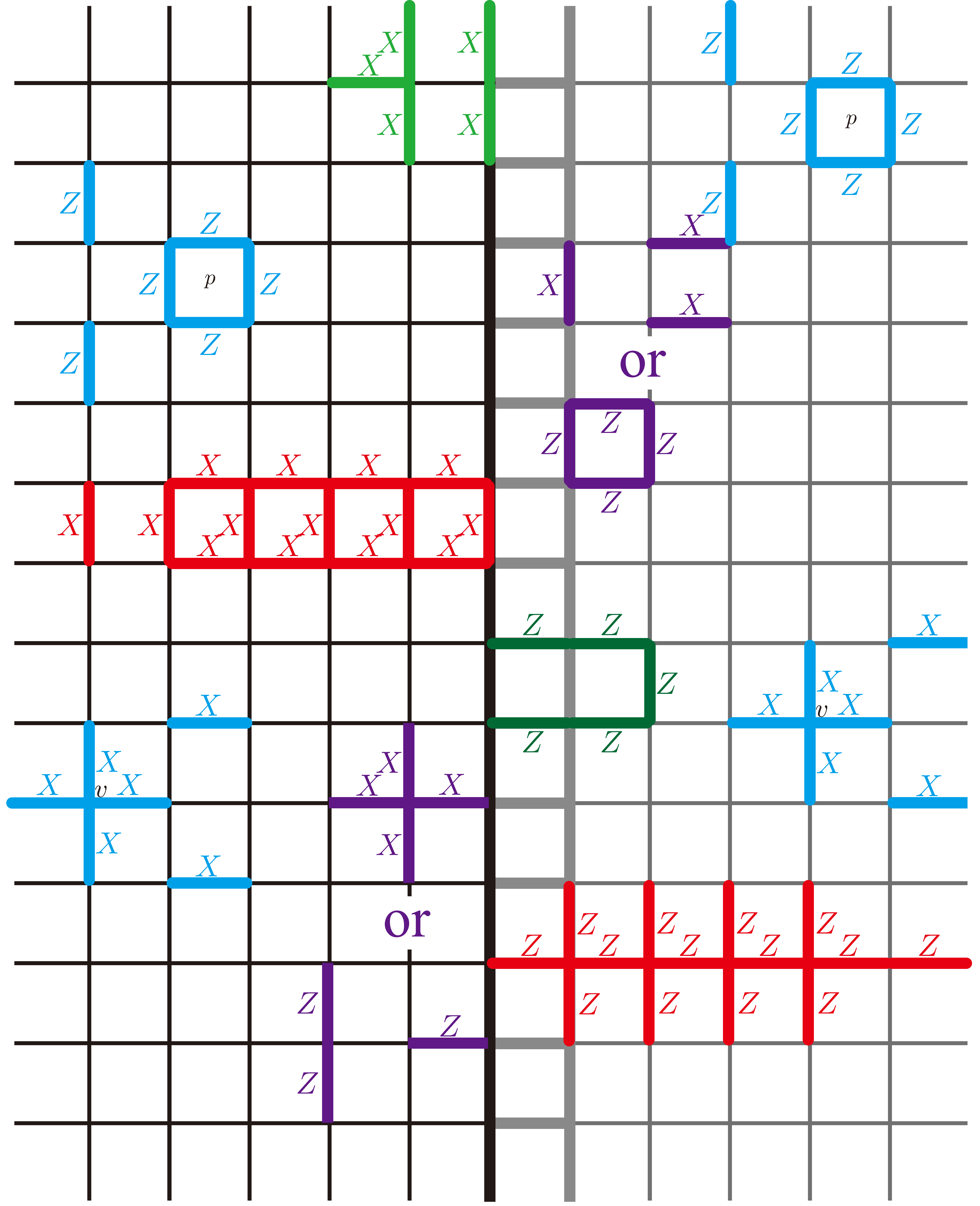}}
    \hspace{0.02cm}
    \subfigure[$\{e_1e_2,m_1m_2\}$-condensed]{\includegraphics[width=0.32\textwidth]{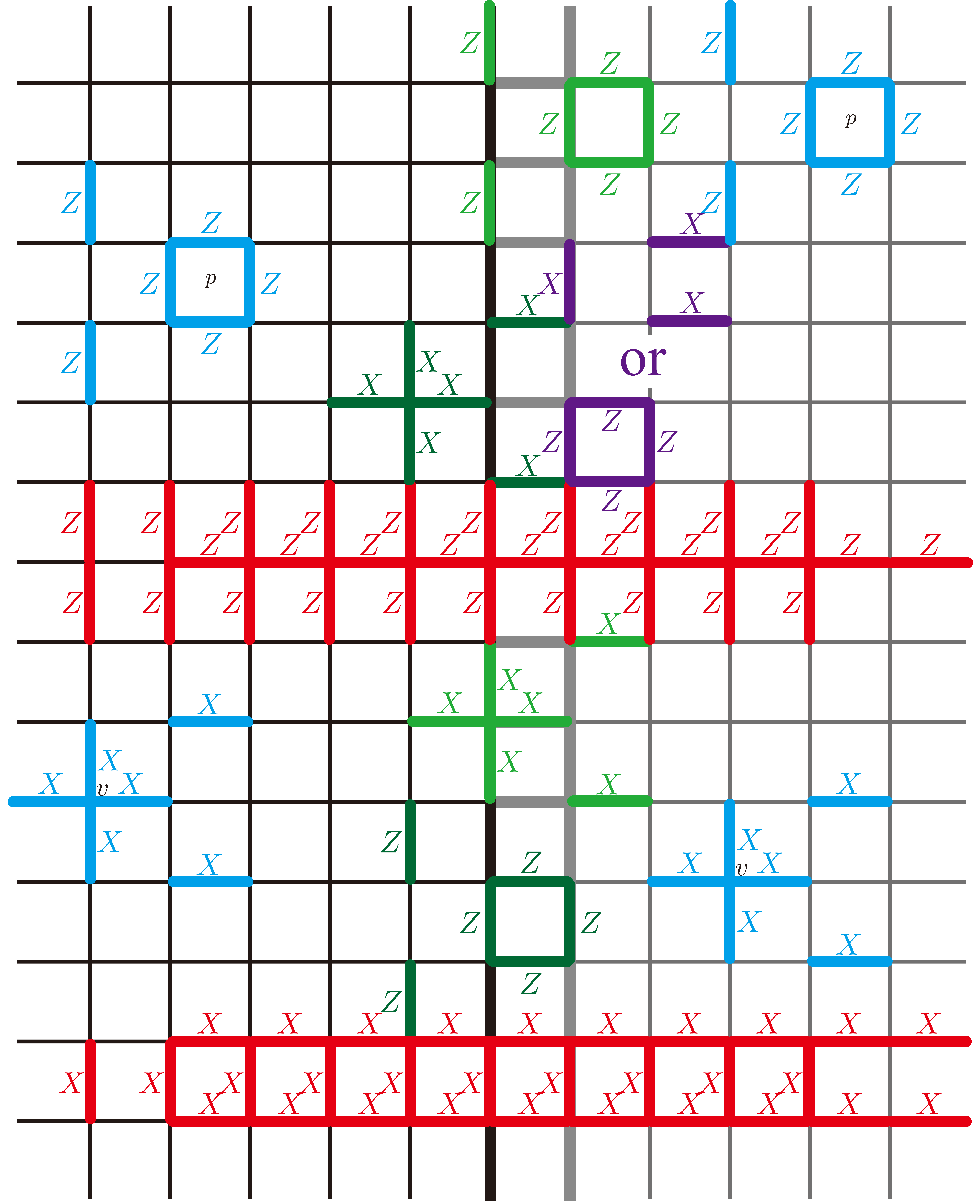}}
    \hspace{0.02cm}
    \subfigure[$\{e_1m_2,m_1e_2\}$-condensed]{\includegraphics[width=0.32\textwidth]{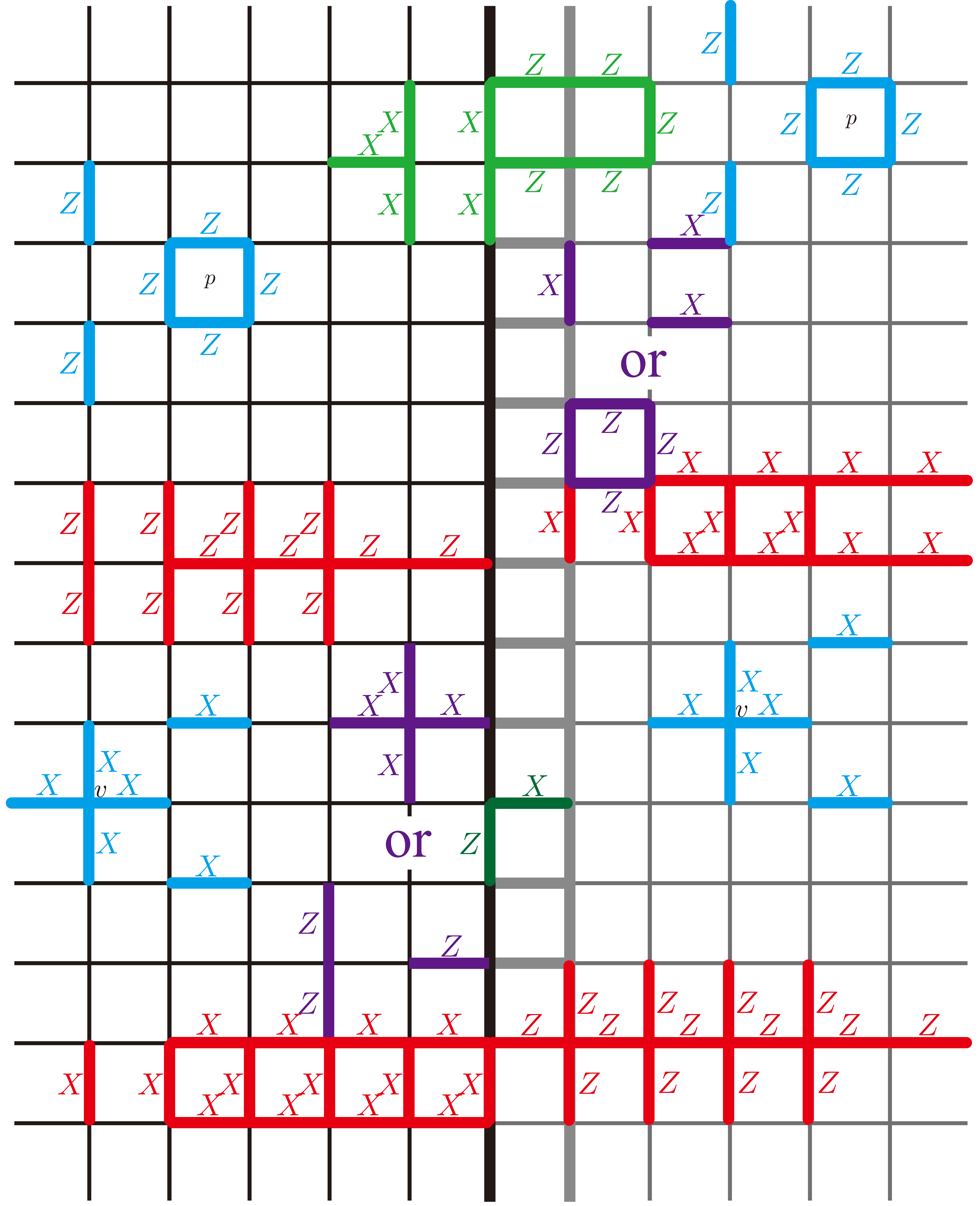}}
    \caption{
    The defects of the $\mathbb{Z}_2$ fish toric code. Blue components indicate bulk stabilizers, green components represent the defect Hamiltonian, and red components show bulk string operators that terminate on or pass through the defect. The red strings commute with the green defect Hamiltonian. Figures (a), (b), (c), and (d) depict non-invertible defects, where the left-hand side and right-hand side are decoupled, with $e_1$ or $m_1$ and $e_2$ or $m_2$ condensed independently. Figures (e) and (f) illustrate invertible defects: (e) corresponds to the trivial defect, where the defect Hamiltonian matches the bulk Hamiltonian, and (f) represents the $e$-$m$ exchange defect, where $e$ and $m$ are permuted as they pass through the defect. }
    \label{fig: Lagrangian_subgroup_defect_fish_TC}
\end{figure*}

\begin{figure*}[htb]
    \centering
    \subfigure[$\{e_1,e_2\}$-condensed]{\includegraphics[width=0.24\textwidth]{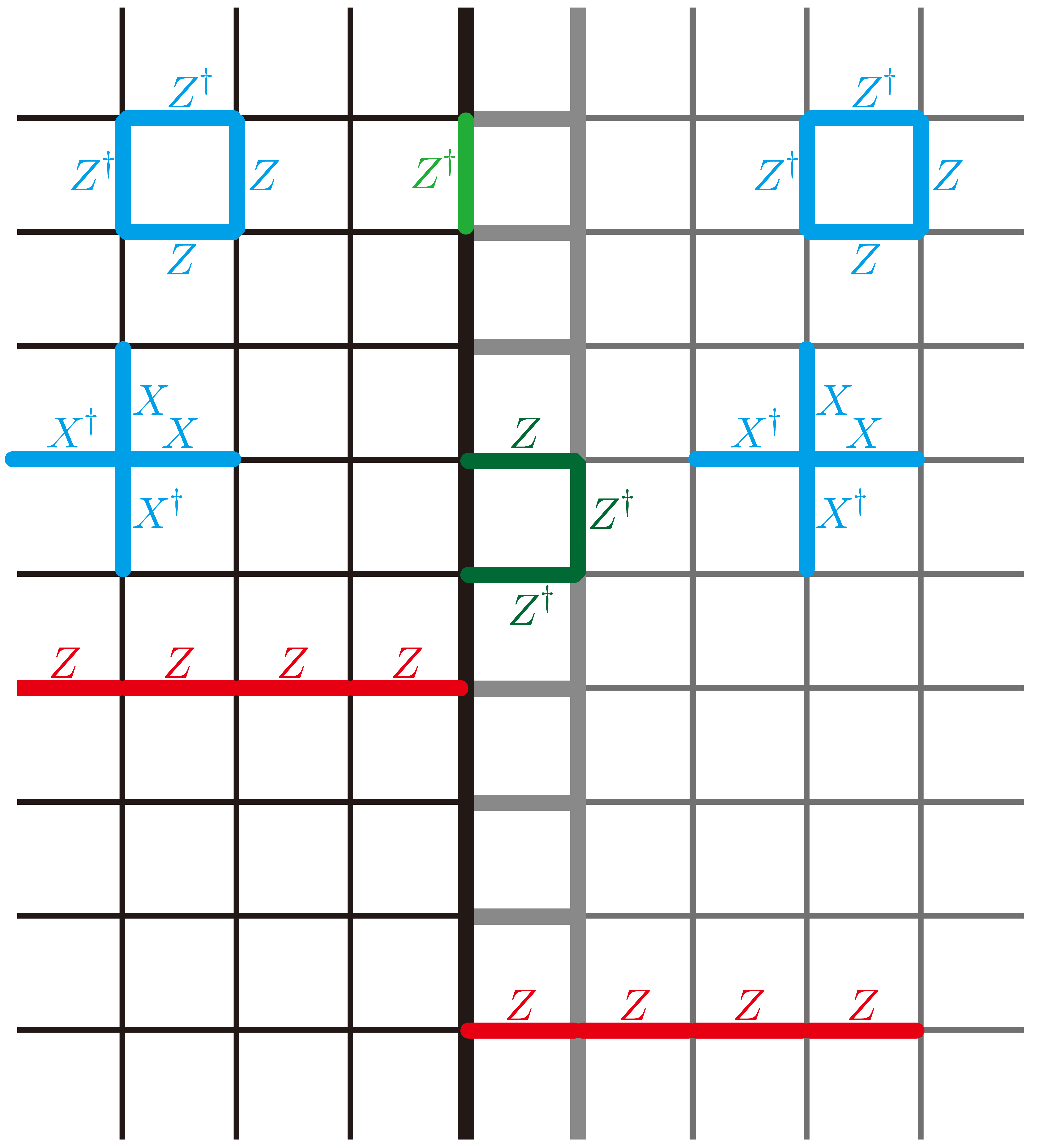}}
    \hspace{0.01cm}
    \subfigure[$\{e_1,m_2\}$-condensed]{\includegraphics[width=0.24\textwidth]{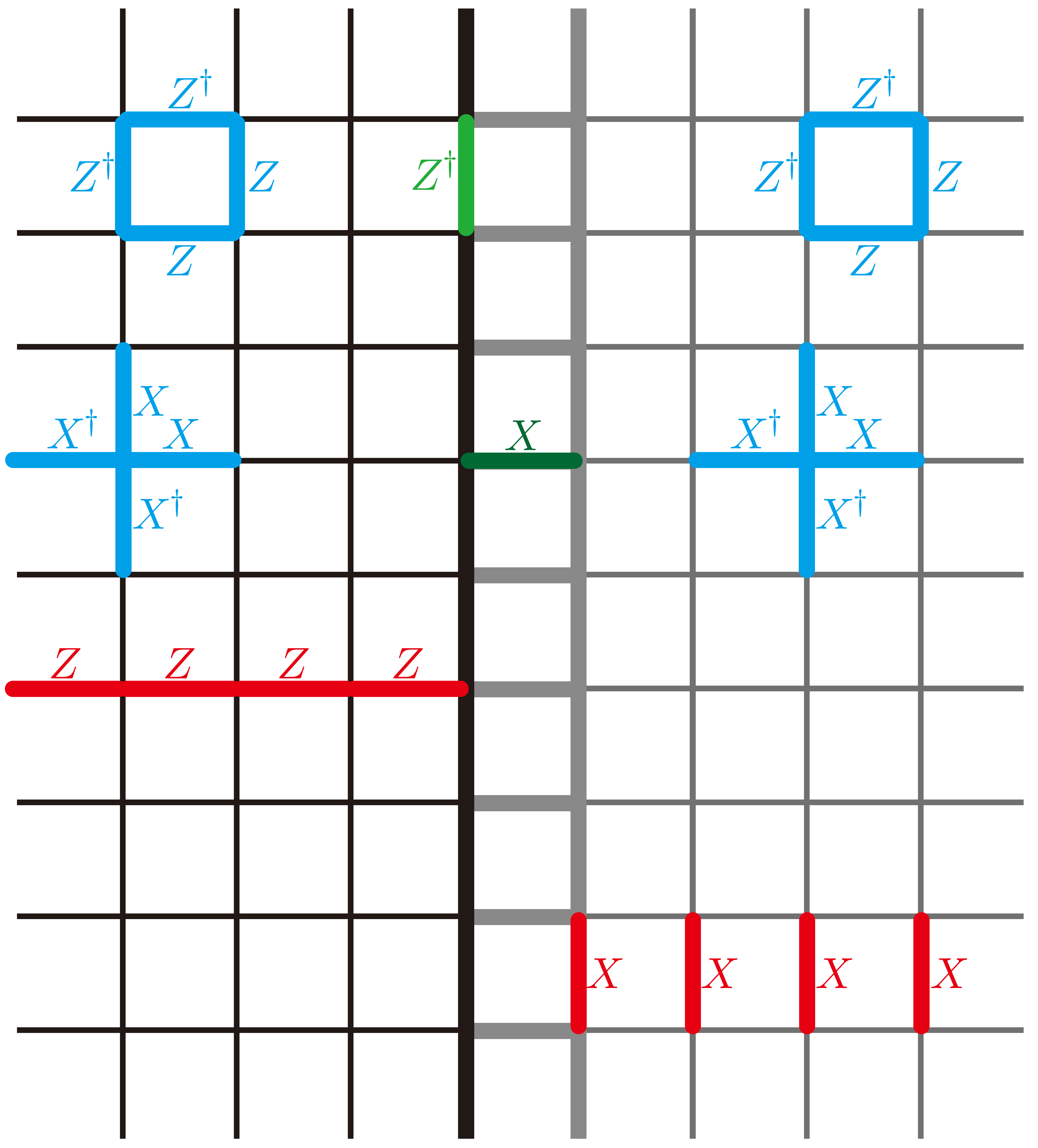}}
    \hspace{0.01cm}
    \subfigure[$\{m_1,m_2\}$-condensed]{\includegraphics[width=0.24\textwidth]{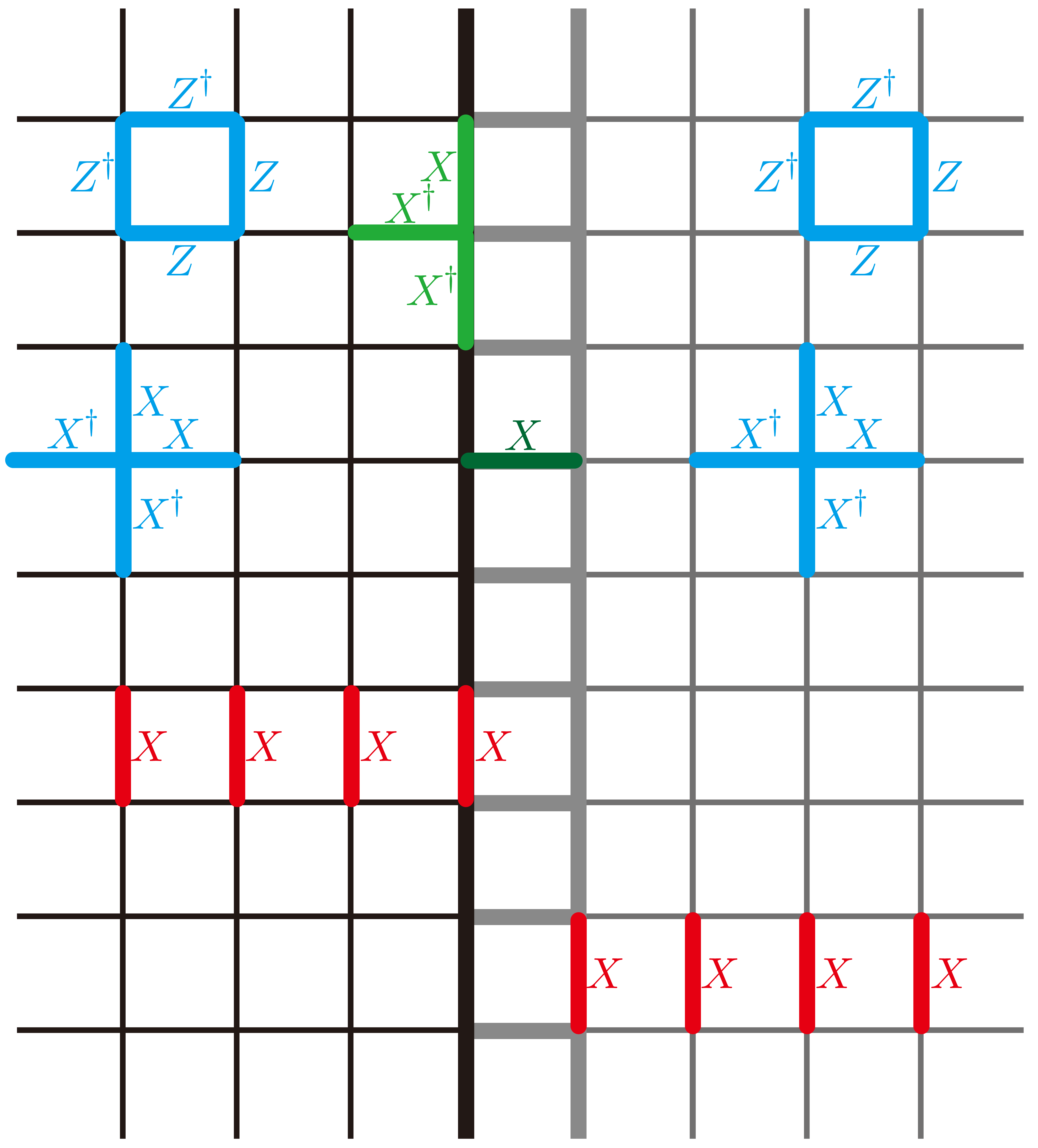}}
    \hspace{0.01cm}
    \subfigure[$\{m_1,e_2\}$-condensed]{\includegraphics[width=0.24\textwidth]{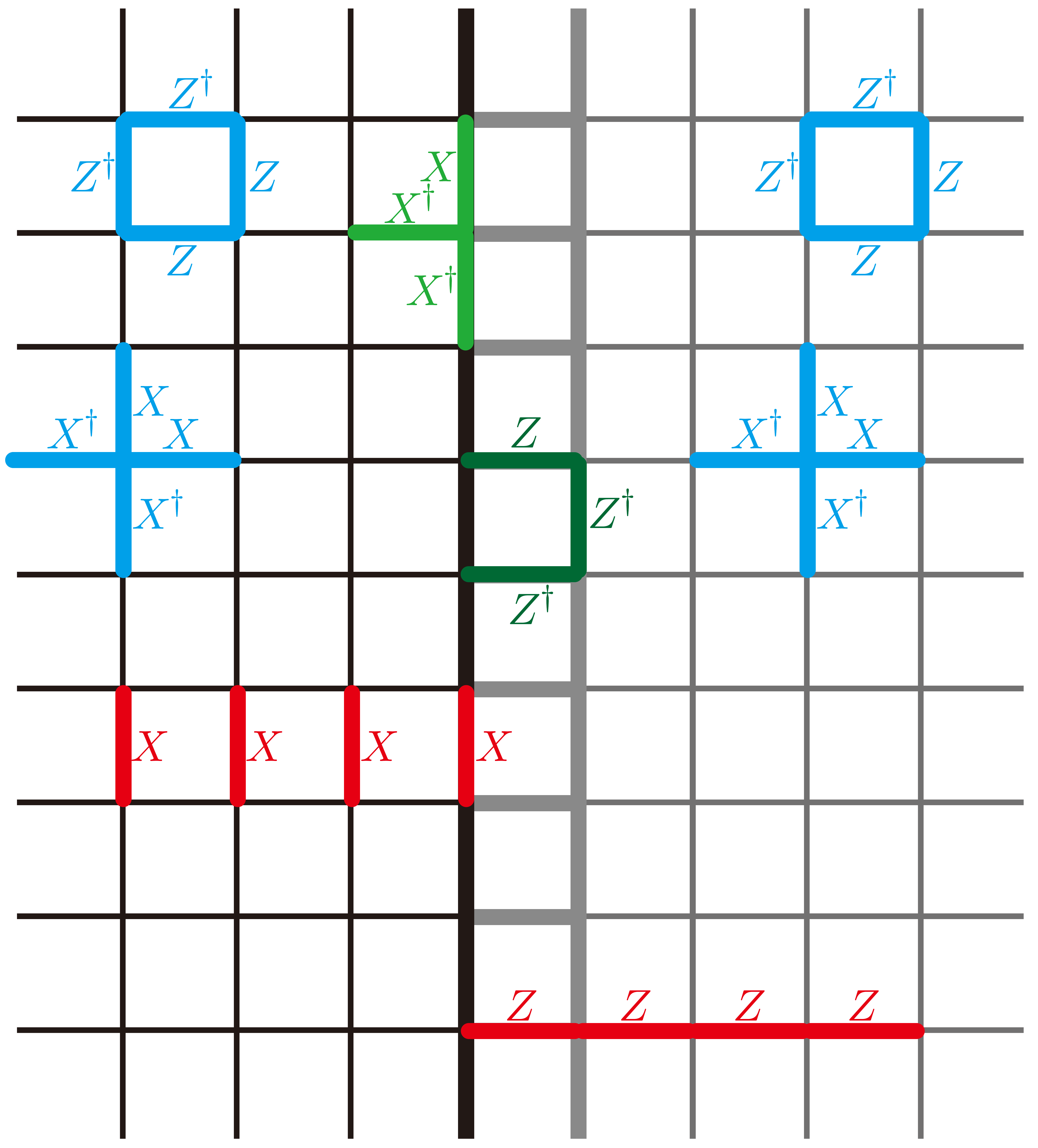}}\\
    \subfigure[$\{e_1e_2,m_1m_2^3\}$-condensed]{\includegraphics[width=0.24\textwidth]{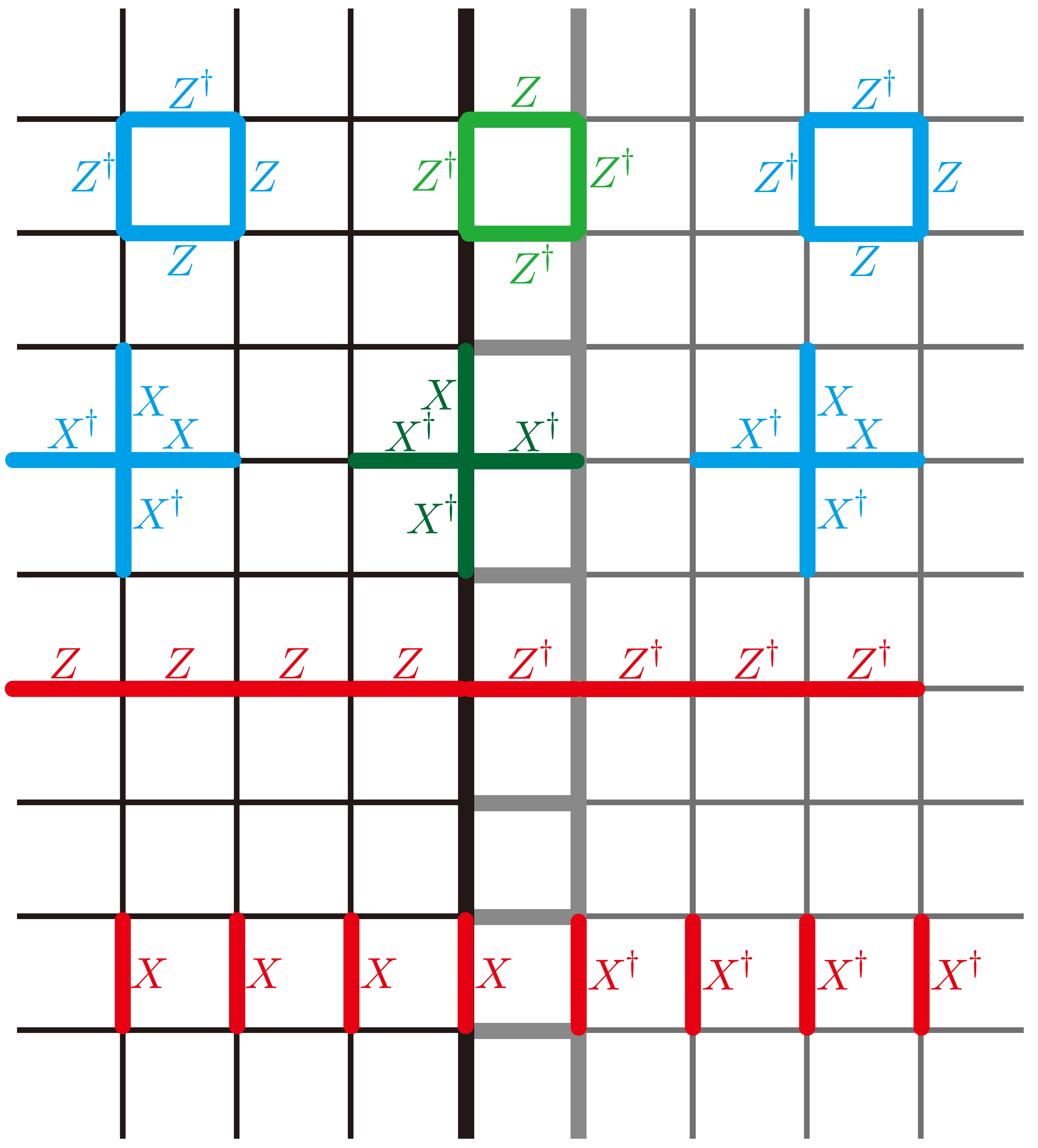}}
    \hspace{0.01cm}
    \subfigure[$\{e_1e_2^2,m_1^2m_2\}$-condensed]{\includegraphics[width=0.24\textwidth]{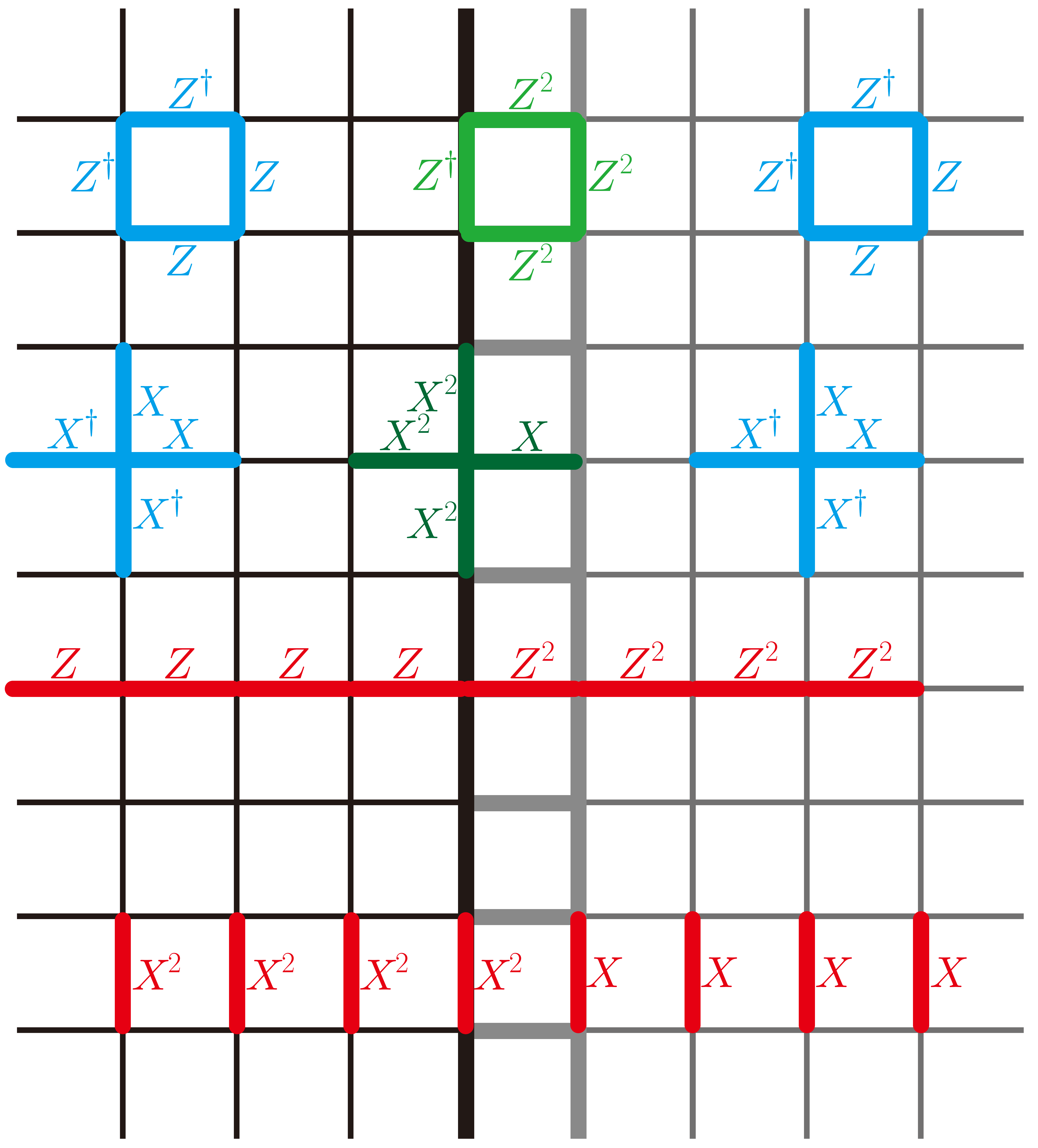}}
    \hspace{0.01cm}
    \subfigure[$\{e_1e_2^3,m_1m_2\}$-condensed]{\includegraphics[width=0.24\textwidth]{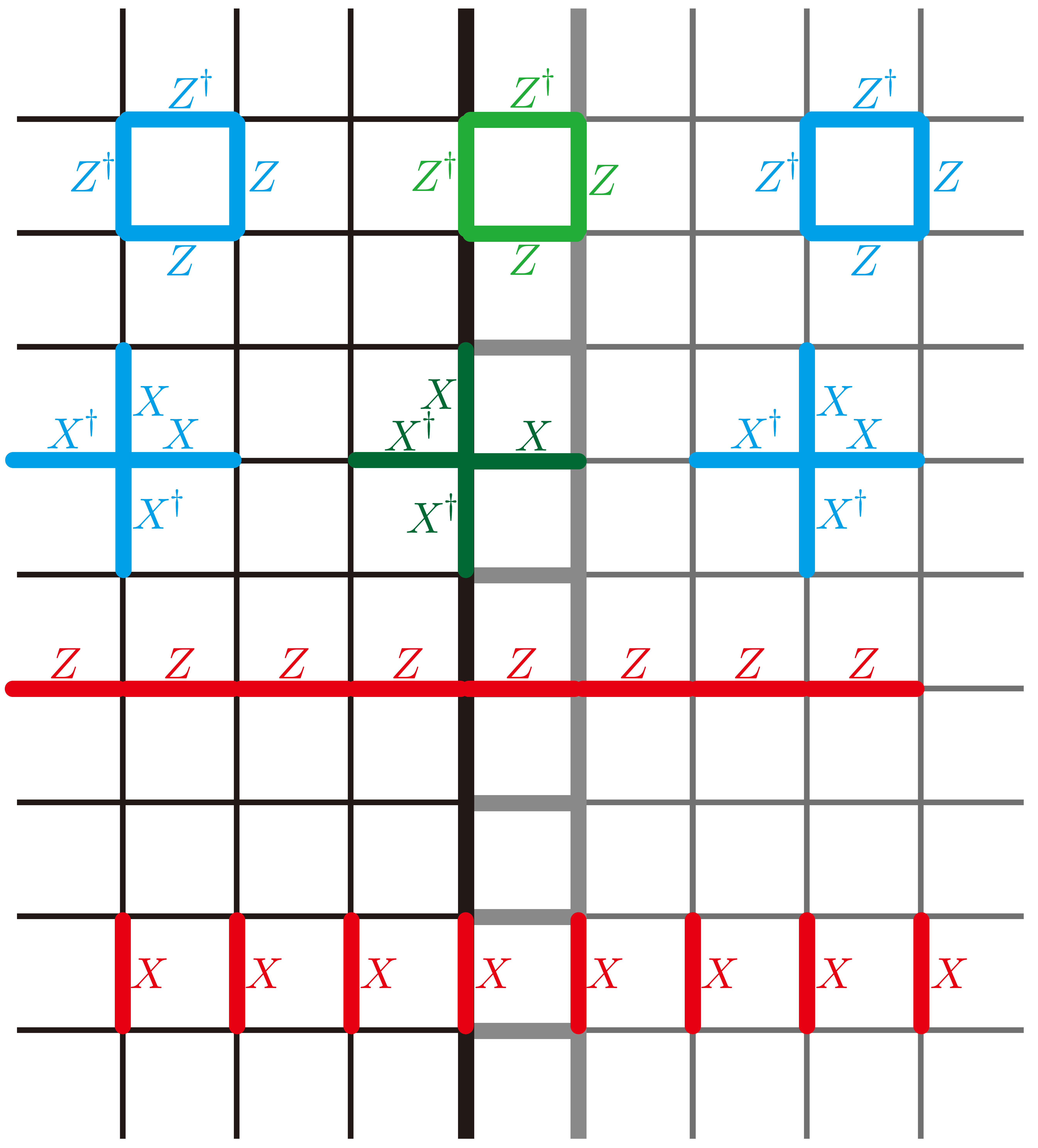}}
    \hspace{0.01cm}
    \subfigure[$\{e_1^2e_2,m_1m_2^2\}$-condensed]{\includegraphics[width=0.24\textwidth]{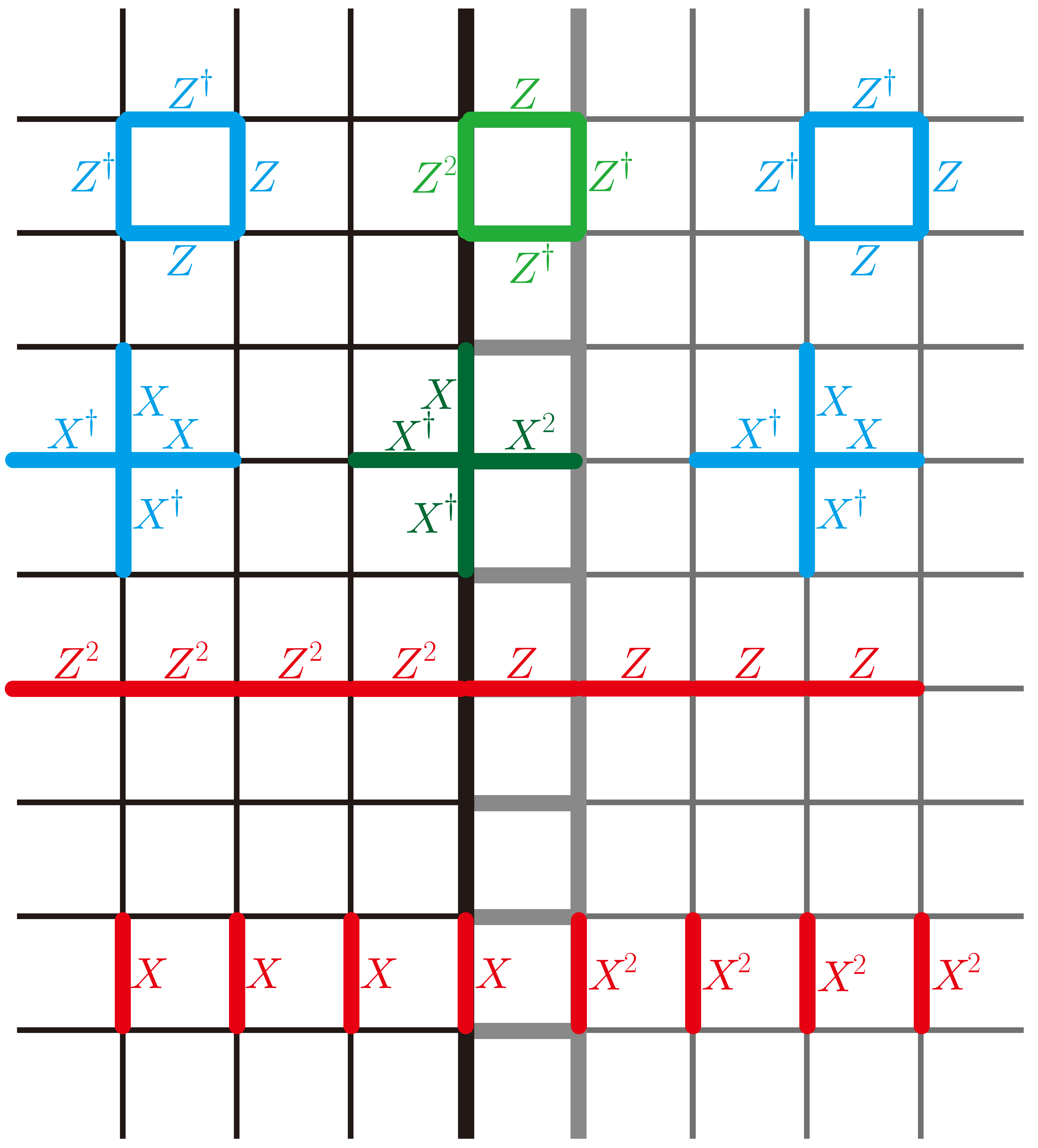}}\\
    \subfigure[$\{e_1m_2,m_1e_2^3\}$-condensed]
    {\includegraphics[width=0.24\textwidth]{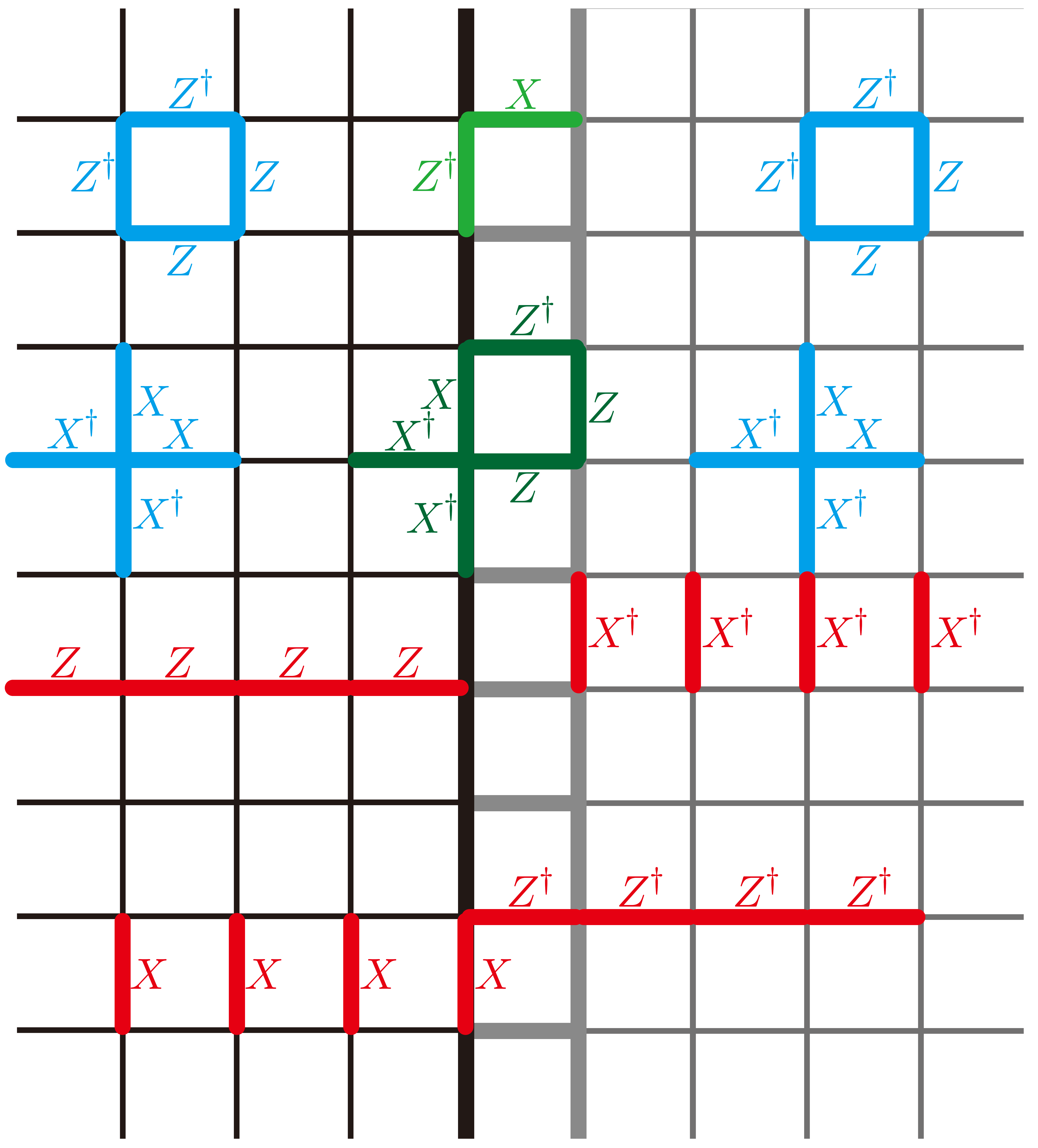}}
    \hspace{0.01cm}
    \subfigure[$\{e_1m_2^2,m_1^2e_2\}$-condensed]
    {\includegraphics[width=0.24\textwidth]{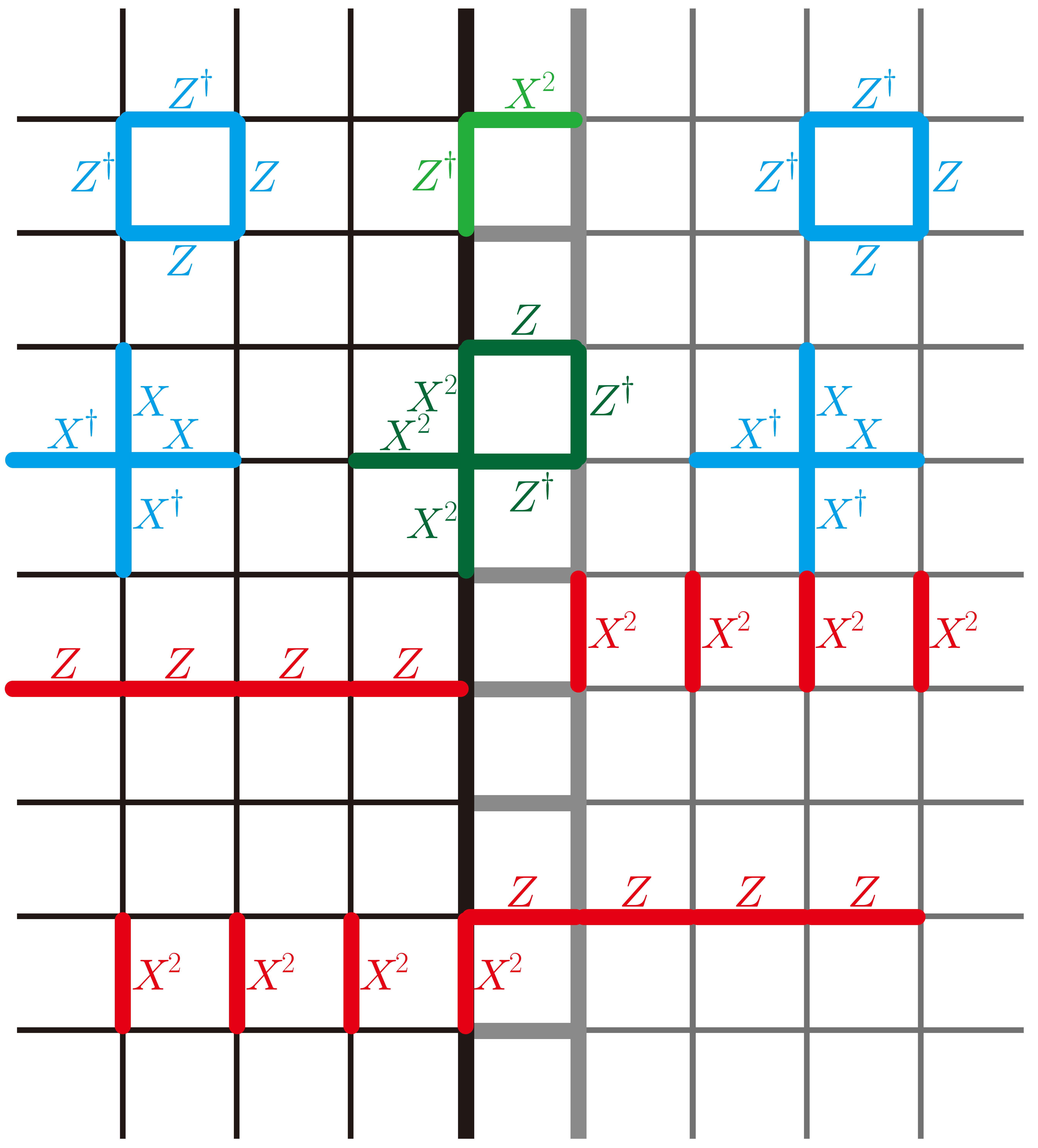}}
    \hspace{0.01cm}
    \subfigure[$\{e_1m_2^3,m_1e_2\}$-condensed]{\includegraphics[width=0.24\textwidth]{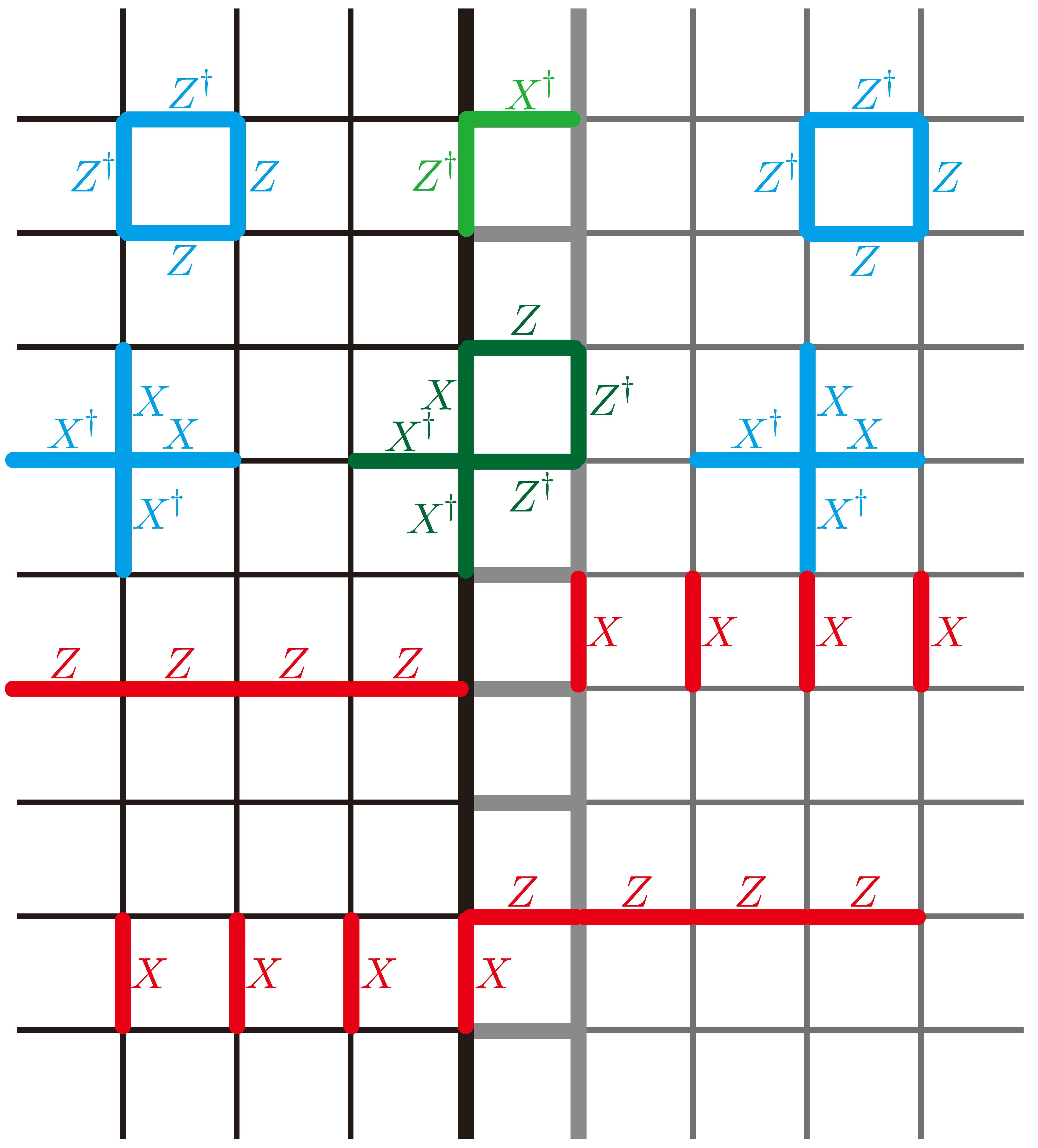}}
    \hspace{0.01cm}
    \subfigure[$\{e_1^2m_2,m_1e_2^2\}$-condensed]{\includegraphics[width=0.24\textwidth]{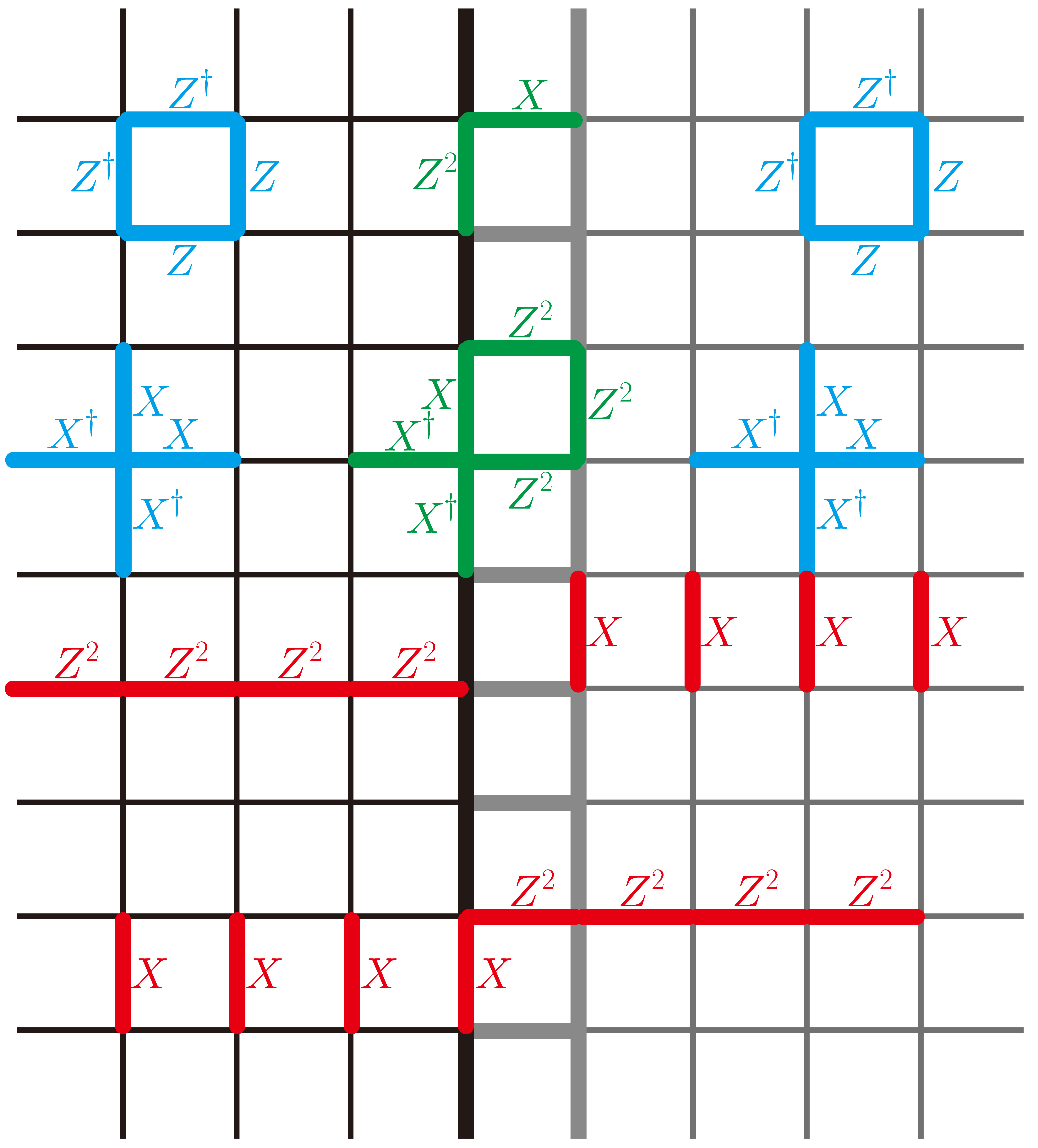}}\\
    \subfigure[$\{e_1^2, m_1^2, m_2\}$-condensed]{\includegraphics[width=0.24\textwidth]{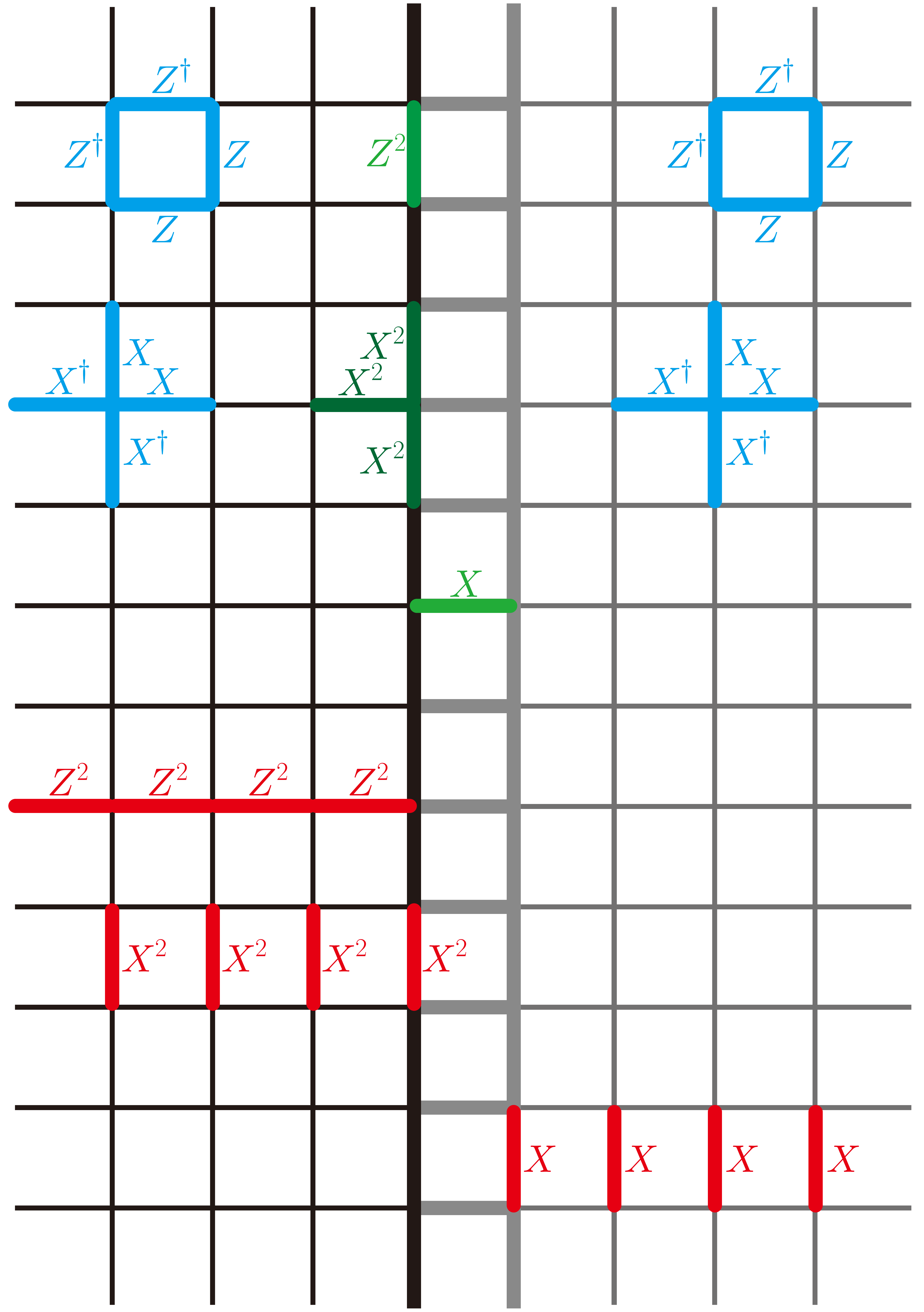}}
    \hspace{0.01cm}
    \subfigure[$\{e_1^2, m_1^2, e_2\}$-condensed]{\includegraphics[width=0.24\textwidth]{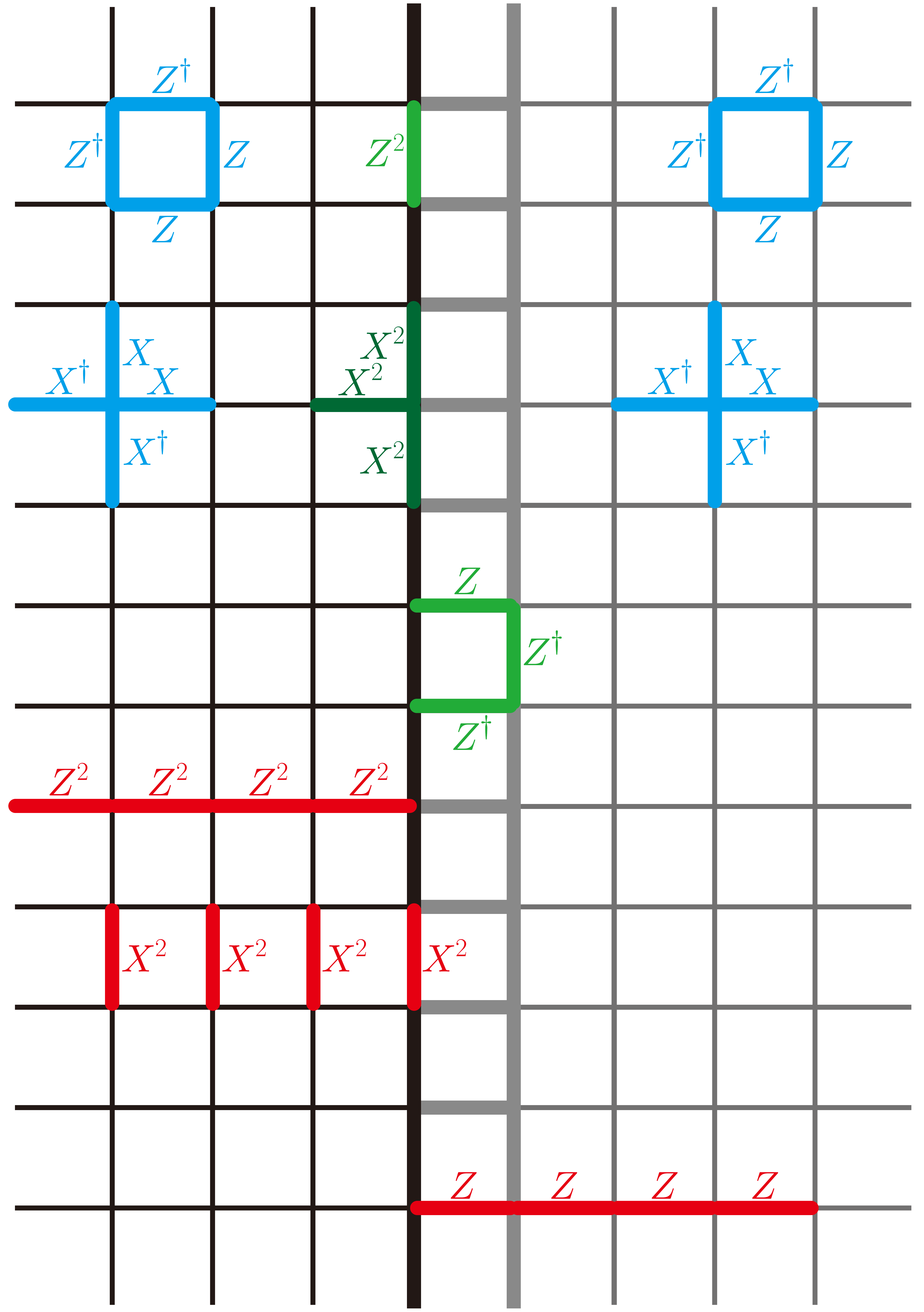}}
    \hspace{0.01cm}
    \subfigure[$\{e_2^2, m_2^2, e_1\}$-condensed]{\includegraphics[width=0.24\textwidth]{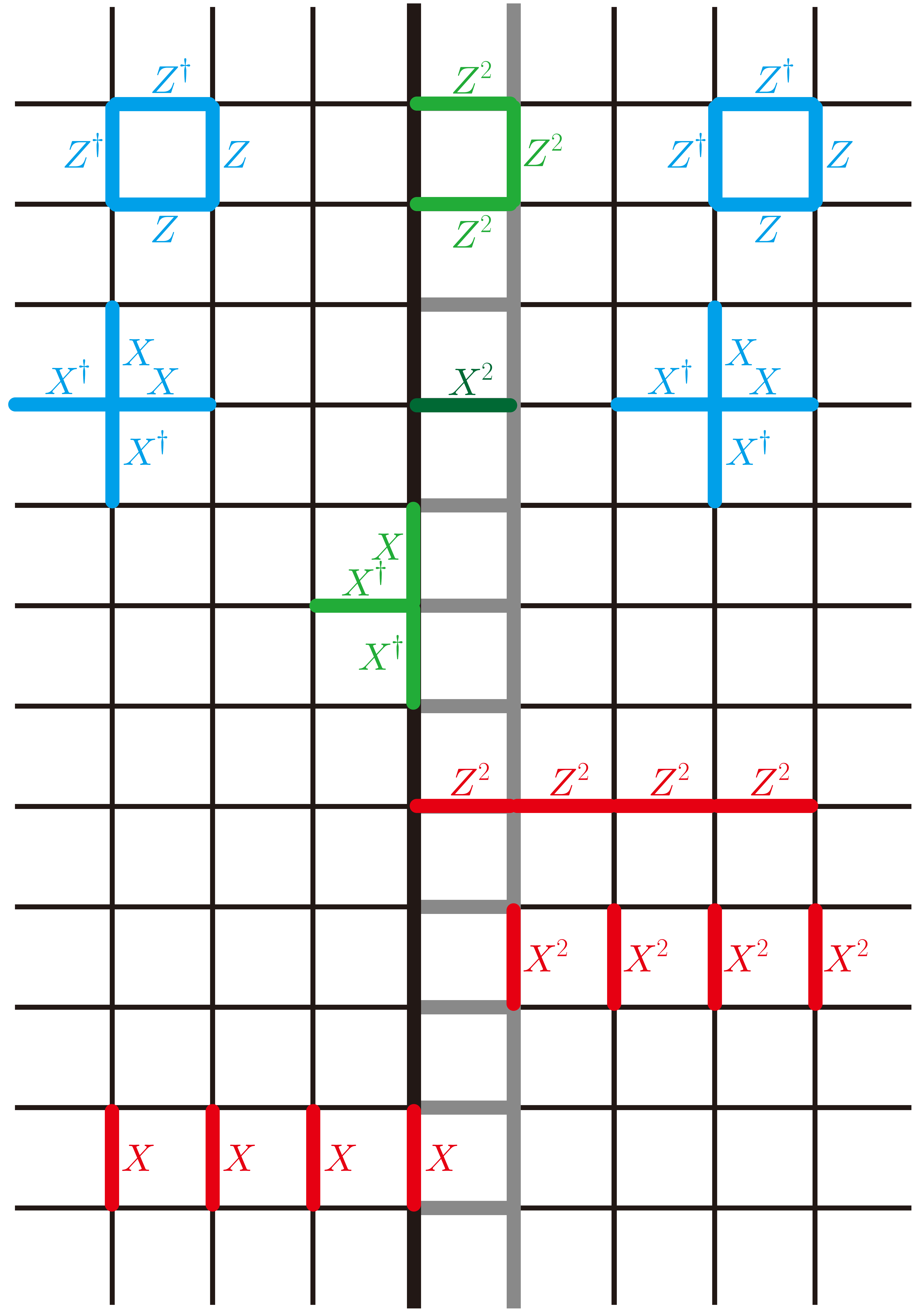}}
    \hspace{0.01cm}
    \subfigure[$\{e_2^2, m_2^2, m_1\}$-condensed]{\includegraphics[width=0.24\textwidth]{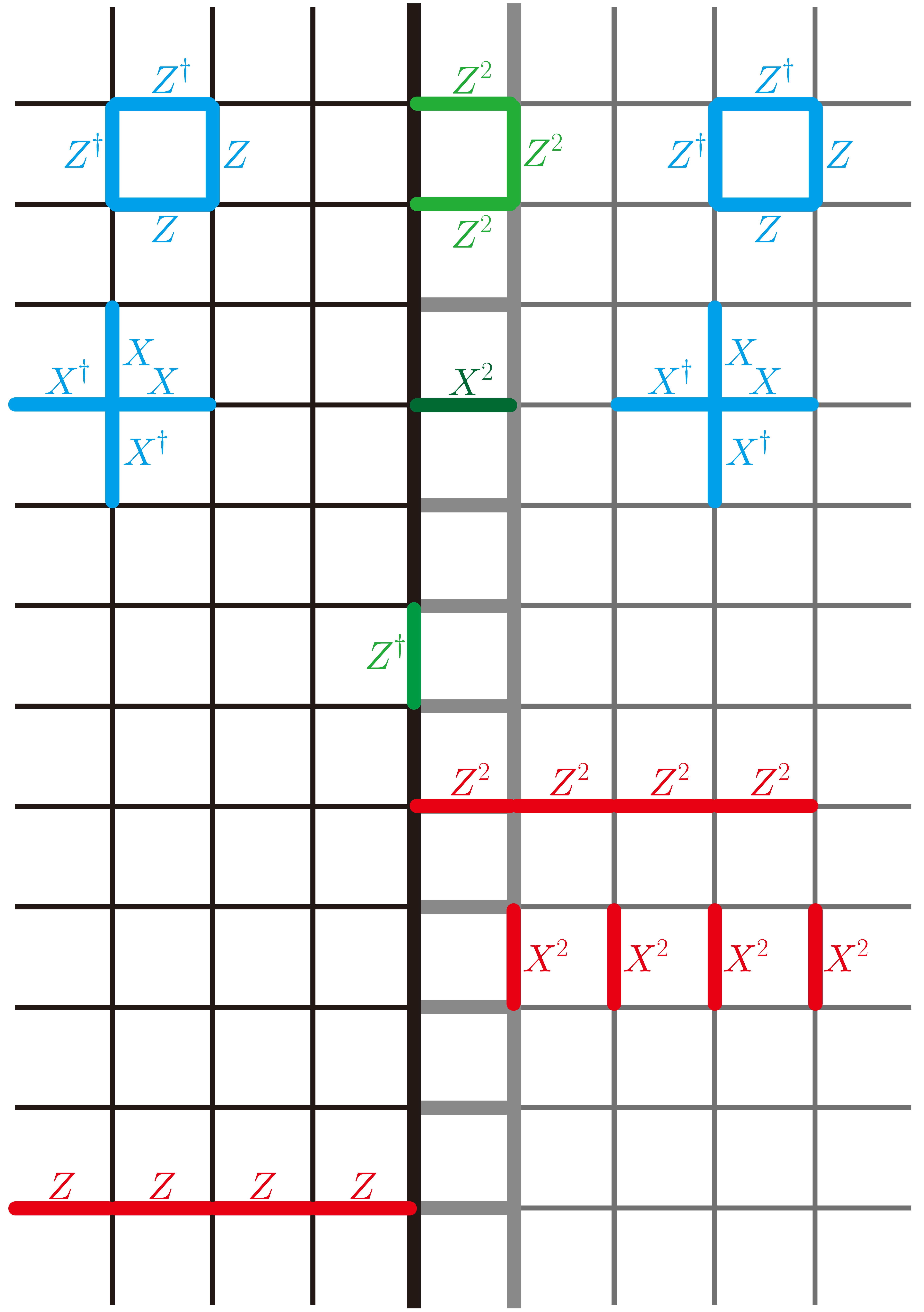}}\\
    \caption{The defects of the $\ZZ_4$ toric code (part 1). Blue components indicate bulk stabilizers, green components represent the defect Hamiltonian, and red components show bulk string operators that terminate on or pass through the defect. The red strings commute with the green defect Hamiltonian.}
    \label{fig: Lagrangian_subgroup_toric_code_Z4_defect1}
\end{figure*}

\begin{figure*}[htb]
    \centering
    \subfigure[$\{e_1^2 e_2^2, m_1 m_2, m_2^2\}$-condensed]{\includegraphics[width=0.24\textwidth]{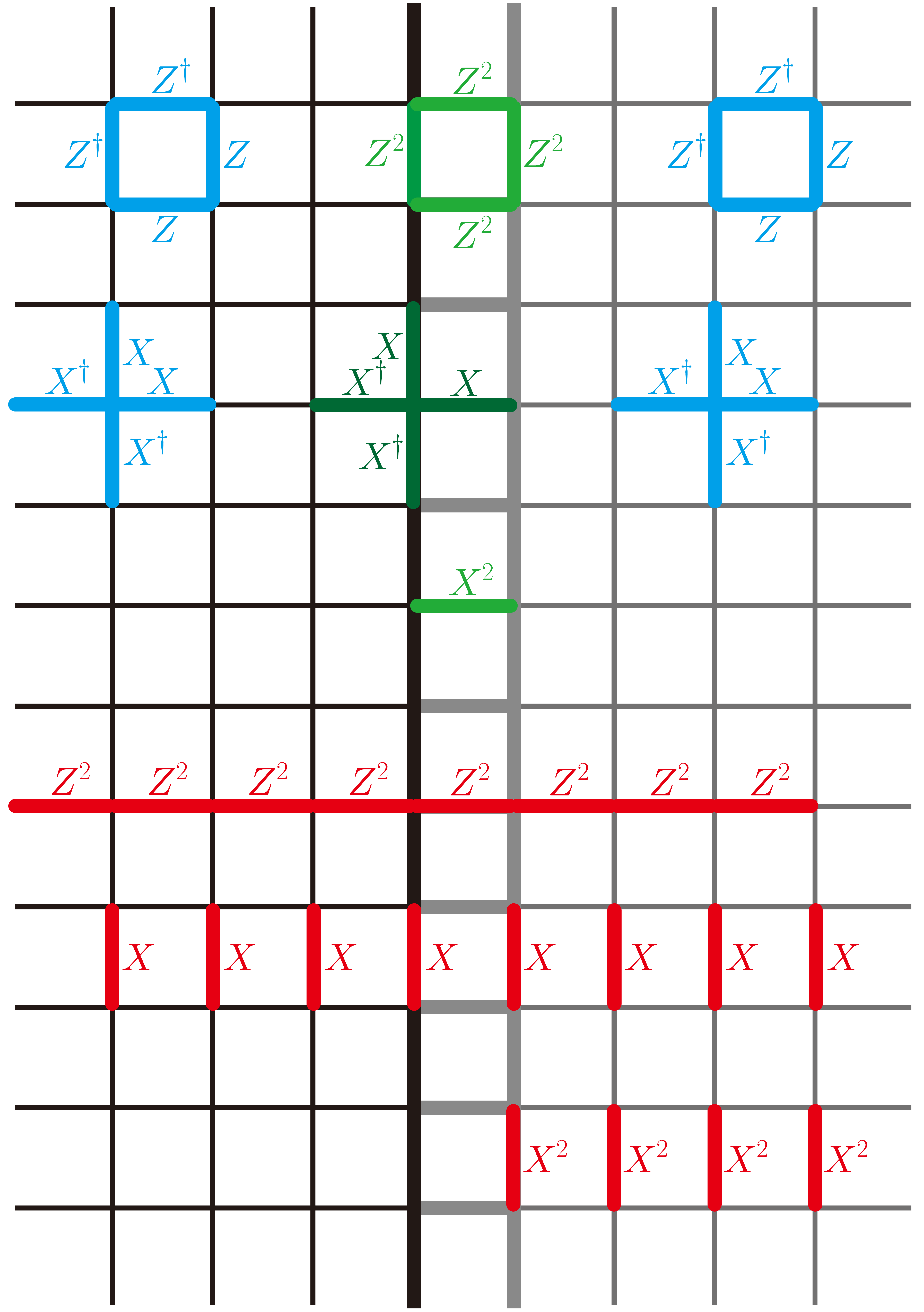}}
    \hspace{0.01cm}
    \subfigure[$\{e_1^2 m_2^2, m_1 e_2, e_2^2\}$-condensed]{\includegraphics[width=0.24\textwidth]{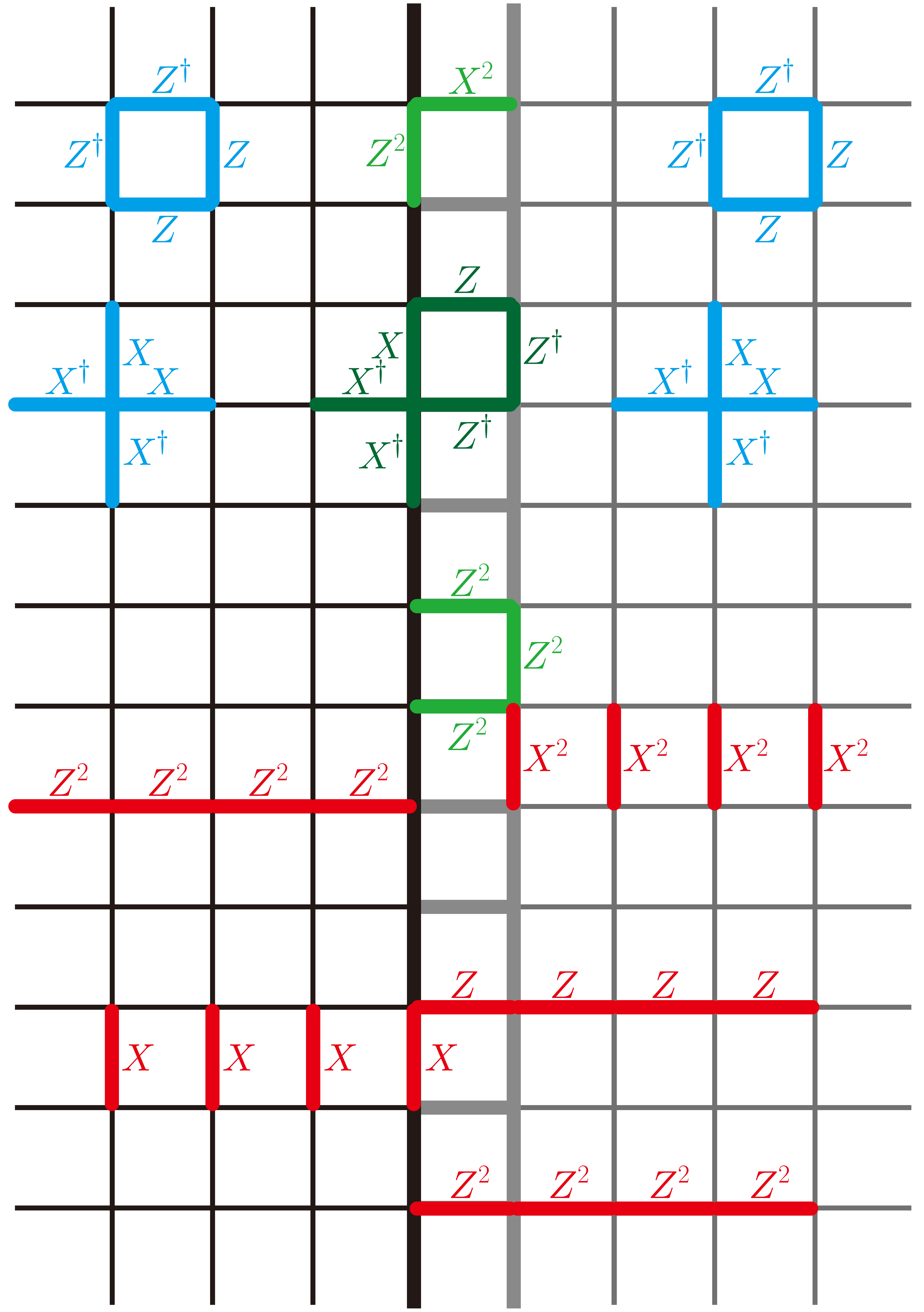}}
    \hspace{0.01cm}
    \subfigure[$\{e_1 e_2, m_1^2 m_2^2, e_1^2\}$-condensed]{\includegraphics[width=0.24\textwidth]{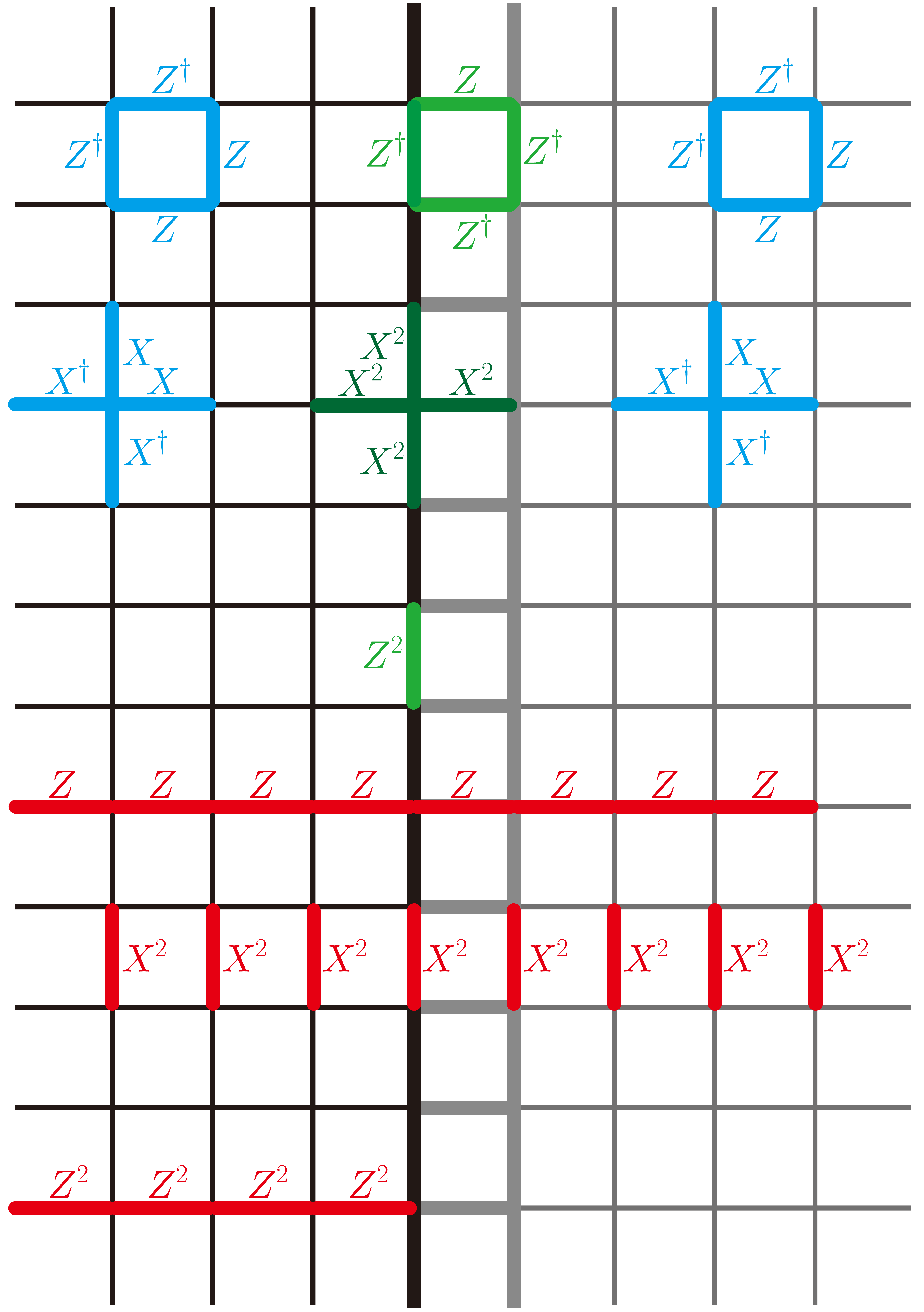}}
    \hspace{0.01cm}
    \subfigure[$\{m_1^2 m_2^2, e_1 m_1 e_2 m_2^3, e_2^2 m_2^2\}$-condensed]{\includegraphics[width=0.24\textwidth]{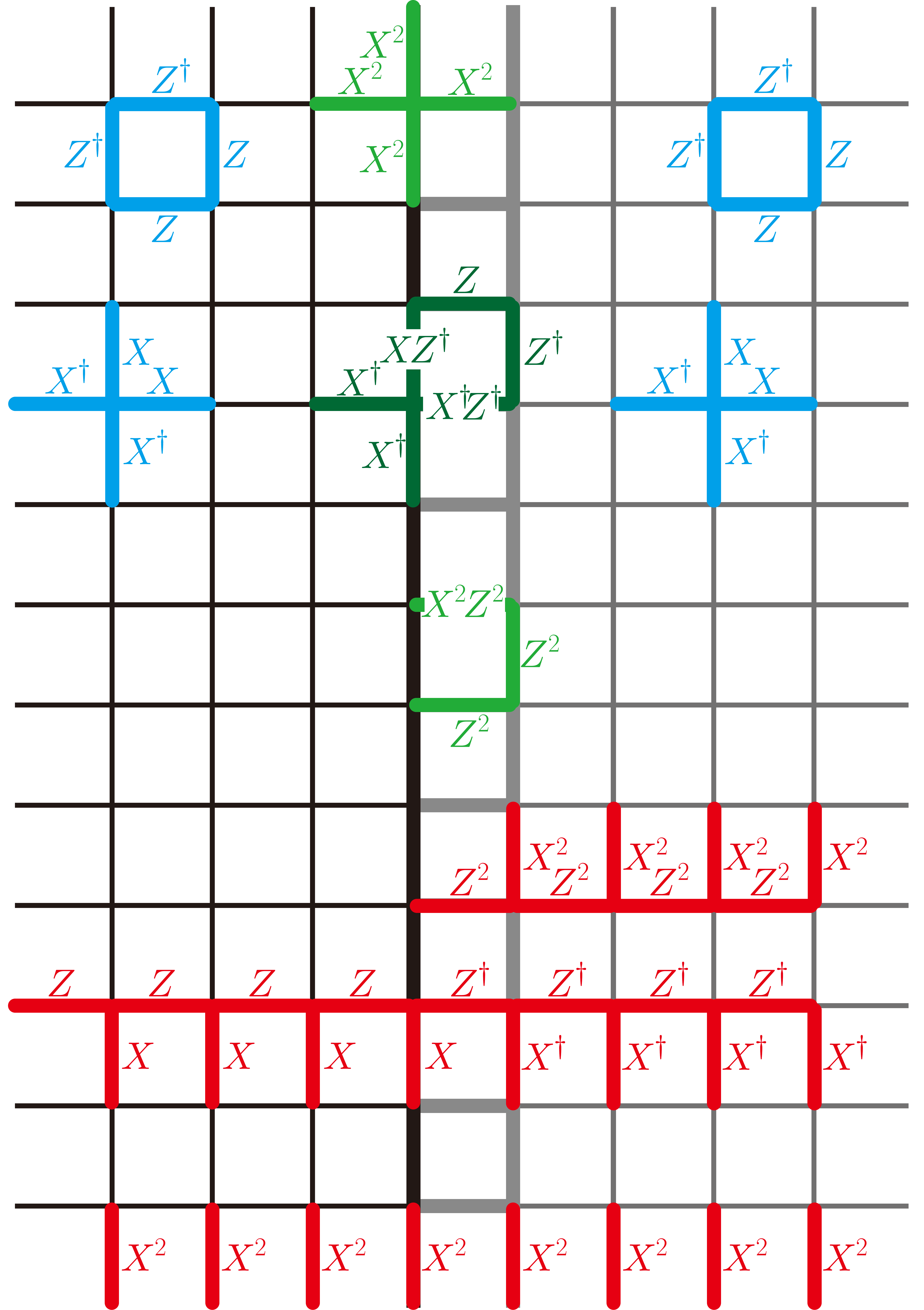}}\\
    \subfigure[$\{e_1 m_2, m_1^2 e_2^2, m_2^2\}$-condensed]{\includegraphics[width=0.24\textwidth]{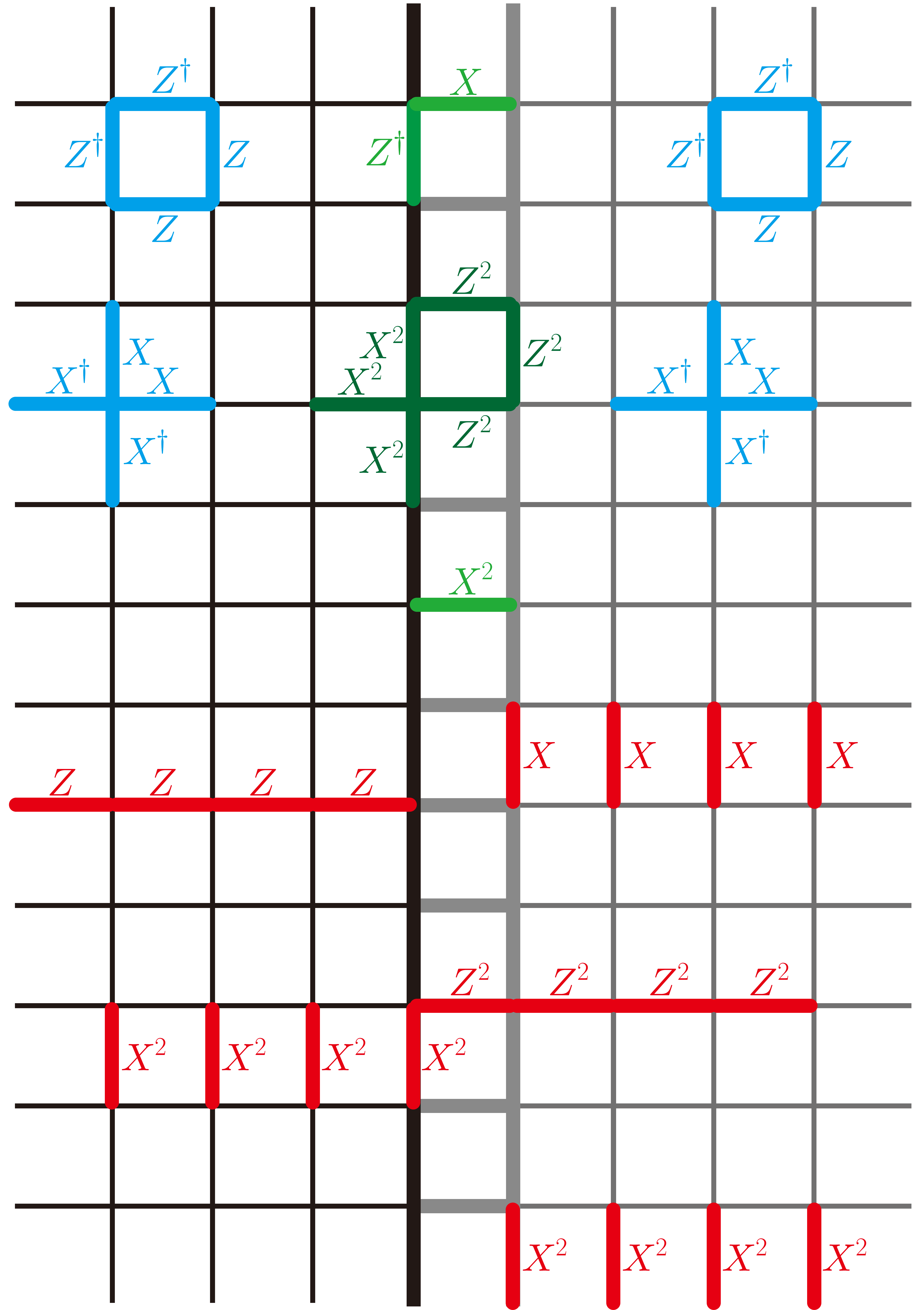}}
    \hspace{0.01cm}
    \subfigure[$\{e_1^2, m_1^2, e_2^2, m_2^2\}$-condensed]{\includegraphics[width=0.24\textwidth]{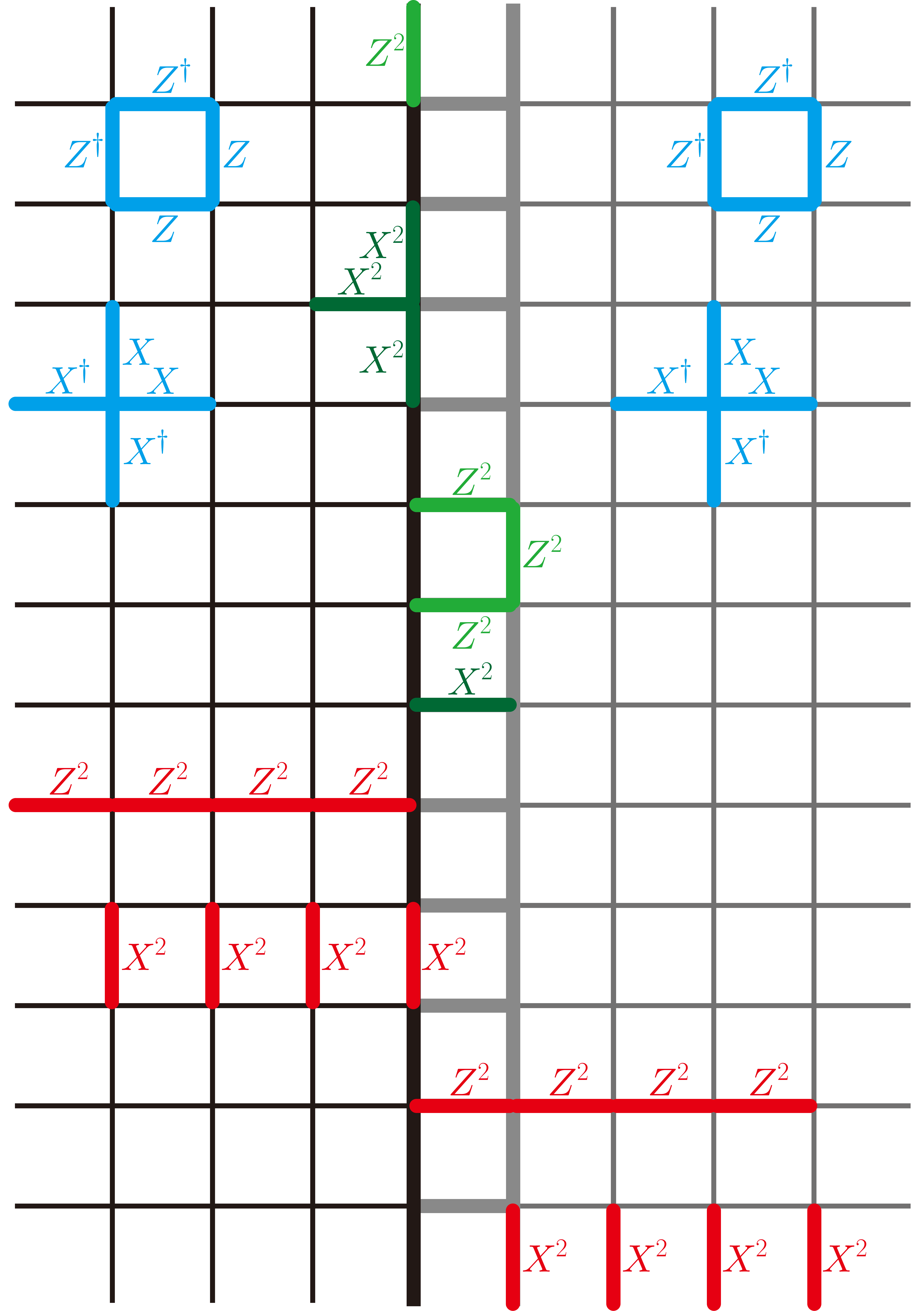}}\\
    \caption{The defects of the $\ZZ_4$ toric code (part 2). Blue components indicate bulk stabilizers, green components represent the defect Hamiltonian, and red components show bulk string operators that terminate on or pass through the defect. The red strings commute with the green defect Hamiltonian.}
    \label{fig: Lagrangian_subgroup_toric_code_Z4_defect2}
\end{figure*}

\begin{figure*}[htb]
    \centering
    \subfigure[$\{f^a_1f^c_2,f^b_1f^a_2\}$-condensed]{\includegraphics[width=0.3\textwidth]{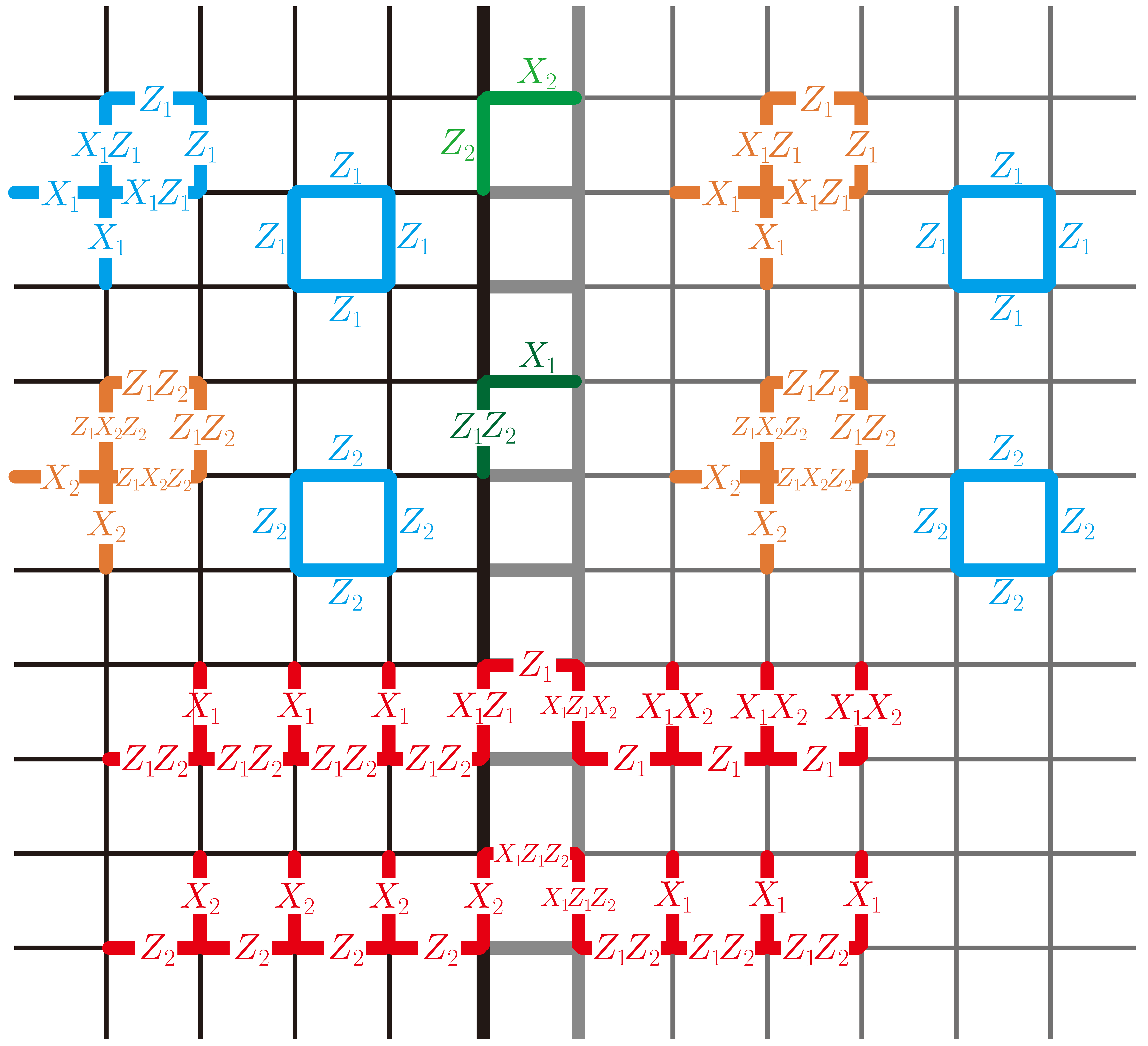}}
    \hspace{0.01cm}
    \subfigure[$\{f^a_1f^c_2,f^b_1f^b_2\}$-condensed]{\includegraphics[width=0.3\textwidth]{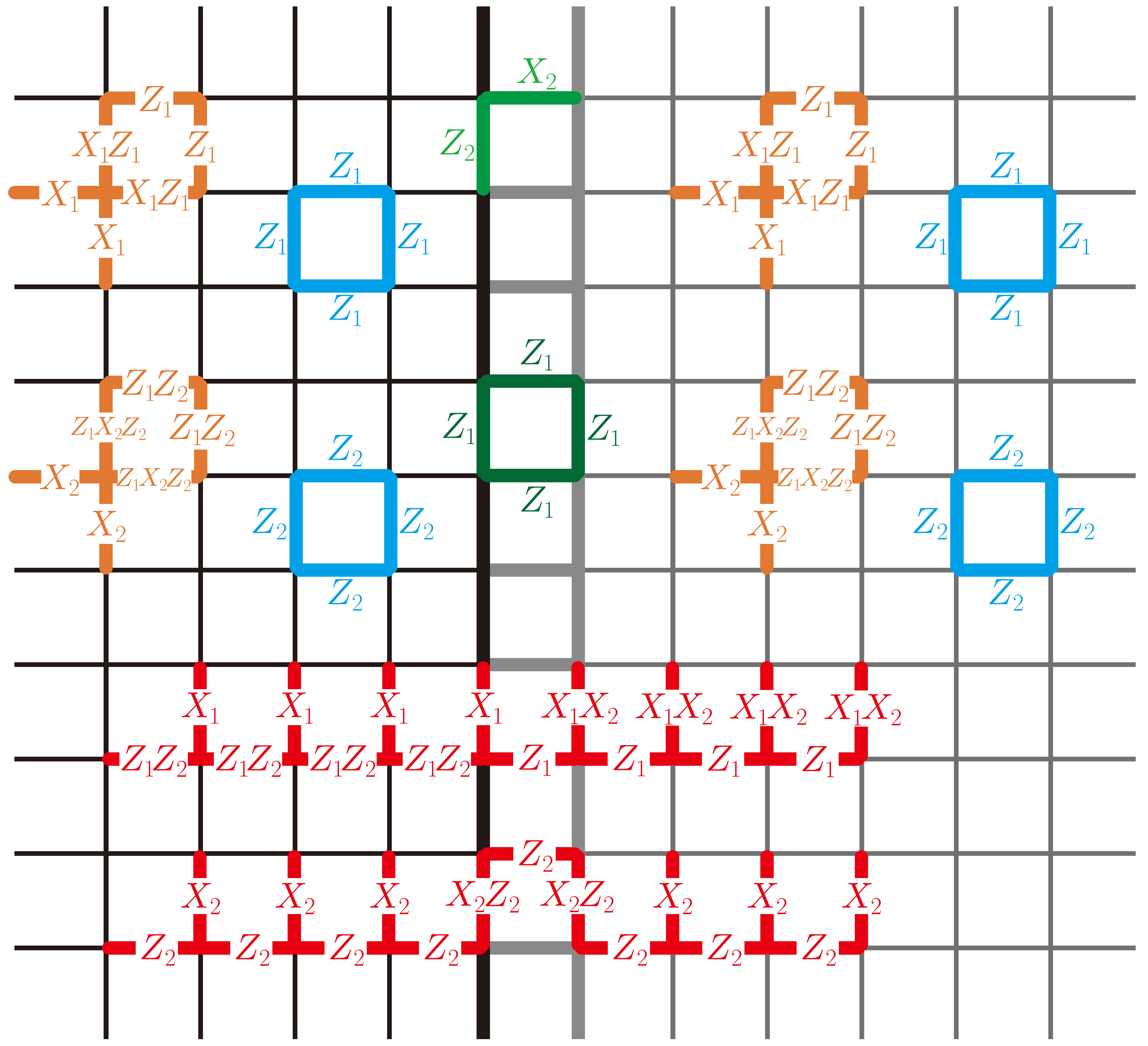}}
    \hspace{0.01cm}
    \subfigure[$\{f^c_1f^b_2,f^a_1f^a_2\}$-condensed]{\includegraphics[width=0.3\textwidth]{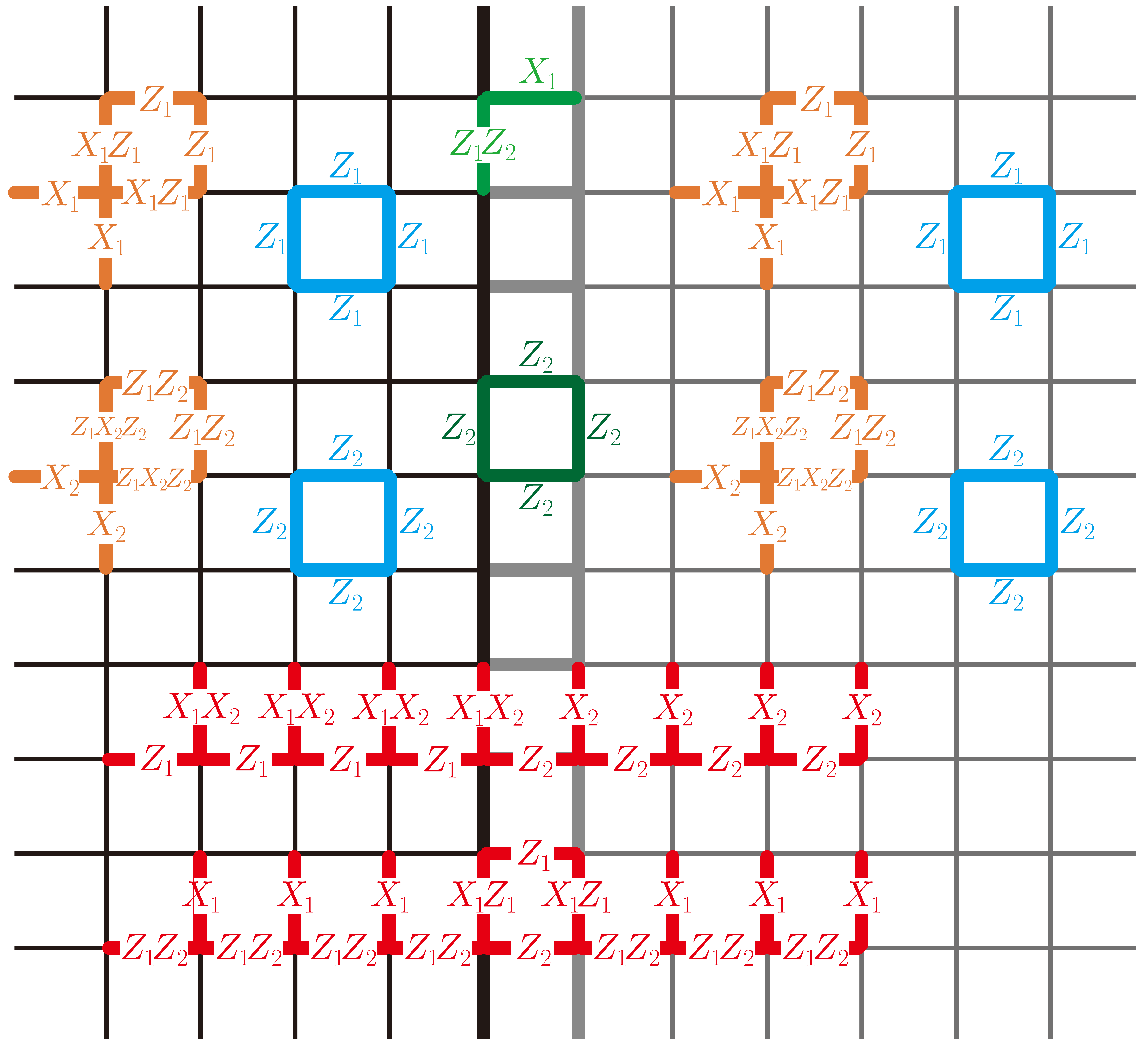}}\\
    \subfigure[$\{f^b_1f^b_2,f^a_1f^a_2\}$-condensed]{\includegraphics[width=0.3\textwidth]{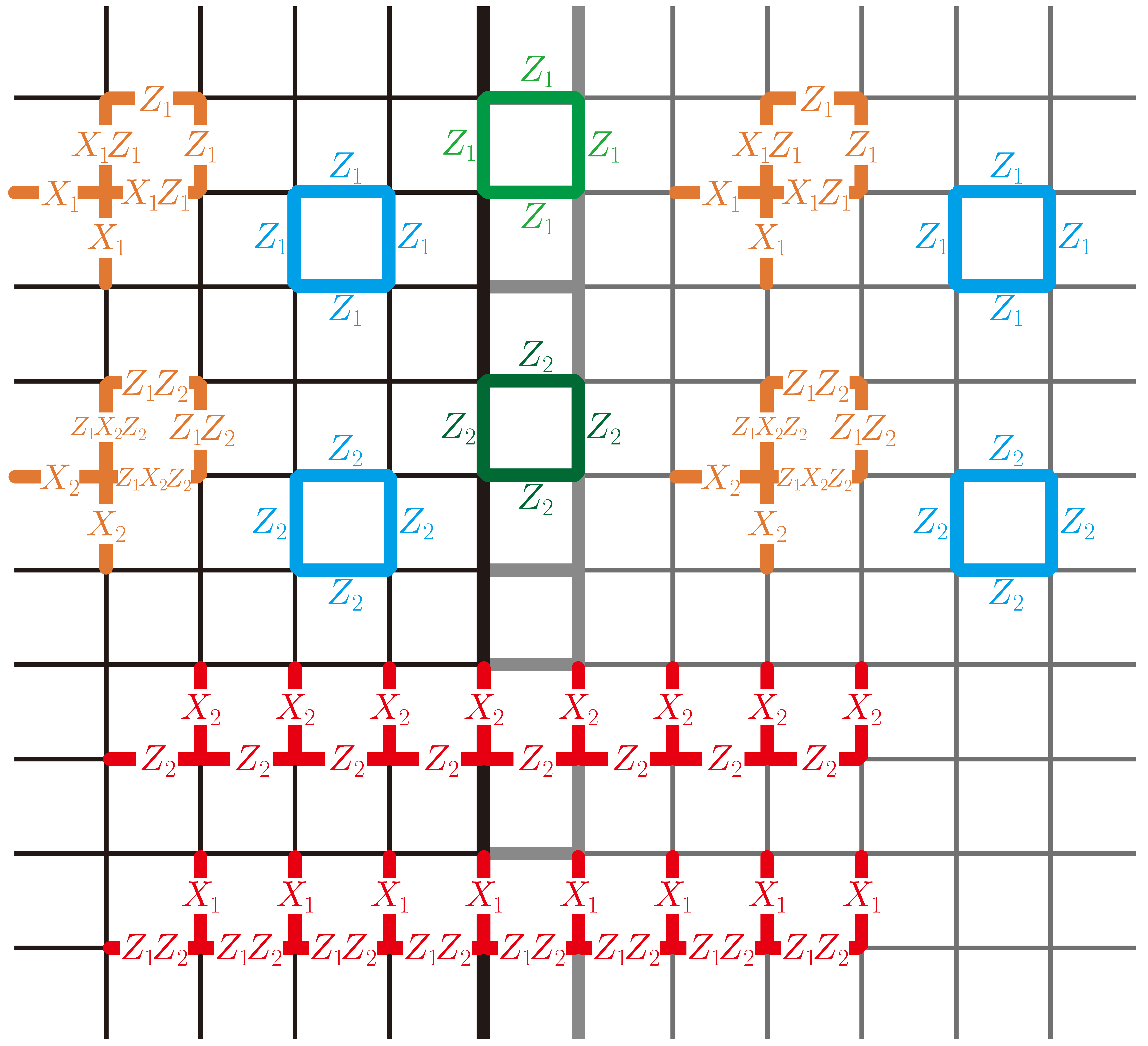}}
    \hspace{0.01cm}
    \subfigure[$\{f^b_1f^a_2,f^a_1f^b_2\}$-condensed]{\includegraphics[width=0.3\textwidth]{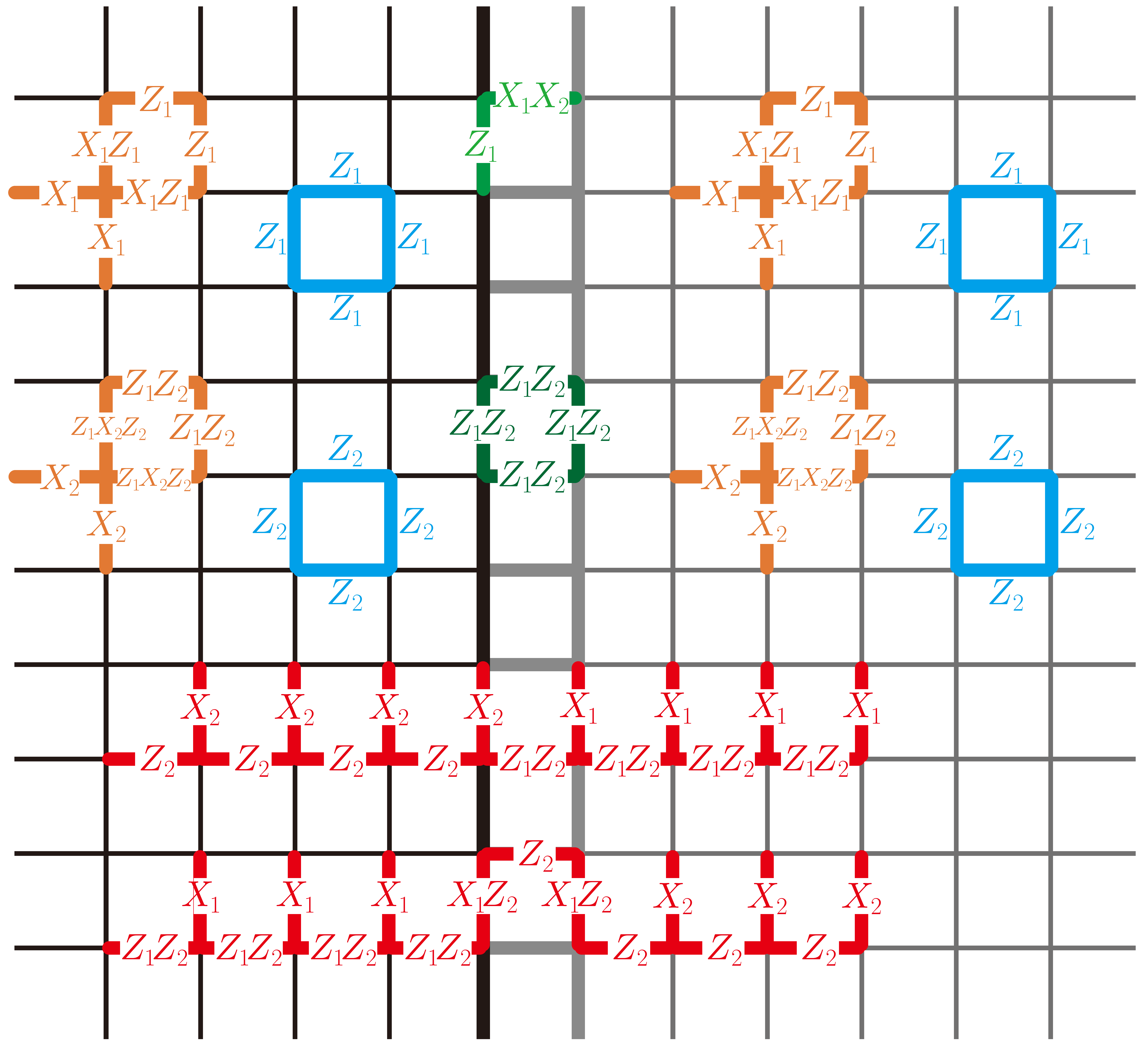}}
    \hspace{0.01cm}
    \subfigure[$\{f^b_1f^c_2,f^a_1f^b_2\}$-condensed]{\includegraphics[width=0.3\textwidth]{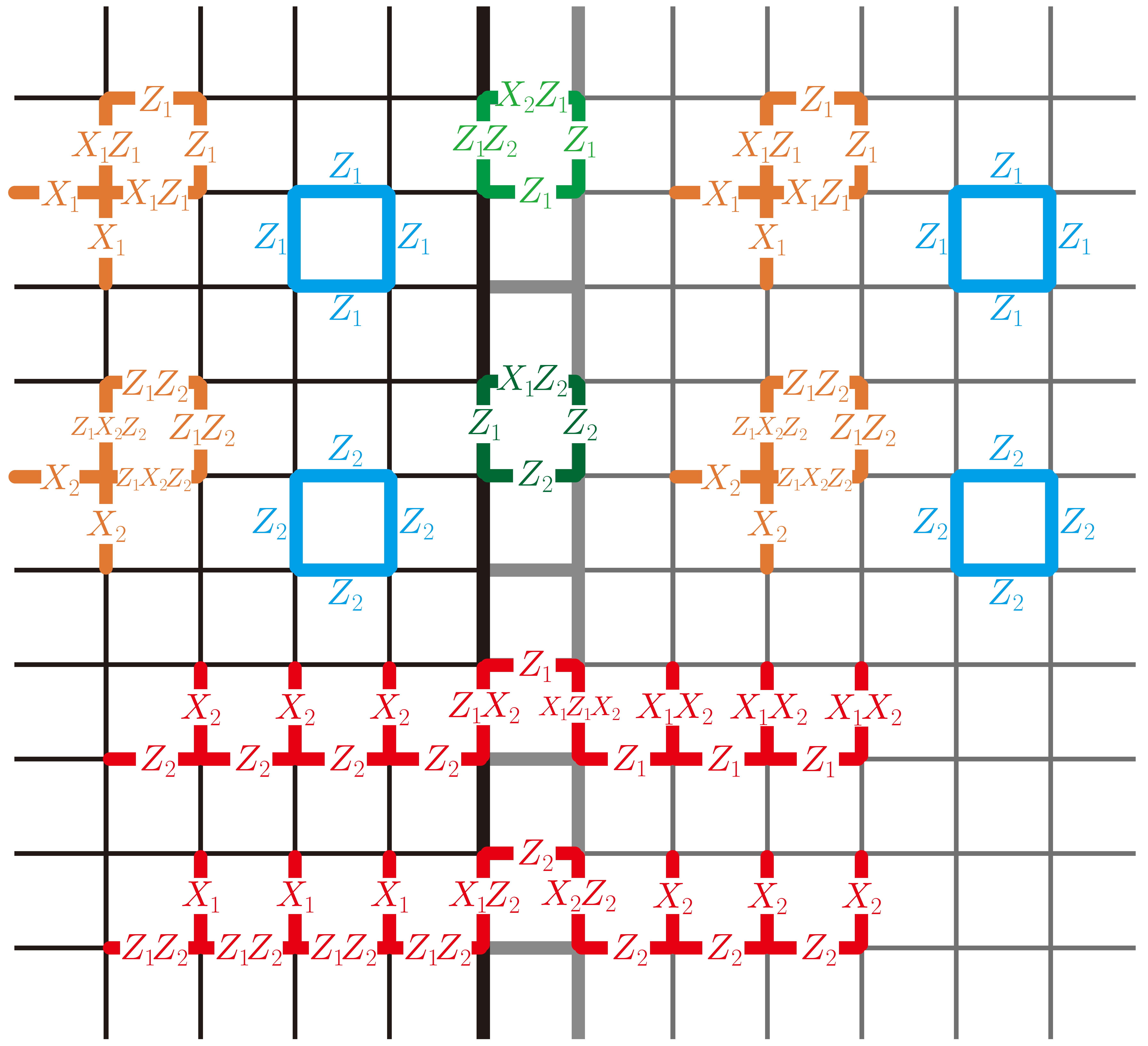}}\\
    \caption{The defects in the anomalous three-fermion code are depicted. Blue components represent the bulk stabilizers, green components indicate the defect Hamiltonian, and red components show the bulk string operators that either terminate at or pass through the defect. The red strings commute with the green defect Hamiltonian.}
    \label{fig: Lagrangian_subgroup_3_fermion_code_Z2}
\end{figure*}

\begin{figure*}[htb]
    \centering
    \subfigure[$\{e_1,e_2\}$ or $\{e_3,e_4\}$-condensed]{\includegraphics[width=0.32\textwidth]{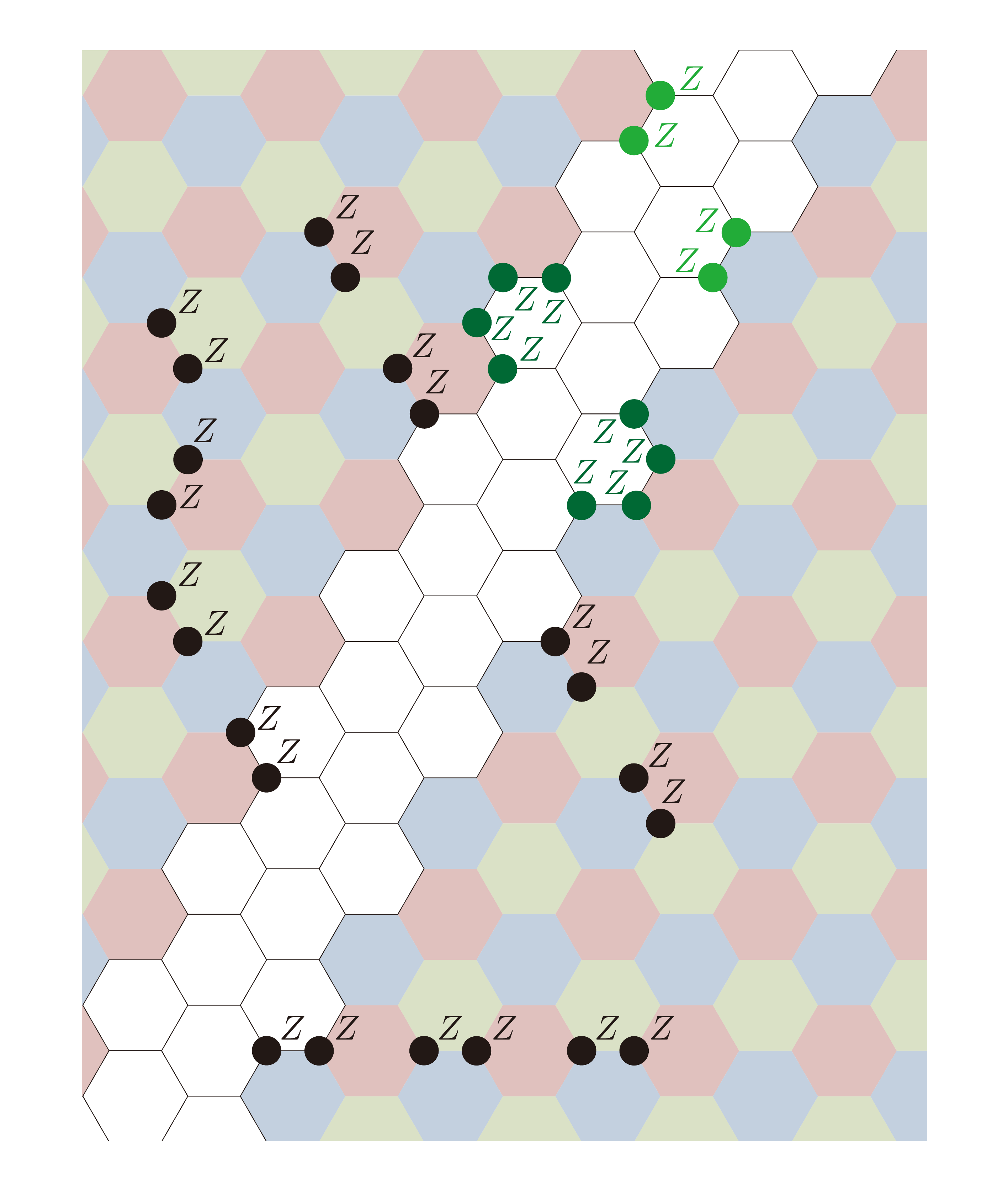}}
    \hspace{0.1cm}
    \subfigure[$\{e_1,m_2\}$ or $\{e_3,m_4\}$-condensed]{\includegraphics[width=0.32\textwidth]{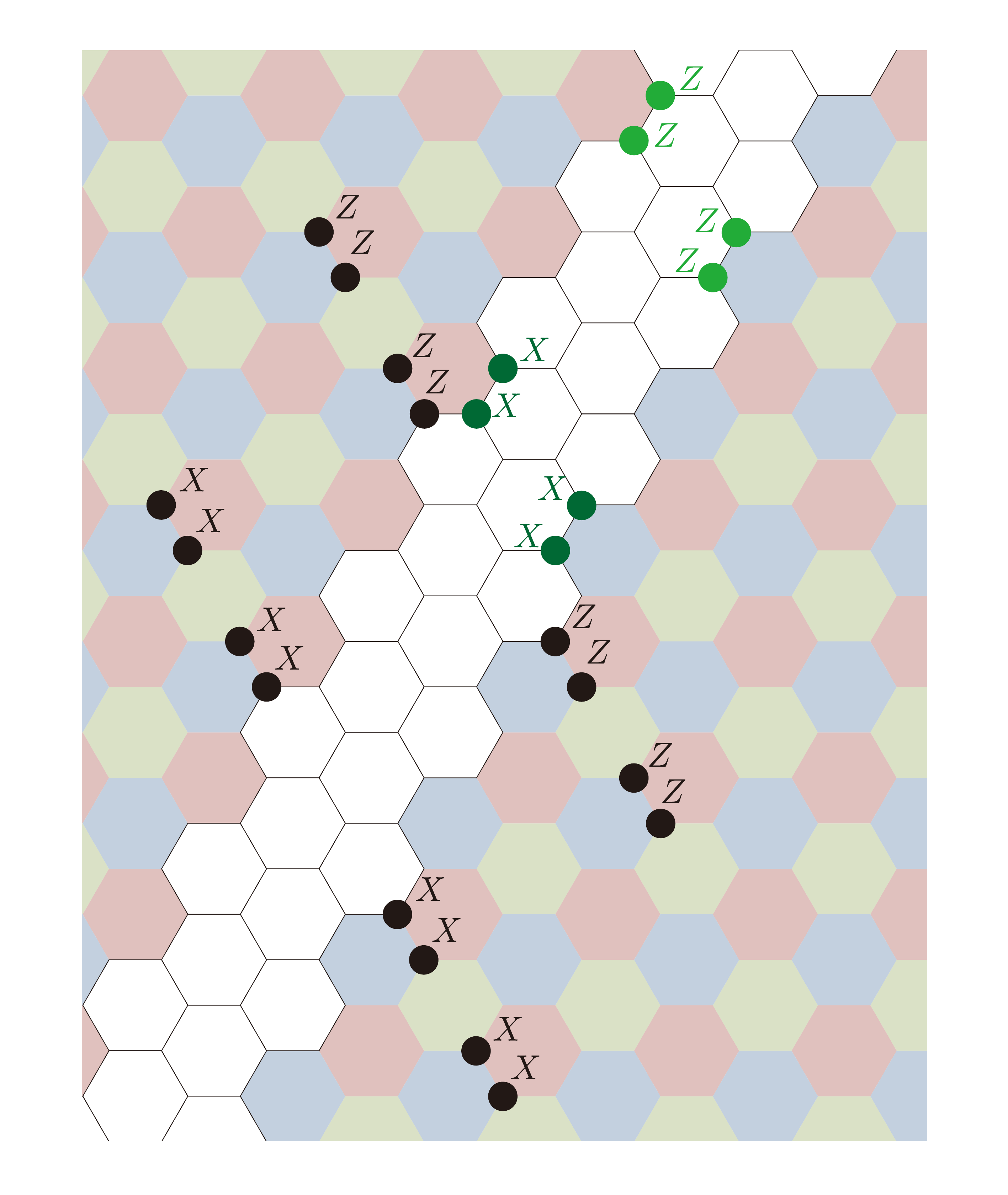}}
    \hspace{0.1cm}
    \subfigure[$\{m_1,e_2\}$ or $\{m_3,e_4\}$-condensed]{\includegraphics[width=0.32\textwidth]{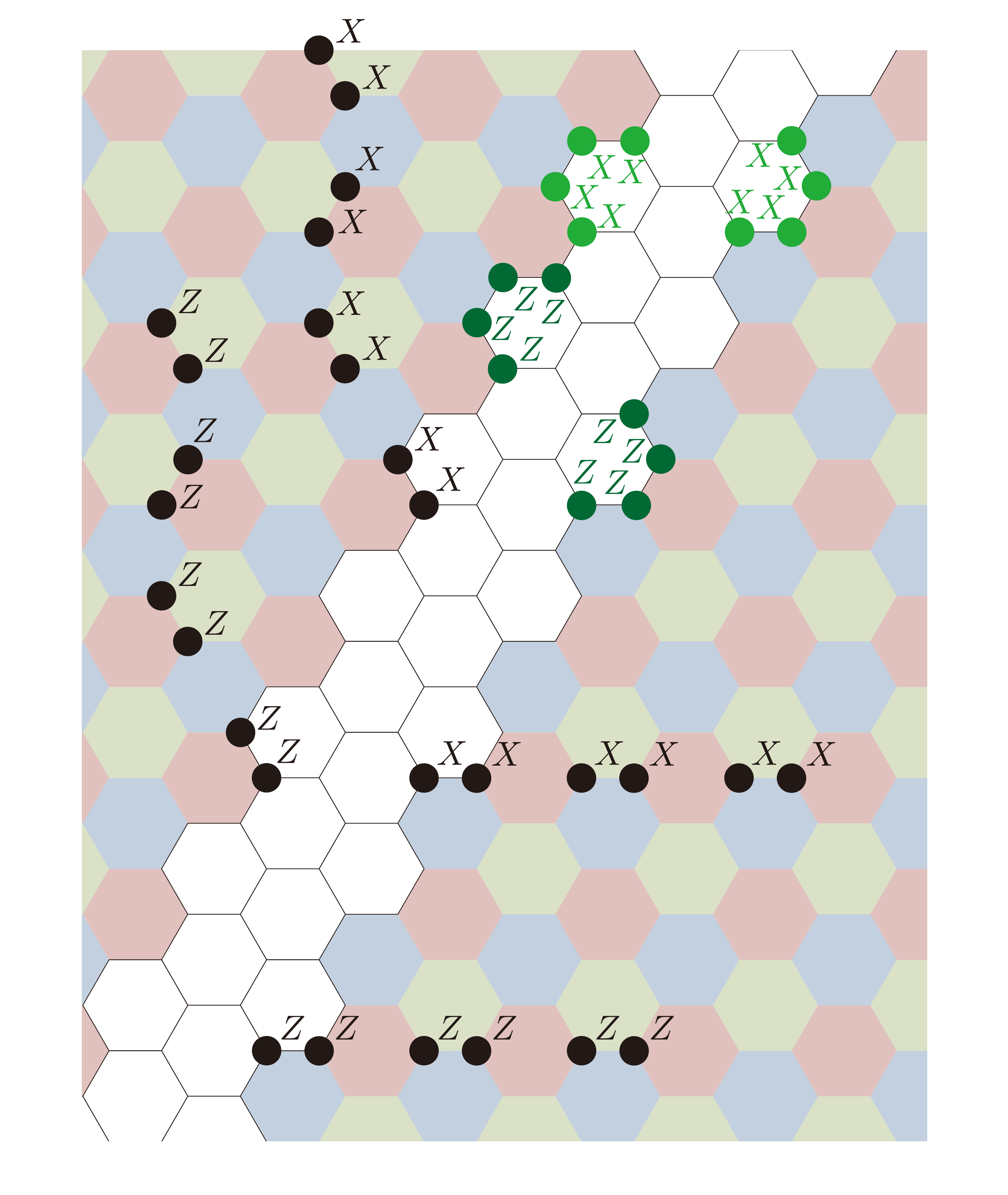}}\\
    \subfigure[$\{m_1,m_2\}$ or $\{m_3,m_4\}$-condensed]{\includegraphics[width=0.32\textwidth]{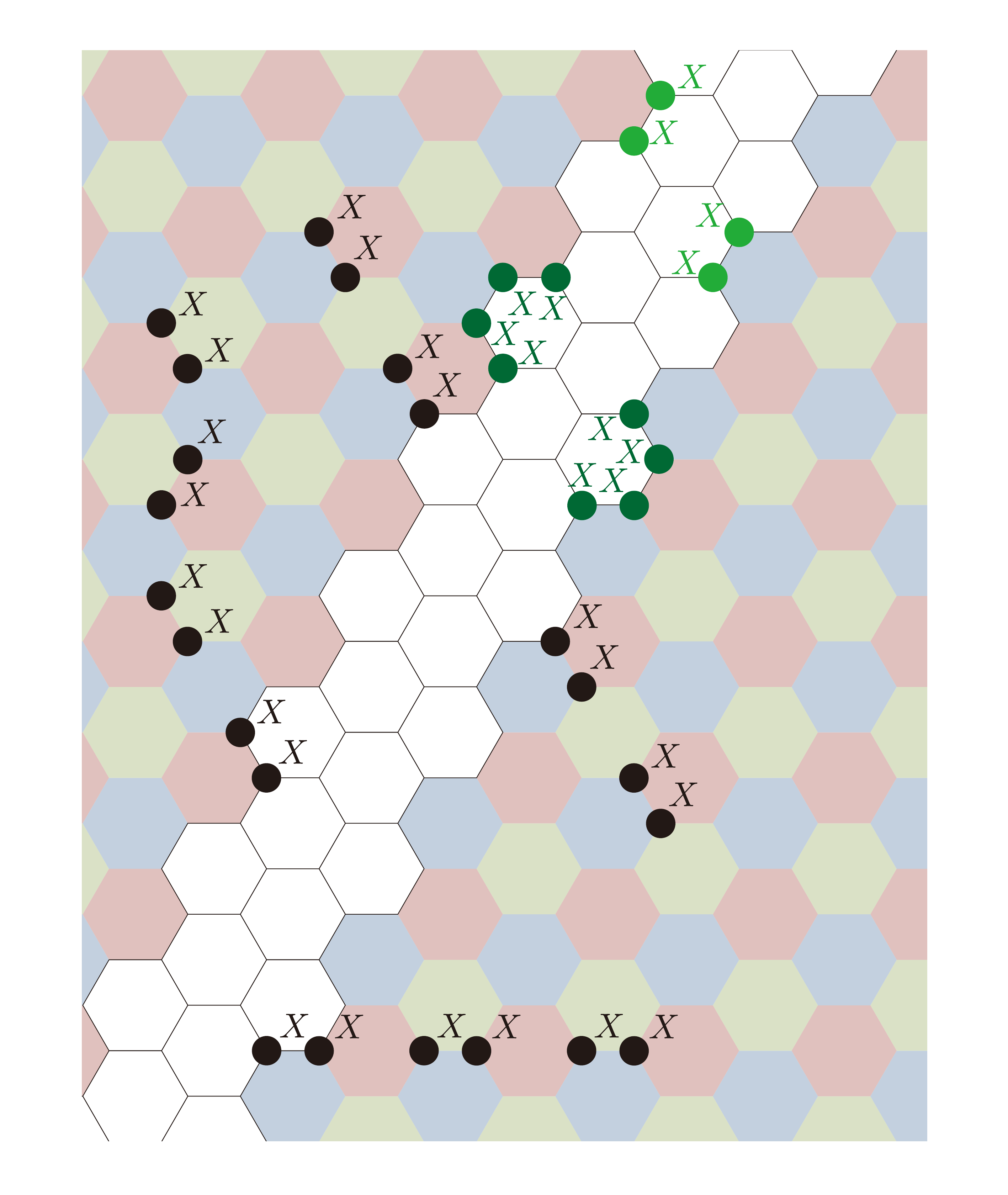}}
    \hspace{0.1cm}
    \subfigure[$\{e_1e_2,m_1m_2\}$ or $\{e_3e_4,m_3m_4\}$-condensed]{\includegraphics[width=0.32\textwidth]{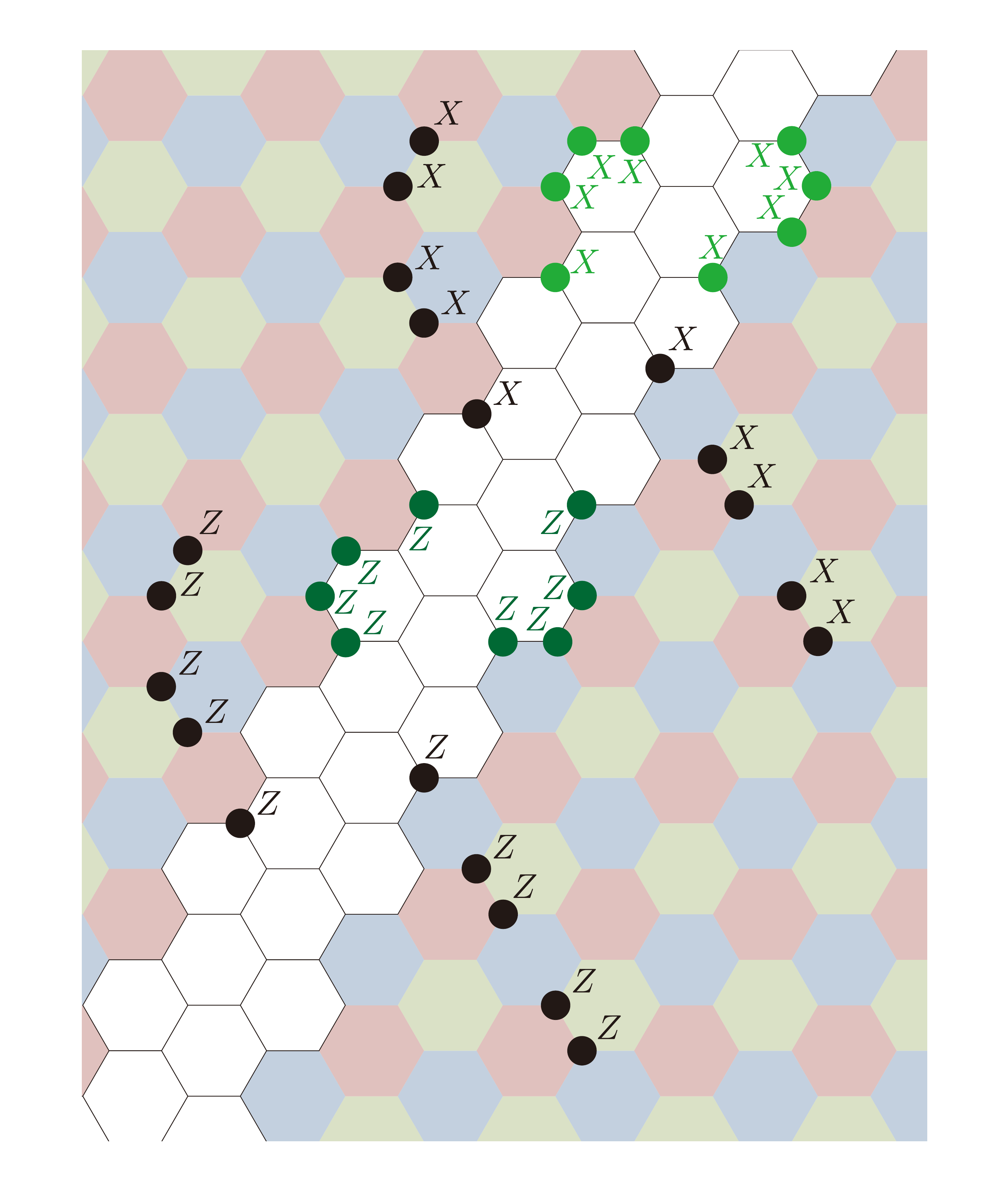}}
    \hspace{0.1cm}
    \subfigure[$\{e_1m_2,m_1e_2\}$ or $\{e_3m_4,m_3e_4\}$-condensed]{\includegraphics[width=0.32\textwidth]{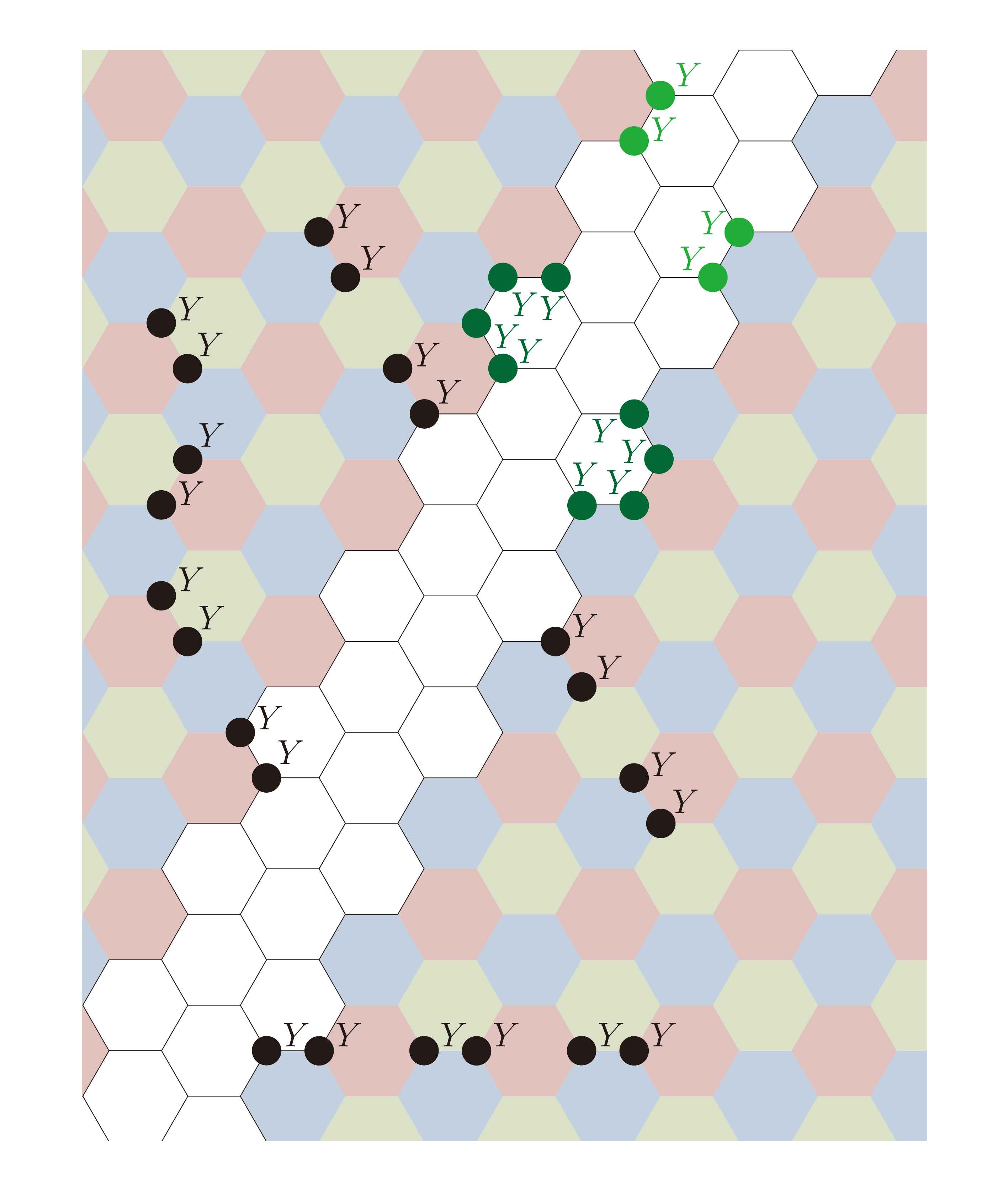}}\\
    \caption{The 6 boundaries of the color code for both the left and right semi-infinite planes are shown. Blue, red, and green hexagons represent the bulk stabilizers of the color code. The green components indicate the boundary Hamiltonian and the black lines represent bulk strings that terminate at the boundary without causing energy violations.}
    \label{fig: Lagrangian_subgroup_color_code_boundary}
\end{figure*}

\begin{figure*}[htb]
    \centering
    \includegraphics[width=0.93\linewidth]{}
    \caption{The red components depict the 16 generators of the boundary string operators for the $(3,3)$-BB code, as defined in Eq.~\eqref{eq: BB code 3 3}. Each string operator has a length of 12 and can be multiplied by its shifted version to form a longer string operator, moving the boundary anyon by multiples of 12. While some of these string operators are shifted versions of others, the boundary anyons they generate are not equivalent under local boundary gauge operators.}
    \label{fig: BB_code_3_3_boundary_string}
\end{figure*}

\end{appendices}


\end{document}